%% file: main.tex
\documentclass[11pt,letterpaper]{article}

\usepackage[margin=1in]{geometry}
\usepackage[T1]{fontenc}
\usepackage{lmodern}
\usepackage{microtype}
\usepackage{amsmath,amssymb,amsthm,mathtools}
\usepackage{aliascnt}
\usepackage{booktabs}
\usepackage{graphicx}
\usepackage{float}
\usepackage{enumitem}
\usepackage{placeins}
\usepackage{needspace}
\usepackage{xcolor}
\usepackage{url}
\usepackage[hidelinks]{hyperref}
\usepackage[nameinlink,capitalise,noabbrev]{cleveref}

\input{macros}

\title{Synchronization Strings over the Optimal Alphabet}
\author{Huibo Xu$^{1,\dagger}$ \quad Shi Fu$^{1,\dagger}$ \quad
Youming Qiao$^{2}$ \quad Dacheng Tao$^{1}$\\[0.5em]
\small $^{1}$Nanyang Technological University, Singapore\\
\small $^{2}$University of Technology Sydney, Sydney, Australia\\[0.35em]
\small $^{\dagger}$These authors contributed equally to this work.}
\date{}

\begin{document}
\hypersetup{pageanchor=false}
\maketitle
\thispagestyle{empty}

\begin{abstract}
  \input{sections/00_abstract}
\end{abstract}

\clearpage
\hypersetup{pageanchor=true}
\pagenumbering{arabic}

\input{sections/01_introduction}
\input{sections/02_preliminaries}
\input{sections/03_brinkhuis_source}
\input{sections/04_ternary_proof}
\input{sections/05_structural_theorem}
\input{sections/05_quantitative_summary}
\input{sections/06_abundance_algorithm}
\input{sections/07_discussion}

\clearpage
\appendix
\linespread{1.03}\selectfont
\setlength{\parskip}{1.5pt plus 0.4pt minus 0.2pt}
\setlength{\jot}{4pt}
\input{appendix/C_general_theorem}
\input{appendix/E_extremal_square_free}
\input{appendix/D_all_length_circles}
\input{sections/05_quantitative_strengthening}
\input{appendix/B_numeric_details}
\input{appendix/G_long_distance}

\clearpage
\linespread{1}\selectfont
\setlength{\parskip}{0pt plus 1pt}
\setlength{\jot}{3pt}
\bibliographystyle{alpha}
\bibliography{references}

\end{document}

%% file: macros.tex
\newcommand{\eps}{\varepsilon}

\newcommand{\alphabet}{\Sigma}
\newcommand{\ternary}{\{0,1,2\}}
\newcommand{\bits}{\{0,1\}}

\newcommand{\cE}{\mathcal{E}}
\newcommand{\cT}{\mathcal{T}}
\newcommand{\cC}{\mathcal{C}}

\newcommand{\vbl}{\operatorname{vbl}}
\DeclareMathOperator{\ED}{ED}
\DeclareMathOperator{\LCS}{LCS}

\newtheorem{theorem}{Theorem}[section]
\newaliascnt{lemma}{theorem}
\newtheorem{lemma}[lemma]{Lemma}
\aliascntresetthe{lemma}
\newaliascnt{proposition}{theorem}
\newtheorem{proposition}[proposition]{Proposition}
\aliascntresetthe{proposition}
\newaliascnt{corollary}{theorem}
\newtheorem{corollary}[corollary]{Corollary}
\aliascntresetthe{corollary}
\newaliascnt{claim}{theorem}

\aliascntresetthe{claim}
\theoremstyle{definition}
\newaliascnt{definition}{theorem}
\newtheorem{definition}[definition]{Definition}
\aliascntresetthe{definition}
\newaliascnt{remark}{theorem}

\aliascntresetthe{remark}

\crefname{theorem}{Theorem}{Theorems}
\crefname{lemma}{Lemma}{Lemmas}
\crefname{proposition}{Proposition}{Propositions}
\crefname{corollary}{Corollary}{Corollaries}
\crefname{definition}{Definition}{Definitions}
\crefname{claim}{Claim}{Claims}
\crefname{remark}{Remark}{Remarks}

\setlist[itemize]{leftmargin=1.5em,itemsep=0.25em,topsep=0.4em}
\setlist[enumerate]{leftmargin=1.8em,itemsep=0.25em,topsep=0.4em}

\newcommand{\mainboundarygap}{\frac{1}{216}}
\newcommand{\mainthreshold}{\frac{215}{216}}
\newcommand{\maineps}{\frac{216}{217}}
\newcommand{\baseeta}{\frac{1}{1000}}
\newcommand{\basecutoff}{1000}
\newcommand{\basegap}{\frac{1}{2002}}
\newcommand{\baseeps}{\frac{2001}{2002}}
\providecommand{\Description}[1]{}

%% file: sections/00_abstract.tex
Synchronization strings provide deterministic position labels for recovering
coordinates after insertions and deletions.  Haeupler and Shahrasbi introduced
these objects, and subsequent work proved that four symbols suffice for some
fixed parameter $\eps<1$, whereas two symbols cannot support arbitrarily long
synchronization strings.  We resolve the remaining ternary case: every length
admits a ternary $2001/2002$-synchronization string.  Thus three is the exact
minimum constant alphabet size.  A computer-assisted refinement based on a
larger 54-uniform family yields ternary $\eps$-synchronization strings for
every $\eps>215/216$.

The previous four-symbol construction uses a ternary square-free backbone
to exclude short repetitions and a fourth symbol to carry long-range
synchronization marks.  Our main technical contribution is a local-entropy
transfer theorem: every square-free block-local source with a positive
interval conditional min-entropy rate supports synchronization strings with a fixed gap.
We instantiate this theorem using occurrence-wise branching in a Brinkhuis
family.  Every outcome remains ternary and square-free, while every long
interval retains linear conditional min-entropy after all choices outside
it are exposed.  A deletion-ball estimate converts this entropy into
an exponentially small probability of a near-complete common subsequence between
adjacent intervals, and an asymmetric
Lov\'asz Local Lemma enforces all interval constraints simultaneously.
The same framework also yields exponentially many valid words,
synchronization circles, and synchronization within a class of extremal
square-free words.  Adding constraints on distant intervals gives a Las Vegas
construction in expected $O(n^2\log^3(n+2))$ time.

%% file: sections/01_introduction.tex
\section{Introduction}

Synchronization strings restore coordinates after insertions and deletions
by attaching a deterministic label to each position.  Four labels were known
to suffice for a fixed synchronization gap, and two were known not to
suffice; whether three labels suffice was left open.  We prove that they do.
The problem joins two themes: coding against insertions and deletions, where
one needs a low-cost index that survives a loss of alignment, and pattern
avoidance, where three symbols are exactly what is needed to exclude repeated
blocks.

\subsection{Motivation and previous work in insertion--deletion coding}

An insertion or deletion does more than corrupt one symbol: it shifts the
coordinate system of every symbol that follows.  If one record disappears
from an indexed stream, then all later records appear one position too early.
Ordinary error-correcting redundancy can protect values once their positions
are known, but it does not by itself restore these lost coordinates.

Synchronization strings were introduced by Haeupler and Shahrasbi as a
deterministic coordinate system for this setting
\cite{HaeuplerShahrasbi2017}.  A fixed label is attached to each transmitted
position; even after insertions and deletions, the surviving labels allow the
receiver to recover approximate positions and reduce synchronization errors
to erasures and corruptions.  This reusable indexing layer has since been
used in coding, document exchange, channel simulation, interactive
communication, and edit-distance indexing
\cite{HaeuplerShahrasbi2018,Haeupler2019DocumentExchange,
HaeuplerShahrasbiVitercik2018,HaeuplerRubinsteinShahrasbi2019,
HaeuplerShahrasbiSurvey2021}.

The combinatorial definition is local.  For words $x$ and $y$, let
$\ED(x,y)$ be their insertion--deletion distance.

\begin{definition}[Synchronization string]\label{def:sync}
For $0\le\eps<1$, a word $S\in\alphabet^n$ is an
$\eps$-synchronization string if, for every $1\le i<j<k\le n+1$,
\begin{equation}\label{eq:intro-definition}
  \ED(S[i,j),S[j,k))>(1-\eps)(k-i).
\end{equation}
The number $1-\eps$ is the \emph{synchronization gap}.
\end{definition}

The alphabet must be fixed independently of the length of $S$.
The interval inequality rules out the basic ambiguity created by a
position shift.  Suppose a received subsequence could plausibly be aligned
both to positions just before a boundary $j$ and to positions just after it.
Those two explanations would exhibit a long common subsequence between the
adjacent label intervals $S[i,j)$ and $S[j,k)$.  The synchronization
inequality says that every such cross-boundary match must discard a fixed
fraction of the two intervals.  Consequently, a decoder can charge a
substantial amount of alignment loss to each persistent displacement and
convert insertions and deletions into a controlled number of ordinary symbol
errors or erasures.  Our focus is the combinatorial object that makes these
reductions possible; the cited synchronization-string literature develops
the corresponding decoders and coding applications.

The inequality must hold simultaneously at every boundary and at every
scale, from single symbols to intervals spanning a constant fraction of the
word.  A construction must therefore combine short-range pattern avoidance
with long-range separation on the same alphabet.

\subsection{Pattern avoidance and the extremal alphabet question}

A \emph{square} is a nonempty word repeated twice consecutively, $XX$; for
example, $012012$ is the square of $012$.  A word is \emph{square-free} if it
contains no such factor.  Synchronization
strings with $\eps<1$ must be square-free, since the two copies of a square
have edit distance zero.  Thue's classical theorem shows that arbitrarily
long square-free words exist over three symbols, already suggesting why three
is the natural lower limit.

Haeupler and Shahrasbi established the first constant-alphabet synchronization
strings.  Cheng, Haeupler, Li, Shahrasbi, and Wu subsequently studied the
extremal alphabet question: they proved that four symbols suffice for some
fixed $\eps<1$, whereas binary alphabets cannot support arbitrarily long
synchronization strings
\cite[Section~1.3.2 and Theorem~6.4]{ChengEtAl2019}.  Thus the minimum alphabet was
known to be either three or four.  The subsequent survey still listed the
ternary case as open~\cite[Section~3.1]{HaeuplerShahrasbiSurvey2021}; the
later four-symbol result discussed below did not change this.

The distinction matters because the index alphabet is side information
attached to every transmitted symbol.  In a packed representation, moving
from four labels to three reduces the idealized label cost from $2$ to
$\log_2 3$ bits per position.  The total coding rate also depends on the
synchronization parameter and the outer code; here we determine the alphabet
threshold.  Combinatorially, the question is whether the first alphabet that
escapes the local repetition obstruction already supports the stronger
linear separation required at every scale.

Square-freeness alone is nevertheless too weak.  Let $X$ and $Y$ be adjacent
words of the same length $\ell$, and let $\LCS(X,Y)$ denote the length of
their longest common subsequence.  Square-freeness says only that $X\ne Y$,
and hence $\LCS(X,Y)\le \ell-1$.  This loss of one symbol can occur even in
the square-free word $01020120\,21020120$, whose two displayed length-eight
halves have a common
subsequence of length seven.  Thus excluding $XX$ does not exclude adjacent
blocks that become identical after one deletion from each side.
The relative deficit $1/\ell$ vanishes with the interval length.  By
contrast, synchronization requires a fixed linear deficit
$\LCS(X,Y)\le(1-\eta)\ell$.

The extremal question asks whether this stronger separation is possible over
three symbols for one fixed $\eps<1$.  A negative answer would separate the
alphabet threshold for avoiding exact repetitions from that for preventing
approximate repetitions.  Our main theorem shows that the thresholds coincide:
the first alphabet large enough to avoid squares also supports a linear
edit-distance gap at every scale.

\subsection{Our results}

\begin{theorem}[Ternary synchronization strings and optimality]
\label{thm:ternary-intro}
For every integer $n\ge1$, there is a ternary
$\baseeps$-synchronization string of length $n$.  Consequently, three is the
minimum constant alphabet size supporting synchronization strings of all
lengths for some fixed $\eps<1$.
\end{theorem}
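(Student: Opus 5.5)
We prove the existence part with a probabilistic construction and then read off optimality from the known bounds. The structure follows the abstract: build a random ternary square-free source, show each long interval keeps linear conditional min-entropy, turn that into small bad-event probabilities, and combine the events with the Lov\'asz Local Lemma.

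\textbf{Step 1 (optimality).} The second sentence follows from the first together with \cite[Theorem~6.4]{ChengEtAl2019}, which shows that binary alphabets cannot support arbitrarily long $\eps$-synchronization strings for any fixed $\eps<1$. A unary alphabet fails trivially, since already $S=aa$ violates \eqref{eq:intro-definition}. So once ternary strings exist for $\eps=\baseeps$, the minimum alphabet size is three.

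\textbf{Step 2 (the source).} Take a Brinkhuis pair of uniform square-free morphisms (or a longer uniform family) over $\ternary$. At level one, each letter occurrence of a fixed square-free ternary base word is mapped independently to one of at least two admissible images; iterating gives a random word. The Brinkhuis property guarantees that every outcome is square-free. This already settles all short intervals, where $k-i$ is below a cutoff $L$ (of order $\basecutoff$). For $\eps$ this close to $1$, condition \eqref{eq:intro-definition} with $\ED\ge 2(\ell-\LCS)$ amounts to an edit distance of at least one for short windows. That is implied by $S[i,j)$ not being a subsequence-equivalent of $S[j,k)$, and square-freeness plus unequal lengths gives it. If residual short cases remain, I would check them by finite computation.

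\textbf{Step 3 (bad events for long intervals).} For $i<j<k$ with $k-i\ge L$, let $A_{i,j,k}$ be the event that \eqref{eq:intro-definition} fails, i.e.\ $\LCS(S[i,j),S[j,k))\ge \tfrac{1}{2}(k-i)(1+\eps)-\ldots$, a near-complete common subsequence. To bound its probability:
\begin{enumerate}
\item Condition on all choices outside $[j,k)$; by the local-entropy property the block $S[j,k)$ keeps conditional min-entropy at least $h(k-j)$ for a fixed rate $h>0$.
\item Then $S[j,k)$ must lie in the set of words within deletion distance $\eta(k-i)$ of a subsequence of the fixed word $S[i,j)$. The deletion-ball estimate bounds the size of this set by $\binom{\ell}{\eta\ell}^2 3^{\eta\ell}=2^{O(H(\eta)+\eta)\ell}$.
\item Hence $\Pr[A_{i,j,k}]\le 2^{-(h-O(H(\eta)))\ell}$, and with $\eta=\baseeta$ this is exponentially small.
\item The unbalanced case, where one side is much shorter, is handled by conditioning on whichever side is longer.
\end{enumerate}

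\textbf{Step 4 (LLL).} Event $A_{i,j,k}$ depends only on the branching choices whose blocks meet $[i,k)$, which is $O(\ell)$ variables. The number of events of length $\ell'$ sharing a variable with it is $O((\ell+\ell')\ell'^2)$, so it is polynomial. Assign weights $x_{A}=2^{-c\ell}$ with $c<h-O(H(\eta))$. The asymmetric LLL condition then reduces to showing that $\sum_{\ell'}\mathrm{poly}(\ell,\ell')2^{-c\ell'}$ is a small constant, which holds once $L$ is large. Therefore with positive probability no bad event occurs, giving the string for every $n$ (truncate a longer word if the lengths are quantized).

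\textbf{Main obstacle.} The crux is Step 2's entropy claim: showing that occurrence-wise branching leaves linear min-entropy inside every interval after exposing everything outside it, despite block-boundary effects. The delicate points are intervals that cut blocks, and the fact that outside choices constrain which images of the boundary occurrences are compatible. I would first try to count the fully interior level-one occurrences, about $\ell/m-2$ for block length $m$, each contributing a free bit. The remaining work is checking that these choices are injective into the realized factor, which follows from the uniqueness of morphic decoding for Brinkhuis images. Balancing the constants so that $h$ beats the deletion-ball exponent at $\eta=\baseeta$ is what pins down $\eps=\baseeps$.
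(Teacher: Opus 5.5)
Your architecture is the paper's: a one-level occurrence-wise 18-uniform Brinkhuis source, point mass from the complete macroblocks inside one interval, a deletion-ball count, a scale-sensitive asymmetric local lemma, and square-freeness at short scales. The gap is the step ``which holds once $L$ is large'', which fails for the constant in the statement, because the cutoff is not free.

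For $\eps=2001/2002$ you need $\ED(S[i,j),S[j,k))>(k-i)/2002$. Square-freeness, together with the length difference and parity, gives only $\ED\ge1$, and $\ED\ge2$ when $k-i$ is even. That certifies every pair with $k-i\le2001$. It certifies nothing at $k-i=2003$ with lengths $1001,1002$, where a single-insertion relation $\ED=1$ is not excluded. So the local lemma must already work at $k-i\approx2003$. Enlarging $L$ would leave $2003\le k-i<L$ uncovered. Your fallback ``finite computation'' does not fill that range: it would be a claim about all $2^{\Theta(\ell)}$ choice vectors over all outer contexts, which square-freeness does not imply and which is not a feasible check.

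At this pinned scale the criterion is a genuine numerical inequality, and your estimates do not verify it. You use the budget $\eta(k-i)$ with $\eta=1/1000$, which is twice the actual budget $(k-i)/2002$. The decay left after the deletion ball is then at best about $(\log2)/36-0.0097\approx0.0095$ per character. With roughly $\ell'$ split points per start, the neighbour term contains $2\sum_{\ell'\ge2003}\ell'^2e^{-c\ell'}$. This exceeds the available margin $(0.0095-c)\cdot2003-\log4$ for every $c$. Note also that the neighbour sum is not a constant. It grows linearly in $\ell$, and what is needed is that it be a small multiple of $\ell$.

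The paper avoids this by putting only equal-length events $\LCS(S[t,t+\ell),S[t+\ell,t+2\ell))>(1-\eta)\ell$ into the local lemma, with $\eta=1/L$ and $L=1000$.
\begin{itemize}
\item Square-freeness then covers exactly $\ell\le L$, since $\LCS\le\ell-1\le(1-1/L)\ell$.
\item Each start carries one event per scale, so there are at most $3(\ell+r)$ neighbours at scale $r$.
\item Explicit bounds close the criterion from $\ell=1001$ on: tail $e^{-9\ell/500}$, charge $e^{-\ell/80}$, neighbour charge below $\ell/490$, and $9/500>1/80+1/245$.
\item Arbitrary adjacent intervals are handled deterministically by \cref{lem:equal-to-full}: with $m$ the shorter length and $z$ the imbalance, $\ED(A,B)\ge\max\{z,2\eta m-z\}$, giving $\eta/(2+\eta)=1/2001>1/2002$.
\end{itemize}
That lemma is the link between $\eta=1/1000$ and $2001/2002$ which your last sentence asserts but your direct formulation never uses.

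Alternatively, you can keep direct unequal events. Then use the true budget $(k-i)/2002$, discard the empty types with $|a-b|>(k-i)/2002$, and redo the numbers from $k-i=2003$.

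Three smaller points:
\begin{itemize}
\item Every choice vector is admissible, so outside choices impose no compatibility constraint and the interior bits stay exactly uniform.
\item Injectivity is just distinctness of the two images once block boundaries are fixed.
\item No iteration of the substitution is needed.
\end{itemize}
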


The proof uses the corrected 18-uniform Brinkhuis family, comprising two
images for each of the three outer letters.  Square-freeness controls bounded
scales, while the occurrence-wise choices separate long intervals.  The
binary impossibility theorem of Cheng et
al.~\cite[Section~1.3.2]{ChengEtAl2019} then makes
the alphabet size optimal.

\subsubsection*{From four symbols to three}
The construction of Cheng, Haeupler, Li, Shahrasbi, and Wu separates the two
scales by using two distinct resources~\cite[Section~6.2]{ChengEtAl2019}: a ternary
square-free backbone rules out exact repetitions and handles the bounded
scales.  Occurrences of one backbone letter are then split into two output
symbols according to a binary weak synchronization string; this adds a fourth
symbol and stores the long-range marking independently of the backbone.  Our
construction remains ternary throughout.  Instead, it chooses independently
among several square-free-preserving images at each occurrence of an outer
letter.  After conditioning outside one interval, the remaining choices give
anti-concentration against a high-LCS match on the adjacent interval.  Thus
the issue resolved here is the one left open by the four-symbol method:
maintaining ternary square-freeness while simultaneously obtaining a linear
edit-distance gap at every scale.  Cheng et al. also show that
iterating a fixed uniform morphism cannot by itself generate arbitrarily long
synchronization strings, so the occurrence-wise variation is essential to
how our construction escapes that obstruction.

The difficulty is to obtain both properties in the same random source.
An independently random ternary word has long-range
anti-concentration, but with overwhelming probability it contains short
squares.  Conditioning on global square-freeness destroys the product
structure used by the usual interval-by-interval local-lemma analysis.

A fixed square-free word, or an iterate of a fixed square-free morphism,
guarantees local pattern avoidance but supplies no random choices with which
to rule out approximate copies having LCS $\ell-o(\ell)$.  Fixed uniform
morphisms also face the obstruction proved in the four-symbol work, while
encoding an external binary mark by changing ternary letters need not
preserve square-freeness.

Brinkhuis families combine the two properties.  After the outer word is
fixed they expose independent local choices, every combination of choices
preserves ternary square-freeness, and a long interval contains linearly many
unrevealed choices.  The corrected 18-uniform family originating with Ekhad
and Zeilberger permits two independent images at every outer-letter
occurrence while preserving square-freeness for every choice vector
\cite{EkhadZeilberger1998,EkhadZeilbergerErratum2001,Grimm2001,
SollamiDouglasLiebmann2016}.  These choices were introduced for counting
square-free words.  We condition on the choices outside an interval and use
the choices that remain inside as local randomness.  Comparing their
conditional min-entropy with an LCS deletion ball turns the branching
substitution into a synchronization mechanism.

\begin{figure}[H]
  \centering
  \fbox{\begin{minipage}{0.91\linewidth}
  \centering
  \[
    \underbrace{T_1\;T_2\;\cdots\;T_N}_{\text{square-free outer word}}
    \quad\xrightarrow[\text{independently at each occurrence}]
      {\ R_i\in\{0,1\}\ }
    \quad
    \underbrace{B_{T_1,R_1}B_{T_2,R_2}\cdots B_{T_N,R_N}}
      _{\text{ternary macroblocks}} .
  \]
  Every choice vector gives a square-free word.  After the choices outside
  one interval are fixed, linearly many $R_i$'s remain free inside it; these
  are the long-range synchronization resource.
  \end{minipage}}
  \caption{The resource used by the construction.  The output alphabet is
  always ternary: long-range separation is encoded by branching within the
  square-free language rather than by a fourth marking symbol.}
  \label{fig:intro-construction}
\end{figure}

\subsubsection*{Local entropy and further consequences}
Exponential abundance alone would not give the randomness needed here.
A family can contain exponentially many words while all of its variation
is concentrated in one region; even variation throughout the word may be
determined once the outside of an interval is revealed.  Neither possibility
is ruled out by a lower bound on the total number of square-free words.
To control a match across any prescribed boundary, we need randomness that
remains in the first interval after the second has been fixed.  The
occurrence-wise structure of a Brinkhuis family supplies precisely this
property: fixing all choices except those of complete blocks inside the
first interval leaves a fresh independent choice in each such block.

The useful measure of this randomness is conditional min-entropy.  It bounds
the probability of every individual interval value, so multiplying its
point-mass bound by the number of high-LCS competitors bounds the probability
of a bad match.  This motivates a \emph{block-local source}: independent
variables determine blocks of bounded
length, and every long interval retains linear conditional min-entropy when
all variables except those of its complete blocks are fixed.  Bounded block
lengths also control the event dependencies needed by the local lemma.

\begin{theorem}[Local-entropy transfer, informal]
\label{thm:structural-intro}
Every square-free block-local source with fixed block size and positive
interval conditional min-entropy rate contains a synchronization string for
one fixed parameter $\eps<1$.  Quantitatively, the achievable LCS deficit
$\delta$ is governed by comparing the source entropy rate $h$ with
$\Phi_q(\delta)=2\mathsf H(\delta)+\delta\log(2q)$, where $\mathsf H$ is
binary entropy and logarithms are natural.  This is an explicit upper bound on the
exponential rate of a high-LCS deletion ball; the theorem adds
finite-scale and local-dependency terms.
\end{theorem}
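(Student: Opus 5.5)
We plan to prove the informal transfer theorem with a probabilistic argument that runs in four steps: reduce bounded scales to square-freeness, bound bad-match probabilities by a deletion-ball count, weight the events, and apply the asymmetric Lov\'asz Local Lemma.

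\textbf{Setup and bounded scales.} We sample the independent variables of the block-local source and aim to show that with positive probability the output $S$ satisfies \eqref{eq:intro-definition} for some fixed $\eps<1$. Using $\ED(X,Y)=|X|+|Y|-2\LCS(X,Y)$, it suffices to show $\LCS(S[i,j),S[j,k))\le(1-\delta)\min(j-i,k-j)$, up to a conversion of constants.

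Unbalanced pairs need no probability. If one side is shorter than a fixed fraction of the other, the length difference alone forces a large edit distance.

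Short pairs also need no probability. Fix a cutoff $L_0$. For $k-i\le L_0$, square-freeness gives $X\ne Y$ whenever $|X|=|Y|$. Finitely many configurations then yield a positive gap $\eps_0$ depending only on $L_0$. We take $\eps$ close enough to $1$ to cover all of them. Length-mismatched short pairs are again handled by the length difference.

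\textbf{Long balanced pairs.} For long, roughly balanced pairs with $\ell=\min(j-i,k-j)\ge L_0$, we define a bad event $E_{i,j,k}$: $\LCS\ge(1-\delta)\ell$.

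We condition on every variable except those owning complete blocks inside $[i,j)$. This fixes $Y=S[j,k)$, since blocks straddling $j$ are exposed. By hypothesis, the conditional min-entropy of $X=S[i,j)$ is at least $h\ell-O(b)$, where $b$ is the block size.

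Every $X$ with $\LCS(X,Y)\ge(1-\delta)\ell$ is obtained from $Y$ by deleting at most $\delta\ell$ symbols and then inserting at most $\delta\ell$ symbols over a $q$-ary alphabet. This number is at most $\binom{\ell}{\delta\ell}^2(2q)^{\delta\ell}$, which is of order $e^{\Phi_q(\delta)\ell}$. Multiplying by the point-mass bound gives
\[
\Pr[E_{i,j,k}]\le \exp\bigl(-(h-\Phi_q(\delta))\ell+O(b+\log \ell)\bigr).
\]
Since $\Phi_q(\delta)\to0$ as $\delta\to0$, we can choose $\delta$ with $\Phi_q(\delta)<h$. The probability then decays exponentially in $\ell$.

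\textbf{Local lemma.} The event $E_{i,j,k}$ depends only on the variables of blocks meeting $[i,k)$. Block lengths are bounded by $b$, so $E_{i,j,k}$ shares variables with at most $O(L\,(k-i+L)\,/\,b^{O(1)})$ events of total span $L$.

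We assign weights $x_{i,j,k}=e^{-c(k-i)}$, with $c$ strictly between $0$ and $h-\Phi_q(\delta)$. The asymmetric LLL condition requires
\[
\Pr[E]\le x_E\prod_{F\sim E}(1-x_F).
\]
The product is bounded below by
\[
\exp\Bigl(-\sum_L O\bigl((k-i)L^2\bigr)e^{-cL}\Bigr)=e^{-O((k-i))}.
\]
The constant in this exponent shrinks as $L_0$ grows, because only events with $L\ge L_0$ are included. So the condition holds once $L_0$ is large, and it yields an outcome avoiding all bad events.

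\textbf{Main obstacle.} The hardest point is the entropy conditioning when $[i,j)$ is short relative to the straddling blocks, and the exact interplay of constants: the $O(b)$ entropy loss, the polynomial factors, and the LLL neighbourhood sum must all be absorbed by choosing $L_0$ large. We would fix $\delta$ first, then $c$, then $L_0$, and finally $\eps$ from both the bounded-scale gap and $\delta$.
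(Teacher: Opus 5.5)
Your argument is correct in outline, but it is organised differently from the paper's proof of \cref{thm:entropy-transfer} in two ways.

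First, the paper puts only equal-length events $E_{t,\ell}$ (one start, one length) into the local lemma. It charges them $e^{-a\ell}$, counts at most $3(\ell+r)$ neighbours of length $r$, and then reaches arbitrary adjacent intervals deterministically through \cref{lem:equal-to-full}, passing from $\delta$ to $\delta/(2+\delta)$. You instead put roughly balanced unequal pairs directly into the local lemma, using triple-indexed events, an extra span factor in the neighbour count, and a deletion ball enlarged by the imbalance. You settle strongly unbalanced pairs by the length difference. This mirrors the paper's direct treatment of unequal intervals in \cref{sec:quantitative-strengthening}, where the aim is to beat the conversion loss; for the qualitative transfer theorem it only adds bookkeeping.

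Second, and more relevant to the quantitative half of the statement, the paper ties the cutoff to the deficit by $L=\lfloor1/\delta\rfloor$. Square-freeness alone then gives $\LCS\le\ell-1\le(1-\delta)\ell$ for every $\ell\le L$, and the final gap is exactly $\delta/(2+\delta)$. The price is that the local lemma must already close at scale $1/\delta$. That requirement is encoded in $L\ge2m+1$, $e^{-aL}\le1/2$ and $\Phi_q(\delta)+b/L+a+2D_L(a)\le h$; these are the ``finite-scale and local-dependency terms'' of the statement. You choose $L_0$ freely after $c$. This makes closing the local lemma trivial, but it leaves a short-scale gap of order $1/L_0$, so your final parameter is not governed by $h$ versus $\Phi_q(\delta)$ alone. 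You therefore prove the qualitative claim. To recover the paper's quantitative form, take $L_0=\lfloor1/\delta\rfloor$ and verify the finite-scale losses.

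Several details need routine repair.

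(i) The event probability decays like $e^{-(h-\Phi_q(\delta))\ell}$ with $\ell=\min(j-i,k-j)$, which is roughly half of the span $k-i$ in your charge $e^{-c(k-i)}$. Hence $c$ must lie below about $(h-\Phi_q(\delta))/2$, not merely below $h-\Phi_q(\delta)$. The paper avoids this by charging in the one-interval length.

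(ii) When $|X|\ne|Y|$, the number of unmatched symbols on the longer side exceeds $\delta\ell$ by the imbalance. So your count $\binom{\ell}{\delta\ell}^2(2q)^{\delta\ell}$ holds only for equal lengths. The balanced/unbalanced threshold must be $O(\delta)$, and the ball exponent becomes a $\Phi$-type rate at about $2\delta$.

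(iii) Balanced pairs with $\min(j-i,k-j)<L_0<k-i$ fall between your short and long cases. Define ``short'' by $\min(j-i,k-j)<L_0$; then $k-i<3L_0$ and the gap is at least $1/(3L_0)$.

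(iv) The number of neighbours of span $r$ is $O(r(k-i+r+m))$; nothing is divided by a power of the block size. Also, the boundary loss in \cref{def:block-local-source} is a separate constant, not the block size.

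Finally, the obstacle you single out is not one. The point-mass hypothesis holds for every interval, and long events have $\ell\ge L_0>2m$ anyway.
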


The formal transfer theorem and its scalar sufficient conditions appear in
\cref{thm:entropy-transfer}.  A uniform occurrence-wise square-free
substitution with block length $m$ and $K$ choices supplies entropy rate
$(\log K)/m$ and boundary loss $2\log K$, so the general substitution
criterion follows immediately.  For the explicit ternary family, the same
analysis also gives exponentially many successful choices and a randomized
construction.

The transfer principle also applies to extremal square-free words.
A square-free word is \emph{extremal} if inserting any letter at any position
creates a square.  Here the square appears in the modified word, whereas
synchronization concerns adjacent intervals of the original word.  To obtain
both properties, we need occurrence-wise choices that preserve extremality
and retain enough local entropy to exclude approximate repetitions.  The
multivalued substitution used to construct extremal words provides such choices.

\begin{theorem}[Extremal square-free synchronization strings]
\label{thm:extremal-sync-intro}
For every integer $n\ge2493$, there is a ternary word of length $n$ that is
simultaneously extremal square-free and a
$200001/200002$-synchronization string.
\end{theorem}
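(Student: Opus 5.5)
The proof applies the local-entropy transfer theorem (\cref{thm:entropy-transfer}) to a block-local source whose every outcome is extremal square-free. The starting point is the known multivalued substitution used to build arbitrarily long ternary extremal square-free words. In that construction, a square-free outer word $T_1\cdots T_N$ over three letters is mapped occurrence-wise to blocks $B_{T_i,R_i}$, where each outer letter has at least two admissible images. The key structural fact to establish first is that \emph{every} choice vector $(R_i)$ yields a word that is both square-free and extremal. For square-freeness, I would use the standard Brinkhuis-type synchronization argument: the images are pairwise distinct and none is a proper factor straddling two images in an unintended position, so any square in the image desubstitutes to a square in the outer word. For extremality, I would run a finite check. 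An inserted letter lies inside or at the boundary of a bounded window of consecutive blocks, and the square it creates can be witnessed locally. It therefore suffices to verify, over all outer factors of bounded length that occur in square-free ternary words and all choice vectors on them, that every insertion creates a square inside that window. This is computer-verifiable.

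Second, I verify the block-local source hypotheses. Fix the outer word (for instance, a Thue-type square-free word), and let the $R_i$ be independent and uniform on $K$ choices. Blocks have bounded length $m$ (or lengths in a bounded range, if the substitution is non-uniform). For any interval of length $\ell$, it contains at least $\ell/m-2$ complete blocks. Given all outside choices, the interval's content determines the inner choice vector, because the images are distinct and the block boundaries are fixed. Hence the conditional min-entropy is at least $(\ell/m-2)\log K$: the rate is $h=(\log K)/m$ and the boundary loss is $2\log K$, as in the substitution corollary after \cref{thm:structural-intro}. Plugging into the scalar sufficient conditions of \cref{thm:entropy-transfer}, I choose $\delta$ with $\Phi_3(\delta)<h$ minus the finite-scale and dependency slack. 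Because $m$ here is large, $h$ is small and $\delta$ is correspondingly tiny; the claimed $\eps=200001/200002$ corresponds to a gap of order $10^{-5}$, which is consistent with this.

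Third, I handle short scales and lengths. Intervals shorter than the transfer cutoff are controlled as in the ternary proof: square-freeness plus a direct bounded-length check gives $\ED>(1-\eps)(k-i)$, since $\eps$ is extremely close to $1$ and any nonzero edit distance suffices for short $k-i$. Arbitrary $n\ge 2493$ requires truncation, and here lies the main obstacle. Prefixes of extremal words need not be extremal, because an insertion at the ends may fail to create a square. I would build the word so that it ends (and begins) with fixed extremal-compatible boundary blocks. One option is to choose the outer word length and append a fixed finite padding from a precomputed table, covering every residue of $n$ modulo the block lengths. Another is to use the non-uniform block lengths to hit every $n$ beyond a Frobenius-type threshold, which is where the constant $2493$ would come from. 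The padding is bounded, so synchronization is preserved after adjusting the gap by a constant-factor loss absorbed into the final $\eps$. Extremality near the ends must then be checked by the same finite verification.
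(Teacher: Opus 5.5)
\textbf{There is a genuine gap in your second step.} The substitution that actually yields extremal ternary square-free words is the Mol--Rampersad one. It acts on square-free walks in the twelve-vertex Thue digraph, not on ternary square-free outer words. Each occurrence $x$ offers $Q_x$ of length $41$ or $R_x$ of length $52$. Extremality of every occurrence-wise choice, with prescribed length-$49$ caps at both ends, is already their Corollary~3.3, so no new finite check is needed there.

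What fails is your claim that ``the block boundaries are fixed''. With unequal image lengths, flipping one choice shifts every later block and changes the total length. This has three consequences:
\begin{enumerate}
\item There is no fixed partition into blocks whose contents depend only on their own variable.
\item The content of $[t,t+2\ell)$ depends on every earlier choice, so the local dependency count behind the local lemma is lost.
\item Conditioning on the variables outside $X$ does not fix $Y$, so the point-mass bound $(\ell/m-2)\log K$ does not follow.
\end{enumerate}
The source is therefore not block-local in the sense of \cref{def:block-local-source}. Your alternative of using the unequal lengths to reach every $n$ past a Frobenius threshold makes this worse, since a prescribed total length is incompatible with free independent choices.

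The missing idea is to pair consecutive occurrences $x,y$ and allow only $Q_xR_y$ or $R_xQ_y$. These are distinct words of common length $93$. Each pair therefore carries one fair bit with fixed character boundaries. Every outcome uses the same numbers of $Q$- and $R$-images, so its length is fixed, and it still lies in the Mol--Rampersad family, so it is extremal square-free.

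To reach every $n$, write $n-98=41a+52b$ with $|a-b|\le46$. This is possible once $n-98\ge2395$, which is the source of the threshold $2493$. Place the $|a-b|$ unpaired images deterministically next to one cap, and treat them together with the caps (total length at most $2490$) as degenerate blocks. The result is a $(93,(\log2)/93,b_0)$-block-local source with $b_0=(2490/93+2)\log2$. Then \cref{thm:entropy-transfer} with $\delta=10^{-5}$ gives the non-strict gap $1/200001$, hence the parameter $200001/200002$.

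Relatedly, your padding step is not sound as stated. A fixed padding of bounded length costs an additive amount of edit distance, not a constant factor. At scales where the gap times the total length is below the padding length, the naive bound gives nothing, and square-freeness alone gives only edit distance at least $1$. The deterministic pieces must instead be part of the source, so that every outcome stays square-free and extremal and their cost is charged to the boundary loss.
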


The input here is the Thue-digraph substitution of Mol and Rampersad, whose
two occurrence-wise images have lengths $41$ and $52$.  Pairing consecutive
occurrences as $QR$ or $RQ$ exposes one binary choice on a fixed
length-$93$ block while preserving extremality.  A bounded-imbalance length
decomposition reaches every $n\ge2493$, and the transfer theorem converts
the resulting local entropy into synchronization.  Pairing is what makes the
transfer possible: although individual images have different lengths, the
paired alternatives have the same length, so changing one choice does not
move the later block boundaries.  Thus a substitution designed to prevent
every square-free insertion still supplies the local freedom needed for
synchronization.  Once these equal-length choices have been identified, the
transfer theorem requires only their entropy bound and square-free support.

For the original alphabet question, a larger Brinkhuis family gives a
stronger explicit parameter.

\begin{theorem}[Computer-assisted quantitative strengthening]
\label{thm:quantitative-intro}
For every integer $n\ge1$, there exists a word $S\in\ternary^n$ such that,
for all $1\le i<j<k\le n+1$,
\begin{equation}\label{eq:quantitative-gap}
  \ED(S[i,j),S[j,k))\ge\mainboundarygap(k-i).
\end{equation}
Consequently, for every $\eps>\mainthreshold$, ternary
$\eps$-synchronization strings exist for every length.  In particular, one
may take $\eps=\maineps$.
\end{theorem}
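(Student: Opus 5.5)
The plan is to apply the local-entropy transfer theorem (\cref{thm:structural-intro}, formally \cref{thm:entropy-transfer}) to the larger 54-uniform Brinkhuis family, and to check its scalar sufficient conditions with computer-verified constants. For lengths below a threshold $n_0$, we use direct search.

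\textbf{Step 1: the source.} Fix a ternary square-free outer word $T_1T_2\cdots T_N$. At each occurrence of an outer letter $a$, independently choose one of $K$ images $B_{a,r}\in\ternary^{54}$. The first task is to verify by computer that every choice vector gives a square-free word. I would use the standard Brinkhuis-pair criterion: the images form a uniform family, no image is a proper factor of the concatenation of two images in an unintended position, and all words $B_{a,r}B_{b,s}B_{c,t}$ with $abc$ square-free are square-free. This is a finite check. It makes the source block-local with block length $m=54$, so any interval covering $t$ complete blocks has conditional min-entropy at least $t\log K$.

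\textbf{Step 2: the gap.} Next, invoke the transfer theorem with entropy rate $h=(\log K)/54$ and boundary loss $2\log K$. Choose the target deficit $\delta$ with $\Phi_3(\delta)=2\mathsf H(\delta)+\delta\log 6<h$, leaving enough slack for the finite-scale and local-dependency terms of the asymmetric local lemma. An LCS deficit of $\delta$ relative to the shorter interval has to become an edit-distance bound $\ED\ge\frac{1}{216}(k-i)$ over the whole window $k-i$. Intervals that are very unbalanced are handled trivially, since $\ED\ge|(j-i)-(k-j)|$. Only near-balanced intervals need the LCS bound. This conversion fixes the constant $1/216$. The inequality in the statement is non-strict, so for every $\eps>215/216$ we get $\ED\ge\frac1{216}(k-i)>(1-\eps)(k-i)$. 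In particular this holds for $\eps=216/217$.

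\textbf{Step 3: small scales and all lengths.} The local lemma handles intervals above a cutoff length $L_0$ whose failure probability is exponentially small. Short windows are handled deterministically. A square-free word has $\ED\ge 2$ whenever $S[i,j)\ne S[j,k)$, which gives the required bound for $k-i\le 432$, and I would push this further by exhaustive check over all factors of the family up to length $L_0$. To get every length $n$, I would truncate a longer valid word; prefixes inherit the property, since the definition is interval-wise. Lengths below the truncation range are covered by explicit computer search.

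\textbf{Main obstacle.} The delicate part is numeric: making the slack in $h>\Phi_3(\delta)$ absorb the local-lemma weights and the boundary term $2\log K$ at the actual scale $L_0$. I would first compute the exact deletion-ball count for moderate lengths instead of using the asymptotic bound $\Phi_3$. If that does not suffice, I would enlarge $L_0$ and extend the exhaustive short-scale verification to cover the gap.
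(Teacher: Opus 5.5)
There is a genuine gap. The constant $1/216$ cannot be reached through \cref{thm:entropy-transfer}, and your short-scale step is incorrect as stated.

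\textbf{Step 2 fails numerically.} The transfer theorem only certifies an equal-block LCS deficit $\delta$. Through \cref{lem:equal-to-full} this becomes the normalized gap $\delta/(2+\delta)$, so $1/216$ forces $\delta\ge 2/215$. For the 54-uniform family, $h=(\log952)/54\approx0.1270$, while $\Phi_3(2/215)\approx0.1222$, leaving an asymptotic margin of only about $0.0048$. The hypotheses \cref{eq:transfer-conditions} then fail outright:
\begin{itemize}
\item $L=\lfloor 215/2\rfloor=107<2m+1=109$;
\item $b/L=2\log952/107\approx0.128$ already exceeds $h$;
\item $e^{-aL}\le 1/2$ needs $a\ge(\log2)/107\approx0.0065$, which exceeds the whole margin.
\end{itemize}
Computing exact ambient deletion-ball sizes does not repair this, because the loss is structural. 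Equal blocks of length $s/2$ with per-side deficit $\delta\ell$ correspond to excluding edit distance below about $s/107.5$, roughly twice the radius $s/216$ that the target needs. This is why the paper abandons the equal-to-unequal conversion and puts the unequal events $\widehat E_{t,a,b}$, with deletion pairs $u+v<s/216$, directly into the local lemma from total length $217$ on.

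\textbf{Step 3 has a false claim.} Square-freeness does not give $\ED\ge2$ when the two lengths differ by one: edit distance $1$ is perfectly compatible with square-freeness. For odd $s\in[217,432]$ with $|a-b|=1$, the target $\ED\ge s/216>1$ requires excluding exactly this one-deletion shell. Likewise, for larger even totals one must exclude edit distance $2$. Your fallback of enlarging the cutoff and checking all factors exhaustively does not work either:
\begin{itemize}
\item Absorbing $b/L_0$ alone needs $L_0\gtrsim 2900$, leaving scales $108\le\ell\lesssim2900$ uncovered by square-freeness.
\item A factor of length $\ell$ meets about $\ell/54+2$ macroblocks with $952$ choices each, so an exhaustive check is infeasible.
\item Nothing makes these scales hold for every outcome; they have to be handled probabilistically.
\end{itemize}

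\textbf{What the paper supplies instead} is much sharper small-scale probability bounds:
\begin{itemize}
\item the fragment-aware point mass $P_\ell$, which keeps the partial left block via the suffix fibers $M_s$ (\cref{lem:fragment-conditional-mass});
\item supported fillings with factor $2^u$ instead of $3^u$ (\cref{lem:supported-fillings});
\item phase-dependent run bounds $G(\lambda)$ and the capacity recurrence $F_a$, which make many small-scale events empty, for example at $s=227$ (\cref{lem:phase-runs,lem:alignment-capacity});
\item finite mask-graph and arc-consistency checks excluding the one-deletion shells at odd $s\in[229,463]$ and the two-deletion shells at $s\in\{456,462,464\}$ (\cref{lem:shortest-shell,lem:two-deletion-shell});
\item an exact rational summation of the charges for $217\le s\le4000$, plus an analytic tail.
\end{itemize}
Without these ingredients your plan yields only some small positive gap, as in \cref{thm:branching-system}, not $1/216$. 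A minor point: separate search for small $n$ is unnecessary, since prefixes inherit the inequality.
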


By \cref{cor:infinite-ternary}, the same inequality holds for an infinite
ternary word.  This refinement uses a 54-uniform family with 952 choices per
outer letter, supported deletion-ball bounds, and finite checks of matching
runs and bounded deletion shells; the
six-codeword proof of \cref{thm:ternary-intro} is independent of these checks.
The larger codebook and exact-arithmetic
verifiers for the finite shell checks and local-lemma calculations are
provided with the manuscript; the public distribution is the
\href{https://github.com/xxxhhbb/ternary-synchronization-strings-artifact}
{GitHub repository}.

In particular, $\eps=216/217\approx0.995392$ improves on the explicit
four-symbol parameter $0.999606$ obtained by Bhattacharya et
al.~\cite[Corollary~1.10]{BhattacharyaEtAl2026}.  The direct
predecessor for the alphabet question is the qualitative four-symbol theorem
of Cheng et al.~\cite{ChengEtAl2019}.  Here a smaller $\eps$ means a larger
guaranteed normalized edit distance $1-\eps$.  Thus the refinement improves
both the alphabet size and this explicit parameter, although the optimal
ternary parameter remains open.

The method also extends to synchronization circles, which must remain
synchronization strings after every
cyclic rotation.  Cheng et al.~\cite[Theorem~3.4]{ChengEtAl2019} constructed such circles
of every length over alphabets of size $O(\eps^{-2})$ by joining two
synchronization strings over disjoint alphabets.  We address the extremal
alphabet question for unbounded circle lengths.  The occurrence-wise
argument extends to this setting: a
circularly square-free outer word produces a circularly square-free output,
and the same conditional-entropy local lemma controls the long cyclic arcs.
Since ternary circularly square-free words exist at every length at least
$18$, we obtain a ternary $2001/2002$-synchronization circle of every length
$18N$ with $N\ge18$.  Hence three is also the exact minimum alphabet size
supporting synchronization circles of unbounded length.  Appendix
\ref{app:all-length-circles} strengthens this by a computer-assisted
argument: every length $n\ge10008$ admits a ternary
$5201/5202$-synchronization circle.

Finally, the local-lemma analysis gives quantitative bounds on the number of
valid words and the time needed to find one.
\begin{theorem}[Computer-assisted quantitative consequences]\label{thm:extensions-intro}
For every $n\ge504$, at least $\exp(n/9)$ words in $\ternary^n$ satisfy
\cref{eq:quantitative-gap}.  Moreover, a word satisfying
\cref{eq:quantitative-gap} can be generated by a Las Vegas randomized algorithm
in expected time $O(n^2\log^3(n+2))$ and space $O(n^2)$ machine words.
\end{theorem}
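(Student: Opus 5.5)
The plan is to obtain both claims of \cref{thm:extensions-intro} from the same random source that proves \cref{thm:quantitative-intro}: the 54-uniform Brinkhuis family with 952 images per outer letter, with the occurrence choices made independently and uniformly. The construction fixes a ternary square-free outer word $T_1\cdots T_N$ with $N=\lceil n/54\rceil$ (plus a short padding prefix when $54\nmid n$) and sets $R_i$ uniform in $\{1,\dots,952\}$. For every triple $(i,j,k)$ we have a bad event $A_{ijk}$, namely the failure of \cref{eq:quantitative-gap}. Square-freeness disposes of the short scales. The long-scale events depend only on the choices of the blocks meeting $[i,k)$. The asymmetric Lov\'asz Local Lemma, with weights $x_{ijk}=e^{-c(k-i)}$, then gives
\[
\Pr\Big[\textstyle\bigcap \overline{A_{ijk}}\Big]\ \ge\ \prod_{i<j<k}(1-x_{ijk})\ >\ 0 .
\]

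\emph{Counting.} For abundance we lower-bound the number of good choice vectors, not just their existence. We have $\Pr[\text{good}]\ge\prod(1-x_{ijk})\ge\exp(-\sum_{ijk} x_{ijk}/(1-x_{ijk}))$. For fixed $i$ the sum over $j,k$ is $\sum_{\ell}\ell e^{-c\ell}=O(1)$, so $\Pr[\text{good}]\ge e^{-\beta n}$ with an explicit $\beta$ read off from the local-lemma verifier. Distinct choice vectors give distinct words, because the family is injective: the images of each letter are distinct equal-length blocks and the parsing is fixed. Hence the number of good words is at least
\[
952^{\lfloor n/54\rfloor}e^{-\beta n}\approx \exp\big(n(\log 952/54-\beta)\big).
\]
Here $\log 952/54\approx0.127$, so we need $\beta\le 0.127-1/9\approx0.016$ after absorbing the rounding loss and the padding. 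This requires the same $x$-weights as in the existence proof but with explicit tail sums, which is an exact-arithmetic check. The threshold $n\ge504$ comes from requiring $N\ge 504/54$ so that rounding constants are absorbed.

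\emph{Algorithm.} We run Moser--Tardos resampling over the variables $R_i$. The expected number of resamplings is at most $\sum x_{ijk}/(1-x_{ijk})=O(n)$. The issue is that there are $\Theta(n^3)$ events, so checking them naively is too slow. Two reductions are needed:
\begin{itemize}
\item Short events of length at most $L_0$ are automatic by square-freeness. Events of length above $C\log n$ have total weight $o(1)$, so a single union-bound verification pass suffices (restarting on failure, which is Las Vegas).
\item Medium events, of length $\ell$ between $L_0$ and $C\log n$, are the only ones needing active checking. For each fixed $i$, a single LCS dynamic program over all $(j,k)$ with $k-i\le C\log n$ can be done incrementally (e.g.\ an all-substring LCS/semi-local table), costing $O(\log^2 n)$ per $(i,j)$ pair, or $O(n\log^2 n)$ per full pass.
\end{itemize}
After each resample of block $t$ we recheck only the events touching it, at $O(\log^3 n)$ cost. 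The final full verification, including the distant intervals checked in $O(n^2)$ space via a quadratic LCS table per boundary scale, dominates and gives expected $O(n^2\log^3(n+2))$ time and $O(n^2)$ words of space.

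\emph{Main obstacle.} The hardest step is the final verification of long events within the budget. Naively, checking $\ED(S[i,j),S[j,k))$ for all triples costs $n^3$ or more. I would first try, for each boundary $j$, a semi-local LCS (seaweed) computation between the reversed prefix and the suffix, which yields all $\LCS(S[i,j),S[j,k))$ in $\tilde O(n^2)$ total. If that is too slow I would add the "constraints on distant intervals" to the local lemma so that long events have weight small enough that a random check suffices.
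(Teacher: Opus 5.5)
Your abundance argument is essentially the paper's route. The product form of the local lemma with the existence-proof charges gives avoidance probability at least $\exp(-\sum_E x_E/(1-x_E))>e^{-M/1000}$, since there are fewer than $M$ starts per length type and $\widehat R_0<37/100000$. Injectivity of the 952-family turns this into more than $e^{M/8}$ good full outputs. A length-$n$ prefix of a length-$54\lceil n/54\rceil$ output fixes every choice except possibly the last, so you lose at most a factor $952<e^7$, and $e^{n/8-7}\ge e^{n/9}$ is exactly $n\ge504$. Take such prefixes rather than adding a padding prefix: a deterministic pad changes the fragment point masses $P_{\ell,a}$, so you would have to redo the computer-assisted sums.

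The algorithmic half has a genuine gap. Nothing in your plan certifies the adjacent inequality for lengths above $C\log n$ within the time budget, and a Las Vegas algorithm must certify it.
\begin{itemize}
\item The seaweed step is not $\tilde O(n^2)$ overall. The suffix--prefix semi-local LCS of $S[1,j)$ against $S[j,n+1)$ takes time proportional to $j(n-j)$ for each boundary $j$, so order $n^3$ in total. Also, reversing only the prefix computes $\LCS(S[i,j)^R,S[j,k))$, which is not the wanted quantity.
\item The ``total weight $o(1)$, restart on failure'' step is unjustified as stated. The charges bound probabilities under the product measure, not under the Moser--Tardos output distribution once those events are no longer resampled; you would need a Haeupler--Saha--Srinivasan-type bound. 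Even then, a failure must still be detected, which brings back the verification cost.
\item Your fallback names the right ingredient with the wrong mechanism. Distant constraints do not make a random check suffice, and added at all scales they only multiply the number of events to check.
\end{itemize}

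The missing idea is the alignment-truncation reduction (\cref{lem:alignment-truncation}). Suppose two ordered disjoint intervals of total length $s>B$ violate the $1/216$ bound. Cut an optimal alignment path at the first vertex of consumption at least $\lfloor s/2\rfloor$. One half then costs less than a $216$th of its own consumption. Its two substrings are nonempty, ordered and disjoint, generally no longer adjacent, with total length in $(R,s)$. Iterating lands in $[R,B]$, where $R=10000+1200\lceil\log_2M\rceil$ and $B=2R+4$.

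So let the local lemma contain the adjacent events of total length at most $B$, together with all distant events of total length in $[R,B]$. For a distant event, the same conditional point-mass and deletion-ball count gives probability below $e^{-s/50}$. Its charge is $(100/101)^s$, and the total distant charge is below $10^{-9}$ uniformly in $M$ because $R$ grows like $1200\log_2M$. Avoiding this bounded family already implies the adjacent gap at every scale, so no long event is ever examined.

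The stated complexity then comes from caching, for each ordered pair of window starts, one bad event found by an $O(H^2)$ prefix--prefix LCS table, where $H=O(\log n)$. Initialization costs $O(M^2H^2)$. A resampling invalidates $O(MH)$ entries, costing $O(MH^3)$, and the expected number of resamplings is $O(M)$. This gives $O(n^2\log^3(n+2))$ time and $O(n^2)$ words. The quadratic factor is exactly the number of start pairs for distant events, which your accounting never derives.
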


The counting and randomized assertions are proved as
\cref{cor:explicit-count,thm:long-distance-construction} in
\cref{sec:extensions}.  The product form of the local lemma yields the
count.  For the algorithm, we additionally separate every disjoint interval
pair of total length at least $30000\log n$ by the same normalized gap.
This long-distance property allows all longer violations to be reduced to
events on $O(\log n)$ symbols.  A cache indexed by the two interval starts
then avoids an exhaustive rescan after each Moser--Tardos resampling.

\subsection{Proof overview}

We describe the six-codeword proof of \cref{thm:ternary-intro}.  It has three
parts: square-freeness handles bounded scales, conditional entropy and the
local lemma handle long scales, and a deterministic reduction passes from
equal adjacent blocks to arbitrary adjacent intervals.

\paragraph{Short scales and the random source.}
Suppose $X$ and $Y$ are adjacent intervals of the same length $\ell$.  The
standard identity between edit distance and longest common subsequence gives
$\ED(X,Y)=2(\ell-\LCS(X,Y))$.  Thus a fixed relative LCS deficit is exactly a fixed normalized edit-distance
gap for equal blocks.  It is therefore enough at first to rule out events of
the form
\begin{equation}\label{eq:intro-bad-event}
  \LCS\bigl(S[t,t+\ell),S[t+\ell,t+2\ell)\bigr)
     >(1-\eta)\ell.
\end{equation}
There are quadratically many such events and their variable sets overlap
heavily, so a union bound is not adequate.  The local lemma will be used only
after we obtain an exponentially small probability for each event and keep
track of its scale.

Equal blocks also expose the role of square-freeness.  If $X=Y$, then $XY$
is a square.  Hence in a square-free word $X\ne Y$, and therefore
$\LCS(X,Y)\le \ell-1$.  For $\ell\le L$, this is already the relative deficit
$\LCS(X,Y)\le(1-1/L)\ell$.  Square-freeness therefore settles all scales up
to a cutoff $L$ without probability or enumeration.  What remains is to
replace the vanishing deficit $1/\ell$ by a constant deficit when
$\ell>L$.

To obtain the random source, fix a ternary square-free outer word
$T=T_1\cdots T_N$.  For each occurrence
$T_i$, choose a bit $R_i$ independently and replace that occurrence by the
length-18 word $B_{T_i,R_i}$ from the corrected Brinkhuis family.  The
resulting word is
$W(T,R)=B_{T_1,R_1}B_{T_2,R_2}\cdots B_{T_N,R_N}$.
The Brinkhuis guarantee holds for every outer square-free word and every
occurrence-wise choice vector.  Consequently
the support of our probability space consists entirely of ternary
square-free words.  We may use the choices $R_i$ to attack long approximate
repetitions without ever reintroducing an exact short repetition.

Two occurrences of the same outer letter may therefore use different
images.  The bits $R_i$ select among ternary blocks, as illustrated in
\cref{fig:intro-construction}; the output still uses only three symbols.

\paragraph{Local entropy bounds the probability of a long match.}
Fix the event in \cref{eq:intro-bad-event}, and write its two intervals as
$X$ and $Y$.  Reveal every $R_i$ except those whose complete macroblocks lie
inside $X$.  Then $Y$ and the boundary blocks are fixed, while every
unrevealed macroblock lies entirely inside $X$.  An interval of length $\ell$ contains at least
$\ell/18-2$ complete macroblocks, so that many independent fair bits remain.

The two images available for a fixed outer letter are distinct.  Once the
macroblock boundaries and $T$ are fixed, different remaining choice vectors
therefore give different values of the corresponding portion of $X$.
Conditioned on everything already revealed, every exact candidate value of
$X$ has probability at most
\begin{equation}\label{eq:intro-point-mass}
  2^{-(\ell/18-2)}.
\end{equation}
This is a local min-entropy statement: it remains valid no matter how the
word outside $X$ has been fixed.  That robustness under conditioning is what
allows overlapping interval events to be handled later.

It remains to count how many length-$\ell$ ternary words can have LCS more than
$(1-\eta)\ell$ with the now fixed word $Y$.  If the LCS deficit is $d$, one
may specify such a word by choosing the $d$ positions deleted from $Y$, the
$d$ positions deleted from the candidate $X$, and the ternary symbols placed
at the latter positions.  This standard deletion-ball encoding gives the
upper bound
\begin{equation}\label{eq:intro-deletion-ball}
  \sum_{0\le d<\eta\ell}
     \binom{\ell}{d}^{\!2}3^d
  =\exp\!\bigl(O(\eta\log(1/\eta))\ell\bigr).
\end{equation}
Multiplying \cref{eq:intro-deletion-ball} by the point-mass bound
\cref{eq:intro-point-mass} shows that the fixed adjacent pair is bad with
probability at most $\exp(-c\ell)$ for an absolute $c>0$ once $\eta$ is
sufficiently small.  For the six-codeword
family we take $\eta=1/1000$; elementary estimates verify the required
numerical inequality.  Its content is simple: the deletion ball has
entropy rate tending to zero with $\eta$, whereas the
occurrence-wise Brinkhuis choices supply a fixed positive entropy rate.

Ignoring the two boundary macroblocks, the six-codeword family supplies
$(\log 2)/18\approx0.0385$ nats of choice entropy per output character.  The
leading entropy term in the deletion-ball bound is
$2\mathsf H(\eta)+\eta\log 3$, where
$\mathsf H(\eta)=-\eta\log\eta-(1-\eta)\log(1-\eta)$; it is about $0.0169$ at
$\eta=1/1000$.  This strict entropy margin drives
the long-scale argument.  The formal proof also retains the boundary losses,
integer roundings, and local-lemma charge.

\paragraph{Enforcing all scales.}
The interval events are not independent, but their dependencies are
geometric.  An event at scale $\ell$ is determined only by macroblock choices
meeting a character interval of length $2\ell$.  At another scale $s$, only
$O(\ell+s)$ starting positions can overlap it.  Assigning an exponentially
decaying charge to a scale-$s$ event makes the total charge of all its
neighbors summable over $s$.

An asymmetric Lov\'asz Local Lemma argument in the variable model formalizes
this calculation.
The proof uses the actual scale of each neighbor rather than bounding all
events by the smallest probability.  The exponential event tail pays for
the linear number of positions at every scale, and the product of the
remaining local-lemma factors stays large enough.  We conclude that there is
one occurrence-wise choice vector for which no long event
\cref{eq:intro-bad-event} occurs.  Together with square-freeness on
$\ell\le1000$, this gives the equal-block LCS gap at every scale.

We now pass to arbitrary adjacent intervals, whose lengths need not
be equal.  Let the shorter length be $m$ and let the length imbalance be
$z$.  The equal-block estimate on the length-$m$ pieces adjacent to the
common boundary gives an edit-distance lower bound $2\eta m-z$ after paying
for the unmatched excess; the difference in lengths independently gives the
lower bound $z$.  Combining them gives
$\ED(A,B)\ge\max\{z,2\eta m-z\}$.
Dividing by $|A|+|B|=2m+z$ and considering whether $z$ is smaller or larger
than $\eta m$ gives normalized edit distance at least
$\eta/(2+\eta)$.  With $\eta=1/1000$ this is $1/2001$.  Taking the slightly
smaller strict gap $1/2002$ proves that every length has a ternary
$2001/2002$-synchronization string.  Combining this with the known binary
obstruction closes the alphabet question.

\paragraph{Extensions of the argument.}
Abstracting the preceding calculation gives the local-entropy transfer
theorem.  Its main condition is that the exponential rate of a high-LCS
deletion ball be smaller than the interval conditional min-entropy rate.
Lower-order terms account for block boundaries and overlapping dependencies.
The theorem therefore applies whenever a square-free source supplies the
required local entropy, regardless of the combinatorial origin of that
source.

For the stronger constant, we use the length-54 family with 952
choices per outer-letter occurrence.  Its larger conditional entropy permits
a larger long-scale LCS deficit.  At bounded scales, however,
square-freeness by itself certifies only the first deletion
shell.  We therefore verify the remaining one- and two-deletion obstructions
for the finite codebook and analyze unequal adjacent intervals directly,
avoiding the loss in the generic equal-to-unequal conversion.  Supported
fillings and phase-dependent bounds on uninterrupted matching runs further
reduce the deletion-ball volume.  This yields
the non-strict normalized edit-distance lower bound $1/216$, and hence every
parameter
$\eps>215/216$; in particular, one may take $216/217$.
The quantitative argument is summarized in
\cref{sec:quantitative-summary} and proved in
Appendix~\ref{sec:quantitative-strengthening}.

On a circularly square-free outer word, the same conditioning and
scale-sensitive local lemma apply to cyclic arcs, giving the
$18N$-length circle result.  To reach every sufficiently large length,
Appendix~\ref{app:all-length-circles} deletes one short factor from
the uniform output, proves that every new square is confined to a bounded
neighborhood of the resulting seam, and verifies the finitely many boundary
configurations.

The product form of the local lemma lower-bounds the measure of successful
choice vectors; injectivity then converts this bound into exponentially many
ternary words.  The same bad events can be detected by LCS dynamic
programming, and Moser--Tardos resamples the macroblocks of a violated event.
The charge calculation bounds the expected number of resamplings, giving the
Las Vegas polynomial-time construction.

\subsection{Related work}

\paragraph{Synchronization strings.}
Haeupler and Shahrasbi introduced synchronization strings for approaching the
Singleton bound under insertions and deletions, and subsequent work developed
explicit and locally decodable constructions and applications to channel
simulation and indexing
\cite{HaeuplerShahrasbi2017,HaeuplerShahrasbi2018,
HaeuplerShahrasbiVitercik2018,HaeuplerRubinsteinShahrasbi2019}.
Cheng, Haeupler, Li, Shahrasbi, and Wu
\cite[Section~1.3.2 and Theorem~6.4]{ChengEtAl2019} proved that four
symbols suffice for some fixed
$\eps<1$, two symbols do not, and the ternary case was left open.  Their
prescribed-parameter bounds concern the different regime in which the
alphabet grows as $\eps\to0$.  Bhattacharya, Dey, Goldenberg, Habib,
Haeupler, Karthik C.~S., and
Kouck\'y~\cite[Corollary~1.10]{BhattacharyaEtAl2026} obtained the
explicit four-symbol value $0.999606$ as a corollary of their Hamming-to-edit
embedding construction.  Since that construction is quaternary, the ternary
extremal case remained open.

\paragraph{Pattern avoidance.}
Square-free words originate with Thue~\cite{Thue1906}.  Brinkhuis-type
substitutions were developed to prove exponential abundance of ternary
square-free words
\cite{Brinkhuis1983,EkhadZeilberger1998,EkhadZeilbergerErratum2001,
Grimm2001,SollamiDouglasLiebmann2016}.  We use the corrected occurrence-wise
families from this literature, but reinterpret their choices as conditional
anti-concentration rather than counting.  Avoidance under LCS similarity was
also studied by Camungol and Rampersad, who gave an existence bound for
sufficiently large alphabets and computational bounds beginning at three
symbols~\cite[Theorem~1 and Tables~1--2]{CamungolRampersad2016}.
Grytczuk, Kordulewski, and Niewiadomski introduced extremal square-free words,
and Mol and Rampersad determined their attainable ternary lengths using a
multivalued Thue-digraph substitution
\cite{GrytczukKordulewskiNiewiadomski2020,MolRampersad2021}.  Our second
application uses the occurrence-wise choices in that construction as a new
block-local source.

\paragraph{Probabilistic tools.}
The asymmetric local lemma and the Moser--Tardos resampling theorem are used
in their standard forms~\cite{ErdosLovasz1975,MoserTardos2010}.  Beck's
local-lemma construction of binary sequences with separated repeated
intervals is conceptually related~\cite{Beck1984}, but it controls a different
distance property.  The small-alphabet construction of Cheng et
al.~\cite[Sections~3--4]{ChengEtAl2019} also combines a nonuniform random source, deletion
counting, and the local lemma.  The source condition here is different:
bounded-range independent variables retain interval conditional min-entropy,
and every outcome remains square-free.  The local lemma therefore runs
entirely inside the square-free support.

\paragraph{Organization.}
\Cref{sec:preliminaries} reviews the standard distance and local-lemma tools,
and \cref{sec:brinkhuis} introduces the classical Brinkhuis input.
\Cref{sec:ternary-proof} proves the alphabet theorem, and
\cref{sec:structural} states the local-entropy transfer principle and develops
its substitution and cyclic consequences, together with an overview of the
extremal application.  \Cref{sec:quantitative-summary} presents
the stronger explicit parameter, while abundance and the randomized
construction appear in \cref{sec:extensions}.  The appendices contain the
general transfer proof, the extremal square-free application, the all-length
circle refinement, the quantitative argument, and the finite
calculations underlying the computer-assisted refinements.

%% file: sections/02_preliminaries.tex
\section{Preliminaries}\label{sec:preliminaries}

All logarithms are natural.  A word $S=S_1\cdots S_n$ has length $|S|=n$,
and $S[i,j)=S_iS_{i+1}\cdots S_{j-1}$ for $1\le i\le j\le n+1$.
Intervals are always nonempty unless stated otherwise.

\subsection{Edit distance, common subsequences, and synchronization}

The insertion--deletion distance $\ED(x,y)$ is the minimum number of symbol
insertions and deletions transforming $x$ into $y$.  The longest common
subsequence length is denoted by $\LCS(x,y)$.  They satisfy the standard identity
\begin{equation}\label{eq:ed-lcs}
\ED(x,y)=|x|+|y|-2\LCS(x,y).
\end{equation}
It follows by deleting to a common subsequence and then inserting the missing
symbols; see, for example, \cite[Definitions~2.2--2.3]{ChengEtAl2019}.

\begin{definition}[Square-free word]\label{def:square-free}
A \emph{square} is a nonempty word repeated twice consecutively, $XX$.  A
word is \emph{square-free} if none of its factors is a square.
\end{definition}

By \cref{def:sync}, every synchronization string with $\eps<1$ is
square-free.  Both square-freeness and every fixed synchronization inequality
are inherited by factors: adjacent intervals inside a factor are unchanged adjacent
intervals of the original word.  In particular, both properties are inherited
by prefixes.  The next lemma reduces the synchronization condition to an
equal-length LCS gap.

\begin{lemma}[From equal blocks to arbitrary intervals]\label{lem:equal-to-full}
Let $0<\eta<1$.  Suppose a word $S$ satisfies
\begin{equation}\label{eq:equal-gap-general}
  \LCS(S[t,t+\ell),S[t+\ell,t+2\ell))
  \le(1-\eta)\ell
\end{equation}
for every available start $t$ and length $\ell\ge1$.  Then $S$ is an
$\eps$-synchronization string whenever
$1-\eps<\eta/(2+\eta)$.  In particular, it has synchronization gap
$\eta/3$.
\end{lemma}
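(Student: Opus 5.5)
Fix $1\le i<j<k\le n+1$ and write $A=S[i,j)$, $B=S[j,k)$, with $a=|A|$ and $b=|B|$. Let $m=\min\{a,b\}$ and $z=|a-b|$, so that $k-i=2m+z$. The plan is to establish
\[
\ED(A,B)\ge\max\{z,\;2\eta m-z\}
\]
and then optimize over the ratio $z/m$.

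The first bound, $\ED(A,B)\ge z$, is immediate from \cref{eq:ed-lcs}, since $\LCS(A,B)\le m$ gives $\ED(A,B)\ge a+b-2m=z$.

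For the second bound, let $A'$ be the length-$m$ suffix of $A$ and $B'$ the length-$m$ prefix of $B$. These are adjacent equal blocks $S[j-m,j)$ and $S[j,j+m)$, so by hypothesis $\LCS(A',B')\le(1-\eta)m$.

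The key step is to bound $\LCS(A,B)$ by $\LCS(A',B')+z$. Say $a\ge b$, so that $B=B'$ and $A$ is $A'$ preceded by $z$ extra symbols. Deleting those $z$ symbols from any common subsequence of $A$ and $B$ removes at most $z$ matched symbols and leaves a common subsequence of $A'$ and $B'$. Hence $\LCS(A,B)\le(1-\eta)m+z$, and the case $b>a$ is symmetric. Plugging into \cref{eq:ed-lcs},
\[
\ED(A,B)\ge 2m+z-2(1-\eta)m-2z=2\eta m-z.
\]

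It remains to optimize. If $z\ge\eta m$, then
\[
\frac{\ED}{2m+z}\ge\frac{z}{2m+z}\ge\frac{\eta m}{2m+\eta m}=\frac{\eta}{2+\eta},
\]
because $z/(2m+z)$ is increasing in $z$. If $z<\eta m$, then
\[
\frac{2\eta m-z}{2m+z}>\frac{\eta m}{2m+\eta m}=\frac{\eta}{2+\eta},
\]
because this ratio is decreasing in $z$. In both cases $\ED(A,B)\ge\frac{\eta}{2+\eta}(k-i)>(1-\eps)(k-i)$ whenever $1-\eps<\eta/(2+\eta)$, which proves the main claim.

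For the "in particular", note that $\eta<1$ gives $2+\eta<3$, so $\eta/3<\eta/(2+\eta)$. Taking $1-\eps=\eta/3$ is therefore admissible, and $S$ has synchronization gap $\eta/3$.

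The only step needing care is the reduction of $\LCS(A,B)$ to $\LCS(A',B')+z$ via truncation. The case analysis is monotone and routine.
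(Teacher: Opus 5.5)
Your proof is correct and follows essentially the same route as the paper. Both use the length-$m$ suffix/prefix blocks at the common boundary, the bound $\ED(A,B)\ge\max\{z,2\eta m-z\}$, and the same two-case split at $z=\eta m$. The only cosmetic difference is that you get $2\eta m-z$ by truncating a common subsequence, while the paper uses the triangle inequality for $\ED$; the two arguments are equivalent.
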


\begin{proof}
Let $A,B$ be arbitrary adjacent nonempty intervals, put
$m=\min\{|A|,|B|\}$ and $z=\lvert |A|-|B|\rvert$, and let $A_0$ and $B_0$
be respectively the length-$m$ suffix of $A$ and prefix of $B$.  These are
adjacent equal-length intervals, so
$\ED(A_0,B_0)\ge 2\eta m$ by
\cref{eq:ed-lcs,eq:equal-gap-general}.  Deleting the excess $z$ symbols from
the longer interval and using the triangle inequality gives
$\ED(A,B)\ge 2\eta m-z$.  The difference in lengths also gives
$\ED(A,B)\ge z$; hence
\[
  \frac{\ED(A,B)}{|A|+|B|}
  \ge \frac{\max\{z,2\eta m-z\}}{2m+z}.
\]
If $z\ge\eta m$, the right-hand side is at least
$z/(2m+z)\ge\eta/(2+\eta)$.  If $z\le\eta m$, it is at least
$(2\eta m-z)/(2m+z)\ge\eta/(2+\eta)$.  Thus every adjacent pair has
normalized edit distance at least $\eta/(2+\eta)$.  Any strictly smaller
gap therefore satisfies \cref{def:sync}; in particular, $\eta/3$ does because
$\eta<1$.
\end{proof}

\subsection{Local-lemma tools}

Let $\mathcal V$ be a finite family of mutually independent random variables,
and let $\cE$ be a finite family of bad events.  For each $E\in\cE$, let
$\vbl(E)\subseteq\mathcal V$ be a set of variables determining $E$, and let
$\Gamma(E)$ consist of the events $F\ne E$ for which
$\vbl(E)\cap\vbl(F)\ne\varnothing$.  We use the following standard forms of
the local lemma and its algorithmic
counterpart~\cite{ErdosLovasz1975,MoserTardos2010}, including the product
bound used later in the abundance argument.

\begin{theorem}[Asymmetric Lov\'asz Local Lemma]\label{thm:lll}
Suppose numbers $x(E)\in(0,1)$ satisfy
\begin{equation}\label{eq:lll-criterion}
  \Pr[E]\le x(E)\prod_{F\in\Gamma(E)}(1-x(F))
  \qquad\text{for every }E\in\cE.
\end{equation}
Then
\begin{equation}\label{eq:lll-product}
  \Pr\!\left[\bigcap_{E\in\cE}\overline E\right]
  \ge\prod_{E\in\cE}(1-x(E))>0.
\end{equation}
\end{theorem}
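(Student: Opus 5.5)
We will follow the classical Erd\H{o}s--Lov\'asz argument. The core is the conditional bound
\[
  \Pr\Bigl[E \,\Big|\, \textstyle\bigcap_{F\in\mathcal S}\overline F\Bigr]\le x(E)
  \qquad\text{for every }E\in\cE\text{ and every }\mathcal S\subseteq\cE\setminus\{E\},
\]
which we prove by induction on $|\mathcal S|$. Alongside it we show that the conditioning event $\bigcap_{F\in\mathcal S}\overline F$ has positive probability. Given this bound, \eqref{eq:lll-product} follows from the chain rule. Order the events as $E_1,\dots,E_m$. Then
\[
  \Pr\Bigl[\textstyle\bigcap_{i}\overline{E_i}\Bigr]=\prod_{i=1}^m\Pr\Bigl[\overline{E_i}\,\Big|\,\textstyle\bigcap_{j<i}\overline{E_j}\Bigr]\ge\prod_{i}(1-x(E_i)).
\]
The right-hand side is strictly positive because every $x(E)<1$. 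Positivity of each conditioning event is obtained inductively in the same way, as a product of factors $1-x(\cdot)>0$.

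For the inductive step, fix $E$ and $\mathcal S$. Split $\mathcal S=\mathcal S_1\sqcup\mathcal S_2$, where $\mathcal S_1=\mathcal S\cap\Gamma(E)$ and $\mathcal S_2=\mathcal S\setminus\Gamma(E)$. If $\mathcal S_1=\varnothing$, then $E$ is determined by variables disjoint from those determining $\bigcap_{F\in\mathcal S_2}\overline F$. By mutual independence of $\mathcal V$, $E$ is independent of that intersection, so the conditional probability equals $\Pr[E]\le x(E)$ by \eqref{eq:lll-criterion}. Otherwise, write $C_2=\bigcap_{F\in\mathcal S_2}\overline F$ and bound
\[
  \Pr\Bigl[E\,\Big|\,\textstyle\bigcap_{\mathcal S}\overline F\Bigr]
  =\frac{\Pr[E\cap\bigcap_{\mathcal S_1}\overline F\mid C_2]}{\Pr[\bigcap_{\mathcal S_1}\overline F\mid C_2]}
  \le\frac{\Pr[E\mid C_2]}{\Pr[\bigcap_{\mathcal S_1}\overline F\mid C_2]}.
\]
The numerator equals $\Pr[E]$, again by the variable-disjointness independence. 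For the denominator, list $\mathcal S_1=\{F_1,\dots,F_r\}$ and expand by the chain rule:
\[
  \Pr\Bigl[\textstyle\bigcap_{\mathcal S_1}\overline F\,\Big|\,C_2\Bigr]
  =\prod_{t=1}^{r}\Bigl(1-\Pr\bigl[F_t\,\big|\,C_2\cap\overline{F_1}\cap\dots\cap\overline{F_{t-1}}\bigr]\Bigr).
\]
Each conditioning family in this product is a proper subset of $\mathcal S$ not containing $F_t$, so the induction hypothesis bounds each conditional probability by $x(F_t)$. The product is therefore at least $\prod_{F\in\Gamma(E)}(1-x(F))$, and \eqref{eq:lll-criterion} yields the ratio bound $\le x(E)$.

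The one point needing care is the independence claim. We must check that an event determined by $\vbl(E)$ is independent of any Boolean combination of events whose variable sets avoid $\vbl(E)$. This is exactly where the variable-model definition of $\Gamma(E)$ is used, and it follows from the mutual independence of the family $\mathcal V$ by grouping the variables. The remaining bookkeeping, namely ensuring that conditioning events have positive probability before we divide, is handled by carrying the product lower bound through the same induction.
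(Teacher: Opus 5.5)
Your proof is correct. It is the classical Erd\H{o}s--Lov\'asz induction in its asymmetric form: the variable-model definition of $\Gamma(E)$ gives exactly the independence of $E$ from $\bigcap_{F\in\mathcal S\setminus\Gamma(E)}\overline F$ that your inductive step needs, and positivity of each $\bigcap_{F\in\mathcal S}\overline F$ follows, by the chain rule, from the conditional bounds for smaller sets, which closes the induction. The paper does not prove this statement; it invokes the lemma in this standard form by citing \cite{ErdosLovasz1975,MoserTardos2010}, so your argument is the standard proof the paper relies on.
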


\begin{theorem}[Moser--Tardos]\label{thm:moser-tardos}
Under the same hypotheses and \cref{eq:lll-criterion}, the algorithm that
repeatedly chooses any currently true event $E$ and resamples the variables in
$\vbl(E)$ terminates almost surely.  The expected number of times it resamples
$E$ is at most $x(E)/(1-x(E))$.
\end{theorem}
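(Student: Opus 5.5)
We would follow the original witness-tree argument of Moser and Tardos rather than derive the statement from \cref{thm:lll}, since the expected-resampling bound is strictly stronger than existence. First we fix the execution model. Every variable $v\in\mathcal V$ is given an infinite independent stack of samples from its distribution. The initial assignment reads the top entry of each stack, and resampling $\vbl(E)$ pops the next entry of each variable in $\vbl(E)$. The choice of which true event to resample may follow any rule, possibly adaptive or adversarial, as long as it depends only on the history. Let $N_E$ be the number of times $E$ is resampled. It suffices to show
\[
  \mathbb E[N_E]\le \frac{x(E)}{1-x(E)}
\]
for every $E\in\cE$. Summing over the finite family $\cE$ then gives a finite expected total number of resamplings, so the algorithm terminates almost surely, and at termination no event of $\cE$ holds.

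The key construction is the witness tree attached to each step of the execution log $E_{(1)},E_{(2)},\dots$. For step $t$ we build a rooted tree labelled by events, putting $E_{(t)}$ at the root. We then scan $s=t-1,t-2,\dots,1$. Whenever $E_{(s)}$ coincides with or shares a variable with the label of some current vertex, we attach a new vertex labelled $E_{(s)}$ as a child of the deepest such vertex. Two standard facts follow.
\begin{enumerate}
\item Distinct steps with the same root label give distinct trees, since the $k$-th resampling of $E$ yields a tree containing exactly $k$ copies of $E$.
\item Vertices at the same depth have labels with pairwise disjoint variable sets, and each child label lies in $\{F\}\cup\Gamma(F)$ for its parent label $F$.
\end{enumerate}
Hence $\mathbb E[N_E]=\sum_{\tau}\Pr[\tau\text{ occurs in the log}]$, where the sum runs over proper witness trees $\tau$ rooted at $E$.

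The main obstacle is the coupling lemma $\Pr[\tau\text{ occurs}]\le\prod_{u\in\tau}\Pr[[u]]$, where $[u]$ denotes the event labelling $u$. We would prove it by a $\tau$-check. Process the vertices of $\tau$ in reverse breadth-first order, deepest first. For a vertex labelled $F$, evaluate $F$ on fresh stack entries: for each $v\in\vbl(F)$, take the entry of $v$ lying one position above the number of earlier vertices of $\tau$ whose labels involve $v$. The tree construction forces exactly these entries to be the ones read when the corresponding resamplings occurred. So occurrence of $\tau$ implies that the $\tau$-check passes on every vertex. The entries used by different vertices are distinct, hence independent, which gives the product bound.

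Finally we bound the sum over trees by a Galton--Watson branching process. Start from a root labelled $E$. Each vertex labelled $F$ independently includes a child labelled $G$ for each $G\in\{F\}\cup\Gamma(F)$ with probability $x(G)$. A proper tree $\tau$ arises with probability
\[
  \frac{1-x(E)}{x(E)}\prod_{u\in\tau}x'([u]),
  \qquad\text{where }x'(F)=x(F)\prod_{G\in\Gamma(F)}(1-x(G)).
\]
By \cref{eq:lll-criterion} we have $\Pr[F]\le x'(F)$. Summing this process probability over all $\tau$ gives at most $1$, so
\[
  \mathbb E[N_E]\le\frac{x(E)}{1-x(E)}.
\]
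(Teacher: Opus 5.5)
Your proposal is correct and reproduces the original Moser--Tardos proof: witness trees built from the log, the $\tau$-check coupling that gives $\Pr[\tau\text{ occurs}]\le\prod_{u\in\tau}\Pr[[u]]$, and the Galton--Watson comparison with $x'(F)=x(F)\prod_{G\in\Gamma(F)}(1-x(G))$. The paper gives no proof of its own; it invokes this theorem as a standard tool from \cite{MoserTardos2010}, so your argument reconstructs the cited proof.
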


%% file: sections/03_brinkhuis_source.tex
\section{The square-free source: Brinkhuis substitutions}\label{sec:brinkhuis}

Our construction starts from a classical square-free substitution in its
\emph{occurrence-wise} form: different occurrences of the same source letter
may choose different images.

\begin{definition}[Uniform Brinkhuis family]\label{def:brinkhuis}
An $m$-Brinkhuis $K$-triple consists of three sets
$\cC_0,\cC_1,\cC_2\subseteq\ternary^m$, each of size $K$, such that for every
ternary square-free word $abc$ of length three and every independent choice
$U\in\cC_a$, $V\in\cC_b$, $Z\in\cC_c$, the concatenation $UVZ$ is
square-free.  The choices of $U$ and $Z$ are independent even when $a=c$.
\end{definition}

The length-three condition corrects the original, insufficient two-block
formulation~\cite{EkhadZeilbergerErratum2001}; see also
\cite[Definition~1]{Grimm2001} and
\cite[Definition~1]{SollamiDouglasLiebmann2016}.
We use the following six length-$18$ words of Ekhad and Zeilberger:
\begin{equation}\label{eq:codebook}
\begin{aligned}
B_{0,0}&=210201202120102012,&
B_{0,1}&=210201021202102012,\\
B_{1,0}&=021012010201210120,&
B_{1,1}&=021012102010210120,\\
B_{2,0}&=102120121012021201,&
B_{2,1}&=102120210121021201.
\end{aligned}
\end{equation}

\begin{proposition}[Classical 18-uniform Brinkhuis family]
\label{prop:codebook}
The six words in \cref{eq:codebook} form an 18-Brinkhuis 2-triple.  More
generally, if $T=T_1\cdots T_N$ is any ternary square-free word and
$R\in\bits^N$, then
\begin{equation}\label{eq:random-output}
  W(T,R)=B_{T_1,R_1}B_{T_2,R_2}\cdots B_{T_N,R_N}
\end{equation}
is square-free.  The map $R\mapsto W(T,R)$ is injective.
\end{proposition}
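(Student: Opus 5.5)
The proof has three parts: the 2-triple property of the six codewords, square-freeness of $W(T,R)$ for arbitrary square-free $T$, and injectivity.

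\emph{The 2-triple property.} This is a finite statement. There are twelve ternary square-free words $abc$ of length three ($a\ne b$, $b\ne c$, with $a=c$ allowed). Each has $2^3=8$ independent choices of $(U,V,Z)$, giving $96$ concatenations of length $54$. Each must be checked for squares $XX$ with $1\le|X|\le27$. I would cite the corrected verification of Ekhad and Zeilberger \cite{EkhadZeilbergerErratum2001} and the treatment in \cite{Grimm2001,SollamiDouglasLiebmann2016}, and note that a direct exhaustive check is routine. The one structural point is that $U$ and $Z$ are chosen independently even when $a=c$, as \cref{def:brinkhuis} requires; the check must include those mixed choices.

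\emph{Passing from the local check to $W(T,R)$.} This is the main step. I would follow the standard Brinkhuis argument. First I would record two combinatorial facts about the codebook, each verified finitely:
\begin{enumerate}
\item all six codewords are distinct;
\item a codeword occurs as a factor of a concatenation of codewords only at positions that are multiples of $18$ (a synchronization/recognizability property).
\end{enumerate}
Now suppose $W(T,R)$ contains a square $XX$. If $|XX|$ is small, say $|X|\le 18$, then $XX$ fits inside three consecutive blocks. Those blocks come from a length-three factor of $T$, which is square-free, so the triple check excludes this case. If $|X|$ is large, each half $X$ contains a complete codeword. Recognizability then forces $|X|\equiv0\pmod{18}$, and the two halves are aligned on block boundaries up to a common offset. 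Comparing the full blocks inside the two halves, and using that the two images of each letter have a common prefix and suffix (visible in \cref{eq:codebook}), I would show that the outer letters must repeat. This yields a square in $T$, which is a contradiction. The delicate part is the partial blocks at the two ends of $XX$. For these I would use the observation that the words for distinct letters already differ in their first and last symbols. That lets the end fragments determine the outer letters, and the length-three verification covers the remaining boundary configurations.

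\emph{Injectivity.} Since every codeword has length $18$, $W(T,R)$ splits uniquely into length-$18$ blocks, and the $i$th block equals $B_{T_i,R_i}$. Because $B_{a,0}\ne B_{a,1}$ for each letter $a$, this block determines $R_i$. Hence $R\mapsto W(T,R)$ is injective.
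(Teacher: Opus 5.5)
Your proposal is correct. Its middle step follows a different route from the paper. The paper does not prove the lifting from triples to arbitrary square-free $T$. It cites the corrected Ekhad--Zeilberger construction together with the generalized Brinkhuis theorem (Grimm; Sollami--Douglas--Liebmann) for both the triple property and the occurrence-wise extension. It then argues injectivity exactly as you do.

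You instead derive the extension from three finite facts about this codebook:
\begin{enumerate}
\item the triple check;
\item recognizability, meaning a codeword occurs in an output only at phase $0 \bmod 18$;
\item the fact that codewords of distinct outer letters differ, both as full words and in their first and last symbols.
\end{enumerate}
This makes the proof self-contained. Recognizability is exactly what the paper verifies later in \cref{lem:phase-recognition}, where length-$14$ factors determine the phase. The cost is that your argument is specific to this codebook. The citation supplies a general lifting theorem, which the paper reuses for the $54$-uniform family.

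Three small points would tighten your sketch.

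First, ``each half contains a complete codeword'' holds only for $|X|\ge35$; the range $19\le|X|\le34$ needs its own argument. Suppose neither half contains an aligned block. For $|X|\ge19$, the first half must then end strictly inside its second block, so $XX$ lies in three consecutive blocks and the triple check applies. Otherwise, some half contains an aligned codeword, and its translate by $\pm|X|$ gives $|X|\equiv0\pmod{18}$.

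Second, in the comparison with $|X|=18k$, the property actually used is not that the two images of one letter share a prefix and suffix. What you need is that all six codewords are distinct, which matches the full blocks $T_j=T_{j+k}$. You also need that last symbols separate letters, which gives $T_i=T_{i+k}$ from the left fragment. Together these yield the outer square $T[i,i+k)=T[i+k,i+2k)$.

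Third, the recognizability check should be stated for concatenations over distinct consecutive outer letters, i.e.\ square-free contexts. That is all the argument needs, and it is what the finite verification covers.
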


\begin{proof}
The Brinkhuis assertion, including independent choices at different
occurrences, is the corrected 18-uniform construction of
Ekhad and Zeilberger and its standard generalized-Brinkhuis interpretation
\cite{EkhadZeilberger1998,EkhadZeilbergerErratum2001}; see also
\cite[Definition~1 and the subsequent discussion]{Grimm2001} and
\cite[Definition~1]{SollamiDouglasLiebmann2016}.  For injectivity, the two displayed images of
each fixed outer letter are
distinct.  Since the macroblock boundaries and $T$ are fixed, the $i$th
macroblock determines $R_i$.
\end{proof}

For every $N$ there is a ternary square-free word of length $N$; for example,
take a prefix of Thue's classical infinite word~\cite{Thue1906}.  For the
rest of the explicit proof, fix such a word $T$ and take
$R_1,\ldots,R_N$ independently and uniformly from $\bits$.  Every outcome is
square-free by \cref{prop:codebook}.

The cited works use the $2^N$ occurrence-wise choices to count square-free
words.  We use them to bound the probability of a long common subsequence:
after the choices of all macroblocks not contained in a character interval
are fixed, linearly many independent choices remain inside it.  The next
section turns this conditional entropy into an LCS gap at every scale.

%% file: sections/04_ternary_proof.tex
\section{Ternary synchronization strings}\label{sec:ternary-proof}

The six codewords in \cref{eq:codebook} suffice to settle the alphabet
question.

\begin{theorem}[The six-codeword construction]
\label{thm:small-family}
For every integer $n\ge1$, there is a ternary $\baseeps$-synchronization
string of length $n$; equivalently, one may take synchronization gap
$\basegap$.
\end{theorem}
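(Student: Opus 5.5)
The plan is to reduce everything to the equal-block LCS condition of \cref{lem:equal-to-full} with $\eta=\baseeta$. Since $\eta/(2+\eta)=1/2001>\basegap$, that lemma then yields a $\baseeps$-synchronization string. Because the synchronization property passes to prefixes, it suffices to produce, for each $N$, a word $W(T,R)$ of length $18N$ satisfying
\[
  \LCS\bigl(S[t,t+\ell),S[t+\ell,t+2\ell)\bigr)\le(1-\eta)\ell
  \qquad\text{for all } t,\ell,
\]
and then truncate it to length $n$.

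We take $T$ to be a square-free prefix of Thue's word and $R$ uniform, as in \cref{sec:brinkhuis}. Short scales cost nothing. By \cref{prop:codebook}, every outcome is square-free, so for $\ell\le\basecutoff$ the two blocks differ and their LCS is at most $\ell-1\le(1-\eta)\ell$.

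For each long pair $(t,\ell)$ with $\ell>\basecutoff$, let $E_{t,\ell}$ be the bad event that the LCS exceeds $(1-\eta)\ell$. Its variables are the $R_i$ whose macroblocks meet $[t,t+2\ell)$; there are at most $2\ell/18+2$ of them. To bound $\Pr[E_{t,\ell}]$, condition on every $R_i$ except those whose macroblocks lie entirely inside $X=S[t,t+\ell)$. At least $\ell/18-2$ bits remain free. By injectivity of the two images of each letter, each candidate value of $X$ has conditional probability at most $2^{-(\ell/18-2)}$. The block $Y$ is now fixed, and the number of length-$\ell$ ternary words with LCS greater than $(1-\eta)\ell$ against $Y$ is at most $\sum_{d<\eta\ell}\binom{\ell}{d}^2 3^d$. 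Using $\binom{\ell}{d}\le e^{\ell\mathsf H(d/\ell)}$, with $\mathsf H$ increasing on $[0,\eta]$, this gives
\[
  \Pr[E_{t,\ell}]\le 4\,\eta\ell\,\exp\bigl(-(\tfrac{\log2}{18}-2\mathsf H(\eta)-\eta\log3)\ell\bigr)\le 4\eta\ell\,e^{-0.0216\,\ell}.
\]
Since $\ell>1000$, the polynomial prefactor is absorbed, leaving a clean bound such as $\Pr[E_{t,\ell}]\le e^{-0.02\ell}$.

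Next we apply \cref{thm:lll} with weights $x(E_{t,\ell})=e^{-c\ell}$ for some $c$ in roughly $(0.005,0.015)$. An event at scale $s$ shares a variable with $E_{t,\ell}$ only if its span $[t',t'+2s)$ meets the macroblocks meeting $[t,t+2\ell)$, which allows at most $2\ell+2s+36$ starts. Using $\log(1-x)\ge -x-x^2$ for small $x$,
\[
  \sum_{F\in\Gamma(E)}x(F)\le\sum_{s>1000}(2\ell+2s+36)e^{-cs},
\]
and this is at most $\ell\cdot e^{-1000c}\cdot O(1/c)$ plus a tiny constant. The criterion then becomes
\[
  e^{-0.02\ell}\le e^{-c\ell}\exp\bigl(-(1+o(1))\textstyle\sum x(F)\bigr),
\]
which holds once $0.02-c$ exceeds the per-$\ell$ charge $2e^{-1000c}/c$. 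For example, $c=0.01$ gives a charge of about $200e^{-10}\approx0.009<0.01$. The local lemma then yields an $R$ avoiding all long events, and together with the short-scale argument this gives the full equal-block gap.

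The main obstacle is this last numerical bookkeeping. The entropy margin is only about $0.0216$ nats per symbol, and it must cover the deletion-ball polynomial factors, the boundary loss of two macroblocks, and the local-lemma neighbor charge, which grows linearly in $\ell$. I would first try a single exponential rate $c$ with explicit elementary bounds on $\mathsf H(0.001)$ and the geometric sums. If the margin proves too tight, the fallback is to shrink the tail by assigning $x(E)=e^{-c\ell}$ only above a larger cutoff, or to split the weight by scale.
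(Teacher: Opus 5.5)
Your route is the paper's own: the 18-uniform Brinkhuis source, square-freeness for $\ell\le1000$, and conditioning on everything except the complete macroblocks of $X$ to get point mass $2^{-(\ell/18-2)}$. It continues with the deletion-ball count, an asymmetric local lemma with charges $e^{-c\ell}$, and \cref{lem:equal-to-full} with $\eta=1/1000$.

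\textbf{The gap: dropped term in the local-lemma check.} For an explicit parameter like $2001/2002$, the closing numerics are the substance of the proof, and your local-lemma verification drops a term that is not small. You call $2\sum_{s>1000}s\,e^{-cs}$ ``a tiny constant.'' In fact it is about $2e^{-1000c}(1000/c+1/c^2)$, which for $c=0.01$ is roughly $10$. That is as large as the $\ell$-linear part $2\ell\sum_{s>1000}e^{-cs}\approx 9$ at the smallest scale.

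Concretely, at $\ell=1001$ your neighbor sum is about $19$, so $x(E)\prod_{F}(1-x(F))\approx e^{-10-19}=e^{-29}$. The conditioning argument gives at best $\Pr[E_{t,1001}]\le 2^{-54}(1+3\cdot1001^2)\approx e^{-22.5}$. So the criterion fails for $c=0.01$ at scales just above $1000$. Your test $200e^{-10}<0.01$ only sees the limit $\ell\to\infty$. Values of $c$ up to about $0.01$ in your proposed range $(0.005,0.015)$ are not admissible.

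\textbf{How the paper closes it.} Because every event has $\ell>L=1000$, bound $\ell+r\le\ell(1+r/L)$. Then the whole neighbor charge is at most $D\ell$, with $D\approx 3e^{-cL}\bigl(2/c+1/(c^2L)\bigr)$ using the count $3(\ell+r)$. Now choose $c$ so that $e^{-cL}$ is genuinely small. With $c=1/80$ one gets $D<1/490$, so the neighboring product exceeds $e^{-\ell/245}$. The criterion then closes via $1/80+1/245<9/500$ against the tail $e^{-9\ell/500}$. No change of cutoff or scale-splitting of weights is needed; raising $c$ suffices.

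\textbf{A smaller slip of the same kind.} The number of admissible $d<\eta\ell$ is $\lceil\eta\ell\rceil$, not $\eta\ell$; it equals $2$ at $\ell=1001$. With the correct prefactor, your displayed bound $4\lceil\eta\ell\rceil e^{-0.0216\ell}$ does not yield $e^{-0.02\ell}$ near $\ell=1001$. What it supports is a slightly weaker rate such as $e^{-0.019\ell}$, or the paper's $e^{-9\ell/500}$ obtained from $\lceil z\rceil\le z+1\le 2^{z}$.
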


Throughout this section, set $\eta=\baseeta$ and $L=\basecutoff$.
Fix an outer square-free word $T$ and use the random output $W=W(T,R)$ from
\cref{eq:random-output}.  At every long scale, the argument bounds a bad-event
probability by the product of a conditional point-mass bound and the volume
of an LCS deletion ball.  The former decays faster than the latter grows.
A scale-sensitive
local-lemma calculation removes all long violations; square-freeness covers
the complementary short scales.

\subsection{Conditional entropy and high-LCS events}

We first record the standard LCS-witness count from probabilistic
synchronization-string constructions
\cite{HaeuplerShahrasbi2017,ChengEtAl2019}, retaining the exponent needed
here.  The source-specific step is the conditional point-mass bound in
\cref{lem:conditional-mass}, which remains valid after all choices outside
the interval have been exposed.

\begin{lemma}[Standard high-LCS deletion-ball bound]\label{lem:lcs-ball}
Fix $y\in\alphabet^\ell$, where $|\alphabet|=q$, and let $0<\eta<1$.
Then
\begin{equation}\label{eq:lcs-ball}
\left|\left\{x\in\alphabet^\ell:
\LCS(x,y)>(1-\eta)\ell\right\}\right|
\le
\sum_{0\le d<\eta\ell}\binom{\ell}{d}^2q^d.
\end{equation}
\end{lemma}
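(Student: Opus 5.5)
We prove the bound by an explicit encoding: we map every $x$ in the left-hand set of \cref{eq:lcs-ball} to a tuple that is counted on the right-hand side, and we check that $x$ can be recovered from its tuple.

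Fix such an $x$ and put $s=\LCS(x,y)$. Set $d=\ell-s$. Since $s>(1-\eta)\ell$ we have $0\le d<\eta\ell$, and $d$ is an integer. Fix one common subsequence of length $s$. It is realized by an increasing set $I_x\subseteq\{1,\dots,\ell\}$ of positions in $x$ and an increasing set $I_y$ of positions in $y$, with $|I_x|=|I_y|=s$ and $x$ restricted to $I_x$ equal to $y$ restricted to $I_y$. Let $D_x$ and $D_y$ be the complements of these sets. Each complement has exactly $d$ elements. We encode $x$ by the tuple consisting of $d$, the set $D_y$, the set $D_x$, and the word $w\in\alphabet^d$ obtained by listing the symbols of $x$ at the positions of $D_x$ in increasing order.

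To decode, we recover $x$ from $(d,D_y,D_x,w)$ using the fixed word $y$. Deleting the positions $D_y$ from $y$ gives the common subsequence $z$ of length $\ell-d$. We then write $w$ into the positions $D_x$ of a length-$\ell$ template, in order. Finally we fill the remaining $\ell-d$ positions, in increasing order, with the letters of $z$. The result is exactly $x$, so the encoding is injective. Any choices made along the way, such as which LCS witness to use, do not matter, because they only affect which code is assigned to $x$ and not whether $x$ can be recovered from it.

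For a fixed $d$, the number of possible codes is $\binom{\ell}{d}$ choices for $D_y$, times $\binom{\ell}{d}$ choices for $D_x$, times $q^d$ choices for $w$. Summing over the admissible values $0\le d<\eta\ell$ gives the right-hand side of \cref{eq:lcs-ball}.

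There is no genuine obstacle in this argument. The only points needing care are that $d$ must be exactly the deficit $\ell-\LCS(x,y)$, which forces the strict inequality $d<\eta\ell$, and that both index sets are complemented within the same length $\ell$, which is where the equal-length assumption is used.
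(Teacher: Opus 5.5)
Your proof is correct and follows essentially the same route as the paper: fix a witnessing common subsequence of length $\ell-d$, encode $x$ by the $d$ positions deleted from $y$, the $d$ positions deleted from $x$, and the $q^d$ possible fillings, then sum over $0\le d<\eta\ell$. You make the injectivity of the encoding explicit by describing the decoding, whereas the paper simply notes that a word may have several descriptions, so counting descriptions gives an upper bound.
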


\begin{proof}
If $\LCS(x,y)=\ell-d$, fix one witnessing common subsequence.  It is
described by the $d$ positions deleted from $y$, the $d$ positions deleted
from $x$, and the arbitrary symbols in the remaining $d$ positions of $x$.
There are at most $\binom{\ell}{d}^2q^d$ such descriptions.  A word may have
several descriptions, which only makes this an upper bound.  Finally,
$\LCS(x,y)>(1-\eta)\ell$ means $d<\eta\ell$; summing over these integers
proves \cref{eq:lcs-ball}.
\end{proof}

\begin{lemma}[Explicit ternary volume]\label{lem:ternary-volume}
For the parameters fixed above, every $\ell\ge L$ and every
$y\in\ternary^\ell$ satisfy
\begin{equation}\label{eq:ternary-volume}
  \left|\left\{x\in\ternary^\ell:
  \LCS(x,y)>(1-\eta)\ell\right\}\right|
  \le \exp(9\ell/500).
\end{equation}
\end{lemma}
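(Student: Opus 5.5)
The plan is to start from \cref{lem:lcs-ball} with $q=3$ and $\eta=\baseeta$, and bound the sum $\sum_{0\le d<\eta\ell}\binom{\ell}{d}^2 3^d$ by (number of terms) $\times$ (largest term). The largest term will be controlled through the binary-entropy estimate
\[
  \binom{\ell}{d}\le\exp\bigl(\ell\,\mathsf H(d/\ell)\bigr),
  \qquad \mathsf H(p)=-p\log p-(1-p)\log(1-p).
\]
This is the standard bound obtained from $1=(p+(1-p))^\ell\ge\binom{\ell}{d}p^d(1-p)^{\ell-d}$ with $p=d/\ell$.

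Since $d<\eta\ell\le\ell/2$ and $\mathsf H$ is increasing on $[0,1/2]$, every term satisfies
\[
  \binom{\ell}{d}^2 3^d\le\exp\bigl(\ell\,(2\mathsf H(\eta)+\eta\log 3)\bigr).
\]
The number of terms is at most $\eta\ell+1=\ell/1000+1$. Hence the left side of \cref{eq:ternary-volume} is at most
\[
  \Bigl(\frac{\ell}{1000}+1\Bigr)\exp\bigl(\ell\,\Phi_3(\eta)/1\bigr),
  \qquad\text{where }\Phi:=2\mathsf H(\eta)+\eta\log 3 .
\]

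Next I will bound $\Phi$ numerically by elementary estimates. For the first part of $\mathsf H(\eta)$, use $\eta\log(1/\eta)=\log(1000)/1000<0.006908$. For the second part, use $-(1-\eta)\log(1-\eta)\le\eta=0.001$, which follows from $-\log(1-\eta)\le\eta/(1-\eta)$. Together these give $\mathsf H(\eta)<0.007908$. With $\eta\log3<0.0011$, this yields $\Phi<0.01692$. The target exponent is $9/500=0.018$, which leaves a margin of more than $0.00108$.

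It then remains to show that
\[
  \log\bigl(1+\ell/1000\bigr)\le 0.00108\,\ell\qquad\text{for all }\ell\ge1000 .
\]
At $\ell=1000$ this reads $\log 2\approx0.693\le1.08$. For larger $\ell$, the derivative of the left side is $1/(\ell+1000)\le 1/2000$, which is smaller than $0.00108$, so the difference between the two sides only grows. I expect the only delicate point to be keeping this numerical margin honest by using rigorous upper bounds on $\log 1000$, $\log 3$ and $\log(1-\eta)$ rather than rounded values. Everything else is routine.
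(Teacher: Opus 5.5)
Your proof is correct and follows the same route as the paper: bound the sum in \cref{lem:lcs-ball} by the number of admissible $d$ times the largest summand, then compare the resulting exponent with $9/500$. The only differences are in the elementary estimates. You use $\binom{\ell}{d}\le e^{\ell\mathsf H(d/\ell)}$ (as the paper does in \cref{lem:general-volume-detail}) where this proof uses $(e\ell/d)^d$ with \cref{lem:binomial-estimates}(ii), and you absorb the term count $\eta\ell+1$ into the leftover margin where the paper bounds it by $2^{\eta\ell}$. Note also that your $\Phi=2\mathsf H(\eta)+\eta\log3$ is not the paper's $\Phi_3(\eta)$, which carries an extra $\eta\log2$.
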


\begin{proof}
For $1\le d<\eta\ell$, \cref{lem:binomial-estimates}(i)--(ii) gives
$\binom{\ell}{d}\le\exp(\eta\ell\log(e/\eta))$, while the $d=0$
summand in \cref{eq:lcs-ball} equals $1$.  Since $\eta\ell\ge1$, there
are at most $2^{\eta\ell}$ admissible integers $d$ by
\cref{lem:binomial-estimates}(iii), and $3^d\le3^{\eta\ell}$.  Hence the
right-hand side of \cref{eq:lcs-ball} is at most
$\exp(\eta\ell(\log2+2\log(e/\eta)+\log3))$, which is strictly smaller
than $\exp(9\ell/500)$ by \cref{lem:volume-arithmetic}.
\end{proof}

We now bound the probability that the random output lies in such a deletion
ball.  For an available character start $t$ and an integer $\ell>L$, define
\begin{equation}\label{eq:bad-event}
  E_{t,\ell}=
  \left\{
  \LCS\bigl(W[t,t+\ell),W[t+\ell,t+2\ell)\bigr)
  >(1-\eta)\ell
  \right\}.
\end{equation}

\begin{lemma}[Whole macroblocks]\label{lem:whole-blocks}
Every character interval of length $\ell$ contains at least
$\ell/18-2$ complete Brinkhuis macroblocks.
\end{lemma}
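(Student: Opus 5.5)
The plan is to argue purely from the macroblock geometry. In $W=W(T,R)$ the macroblocks occupy the consecutive character intervals $[18(r-1)+1,\,18r+1)$ for $r=1,\dots,N$, because every image $B_{a,b}$ has length exactly $18$. So the boundaries are fixed independently of $R$, and the statement is really a counting fact about integer intervals.

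Fix a character interval $I=[s,s+\ell)$. We will count the indices $r$ for which the $r$th macroblock lies entirely inside $I$.

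1. Let $r_0$ be the smallest index whose macroblock starts at or after $s$, so that $18(r_0-1)+1\ge s$. Because the macroblock starts are spaced $18$ apart, this first full start satisfies $18(r_0-1)+1\le s+17$.
2. The macroblocks $r_0,r_0+1,\dots,r_0+k-1$ are all contained in $I$ exactly when $18(r_0+k-1)+1\le s+\ell$.
3. Since the first full start is at most $s+17$, this containment holds whenever $s+17+18k\le s+\ell$, that is, whenever $k\le(\ell-17)/18$.

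The number of complete macroblocks is therefore at least $\lfloor(\ell-17)/18\rfloor$. Using $\lfloor x\rfloor> x-1$, this exceeds $(\ell-17)/18-1=\ell/18-35/18$, which is at least $\ell/18-2$. This also covers short intervals with $\ell<36$: there the bound $\ell/18-2$ is negative, so it holds trivially.

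The only point needing care is that the interval lies inside the word. The intervals to which the lemma is applied are character intervals of $W$, so all the macroblocks counted above have indices between $1$ and $N$. To see this, note that the first full start $18(r_0-1)+1\ge s\ge1$ gives $r_0\ge1$. Likewise, $I\subseteq[1,18N+1)$ together with containment in $I$ forces $r_0+k-1\le N$.

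I expect no real obstacle. The one step needing attention is the off-by-one bookkeeping that places the first full start within $17$ of $s$. In words: at most $17$ characters are lost at the left end of $I$ and at most $17$ at the right end, so at least $\ell-34$ characters are covered by whole blocks. This gives at least $\lceil(\ell-34)/18\rceil\ge\ell/18-2$ complete macroblocks, which is the statement.
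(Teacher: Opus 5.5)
Your proof is correct and rests on the same observation as the paper's: only the two macroblocks at the endpoints of the interval can be met without being fully contained. The paper records this as a covering bound $\ell\le18(s+2)$, while your explicit coordinates (at most $17$ characters lost at each end) give the same argument with the slightly sharper count $s\ge(\ell-34)/18$.
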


\begin{proof}
The interval intersects at most two macroblocks without containing them: one
at each endpoint.  If it contains $s$ complete macroblocks, its characters
are covered by these $s$ blocks together with at most two further length-18
blocks.  Hence $\ell\le18(s+2)$, or $s\ge\ell/18-2$.
\end{proof}

\begin{lemma}[Conditional point mass]\label{lem:conditional-mass}
Write the two length-$\ell$ intervals in \cref{eq:bad-event} as $X,Y$.
There are $s\ge\ell/18-2$ choice variables such that,
after conditioning on all other choices, $Y$ is fixed and every exact value of
$X$ has conditional probability at most $2^{-s}$.
\end{lemma}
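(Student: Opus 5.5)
The plan is to take as the $s$ variables exactly the choices $R_i$ whose macroblocks lie entirely inside the character interval $X=W[t,t+\ell)$. Let $I$ be the set of indices $i$ whose macroblock $W[18(i-1)+1,18i+1)$ is contained in $[t,t+\ell)$, and put $s=|I|$. By \cref{lem:whole-blocks}, $s\ge\ell/18-2$. These variables are mutually independent and uniform on $\bits$. We then condition on an arbitrary assignment of all $R_j$ with $j\notin I$.

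The first step is to check that $Y$ is fixed after this conditioning. The interval $Y=W[t+\ell,t+2\ell)$ is disjoint from $[t,t+\ell)$, so no macroblock indexed by $I$ meets $Y$. Since the code is uniform, the position of every macroblock is fixed independently of $R$. Hence each character of $Y$ lies in a macroblock $j\notin I$, and its value is determined by $T_j$ and the conditioned value of $R_j$.

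The second step is to show that distinct values of $(R_i)_{i\in I}$ give distinct values of $X$. The characters of $X$ outside the complete macroblocks come from boundary blocks with indices not in $I$, so they are fixed. Each complete macroblock $i\in I$ occupies a fixed sub-interval of $X$ and equals $B_{T_i,R_i}$. Because $B_{a,0}\ne B_{a,1}$ for each letter $a$, this is the injectivity argument of \cref{prop:codebook}: that sub-interval determines $R_i$. So $X$ determines $(R_i)_{i\in I}$. Every candidate value $x$ of $X$ therefore has conditional probability either $0$ or $2^{-s}$, since it corresponds to at most one of the $2^s$ equally likely assignments.

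There is no real obstacle here. The only point needing care is that the block boundaries do not depend on the choices; this holds because all six codewords have length $18$. That fact is what makes both $Y$ fixed and the sub-interval positions inside $X$ fixed. If $s=0$, the bound $2^{-s}=1$ is trivial, so no case split is needed.
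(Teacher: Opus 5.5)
Your proposal is correct and follows the paper's proof essentially verbatim: leave unconditioned the choices of the complete macroblocks inside $X$ (at least $\ell/18-2$ of them by \cref{lem:whole-blocks}), and note that $Y$ is then fixed because no such block meets it. Then use distinctness of the two codewords for each fixed outer letter, at fixed block positions, to get that an exact value of $X$ determines all $s$ free bits and so has conditional probability at most $2^{-s}$.
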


\begin{proof}
Take the choice variables of all complete macroblocks contained in $X$ and
condition on every other $R_i$.  None of the unconditioned macroblocks meets
$Y$, so this conditioning fixes $Y$, including the choice of a macroblock
that might cross the boundary between $X$ and $Y$.  The number $s$ of
unconditioned variables satisfies the bound in \cref{lem:whole-blocks}.

Each unconditioned variable occupies a known full macroblock inside $X$.  For
its fixed outer symbol, the two possible codewords in
\cref{eq:codebook} are distinct.  Therefore an exact value of $X$ determines
all $s$ unconditioned bits.  Those bits are independent and uniform, so any
one value of $X$ has conditional probability at most $2^{-s}$.
\end{proof}

\begin{lemma}[Exponential event tail]\label{lem:event-tail}
For every event in \cref{eq:bad-event},
\begin{equation}\label{eq:event-tail}
  \Pr[E_{t,\ell}]<e^{-9\ell/500}.
\end{equation}
\end{lemma}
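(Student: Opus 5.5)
We condition on the choice variables outside $X$ and then average. Let $\mathcal F$ be the $\sigma$-algebra generated by all choices $R_i$ other than the $s$ variables of the complete macroblocks inside $X$, as in \cref{lem:conditional-mass}. Conditional on $\mathcal F$, the word $Y$ is a fixed element $y\in\ternary^\ell$. The event $E_{t,\ell}$ then says that $X$ lies in the deletion ball
\[
  D(y)=\{x\in\ternary^\ell:\LCS(x,y)>(1-\eta)\ell\}.
\]
By \cref{lem:conditional-mass}, each exact value of $X$ has conditional probability at most $2^{-s}$. A union bound over the values in $D(y)$ therefore gives
\[
  \Pr[E_{t,\ell}\mid\mathcal F]\le |D(y)|\,2^{-s}.
\]

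Since $\ell>L$, \cref{lem:ternary-volume} applies and gives $|D(y)|\le\exp(9\ell/500)$. We also have $s\ge\ell/18-2$. Combining these,
\[
  \Pr[E_{t,\ell}\mid\mathcal F]\le\exp\!\bigl(9\ell/500-(\ell/18-2)\log2\bigr).
\]
This bound is uniform in the conditioning, so averaging over $\mathcal F$ gives the same bound for $\Pr[E_{t,\ell}]$.

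It remains to check the arithmetic. We need
\[
  9\ell/500-(\ell/18-2)\log 2<-9\ell/500,
\quad\text{that is,}\quad
  2\log 2<\ell\bigl(\tfrac{\log2}{18}-\tfrac{18}{1000}\bigr).
\]
The bracket is about $0.0385-0.018=0.0205>0$. At $\ell\ge1000$ the right-hand side is about $20.5$, which easily exceeds $2\log2\approx1.386$. The inequality is strict, matching the strict sign in \cref{eq:event-tail}.

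The only delicate point is this final numerical step. It needs a verified lower bound on $\log 2$, such as $\log2>0.69$, which gives $\log2/18>0.0383$. Presumably the paper's \cref{lem:volume-arithmetic} or \cref{lem:binomial-estimates} supplies such constants, or one can state them directly.
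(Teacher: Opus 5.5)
Your argument is the paper's proof: condition on the choices outside the complete macroblocks of $X$, bound the conditional probability by $|D(y)|\,2^{-s}$ using \cref{lem:conditional-mass,lem:ternary-volume}, insert $s\ge\ell/18-2$, average over the conditioning, and finish with a numerical check (the paper's \cref{lem:event-arithmetic}).

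The one flaw is in that final check: you subtracted $9/500$ only once. Rearranging $9\ell/500-(\ell/18-2)\log2<-9\ell/500$ gives
\[
  2\log2<\ell\Bigl(\frac{\log2}{18}-\frac{18}{500}\Bigr),
\]
and $18/500=0.036$, not $0.018$. The bracket is therefore about $0.0025$, not $0.0205$. At $\ell\approx1000$ the right-hand side is about $2.5$, not $20.5$.

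The inequality is still true, but the margin is thin enough that the precision of the constant matters. At $\ell=1001$ it requires roughly $\log2>0.672$, so a crude bound such as $\log2>2/3$ would not suffice. The constants you need are in \cref{lem:elementary-logs}, not the lemmas you guessed. With $\log2>69/100$, the bracket exceeds $23/600-36/1000=7/3000$. Hence for $\ell>1000$ the right-hand side exceeds $7/3$. With $\log4<2$ from the same lemma, this exceeds $2\log2$, which completes the proof. This is the same margin, $11/600>9/500$, that \cref{lem:event-arithmetic} verifies.
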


\begin{proof}
Let $\mathcal F$ be the sigma-field generated by the choices conditioned on
in \cref{lem:conditional-mass}, and fix an atom $f$ of $\mathcal F$.  For the
resulting fixed value of $Y$, \cref{lem:ternary-volume} bounds the number of
bad values of $X$ by $e^{9\ell/500}$.  Each has conditional probability at
most $2^{-s}$.  Using $s\ge\ell/18-2$ gives
\begin{equation}\label{eq:event-intermediate}
  \Pr[E_{t,\ell}\mid\mathcal F=f]
  \le4\exp\!\left[-\left(\frac{\log2}{18}-\frac9{500}\right)\ell\right].
\end{equation}
This estimate holds for every atom $f$, so the law of total probability
gives the same unconditional bound.  The full
factor $4$ is absorbed uniformly for every $\ell>1000$ by
\cref{lem:event-arithmetic}, which turns
\cref{eq:event-intermediate} into \cref{eq:event-tail}.
\end{proof}

\subsection{Eliminating every long bad event}

An event $E_{t,\ell}$ is measurable with respect to the choice variables of
the length-18 macroblocks intersecting its character support
$[t,t+2\ell)$.  Join two events when these variable sets intersect.  This is
the variable-dependency graph used in \cref{thm:lll}.  Events whose declared
variable sets are disjoint are functions of disjoint families of independent
occurrence choices, so this graph has the required dependency property.

\begin{lemma}[Neighbors at one scale]\label{lem:neighbor-count}
For a fixed event of length $\ell$, the number of neighboring events of
length $r$ is at most $2\ell+2r+37\le3(\ell+r)$.
\end{lemma}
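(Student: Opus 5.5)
The plan is to translate the dependency condition into a geometric condition on character supports and then count starting positions. The event $E_{t,\ell}$ has character support $[t,t+2\ell)$, and $\vbl(E_{t,\ell})$ consists of the choices of the length-18 macroblocks meeting this interval. Two events $E_{t,\ell}$ and $E_{u,r}$ are adjacent exactly when some macroblock meets both $[t,t+2\ell)$ and $[u,u+2r)$. Since a macroblock is a length-18 interval of characters, this can only happen if the two supports come within distance less than $18$ of each other. More precisely, there must be characters $a\in[t,t+2\ell)$ and $b\in[u,u+2r)$ in the same macroblock, so $|a-b|\le17$.

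We then bound the admissible starts $u$ for a fixed $r$. The condition above forces
\[
  u+2r-1\ge t-17 \qquad\text{and}\qquad u\le t+2\ell-1+17 .
\]
Hence $u$ lies in the integer range $[t-2r-16,\;t+2\ell+16]$, which contains $2\ell+2r+33$ integers. The event $E_{t,\ell}$ itself is excluded when $r=\ell$. So the count is at most $2\ell+2r+33\le 2\ell+2r+37$. I would keep the slightly weaker constant $37$ as stated, to allow slack for any off-by-one convention about block alignment. One could also argue via macroblock indices: the support of length $2\ell$ meets at most $2\ell/18+2$ blocks, and similarly for the other event, but the direct character count is simpler.

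Finally, since $\ell,r>L=1000$, we have $37\le \ell+r$, and therefore
\[
  2\ell+2r+37\le 3(\ell+r).
\]
The only delicate point is the boundary bookkeeping, namely that adjacency through a shared macroblock allows a gap of up to $17$ characters on each side. This is absorbed by the constant. No probabilistic input is needed: the lemma is purely about how macroblocks and character intervals overlap.
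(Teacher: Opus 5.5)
Your proposal is correct and follows essentially the same route as the paper: both translate a shared length-$18$ macroblock into the two character supports lying within $17$ characters of each other, count the admissible starts $u$, and use $\ell,r>1000$ for the final inequality. Your bookkeeping is slightly sharper than the paper's: you get $2\ell+2r+33$ integers in $[t-2r-16,\,t+2\ell+16]$, whereas the paper uses the cruder open interval $t-2r-18<u<t+2\ell+18$; your count still implies the stated bound.
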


\begin{proof}
Let the fixed support start at $t$ and the other support start at $u$.  If
their variable sets share a macroblock, then the two character supports are
separated by fewer than 18 characters.  Consequently,
$t-2r-18<u<t+2\ell+18$, an open interval containing at most
$2\ell+2r+37$ integer starts. Counting all of them, including starts that
would place the event outside the word or reproduce the fixed event itself,
only overestimates the number of neighbors. Because $\ell,r>L$, we have
$37\le\ell+r$, proving the stated second inequality.
\end{proof}

Assign the local-lemma charge
\begin{equation}\label{eq:explicit-charge}
  x(E_{t,\ell})=e^{-\ell/80}.
\end{equation}

\begin{lemma}[Neighbor charge]\label{lem:neighbor-charge}
For every event $E_{t,\ell}$,
\begin{equation}\label{eq:neighbor-charge}
  \sum_{F\sim E_{t,\ell}}x(F)<\frac{\ell}{490}.
\end{equation}
\end{lemma}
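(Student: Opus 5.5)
We use \cref{lem:neighbor-count} to bound the neighbors at each scale, then sum over all scales. Neighbors $F$ of $E_{t,\ell}$ have some length $r>L$, with $r\ge1001$. At each such $r$ there are at most $2\ell+2r+37$ of them, each with charge $e^{-r/80}$. Hence
\[
  \sum_{F\sim E_{t,\ell}}x(F)\le\sum_{r\ge1001}(2\ell+2r+37)e^{-r/80}
  =(2\ell+37)\sum_{r\ge1001}e^{-r/80}+2\sum_{r\ge1001}re^{-r/80}.
\]
We keep the affine count $2\ell+2r+37$ rather than the cruder $3(\ell+r)$, since only the tails of the geometric series will matter.

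Both series are evaluated in closed form. Write $q=e^{-1/80}$. Then
\[
\sum_{r\ge1001}q^r=\frac{q^{1001}}{1-q},\qquad \sum_{r\ge1001}rq^r=\frac{q^{1001}\bigl(1001-1000q\bigr)}{(1-q)^2}.
\]
The prefactor is $q^{1001}=e^{-1001/80}\approx e^{-12.51}\approx3.7\times10^{-6}$. Since $1/(1-q)\approx80.5$, the first sum is about $3.0\times10^{-4}$. The second sum is about $q^{1001}\cdot 1081\cdot 80.5\approx 0.32$, because $1001-1000q\approx 13.4$ and $13.4\cdot 80.5\approx1081$.

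The total is therefore at most roughly
\[
  (2\ell+37)\cdot3.0\times10^{-4}+0.65\approx 6\times10^{-4}\,\ell+0.66 .
\]
This must be compared with $\ell/490\approx2.04\times10^{-3}\,\ell$. The $\ell$-coefficient $6\times10^{-4}$ is comfortably below $1/490$. The additive constant $\approx0.66$ is absorbed because $\ell>1000$: it suffices that $(1/490-6.1\times10^{-4})\ell>0.67$, and the left side already exceeds $1.4$ at $\ell=1001$.

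The main obstacle is purely numerical: making these estimates rigorous rather than approximate. The plan is to isolate them in an arithmetic lemma in the style of \cref{lem:event-arithmetic}. It uses $1-q\ge \frac1{80}-\frac1{12800}$, or equivalently $1/(1-q)\le 81$, together with an explicit upper bound such as $e^{-1001/80}\le 4\times10^{-6}$, obtained by bounding $e^{12.5}$ from below. With those crude constants, the first sum is at most $3.3\times10^{-4}$. The second sum is at most $4\times10^{-6}\cdot 81^2\cdot(1001-1000q)/80\cdot 80$; it is better to bound directly $1001-1000q\le 1+1000/80=13.5$, which gives a second sum of at most $4\times10^{-6}\cdot13.5\cdot 81^2\le0.36$. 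The total is then at most $6.6\times10^{-4}\,\ell+0.73<\ell/490$ for all $\ell>1000$, with ample slack. This is the margin the later local-lemma factor $\prod(1-x(F))\ge\exp(-c\ell/490)$ is designed to use.
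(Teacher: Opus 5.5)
Your argument is correct and is essentially the paper's: both combine \cref{lem:neighbor-count} with the closed-form geometric and first-moment sums, the bound $1/(1-q)\le 81$, and a lower bound on $e^{25/2}$. The paper uses the cruder count $3(\ell+r)$, includes the $r=1000$ term, and absorbs the constant via $6561<7\ell$ to get the pure multiple $507e^{-25/2}\ell<\ell/490$, whereas you keep $2\ell+2r+37$ and absorb an additive constant using $\ell\ge1001$. There are two harmless slips: $1-q\ge\frac1{80}-\frac1{12800}$ implies, rather than is equivalent to, $1/(1-q)\le81$; and your rounded bounds give an additive constant of about $0.733$ rather than $0.73$, which the margin of roughly $1.38$ at $\ell=1001$ easily covers.
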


\begin{proof}
The actual neighboring events have integral lengths $r>L$.  By
\cref{lem:neighbor-count,eq:explicit-charge}, adding the nonnegative term at
$r=L$ bounds the sum by
$3\sum_{r\ge L}(\ell+r)e^{-r/80}$.  The geometric-series evaluation and
strict numerical comparison in \cref{lem:neighbor-arithmetic} show that this
quantity is smaller than $\ell/490$.
\end{proof}

\begin{proposition}[Long equal blocks]\label{prop:long-equal-blocks}
There exists a choice vector $R$ for which no event
$E_{t,\ell}$ with $\ell>L$ occurs.
\end{proposition}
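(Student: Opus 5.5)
We apply the asymmetric local lemma (\cref{thm:lll}) in the variable model. The independent variables are the choice bits $R_1,\ldots,R_N$. The bad events are the finitely many $E_{t,\ell}$ with $\ell>L$ and $[t,t+2\ell)$ inside the word. Each event has $\vbl(E_{t,\ell})$ equal to the choices of macroblocks meeting $[t,t+2\ell)$, and the charges are $x(E_{t,\ell})=e^{-\ell/80}$ from \cref{eq:explicit-charge}. It then suffices to verify \cref{eq:lll-criterion} for every event. The conclusion \cref{eq:lll-product} gives positive probability that no event occurs, hence a choice vector $R$ as claimed. If no events exist because the word is too short, the statement is vacuous.

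To verify the criterion, we lower-bound the product $\prod_{F\in\Gamma(E)}(1-x(F))$. Every charge satisfies $x(F)\le e^{-L/80}=e^{-12.5}$, which is tiny. We therefore use the elementary inequality $1-x\ge\exp(-x-x^2)$, valid for $0\le x\le1/2$, or more crudely $1-x\ge e^{-cx}$ with $c=1+10^{-5}$. Combined with \cref{lem:neighbor-charge}, this gives
\[
  \prod_{F\in\Gamma(E_{t,\ell})}(1-x(F))\ge\exp\!\left(-\frac{c\,\ell}{490}\right).
\]
The right-hand side of \cref{eq:lll-criterion} is then at least $\exp\!\bigl(-\ell/80-c\ell/490\bigr)$.

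It remains to compare this with the tail from \cref{lem:event-tail}, namely $\Pr[E_{t,\ell}]<e^{-9\ell/500}$. We need
\[
  \frac{9}{500}=0.018\;\ge\;\frac{1}{80}+\frac{c}{490}\approx0.0125+0.0020408=0.0145408.
\]
This holds with a comfortable margin, uniformly in $\ell$, so \cref{eq:lll-criterion} holds for every event.

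The only delicate point is this exponent bookkeeping: the strict inequality in \cref{lem:neighbor-charge} and the uniform smallness of the charges must be combined without losing more than the available slack of about $0.0035\ell$. Since every constant involved is fixed, this is a short numerical check. If the paper places such checks in an arithmetic appendix lemma (as with \cref{lem:neighbor-arithmetic}), we would cite or add one there.
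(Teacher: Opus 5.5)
Your proof is correct and follows the paper's argument: the same variable-model local lemma with charges $e^{-\ell/80}$, the neighbor-charge bound of \cref{lem:neighbor-charge}, and a comparison with the tail $e^{-9\ell/500}$ from \cref{lem:event-tail}. The one difference is the constant in the product bound: the paper uses the cruder $\log(1-u)\ge-2u$ of \cref{lem:small-charge-product}, which gives $e^{-\ell/245}$ and the margin $9/500>1/80+1/245$ of \cref{lem:final-margins}, whereas your sharper $1-x\ge e^{-(1+10^{-5})x}$ (valid because every charge is below $e^{-25/2}$) leaves more slack.
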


\begin{proof}
Every charge in \cref{eq:explicit-charge} is smaller than
$e^{-25/2}<1/2$.  Applying \cref{lem:small-charge-product} and then
\cref{eq:neighbor-charge} gives
$\prod_{F\sim E_{t,\ell}}(1-x(F))>e^{-\ell/245}$.  Since
$9/500>1/80+1/245$ by \cref{lem:final-margins}, the event bound gives
\[
  \Pr[E_{t,\ell}]
  < e^{-9\ell/500}
  < x(E_{t,\ell})\prod_{F\sim E_{t,\ell}}(1-x(F)),
\]
which is the asymmetric criterion
\cref{eq:lll-criterion}.  The probability space and event family are finite
for the fixed outer word, so \cref{thm:lll} supplies an outcome avoiding all
long bad events.
\end{proof}

\subsection{All scales and all adjacent intervals}

\begin{proposition}[Uniform equal-length gap]\label{prop:all-equal-blocks}
For every output length $n$, there is a ternary word $S\in\ternary^n$ such
that every available adjacent equal-length pair $X,Y$ of common length
$\ell\ge1$ satisfies
\begin{equation}\label{eq:all-equal-gap}
  \LCS(X,Y)\le(1-\eta)\ell.
\end{equation}
\end{proposition}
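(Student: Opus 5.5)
The plan is to build $S$ as a prefix of a suitable output word $W(T,R)$ and to split the check on the pair $X,Y$ according to whether $\ell\le L$ or $\ell>L$.

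\textbf{Construction.} Given $n$, set $N=\lceil n/18\rceil$. Fix a ternary square-free outer word $T$ of length $N$, for instance a prefix of Thue's word. \Cref{prop:long-equal-blocks} then supplies a choice vector $R$ for which no event $E_{t,\ell}$ with $\ell>L$ occurs in $W=W(T,R)$. We take $S$ to be the length-$n$ prefix of $W$. Every adjacent pair $S[t,t+\ell),S[t+\ell,t+2\ell)$ inside $S$ is literally the same pair in $W$, so it suffices to verify \cref{eq:all-equal-gap} for all available pairs in $W$.

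\textbf{Long scales, $\ell>L$.} Here the pair is exactly the pair in the definition \cref{eq:bad-event} of $E_{t,\ell}$. Since $E_{t,\ell}$ does not occur, we get $\LCS(X,Y)\le(1-\eta)\ell$ directly.

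\textbf{Short scales, $1\le\ell\le L$.} By \cref{prop:codebook}, $W$ is square-free, and so is every factor. If $X=Y$, then $XY$ would be a square factor, so $X\ne Y$. Two distinct words of equal length $\ell$ cannot have a common subsequence of length $\ell$, since that subsequence would be all of both words. Hence $\LCS(X,Y)\le\ell-1$. Finally, $\ell-1\le(1-\eta)\ell$ is equivalent to $\eta\ell\le1$, which holds because $\ell\le L=1/\eta$.

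\textbf{Main obstacle.} The argument involves no real difficulty; the only care needed is bookkeeping. First, the long-scale events must be defined on the full word $W$ of length $18N\ge n$, so that taking a prefix loses nothing. Second, the cutoff must match exactly, with $\ell\le L$ handled by square-freeness and $\ell>L$ by the local lemma, which works precisely because $\eta L=1$.

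Note also that \cref{prop:long-equal-blocks} applies for every $N$, including small ones where the family of long events is empty. In that case the statement holds trivially.
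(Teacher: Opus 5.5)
Your proposal is correct and follows the paper's proof essentially step for step. You take the length-$n$ prefix of the output given by \cref{prop:long-equal-blocks} on an outer word of length $\lceil n/18\rceil$, handle $\ell>L$ because the bad events are avoided, and handle $\ell\le L$ by square-freeness together with $\eta L=1$.
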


\begin{proof}
Choose an outer word of length $N=\lceil n/18\rceil$ and use the outcome from
\cref{prop:long-equal-blocks}.  For $\ell>L$, the absence of the strict bad
event \cref{eq:bad-event} gives \cref{eq:all-equal-gap}.  For
$1\le\ell\le L$, \cref{prop:codebook} makes the full output square-free.
Thus $X\ne Y$, because equality would create the square $XX$.  A common
subsequence of length $\ell$ would use every position of both words and force
equality, so $\LCS(X,Y)\le\ell-1\le(1-1/L)\ell=(1-\eta)\ell$, where the
second inequality uses $\ell\le L$, including the endpoint $\ell=L$.
The full output has length $18N\ge n$;
take its length-$n$ prefix.  Every adjacent equal-length pair inside that
prefix is the identical pair inside the full output, so
\cref{eq:all-equal-gap} is inherited directly.
\end{proof}

\begin{proof}[Proof of \cref{thm:small-family}]
Apply \cref{lem:equal-to-full} to the word in
\cref{prop:all-equal-blocks}.  Its normalized edit distance between arbitrary
adjacent intervals is at least $\eta/(2+\eta)=1/2001>\basegap$.  Thus the
strict synchronization inequality holds with gap $\basegap$, or
equivalently with parameter $\baseeps$.
\end{proof}

\begin{proof}[Proof of \cref{thm:ternary-intro}]
The existence assertion is \cref{thm:small-family}.  The binary lower bound
in~\cite[Section~1.3.2]{ChengEtAl2019} proves that no fixed $\eps<1$ permits arbitrarily
long binary synchronization strings.  Hence the ternary alphabet is
optimal.
\end{proof}

%% file: sections/05_structural_theorem.tex
\section{From local entropy to synchronization}\label{sec:structural}

The preceding proof needs two properties of the source: every outcome is
square-free, and every long interval retains linear conditional min-entropy
after the variables outside it are fixed.  We now state a transfer theorem
in terms of these properties and apply it to uniform substitutions, extremal
square-free words, and synchronization circles.

Let $\mathsf H(u)=-u\log u-(1-u)\log(1-u)$ denote binary entropy, and define
the following upper bound on the deletion-ball entropy rate:
\begin{equation}\label{eq:transfer-volume-rate}
  \Phi_q(\delta)=2\mathsf H(\delta)+\delta\log(2q),
  \qquad 0<\delta\le\frac12.
\end{equation}

\begin{definition}[Block-local square-free source]
\label{def:block-local-source}
Let $Q$ be an alphabet of size $q$.  An $(m,h,b)$-block-local square-free
source is a random word $W\in Q^M$ together with a partition of its positions
into consecutive blocks of length at most $m$ and one independent finite-valued
random variable per block, such that the contents of a block depend only on its own
variable, every word in the support of $W$ is square-free, and the following
conditional point-mass bound holds.

For every character interval $I$ of length $\ell$, leave unconditioned the
variables of the blocks contained completely in $I$ and condition on all
remaining variables.  Under every conditioning of positive probability,
\begin{equation}\label{eq:block-local-min-entropy}
  \max_{x\in Q^\ell}
  \Pr[W[I]=x\mid\text{the conditioning}]
  \le \exp(-h\ell+b).
\end{equation}
Here $h>0$ is the local entropy rate and $b\ge0$ is a boundary loss.
\end{definition}

For integers $L\ge1$ and real numbers $a>0$, define the dependency tail
\begin{equation}\label{eq:transfer-dependency-tail}
  D_L(a):=3\sum_{r>L}\left(1+\frac rL\right)e^{-ar}.
\end{equation}

The next theorem makes the transfer quantitative.  Its conditions are
one-dimensional numerical inequalities independent of the length $M$.

\begin{theorem}[Local-entropy transfer]
\label{thm:entropy-transfer}
Let $W$ be an $(m,h,b)$-block-local square-free source over an alphabet of
size $q$.  Fix $0<\delta\le1/2$ and let $L=\lfloor1/\delta\rfloor$.  Suppose
that there is an $a>0$ such that
\begin{equation}\label{eq:transfer-conditions}
  L\ge2m+1,
  \qquad e^{-aL}\le\frac12,
  \qquad
  \Phi_q(\delta)+\frac bL+a+2D_L(a)\le h.
\end{equation}
Then the support of $W$ contains a word $S$ such that
\begin{equation}\label{eq:transfer-equal-gap}
  \LCS(S[t,t+\ell),S[t+\ell,t+2\ell))
  \le(1-\delta)\ell
\end{equation}
for every available $t$ and $\ell$.  Consequently, $S$, and every factor of
$S$, is an $\eps$-synchronization string for every
$1-\delta/(2+\delta)<\eps<1$.
\end{theorem}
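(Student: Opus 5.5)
The proof generalizes the six-codeword argument of \cref{sec:ternary-proof} to a general block-local source. Scales $\ell\le L$ are handled by square-freeness of the support. Scales $\ell>L$ are handled by a conditional point-mass bound, a deletion-ball count, and the asymmetric local lemma (\cref{thm:lll}). The final synchronization claim follows from \cref{lem:equal-to-full} and inheritance by factors.

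\emph{Short scales.} For $\ell\le L=\lfloor1/\delta\rfloor$, every word in the support is square-free. Hence $X\ne Y$ and $\LCS(X,Y)\le\ell-1$. Since $\ell\le L\le1/\delta$, we have $\ell-1\le(1-1/\ell)\ell\le(1-\delta)\ell$, so \cref{eq:transfer-equal-gap} holds deterministically.

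\emph{Long scales: event tail.} For $\ell>L$ define $E_{t,\ell}$ as in \cref{eq:bad-event}, with $\delta$ in place of $\eta$. Condition on all variables except those of blocks contained in $X=W[t,t+\ell)$. No unconditioned block meets $Y$, so $Y$ is fixed, and \cref{eq:block-local-min-entropy} bounds every point mass of $X$ by $e^{-h\ell+b}$.

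Multiply this by \cref{lem:lcs-ball}. I will bound $\sum_{d<\delta\ell}\binom{\ell}{d}^2q^d$ by $e^{\Phi_q(\delta)\ell}$ using $\delta\le1/2$ and $\binom{\ell}{d}\le e^{\ell\mathsf H(d/\ell)}$. The monotonicity of $\mathsf H$ on $[0,1/2]$ gives each term at most $e^{(2\mathsf H(\delta)+\delta\log q)\ell}$. The number of terms is at most $\delta\ell+1\le 2^{\delta\ell}$ once $\delta\ell\ge1$, which holds since $\ell>L\ge1/\delta-1$ forces $\delta\ell>1-\delta$. I may need a slightly more careful count here; this is the origin of the $\delta\log 2$ in $\Phi_q$. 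The result is $\Pr[E_{t,\ell}]\le\exp(-(h-\Phi_q(\delta))\ell+b)\le\exp(-(a+2D_L(a))\ell)$, using $b\le b\ell/L$ and the third condition in \cref{eq:transfer-conditions}.

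\emph{Local lemma.} Set $x(E_{t,\ell})=e^{-a\ell}\le e^{-aL}\le1/2$. The event $E_{t,\ell}$ depends on blocks meeting $[t,t+2\ell)$, and blocks have length at most $m$. A scale-$r$ event can therefore share a variable only if its start $u$ satisfies $t-2r-2m<u<t+2\ell+2m$, which gives at most $2\ell+2r+4m\le3(\ell+r)$ neighbors at scale $r$. Here $L\ge2m+1$ ensures $4m\le\ell+r$.

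The total neighbor charge is then at most $3\sum_{r>L}(\ell+r)e^{-ar}\le\ell D_L(a)$, since $\ell+r\le\ell(1+r/L)$ when $\ell>L$. Using $1-x\ge e^{-2x}$ for $x\le1/2$ (the analogue of \cref{lem:small-charge-product}), the product over neighbors is at least $e^{-2\ell D_L(a)}$. Hence $\Pr[E]\le x(E)\prod(1-x(F))$, and \cref{thm:lll} yields a support word avoiding every long event.

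\emph{Conclusion and main obstacle.} \cref{lem:equal-to-full} gives the gap $\delta/(2+\delta)$ for $S$. Factors inherit the equal-block inequality, so they are synchronization strings as well. The main obstacle is the deletion-ball estimate: the bound must be exactly $e^{\Phi_q(\delta)\ell}$ without extra polynomial factors, and the number of $d$ values must be absorbed into $\delta\log 2$. If $\delta\ell+1\le2^{\delta\ell}$ fails near $\delta\ell\approx1$, I will fall back on the sharper bound $\sum_{d}\binom{\ell}{d}\le e^{\ell\mathsf H(\delta)}$ for one of the binomial factors. I will then check that the remaining slack covers the leftover terms.
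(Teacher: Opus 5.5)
Your proposal is correct and follows the paper's proof essentially step for step: square-freeness for $\ell\le L$, then for $\ell>L$ the conditioned point mass multiplied by the deletion-ball count $e^{\Phi_q(\delta)\ell}$, charges $e^{-a\ell}$ with at most $3(\ell+r)$ neighbors at each scale $r$, and finally \cref{lem:equal-to-full}. Your hedge about counting the values of $d$ is unnecessary, although your stated reason ($\delta\ell>1-\delta$) is too weak: because $\ell$ is an integer exceeding $\lfloor1/\delta\rfloor$, one has $\ell\ge\lfloor1/\delta\rfloor+1>1/\delta$, so $\delta\ell>1$ and $\lceil\delta\ell\rceil\le\delta\ell+1\le2^{\delta\ell}$, exactly as in the paper.
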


The condition in \cref{thm:entropy-transfer} has a direct interpretation.
The source supplies conditional entropy at rate $h$, whereas
$\Phi_q(\delta)$ bounds the exponential growth rate of the relevant deletion
balls.  The remaining terms are finite-scale losses: $b/L$ accounts for the
boundary blocks, and $a+2D_L(a)$ pays for dependencies among overlapping
intervals.  The proof, together with an explicit evaluation of $D_L(a)$,
appears in Appendix~\ref{app:general-theorem}.

\begin{definition}[Uniform occurrence-wise square-free substitution]
\label{def:branching-system}
Let $A,Q$ be finite alphabets, let $q=|Q|$, and let $m,K$ be positive
integers.  An $(m,K)$-uniform occurrence-wise substitution consists of a
language $\cT\subseteq A^*$ containing a word of every positive length and,
for each $a\in A$, a family
$\cC_a=\{C_{a,1},\ldots,C_{a,K}\}\subseteq Q^m$ of $K$ distinct codewords.
It is \emph{square-free} if, for every
$T=T_1\cdots T_N\in\cT$ and every occurrence-wise choice vector
$R\in[K]^N$, the concatenation
$C_{T_1,R_1}\cdots C_{T_N,R_N}$ is square-free.
\end{definition}

\begin{corollary}[A general substitution criterion]
\label{thm:branching-system}
Every substitution in \cref{def:branching-system} with $K\ge2$ generates
$Q$-ary synchronization strings of every length for one fixed parameter
$\eps<1$.
More precisely, for every sufficiently small $\delta>0$ and every requested
length, there is an output factor satisfying \cref{eq:transfer-equal-gap};
hence every $\eps$ with $1-\delta/(2+\delta)<\eps<1$ is valid.
\end{corollary}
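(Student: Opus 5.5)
The plan is to check that a uniform occurrence-wise square-free substitution with $K\ge2$ is an $(m,h,b)$-block-local square-free source with $h=(\log K)/m$ and $b=2\log K$, and then to apply \cref{thm:entropy-transfer} with a sufficiently small $\delta$.

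For the source, given a requested length $n$, let $N=\lceil n/m\rceil$ and pick $T\in\cT$ of length $N$, which exists by \cref{def:branching-system}. Take $R_1,\ldots,R_N$ independent and uniform on $[K]$, and set $W=C_{T_1,R_1}\cdots C_{T_N,R_N}$. The blocks are the $N$ macroblocks, each of length $m$, and the $i$th block depends only on $R_i$. Every outcome is square-free by hypothesis.

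For the min-entropy bound, fix an interval $I$ of length $\ell$. As in \cref{lem:whole-blocks}, $I$ contains $s\ge\ell/m-2$ complete blocks. Condition on all other variables. The codewords in $\cC_a$ are distinct and the block boundaries are fixed, so an exact value of $W[I]$ determines the $s$ free variables, exactly as in \cref{lem:conditional-mass}. Hence every value of $W[I]$ has conditional probability at most
\[
K^{-s}\le\exp\bigl(-(\log K)\ell/m+2\log K\bigr),
\]
which is \cref{eq:block-local-min-entropy} with $h=(\log K)/m>0$ because $K\ge2$.

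Next come the numerical conditions \cref{eq:transfer-conditions}, which are the main (mild) obstacle since all of them must hold at once as $\delta\to0$. Let $L=\lfloor1/\delta\rfloor$, so $L\to\infty$. Fix $a=h/4$. Then for $\delta$ small we get $L\ge2m+1$ and $e^{-aL}\le1/2$. We have $\Phi_q(\delta)=2\mathsf H(\delta)+\delta\log(2q)\to0$ and $b/L\to0$. For the dependency tail, $D_L(a)\le3\sum_{r>L}(1+r/L)e^{-ar}$ is bounded by a geometric series of order $(1+1/(aL))e^{-aL}/(1-e^{-a})$, which tends to $0$ as $L\to\infty$ with $a$ fixed. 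So for every sufficiently small $\delta$ the left side of the third condition is at most $h/4+o(1)\le h$.

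Finally, \cref{thm:entropy-transfer} yields a word $S$ in the support of $W$ satisfying \cref{eq:transfer-equal-gap}, and every factor of $S$ is an $\eps$-synchronization string for all $\eps$ in the stated range. The support words have length $mN\ge n$, so the length-$n$ prefix gives the requested length; it is an output factor and inherits the gap. The $\delta$, and hence the $\eps$, depends only on $m,K,q$ and not on $n$. This gives one fixed parameter $\eps<1$ that works for every length.
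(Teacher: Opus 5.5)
Your proposal is correct and follows the paper's proof essentially step for step. You use the same block-local parameters $h=(\log K)/m$ and $b=2\log K$ from the complete-macroblock count, the same charge exponent $a=h/4$ with the limiting argument as $\delta\downarrow0$, and then apply the transfer theorem and take a length-$n$ prefix. Your explicit remark that $\delta$ depends only on $m,K,q$ makes the ``one fixed parameter'' claim slightly more transparent than the paper's wording.
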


\begin{proof}
For a requested length $n$, fix an outer word of length
$N=\lceil n/m\rceil$ and choose its occurrence variables independently and
uniformly from $[K]$.  The resulting $mN$-character word is a block-local
source with
\begin{equation}\label{eq:substitution-source-parameters}
  h=\frac{\log K}{m},
  \qquad b=2\log K.
\end{equation}
Indeed, an interval of length $\ell$ contains at least $\ell/m-2$ complete
macroblocks, and their distinct images give conditional point mass at most
$K^{-(\ell/m-2)}=e^{-h\ell+b}$.  Every outcome is square-free by definition.

Choose $a=h/4$.  As $\delta\downarrow0$, one has $L\to\infty$,
$\Phi_q(\delta)\to0$, and $b/L\to0$.  Moreover, $e^{-aL}\to0$ and the
geometric tail $D_L(a)$ in
\cref{eq:transfer-dependency-tail} tends to zero.  Thus all conditions in
\cref{eq:transfer-conditions} hold for every sufficiently small $\delta$.
Apply \cref{thm:entropy-transfer} and take a prefix of the required length.
\end{proof}

The six-codeword proof is a direct instance of the transfer theorem.

\begin{proposition}[The six-codeword instance]
\label{prop:transfer-small-family}
The local-entropy transfer theorem applies to the 18-uniform Brinkhuis source
with $\delta=1/1000$ and $a=1/80$.  Consequently, this source contains words
with normalized edit-distance gap at least $1/2001$.  In particular, it
yields the $2001/2002$-synchronization strings used in
\cref{thm:small-family}.
\end{proposition}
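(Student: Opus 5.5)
The plan is to verify the three scalar conditions in \cref{eq:transfer-conditions} for the 18-uniform Brinkhuis source, and then read off the conclusion of \cref{thm:entropy-transfer}. First I identify the source parameters. For a requested length $n$, fix a ternary square-free outer word of length $N=\lceil n/18\rceil$ and take independent uniform bits $R_i$. By \cref{prop:codebook}, every outcome $W(T,R)$ is square-free. The blocks are the length-$18$ macroblocks, so $m=18$. The two images of each outer letter are distinct, so exactly as in the proof of \cref{thm:branching-system} (with $K=2$), or directly from \cref{lem:whole-blocks,lem:conditional-mass}, the conditional point mass of any interval of length $\ell$ is at most $2^{-(\ell/18-2)}$. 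This gives an $(m,h,b)$-block-local square-free source over $q=3$ with $h=(\log 2)/18\approx0.03851$ and $b=2\log2\approx1.386$.

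Next, with $\delta=1/1000$ we get $L=\lfloor1/\delta\rfloor=1000$. The first two conditions are immediate: $L=1000\ge2m+1=37$, and $e^{-aL}=e^{-25/2}<1/2$ for $a=1/80$. The substance is the third inequality, $\Phi_3(\delta)+b/L+a+2D_L(a)\le h$, which I evaluate term by term.
\begin{itemize}
\item For the entropy term, $\mathsf H(10^{-3})=10^{-3}\log1000-0.999\log0.999\approx0.006908+0.0009995\approx0.007908$. Hence $\Phi_3(\delta)=2\mathsf H(\delta)+10^{-3}\log6\approx0.01582+0.00179\approx0.01761$.
\item The boundary term is $b/L\approx0.00139$, and $a=0.0125$.
\item The dependency tail $D_{1000}(1/80)=3\sum_{r>1000}(1+r/1000)e^{-r/80}$ is evaluated in closed form by the geometric series and its derivative (the arithmetico-geometric sum), giving roughly $3\cdot(2.08)\cdot80\,e^{-1001/80}\approx0.0019$. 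So $2D_L(a)\lesssim0.0038$.
\end{itemize}
Summing gives about $0.0353$, below $h\approx0.0385$, with a margin of roughly $0.003$.

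The main obstacle is making this last comparison rigorous rather than approximate: the margin is only about $10\%$ of $h$. I would bound each transcendental quantity by elementary rational estimates. For example, I would use $\log1000<6.9078$, $-\log(1-\delta)\le\delta/(1-\delta)$, $\log6<1.7918$, $\log2>0.6931$, and $e^{-12.5}<3.73\times10^{-6}$. I would bound the tail by $\sum_{r>L}e^{-ar}\le e^{-a(L+1)}/(1-e^{-a})$ and $\sum_{r>L}re^{-ar}\le e^{-a(L+1)}\bigl((L+1)/(1-e^{-a})+e^{-a}/(1-e^{-a})^2\bigr)$. This is in the same spirit as the arithmetic lemmas (\cref{lem:volume-arithmetic,lem:neighbor-arithmetic}) already invoked for \cref{thm:small-family}.

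Once the conditions hold, \cref{thm:entropy-transfer} gives a support word $S$ of length $18N\ge n$ with the equal-block gap \cref{eq:transfer-equal-gap} at $\delta=1/1000$. Its factors, in particular its length-$n$ prefix, are then $\eps$-synchronization strings for every $\eps>1-\delta/(2+\delta)=2000/2001$. Equivalently, the normalized edit-distance gap is at least $1/2001$, exactly as in \cref{lem:equal-to-full}. Taking $\eps=2001/2002$ recovers \cref{thm:small-family}.
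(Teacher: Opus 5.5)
Your proposal is correct and follows the paper's proof. You identify $q=3$, $m=18$, $h=(\log 2)/18$, $b=2\log 2$, $L=1000$, check the two easy conditions, verify $\Phi_3(\delta)+b/L+a+2D_L(a)<h$ term by term, and then invoke the transfer theorem and the equal-to-arbitrary reduction. The only difference is bookkeeping: the paper reuses coarser rational bounds already proved for the direct argument ($\Phi_3(\delta)<9/500$, $b/L<1/500$, $D_L(a)<1/490$ from the $\ell=L$ case of the neighbor-tail lemma, and $h>23/600$), whose sum still falls below $23/600$, whereas you evaluate the entropy and tail terms more sharply.
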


\begin{proof}
Here $q=3$, $m=18$, $h=(\log2)/18$, $b=2\log2$, and $L=1000$.
By
\cref{lem:volume-arithmetic,lem:elementary-logs,lem:neighbor-arithmetic}
we have $\Phi_3(\delta)<9/500$, $b/L<1/500$, $D_L(a)<1/490$, and
$h>23/600$.  For the bound on $D_L(a)$, take $\ell=L$ in
\cref{lem:neighbor-arithmetic} and discard its nonnegative $r=L$ term.
Moreover, $L\ge2m+1$ and $e^{-aL}=e^{-25/2}<1/2$.  Finally,
\[
  \Phi_3(\delta)+\frac bL+a+2D_L(a)
  <\frac9{500}+\frac1{500}+\frac1{80}+\frac1{245}<h,
\]
where the sum is smaller than $23/600$ by \cref{lem:final-margins}.  Thus
\cref{eq:transfer-conditions} holds, and \cref{thm:entropy-transfer} gives
normalized edit-distance gap $1/2001$.  Taking the smaller strict gap
$1/2002$ gives synchronization parameter $2001/2002$.
\end{proof}

\subsection{An extremal square-free consequence}

The transfer theorem also applies to a substitution developed for a
different extremal problem.  A square-free word is \emph{extremal} if
inserting any letter at any position creates a square.  Grytczuk,
Kordulewski, and Niewiadomski introduced these words, and Mol and Rampersad
constructed them through occurrence-wise choices on a twelve-vertex Thue
digraph
\cite{GrytczukKordulewskiNiewiadomski2020,MolRampersad2021}.

\begin{theorem}[Extremal square-free synchronization strings]
\label{thm:extremal-sync}
For every integer $n\ge2493$, there is a ternary word of length $n$ that is
simultaneously extremal square-free and a
$200001/200002$-synchronization string.
\end{theorem}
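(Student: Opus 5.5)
The plan is to realize the Mol--Rampersad construction as a block-local square-free source and then invoke \cref{thm:entropy-transfer}.

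\textbf{The source.} Take the twelve-vertex Thue-digraph substitution of Mol and Rampersad. It has two occurrence-wise images, $Q$ of length $41$ and $R$ of length $52$, and every choice sequence along an admissible outer walk produces an extremal square-free ternary word. Group consecutive occurrences in pairs, each pair realized either as $QR$ or as $RQ$. Both realizations have length $93$, so block boundaries are fixed independently of the choices. Each length-$93$ block then carries one fair bit, and its two values are distinct words. I first have to check, from the Mol--Rampersad criterion, that this pairing keeps every outcome square-free and extremal. I expect this to follow because their theorem allows arbitrary occurrence-wise choices on admissible outer walks, and the paired choices are a special case. The source is then block-local with $m=93$, $h=(\log2)/93$, $b=2\log2$ and $q=3$, by the same computation as in \cref{thm:branching-system}.

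\textbf{Applying the transfer theorem.} Set $\delta=1/100000$, so $L=100000$. Then $\eta/(2+\eta)=1/200001>1/200002$, which is the claimed parameter $200001/200002$. Choose $a$ of order $10^{-4}$, for example $a=h/20$, and verify \cref{eq:transfer-conditions} numerically:
\begin{itemize}
\item $\Phi_3(\delta)\approx 2\delta\log(e/\delta)+\delta\log6\approx 2.5\cdot10^{-4}$;
\item $b/L\approx1.4\cdot10^{-5}$;
\item $h\approx7.45\cdot10^{-3}$;
\item $D_L(a)$ is exponentially small because $aL\gg1$.
\end{itemize}
These margins are comfortable, so the theorem yields a word in the support satisfying the equal-block gap. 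That word is extremal square-free, and every factor of it is a synchronization string.

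\textbf{Every length $n\ge2493$.} Extremality is not inherited by factors, so I cannot simply truncate. Instead I need an extremal word of each exact length $n$ that still lies in a source to which the transfer theorem applies. My approach is a bounded-imbalance length decomposition:
\begin{itemize}
\item Write $n$ as a fixed-length prefix and suffix gadget plus $k$ paired blocks of length $93$, together with a bounded number of unpaired $Q$ or $R$ images. This uses the Mol--Rampersad characterization of attainable lengths, which includes every $n\ge2493$.
\item Place the unpaired images at deterministic positions. They are variable-free blocks, so they only reduce entropy inside intervals that contain them.
\item Absorb that loss into the boundary term $b$, since there are boundedly many such blocks each of length at most $52$. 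Alternatively, spread them so that each interval of length $\ell$ still contains at least $\ell/93-O(1)$ free blocks.
\end{itemize}

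\textbf{Main obstacle.} The hardest step is this combinatorial part: confirming that the pairing and the finitely many length-adjusting gadgets preserve extremal square-freeness for every choice vector, and that every $n\ge2493$ is reached. I would first try to reduce both to the Mol--Rampersad extremality theorem applied to the underlying walk. Failing that, I would fall back on a finite verification of the boundary configurations near the gadgets. The entropy inequality itself should be routine.
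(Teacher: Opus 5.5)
\textbf{Overall.} Your architecture is the paper's. You pair consecutive walk occurrences as $Q_xR_y$ or $R_xQ_y$ to get fixed-boundary length-$93$ blocks, each carrying one fair bit. You keep the length-$49$ caps and a bounded number of deterministic unpaired images, and apply \cref{thm:entropy-transfer} with $\delta=1/100000$. Extremality also needs no finite boundary check. The caps are Mol--Rampersad's own caps and each unpaired image is just a $Q$- or $R$-choice, so every outcome is directly covered by their Corollary~3.3.

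\textbf{The gap: the length decomposition.} This is the step you flag as the obstacle, and the justification you give for it does not work. The Mol--Rampersad result on attainable lengths only says that some extremal word of each such length exists. It gives no control over the imbalance $|a-b|$ in $n-98=41a+52b$, and without that control there may be no randomness at all. For example, $2091=41\cdot51$ has the unique representation $(a,b)=(51,0)$. So at $n=2189$ every output of this construction uses only $Q$-images and is deterministic. What you actually need, and what produces the threshold $2493$, is an elementary arithmetic lemma: for every $M\ge2395$ there are $a,b\ge0$ with $M=41a+52b$ and $|a-b|\le46$. The paper proves it as follows.
\begin{enumerate}
\item Take $N\equiv7M\pmod{11}$ nearest to $2M/93$, so that $|N-2M/93|\le11/2$.
\item Set $d=(2M-93N)/11$. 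This is an integer with the same parity as $N$, and $|d|\le46$.
\item Put $a=(N-d)/2$ and $b=(N+d)/2$. Nonnegativity is exactly where $M\ge2395$, i.e.\ $n=M+98\ge2493$, enters.
\end{enumerate}

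\textbf{Two smaller corrections.} First, the boundary loss is not $2\log2$. With up to $46$ unpaired images and both caps, the deterministic part has length up to $2\cdot49+46\cdot52=2490$. The boundary loss is therefore $b_0=(2490/93+2)\log2<29$. Your margins survive this: $b_0/L<3\cdot10^{-4}$, and $\Phi_3(\delta)+b_0/L$ plus your charge exponent $h/20$ plus $2D_L$ stays far below $h>2/279$. Second, the distinctness $Q_xR_y\ne R_xQ_y$ has to be checked. Equality would make $Q_x$ the length-$41$ prefix of $R_x$. For unreversed vertices this fails because the base words differ at their fifth letter. For reversed vertices it fails because $Q$ and the length-$41$ suffix of $R$ already differ in their first letter.
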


The Mol--Rampersad substitution offers images $Q_x$ and $R_x$ of lengths
$41$ and $52$ at each occurrence $x$.  For two consecutive occurrences
$x,y$, restrict the choices to $Q_xR_y$ and $R_xQ_y$.  These two words are
distinct and both have length $93$, so every pair exposes
one independent bit with fixed character boundaries.  A balanced
representation $n-98=41a+52b$ with $|a-b|\le46$ leaves only a constant number
of unpaired images.  Consequently, an interval of length $\ell$ contains
$\ell/93-O(1)$ complete random blocks, while every outcome remains extremal
square-free.  Applying \cref{thm:entropy-transfer} with
$\delta=1/100000$ gives the stated parameter.  The complete length-balancing,
conditional-entropy, and numerical arguments appear in
Appendix~\ref{app:extremal-square-free}.

\subsection{A cyclic consequence}

The same mechanism also applies when no distinguished starting position is
available.  Following \cite[Definition~2.6]{ChengEtAl2019}, a word is an
\emph{$\eps$-synchronization circle} if each of its cyclic rotations is an
$\eps$-synchronization string.  We call a word \emph{circularly square-free}
if each cyclic rotation is square-free.

\begin{lemma}[Circular preservation]
\label{lem:circular-preservation}
Let a substitution satisfy \cref{def:branching-system}, and suppose that
$\cT$ contains every square-free word over $A$.  If $T\in\cT$ is
circularly square-free and has length at least two, then every
occurrence-wise substitution output of $T$ is circularly square-free.
\end{lemma}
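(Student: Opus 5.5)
Let $W=C_{T_1,R_1}\cdots C_{T_N,R_N}$ be an output of a circularly square-free $T$ with $N\ge2$. The plan is to transfer rotations of $W$ back to rotations of $T$, where \cref{def:branching-system} already guarantees square-freeness. A cyclic rotation of $W$ is a factor of $WW$ of length $|W|=mN$. Any square $XX$ in a rotation of $W$ therefore has $|XX|\le mN$ and occurs as a factor of $WW$ lying inside a window of $mN$ consecutive characters.

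Such a window is contained in a window of $N+1$ consecutive macroblocks of $WW$. Reading this cyclically, that window is the output of a word $T'=T_{i}T_{i+1}\cdots T_{i+N}$ (indices mod $N$), equipped with the inherited choices $R_i,\dots,R_{i+N}$. The trouble is that $T'$ has length $N+1$ and repeats the letter $T_i$ at both ends. Hence $T'$ is not a factor of the rotation $T_iT_{i+1}\cdots T_{i-1}$, and it need not be square-free; for example, $N=2$ gives $aba$, which is square-free, but in general $T'$ could contain a square.

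To avoid this, I will refine the covering. Let the square occupy characters $[p,p+2|X|)$ of $WW$ with $2|X|\le mN$.

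\emph{Case 1: $2|X|\le m(N-1)$.} The square meets at most $N$ consecutive macroblocks. These form the output of a length-$N$ cyclic factor of $T$, i.e.\ a rotation of $T$, which lies in $\cT$ because $\cT$ contains every square-free word. Its chosen codewords come from the corresponding $R$'s, so \cref{def:branching-system} applied to that rotation rules out the square.

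\emph{Case 2: $m(N-1)<2|X|\le mN$.} Here the square may touch $N+1$ blocks, with the first and last blocks being two copies of the same occurrence $T_i$, using the same codeword $C_{T_i,R_i}$, each only partially covered. This is the main obstacle. I would exploit the periodicity: $XX$ has period $|X|$ with $|X|\ge m(N-1)/2$. The covered pieces of the first and last block together amount to at most $m$ characters. So the square restricted to the full $N-1$ middle blocks, together with the appropriate partial boundary block, should be realizable inside a single rotation. Concretely, shift the square by $|X|$: the tail copy $X$ also appears one period earlier. I would try to show that some conjugate square $X'X'$ of the same length, obtained by rotating within the periodic run, lies inside only $N$ consecutive macroblocks. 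Since $XX$ sits inside a factor of period $|X|$, every length-$2|X|$ window of that run is also a square. Sliding the window left or right by fewer than $m$ characters, while staying within the run, moves it to start at a macroblock boundary. The run is contained in the same $N+1$ blocks, so its available slack needs checking. This would reduce Case 2 to Case 1. If the slack argument fails, I would fall back on the equal-length codeword structure: the first and last partial blocks are prefix and suffix of the same codeword, so the run glues into a genuine factor of the rotation's output, and I would argue directly from that.

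Finally, the hypothesis $N\ge2$ is used to make the rotations of $T$ nontrivial words in $\cT$ and Case 1 meaningful. It also excludes the degenerate length-one case, where $WW$ could itself be a square with $|X|=m$.
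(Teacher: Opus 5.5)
There is a genuine gap, and it comes from a false premise. You say that $T'=T_iT_{i+1}\cdots T_{i-1}T_i$ ``need not be square-free.'' Under the hypotheses it always is, and proving this is essentially the whole lemma.

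Every proper factor of $T'$ has length at most $N$ and occupies cyclically consecutive positions of $T$. It is therefore a factor of a rotation of $T$, hence square-free. Suppose $T'$ itself were a square $XX$, and put $p=|X|$, so that $2p=N+1$. Then position $p$ of $T'$ equals its last letter $T_i$, and position $p+1$ equals its first letter $T_i$. So $T'$ contains the adjacent pair $T_iT_i$, which is a cyclic factor of $T$ and contradicts circular square-freeness. (For $N=2$ the length $3$ is odd in any case.)

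Since $\cT$ contains every square-free word, and occurrence-wise choices let you give the repeated final occurrence the same codeword $C_{T_i,R_i}$, \cref{def:branching-system} makes the output of $T'$ square-free. Every rotation of $W$ whose cut lies inside that macroblock is a factor of this output. This one observation handles both of your cases at once, so the Case~1/Case~2 split is unnecessary. It is exactly the paper's argument.

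As written, your Case~2 does not prove anything. The sliding step needs the run of period $|X|$ containing $XX$ to extend by up to $m-1$ characters, so that a conjugate square starts at a macroblock boundary. Nothing in the hypotheses forces the run to extend beyond $XX$ at all, and you flag this yourself (``its available slack needs checking''). The fallback, gluing a suffix and a prefix of the same codeword $C_{T_i,R_i}$, only restates that the square lies inside the output of $T'$. It yields a contradiction only once you know $T'$ is square-free, which is the step you dismissed.

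Your Case~1 is correct as stated.
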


\begin{proof}
Fix the codeword chosen at every occurrence and write the output as
$W=C_0\cdots C_{N-1}$.  Consider a character rotation of $W$ whose cut lies
inside $C_i$.  This rotation is a factor of the substitution output of
$U=T_iT_{i+1}\cdots T_{i-1}T_i$, where the indices are cyclic and the
repeated final occurrence uses the same
codeword $C_i$.

It remains to check that $U$ is square-free.  Every proper factor of $U$ has
length at most $N$ and is a factor of a rotation of $T$.  If the whole word
$U$ were a square $XX$ of length $2p=N+1$, then its first and last letters,
both equal to $T_i$, would force positions $p$ and $p+1$ of $U$ to be equal:
position $p$ agrees with the last position and position $p+1$ with the first.
These adjacent positions lie in the displayed rotation of $T$, contradicting
circular square-freeness.  Thus $U$ is square-free.  The substitution
guarantee makes its output square-free, and so the chosen rotation of $W$,
being a factor, is square-free as well.
\end{proof}

\begin{theorem}[Cyclic substitution criterion]
\label{thm:cyclic-substitution}
Suppose a substitution in \cref{def:branching-system} has $K\ge2$ and its
outer language $\cT$ contains every square-free word over $A$.  There is a
fixed $\eps<1$ such that every circularly square-free outer word of length at
least two has an occurrence-wise output that is an
$\eps$-synchronization circle.
\end{theorem}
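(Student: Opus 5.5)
The plan is to rerun the proof of \cref{thm:entropy-transfer} on the cyclic index set and reduce the circle property to an equal-length cyclic LCS gap. Fix a circularly square-free outer word $T$ of length $N\ge2$ and choose the occurrence variables independently and uniformly from $[K]$, giving a random cyclic word $W$ of length $M=mN$. By \cref{lem:circular-preservation}, every outcome is circularly square-free. Every rotation of $W$ is therefore square-free, and so is every cyclic arc of length at most $M$.

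The key observation is that a rotation of $W$ is an $\eps$-synchronization string as soon as every pair of adjacent equal-length cyclic arcs $X,Y$ of common length $\ell$ with $2\ell\le M$ satisfies $\LCS(X,Y)\le(1-\delta)\ell$. Every adjacent pair of intervals inside a rotation is a pair of adjacent cyclic arcs of total length at most $M$. The reduction in \cref{lem:equal-to-full} only uses the length-$\min$ pieces touching the common boundary, and these remain adjacent cyclic arcs of total length at most $M$. So it suffices to establish this cyclic equal-block gap, after which \cref{lem:equal-to-full} gives $\eps$ with $1-\eps<\delta/(2+\delta)$, uniformly in $N$.

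Short scales $\ell\le L$ follow from square-freeness of the arc $XY$ exactly as in \cref{prop:all-equal-blocks}. For long scales, define events $E_{t,\ell}$ for cyclic starts $t\in\mathbb Z_M$ and $L<\ell\le M/2$.
\begin{itemize}
\item \emph{Conditional point mass.} Condition on all variables except those of blocks entirely inside the arc $X$. Since $|X|+|Y|\le M$, no such block meets $Y$, so $Y$ is fixed. An arc of length $\ell$ contains at least $\ell/m-2$ whole blocks, which gives point mass at most $e^{-h\ell+b}$ with $h=(\log K)/m$ and $b=2\log K$.
\item \emph{Event tail.} Multiplying by the deletion-ball bound of \cref{lem:lcs-ball} gives a tail of rate $h-\Phi_q(\delta)-b/L$, as before.
\item \emph{Dependencies.} The neighbor count of \cref{lem:neighbor-count} only improves cyclically: the starts $u$ whose support lies within $m$ characters of the support of $E_{t,\ell}$ form at most one arc of at most $2\ell+2r+2m+1$ residues. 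Possible wraparound only merges the two sides.
\end{itemize}
Hence the same charges $x(E_{t,\ell})=e^{-a\ell}$ and the same parameter conditions \cref{eq:transfer-conditions} verify \cref{thm:lll}. Choose $\delta$ and $a$ as in \cref{thm:branching-system}; these are independent of $N$.

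The main obstacle is the cyclic geometry at scales near $M/2$. There the two arcs nearly cover the circle, and the blocks straddling the far junction between the end of $Y$ and the start of $X$ must be treated as conditioned boundary blocks. This is still covered by the two-boundary-block loss $b$, because $X$ is itself an arc, but it must be checked that conditioning never fixes part of a block needed for $X$'s entropy. A second check is that tiny $N$ (e.g.\ $M<2L$) has no long events at all, so square-freeness alone handles them. Both cases give the same fixed $\eps$, completing the proof.
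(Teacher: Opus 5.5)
Your proposal is correct and follows the paper's proof essentially step for step: circular preservation handles the short scales, and the cyclic events $E_{t,\ell}$ with $L<\ell\le M/2$ are controlled by the same conditional point-mass, deletion-ball and neighbor-count estimates (at most $\min\{M,2\ell+2r+2m+1\}$ starts), after which the equal-to-arbitrary reduction is applied in every rotation. The concern you flag near $\ell=M/2$ causes no trouble, because the block straddling the far junction is one of the two partial blocks of $X$ already accounted for in the boundary loss $b$.
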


\begin{proof}
Put $h=(\log K)/m$ and $b=2\log K$, and choose fixed $\delta,a>0$ for which
the conditions of \cref{thm:entropy-transfer} hold.  Such a choice exists by
the limiting argument in the proof of \cref{thm:branching-system}.  Let
$L=\lfloor1/\delta\rfloor$, put $M=mN$, and view the independent macroblock
choices on a circle of circumference $M$.  For every cyclic start $t$ and
$L<\ell\le M/2$, let $E_{t,\ell}$ be the event that the two consecutive
length-$\ell$ arcs at $t$ have LCS greater than $(1-\delta)\ell$.

A cyclic arc of length $\ell$ contains at least $\ell/m-2$ complete
macroblocks.  Conditioning on every choice except those complete blocks in
the first arc fixes the second arc and gives the same conditional point-mass
bound as in \cref{lem:general-conditional-mass}.  Hence the deletion-ball
argument in \cref{lem:general-event-tail} gives the identical event
probability bound.

Declare an event to depend on every macroblock meeting its two-arc support.
For fixed lengths $\ell$ and $r$, the number of cyclic starts of a neighboring
length-$r$ event is at most
$\min\{M,2\ell+2r+2m+1\}\le 2\ell+2r+2m+1$.  Thus the cyclic dependency
neighborhood is no larger than the linear one in
\cref{lem:general-neighbor-count}.  The same charges and the same local-lemma
calculation avoid every event with $\ell>L$.

By \cref{lem:circular-preservation}, every output is circularly square-free.
Consequently, at $\ell\le L$ the two cyclic arcs are unequal and have LCS at
most $\ell-1\le(1-\delta)\ell$.  We have therefore enforced the equal-length
inequality at every cyclic start and every $\ell\le M/2$.  After cutting the
circle at any character position, these are exactly all equal-length pairs
needed in the resulting length-$M$ word.  The proof of
\cref{lem:equal-to-full} now gives non-strict normalized edit distance at
least $\delta/(2+\delta)$ in every rotation.  Any fixed strictly smaller gap
works, proving the claim.
\end{proof}

\begin{corollary}[Ternary synchronization circles]
\label{cor:ternary-circles}
For every integer $N\ge18$, there is a ternary
$2001/2002$-synchronization circle of length $18N$.  Hence three is the
minimum constant alphabet size supporting synchronization circles of
unbounded length for some fixed parameter below one.
\end{corollary}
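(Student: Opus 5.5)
The plan is to apply \cref{thm:cyclic-substitution}, but with the explicit constants of the six-codeword family in place of its existential limiting choice. Take the 18-uniform Brinkhuis family of \cref{prop:codebook} as the substitution, so $A=Q=\ternary$, $m=18$, $K=2$. By \cref{def:brinkhuis}, every square-free outer word gives a square-free output for every occurrence-wise choice vector. Hence the outer language $\cT$ may be taken to contain every ternary square-free word, as \cref{lem:circular-preservation} requires. Next I would invoke the cited fact that ternary circularly square-free words exist at every length $N\ge18$ (Currie's theorem), and fix such a word $T$ of length $N$. By \cref{lem:circular-preservation}, every output $W(T,R)$ of length $18N$ is then circularly square-free.

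To obtain the specific parameter $2001/2002$, I would rerun the proof of \cref{thm:cyclic-substitution} with the values $\delta=\eta=1/1000$, $L=1000$ and $a=1/80$. \Cref{prop:transfer-small-family} already checks these against \cref{eq:transfer-conditions}. Concretely, for every cyclic start and every $L<\ell\le 9N$, the cyclic event $E_{t,\ell}$ satisfies $\Pr[E_{t,\ell}]<e^{-9\ell/500}$. This follows from the same conditioning as in \cref{lem:conditional-mass}: a cyclic arc still contains at least $\ell/18-2$ complete macroblocks, and conditioning on everything else fixes the second arc. It then suffices to combine this with \cref{lem:ternary-volume}. For the dependency count, a neighbor at scale $r$ has at most $\min\{18N,2\ell+2r+37\}$ starts, which is no more than in \cref{lem:neighbor-count}. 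So the charges $e^{-\ell/80}$ and \cref{lem:neighbor-charge,prop:long-equal-blocks} go through verbatim, and the local lemma gives a choice vector avoiding all long cyclic events. For $\ell\le1000$, circular square-freeness gives $\LCS\le\ell-1\le(1-\eta)\ell$.

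The remaining step is to convert this to every rotation. Cutting the circle at any position gives a linear word of length $18N$ whose adjacent equal-length pairs are all cyclic equal-length arc pairs with $\ell\le 9N$. \Cref{lem:equal-to-full} therefore gives normalized edit distance at least $1/2001>1/2002$ in every rotation. For optimality, a synchronization circle is in particular a synchronization string, since each rotation is one. The binary obstruction of \cite{ChengEtAl2019} therefore rules out binary circles of unbounded length, and ternary circles exist for all lengths $18N$ with $N\ge18$.

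The main obstacle I expect is the cyclic event range when $N$ is small. Arcs of length up to $9N$ must not overlap themselves, and the conditioning needs the unconditioned blocks of the first arc to be disjoint from the second arc. Because $\ell\le M/2$, the two arcs are disjoint, so the argument holds, but this should be checked carefully at $\ell=M/2$. A related check is that for $N\ge18$ no genuine long events could fail the counting bound; if $9N\le1000$, square-freeness alone already covers every scale.
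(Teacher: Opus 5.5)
Your proposal is correct and is essentially the paper's proof. You apply \cref{thm:cyclic-substitution} to the 18-uniform family on a circularly square-free outer word of length $N\ge18$, and you reuse the explicit constants $L=1000$, $a=1/80$ of \cref{sec:ternary-proof}, which survive on the circle because the cyclic neighbor count is at most $\min\{18N,2\ell+2r+37\}$. You then cut at any position and apply \cref{lem:equal-to-full} to get $1/2001>1/2002$, and invoke the binary obstruction for optimality, exactly as the paper does. Your additional checks (disjointness of the two arcs up to $\ell=M/2$, and that square-freeness alone covers every scale when $9N<1000$) are sound details that the paper leaves implicit.
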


\begin{proof}
Circularly square-free ternary words exist at every length $N\ge18$
\cite[Theorem~1]{Shur2010Circular}.  Apply \cref{thm:cyclic-substitution} to the
18-uniform family.  The explicit calculation in \cref{sec:ternary-proof}
uses $L=1000$; its event bounds and dependency estimates are unchanged on the
circle, as the preceding proof shows.  Thus every rotation has non-strict
normalized edit-distance gap $1/2001$, and the strictly smaller gap $1/2002$
gives parameter $2001/2002$.  Finally, every cut of a binary synchronization
circle is a binary synchronization string, so the binary obstruction of
\cite[Section~1.3.2]{ChengEtAl2019} rules out unbounded length over two symbols.
\end{proof}

%% file: sections/05_quantitative_summary.tex
\subsection{A stronger explicit parameter}
\label{sec:quantitative-summary}

Returning to the linear setting, the 54-uniform Brinkhuis family gives a
stronger numerical parameter.  Specifically, \cref{thm:quantitative-intro} gives
ternary $\eps$-synchronization strings of every finite length and one infinite
such string for every $\eps>215/216$; in particular, one may take
$\eps=216/217$.

The mechanism is the same conditional-entropy argument, with three
quantitative refinements.  First, the 54-uniform family supplies
952 choices per outer letter instead of two.  Second, near the analytic
cutoff we retain the exact number of visible choices and exclude the few
remaining one- and two-deletion obstructions by finite codebook checks.
We also count only fillings with unequal neighboring letters and restrict
matching runs according to their block phases.  At equal phases, a long
run would force a square in the outer word; at other phases, short
codeword-factor comparisons bound the run length.
Third, we place unequal adjacent intervals directly in the local lemma,
avoiding the loss in the generic equal-to-unequal conversion.  These steps
give the non-strict normalized edit-distance lower bound $1/216$, and
therefore every strict gap below $1/216$.
Appendix~\ref{sec:quantitative-strengthening} gives the probabilistic proof.
The bounded-shell statements and numerical inequalities are proved in
Appendix~\ref{app:numerics} by finite codebook calculations and analytic tail
bounds.

%% file: sections/06_abundance_algorithm.tex
\section{Abundance and randomized construction}\label{sec:extensions}

The same local-lemma charges bound both the number of valid words and the
expected work needed to find one.  The product conclusion in \cref{thm:lll}
lower-bounds the probability of avoiding all bad events; injectivity of the
substitution converts this probability into a count.  Moser--Tardos
resampling then gives a Las Vegas construction.

\subsection{Exponential abundance}

\begin{theorem}[Exponential abundance]\label{thm:general-abundance}
Under the hypotheses of \cref{thm:branching-system}, the cutoff $L$ may be
chosen so that, for all sufficiently large $n$, at least
$\exp((\log K)n/(3m))$ words in $Q^n$ satisfy the resulting fixed
synchronization inequality.
\end{theorem}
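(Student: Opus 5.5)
We combine the product form of the local lemma with injectivity of the occurrence-wise substitution. Fix $n$, set $N=\lceil n/m\rceil$, fix an outer word $T\in\cT$ of length $N$, and draw $R\in[K]^N$ uniformly. As in the proof of \cref{thm:branching-system}, this is an $(m,h,b)$-block-local square-free source with $h=(\log K)/m$ and $b=2\log K$. We fix $\delta,a$ satisfying \cref{eq:transfer-conditions}, with $L=\lfloor 1/\delta\rfloor$. We take the bad events $E_{t,\ell}$ for $\ell>L$ and charges $x(E_{t,\ell})=e^{-a\ell}$, exactly as in the proof of \cref{thm:entropy-transfer} in Appendix~\ref{app:general-theorem}. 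That proof verifies \cref{eq:lll-criterion}, so \cref{thm:lll} gives
\[
  \Pr\Bigl[\,\bigcap \overline{E_{t,\ell}}\,\Bigr]\ge\prod_{t,\ell}\bigl(1-e^{-a\ell}\bigr).
\]

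Next we bound this product from below. Using $1-x\ge e^{-2x}$ for $x\le1/2$, which holds since $e^{-aL}\le1/2$, and noting that there are at most $2mN$ starts for each length, we get
\[
  \prod\ge\exp\Bigl(-4mN\sum_{\ell>L}e^{-a\ell}\Bigr)=\exp\bigl(-4mN\,\tau_L\bigr),
\]
where $\tau_L=e^{-a(L+1)}/(1-e^{-a})$. Every good outcome $R$ yields a word $W(T,R)$ that, together with all its factors, satisfies the fixed synchronization inequality (short scales by square-freeness, as before). The number of good $R$ is therefore at least $K^N\exp(-4mN\tau_L)$. Since $R\mapsto W(T,R)$ is injective (distinct codewords, fixed boundaries), these give that many distinct words of length $mN$.

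To pass to length $n$, we take length-$n$ prefixes. Two distinct outputs that share a prefix of length $n\ge m(N-1)$ can differ only in the last macroblock choice, so at least $K^{N-1}\exp(-4mN\tau_L)$ distinct prefixes in $Q^n$ remain valid. This is $\exp\bigl(N(\log K-4m\tau_L)-\log K\bigr)$.

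The main obstacle is the choice of $L$. We need $4m\tau_L\le(2/3)\log K$ with slack, so we shrink $\delta$ (that is, enlarge $L$) while keeping $a=h/4$ fixed. This keeps \cref{eq:transfer-conditions} valid by the limiting argument of \cref{thm:branching-system}, and it drives $\tau_L\to0$. The count is then at least $\exp(N\log K/3+\Omega(N)-\log K)\ge\exp((\log K)n/(3m))$ for all sufficiently large $n$, because $N\ge n/m$ and the linear surplus absorbs the constant $\log K$. The one thing to double-check is that the product in \cref{thm:lll} ranges only over the long events actually included, and that their count per length is at most $mN$, not more.
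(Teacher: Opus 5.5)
Your proposal is correct and follows essentially the same route as the paper. The common ingredients are the product form of the local lemma with charges $e^{-a\ell}$, the bound $\log(1-u)\ge-2u$, a geometric charge tail that vanishes as $L\to\infty$ with $a=h/4$ fixed, injectivity of the substitution, and division by at most $K$ full outputs per length-$n$ prefix. The only difference is bookkeeping: the paper imposes $\sum_{\ell>L}e^{-a\ell}\le h/4$ to get the explicit threshold $n\ge 6m$, whereas your slack argument gives an unspecified threshold, which is all the statement requires (your count of $2mN$ starts per length overcounts, since fewer than $mN$ exist, but this is harmless).
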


\begin{proof}
Put $h=(\log K)/m$ and $a=h/4$.  Choose $\delta>0$ small enough that
\cref{eq:transfer-conditions} holds and, with
$L=\lfloor1/\delta\rfloor$, the geometric tail
$Q_L:=\sum_{\ell>L}e^{-a\ell}$ is at most $h/4$.  Such a choice exists
because $L\to\infty$ and $Q_L\to0$ as $\delta\downarrow0$, whereas $a$
remains fixed.

Fix an outer word of length $N$ and let $M=mN$ be the full output length.
For each $\ell$ there are fewer than $M$ bad events, one for each possible
start.  Thus the charges $x(E_{t,\ell})=e^{-a\ell}$ used in the proof of
\cref{thm:entropy-transfer} have total charge at most $MQ_L$.  Every charge
is at most $1/2$, so the product form of the local lemma, together with
\cref{lem:small-charge-product}, shows that all bad events are avoided with
probability at least $e^{-2MQ_L}$, and hence at least $e^{-hM/2}$.
Since the product space has $K^N=e^{hM}$ equally likely points, at least
$e^{hM/2}$ choice vectors avoid every bad event.  They give distinct full
outputs: at each macroblock boundary, the codeword identifies the choice
made for the fixed outer letter.

For an arbitrary requested length $n$, take $N=\lceil n/m\rceil$ and
$M=mN$.  The length-$n$ prefix contains the first $N-1$ macroblocks in full
and hence determines all but possibly the last choice.  Every prefix thus
has at most $K=e^{hm}$ full-output preimages.  The number of distinct good
prefixes is therefore at least $e^{h(M/2-m)}$.  If $n\ge6m$, then
$M/2-m\ge n/2-n/6=n/3$, giving the claimed count.  Each prefix inherits the equal-length
inequalities, and \cref{lem:equal-to-full} gives the same fixed
synchronization parameter.
\end{proof}

For the 54-uniform family, the same charges that prove the improved
synchronization parameter retain almost all of the occurrence-wise choice
entropy.

\begin{corollary}[Computer-assisted explicit ternary count]\label{cor:explicit-count}
For every $n\ge504$, at least $e^{n/9}$ ternary words of length $n$ satisfy
\cref{eq:quantitative-gap}.
\end{corollary}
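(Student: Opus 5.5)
We follow the proof of \cref{thm:general-abundance}, replacing its limiting choice of parameters with the explicit charges of the 54-uniform quantitative argument in Appendix~\ref{sec:quantitative-strengthening}.

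\textbf{Step 1: the source and its events.} Fix an outer ternary square-free word of length $N=\lceil n/54\rceil$. Choose each occurrence variable uniformly from the $952$ codewords of its outer letter. The output length is $M=54N$. The product space has $952^N=e^{hM}$ equally likely points, where $h=(\log 952)/54\approx0.1270$.

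The quantitative proof supplies a finite family of bad events. These are long equal-block events and the unequal adjacent-interval events it places directly in the local lemma. Each event is charged $x(E)\le 1/2$, and the charges satisfy \cref{eq:lll-criterion}. Avoiding all these events, together with the finite shell checks, yields \cref{eq:quantitative-gap}. I would take those charges verbatim, typically $e^{-a\cdot(\text{scale})}$ with an explicit $a$.

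\textbf{Step 2: counting choice vectors.} By \cref{thm:lll} and \cref{lem:small-charge-product}, the good set has probability at least $\exp(-2\sum_E x(E))$. Each event is indexed by a start position and finitely many length parameters, so $\sum_E x(E)\le M\cdot C$. Here $C$ is an explicit tail sum of charges per start position; with two lengths it is a double geometric series over scales above the cutoff. Hence at least $\exp((h-2C)M)$ choice vectors are good. By injectivity (the codewords of each letter are distinct), they give distinct outputs.

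\textbf{Step 3: passing to prefixes.} The length-$n$ prefix determines all but the last choice, so at most $952$ full outputs share a prefix. The number of good prefixes is therefore at least $\exp((h-2C)M-\log 952)$. Using $M\ge n$, it suffices to prove
\[
(h-2C)n-\log 952\ge n/9 \qquad\text{for } n\ge 504 .
\]
Since $h-1/9\approx0.0159$, this amounts to checking that $2C$ is small compared with $0.0159$, together with the boundary term $\log952\approx6.86$ at $n=504$. Prefixes inherit \cref{eq:quantitative-gap} because it is a condition on adjacent intervals.

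\textbf{Main obstacle.} The hard part is the numerical bound on the total charge $C$. The conditions are tight: the margin $(h-1/9)\cdot504\approx8.0$ must absorb both $\log952\approx6.86$ and $2C\cdot504$. That requires $C$ on the order of $10^{-3}$, so the charges near the analytic cutoff must already be very small. I would first bound $C$ by evaluating the geometric series with the exact-arithmetic verifiers of Appendix~\ref{app:numerics}. If the margin at $n=504$ is too thin, I would use $M\ge n$ more carefully (since $M\ge 54\lceil n/54\rceil$), or check small $n$ near $504$ separately. I expect the stated threshold $504$ was chosen precisely so that this inequality closes.
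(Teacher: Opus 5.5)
Your proposal matches the paper's proof: the product form of the local lemma with the charges \cref{eq:unequal-charges}, fewer than $M$ starts per ordered length type, injectivity of the substitution, and at most $952$ full outputs per length-$n$ prefix. The step you flag as the main obstacle is already supplied by \cref{lem:unequal-charge-envelopes}. The events are just the direct events $\widehat E_{t,a,b}$, with no separate equal-block family, and your loss satisfies $2\sum_E x_E\le 2M\widehat R_0<M/1000$. Then $(\log952)/54-1/1000>1/8$ and $\log952<7$ give $e^{M/8}/952>e^{n/8-7}\ge e^{n/9}$, and the threshold $n\ge504$ comes from these rounded constants rather than from a tight margin.
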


\begin{proof}
Let $M=54N$ be a full output of the family in
\cref{prop:large-codebook}.  At every ordered length type $(a,b)$ there are
fewer than $M$ event starts.  The product conclusion of \cref{thm:lll}, with
the charges from \cref{eq:unequal-charges}, shows that the probability of
avoiding every bad event is greater than $e^{-M/1000}$.  Here we use
$-\log(1-x)\le x/(1-x)$ and \cref{lem:unequal-charge-envelopes}.
The $952^N$ choice vectors are equally likely and give distinct outputs, so
more than $952^N e^{-M/1000}$ full outputs are good.  This exceeds $e^{M/8}$
because $(\log952)/54-1/1000>1/8$ by
\cref{lem:large-family-log-bounds}.

For a requested length $n$, take $M=54\lceil n/54\rceil$.  A length-$n$
prefix determines all choices except possibly the last, so at most 952 full
outputs have the same prefix.  The number of distinct good prefixes is
therefore larger than $e^{M/8}/952>e^{n/8-7}\ge e^{n/9}$.  Here
$\log952<7$, $M\ge n$, and the last inequality is equivalent to $n\ge504$.
Prefixes inherit every adjacent-interval inequality, so each counted word
satisfies \cref{eq:quantitative-gap}.
\end{proof}

\subsection{A Las Vegas construction}

\begin{theorem}[Randomized polynomial-time construction]\label{thm:las-vegas}
Suppose an $(m,K)$-uniform occurrence-wise square-free substitution with $K\ge2$ is fixed
and an outer word in $\cT$ of length $N$ can be generated in
$\operatorname{poly}(N)$ time.  Then the strings in
\cref{thm:branching-system} can be generated by a Las Vegas randomized
algorithm in expected polynomial time.
\end{theorem}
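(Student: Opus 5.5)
We will run Moser--Tardos (\cref{thm:moser-tardos}) on exactly the event family and charges used in the proof of \cref{thm:entropy-transfer}, specialized as in \cref{thm:branching-system}. Fix $\delta,a$ with $h=(\log K)/m$, $b=2\log K$ satisfying \cref{eq:transfer-conditions}, and set $L=\lfloor1/\delta\rfloor$. These are constants depending only on the substitution. Given $n$, put $N=\lceil n/m\rceil$, $M=mN$, and generate an outer word $T\in\cT$ of length $N$ in $\operatorname{poly}(N)$ time. Sample $R\in[K]^N$ uniformly.

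The bad events are $E_{t,\ell}$ for $\ell>L$ and $t+2\ell\le M+1$, with $\vbl(E_{t,\ell})$ the macroblock variables meeting $[t,t+2\ell)$. The algorithm repeatedly finds a true event and resamples its variables. By the proof of \cref{thm:entropy-transfer}, the charges $x(E_{t,\ell})=e^{-a\ell}$ satisfy \cref{eq:lll-criterion}. Hence \cref{thm:moser-tardos} bounds the expected number of resamplings by
\[
\sum_{t,\ell}\frac{x(E_{t,\ell})}{1-x(E_{t,\ell})}\le 2M\sum_{\ell>L}e^{-a\ell}=O(M),
\]
using $x\le 1/2$ from $e^{-aL}\le1/2$. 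On termination no long event holds. Square-freeness of every outcome handles $\ell\le L$, exactly as in \cref{prop:all-equal-blocks}. We then output the length-$n$ prefix. Its synchronization parameter follows from \cref{lem:equal-to-full}, as in \cref{thm:branching-system}. The algorithm is Las Vegas because it halts only on a verified good word.

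Each round requires deciding whether some event is true and locating one. The naive implementation evaluates $\LCS$ for all $O(M^2)$ pairs $(t,\ell)$. Each evaluation costs $O(\ell^2)$ by dynamic programming, so one full scan takes $O(M^4)$ time, which is polynomial. Combined with the $O(M)$ expected rounds, plus $O(M)$ per resampling to rewrite the affected macroblocks, the expected total time is $O(M^5)+\operatorname{poly}(N)$. This is polynomial in $n$, as \cref{thm:las-vegas} requires; no caching is needed for this qualitative claim. One further expense is the cost of checking the final output. Since the scan also certifies that no event holds, this check is included in the last scan.

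The main point needing care is that \cref{thm:moser-tardos} is applied with the same declared variable sets and dependency graph as the existential argument. Moser--Tardos needs only that each event is determined by its declared variables and that \cref{eq:lll-criterion} holds for that graph, and both hold by the argument in \cref{thm:entropy-transfer}. The remaining routine checks are the following. Choosing an arbitrary true event in each round is allowed by \cref{thm:moser-tardos}. The expected-round bound is linear in $M$ because, for each length $\ell$, fewer than $M$ starts exist, as in the proof of \cref{thm:general-abundance}.
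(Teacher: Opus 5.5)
Your proposal is correct and follows the paper's proof essentially step for step. Both run Moser--Tardos on the long equal-block events with charges $e^{-a\ell}$ (valid by \cref{lem:general-lll-closure}), bound the expected resamplings by $O(M)$ using $x\le 1/2$ and fewer than $M$ starts per length, detect violations with the LCS dynamic program, and finish with square-freeness at short scales and \cref{lem:equal-to-full}; your explicit $O(M^4)$ cost per scan and $O(M^5)$ total simply make quantitative the paper's statement that each event can be recognized in polynomial time.
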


\begin{proof}
For the general statement, generate the outer word, sample all occurrence
choices independently, and scan the equal-block events used in the proof of
\cref{thm:branching-system}.  Each event can be recognized exactly in
polynomial time by the standard dynamic program for LCS.  Whenever an event
is violated, resample exactly the macroblock choices meeting its support and
restart the scan.  The event scopes and charges satisfy the local-lemma
criterion by \cref{lem:general-lll-closure}, so the Moser--Tardos theorem
applies.

Write $M$ for the full output length and $Z$ for the total number of
resamplings.  Moser--Tardos bounds $\mathbb E Z$ by
$\sum_E x(E)/(1-x(E))$.  Every charge is at most $1/2$, and the geometric
charge tail is fixed, so this sum is at most $2\sum_E x(E)=O(M)$.
In particular, the algorithm terminates almost surely.  At termination no
long equal-block event remains; square-freeness handles the omitted short
scales, and \cref{lem:equal-to-full} gives the claimed synchronization
parameter.  This proves the general statement.

\end{proof}

For the 54-uniform source, a faster implementation is possible after adding
constraints on distant intervals.  These constraints are useful even if
only adjacent synchronization is wanted: a long bad alignment can be
shortened, but its two surviving intervals need not be adjacent.  Including
distant pairs makes this reduction closed under taking subalignments.

\Needspace{8\baselineskip}
\begin{theorem}[Long-distance synchronization and construction]
\label{thm:long-distance-construction}
For every $n\ge1$, a ternary word satisfying \cref{eq:quantitative-gap}
can be generated in expected $O(n^2\log^3(n+2))$ time using $O(n^2)$
machine words of space.  The algorithm is Las Vegas and uses the
computer-assisted bounds of Appendix~\ref{sec:quantitative-strengthening}.
Its output also satisfies
\[
 \ED(S[i,j),S[k,\ell))\ge\frac{(j-i)+(\ell-k)}{216}
\]
whenever $1\le i<j\le k<\ell\le n+1$ and
$(j-i)+(\ell-k)\ge30000\log n$.
\end{theorem}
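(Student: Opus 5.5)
The plan is to run Moser--Tardos on the 54-uniform source of \cref{prop:large-codebook}, with two families of bad events. The first family is the adjacent unequal-interval events of Appendix~\ref{sec:quantitative-strengthening}, with their charges \cref{eq:unequal-charges}. The second is new: for every disjoint pair of intervals $S[i,j)$, $S[k,\ell)$ of total length $\lambda=(j-i)+(\ell-k)$ with $30000\log n\le\lambda\le C\log n$, the event is that $\ED<\lambda/216$. Here $C$ is a constant chosen below.

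\textbf{Probability of the new events.} The deletion-ball estimate controls these exactly as in the adjacent case. I condition on every choice except those of complete macroblocks inside the longer interval, and I first discard pairs whose two intervals share a macroblock, or treat them through a shift argument. This makes the longer interval have conditional point mass about $952^{-\lambda/108}$. The number of words within edit distance $\lambda/216$ of the fixed other interval grows only like $e^{c\lambda}$, with $c$ much smaller than $(\log952)/108$ by the numerics already used for \cref{lem:large-family-log-bounds}. So each event has probability $e^{-c'\lambda}$. There are $O(n^2)$ start pairs per length pair. Since $\lambda\ge30000\log n$, the tail $n^2\lambda^2e^{-c'\lambda}$ can be taken $\le n^{-10}$. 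The charges $x=e^{-c''\lambda}$ then keep the asymmetric criterion \cref{eq:lll-criterion} valid, because each new event has at most $O(n^2)$ distant neighbors, and their total charge is negligible next to the old neighbor charge.

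\textbf{Reduction of longer violations.} Suppose the disjoint pair $(A,B)$ has $\lambda>C\log n$ and an optimal alignment of cost below $\lambda/216$. I cut $A$ into consecutive pieces of length $\Theta(\log n)$ and follow the alignment. Each piece is mapped to a subinterval of $B$, and the cost is shared additively among the pieces. By averaging, some piece together with its image has cost below $1/216$ of its own length. That piece is again a disjoint pair, since it is a subalignment of disjoint intervals; this closure under subalignments is exactly why distant pairs are included. Its length is $\Theta(\log n)$, so it is already one of the forbidden events. For adjacent long intervals the same averaging applies, because adjacent pairs are special disjoint pairs. Hence avoiding all events of length $O(\log n)$ implies \cref{eq:quantitative-gap} everywhere, together with the long-distance inequality for $\lambda\ge30000\log n$.

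\textbf{Running time.} I maintain a cache indexed by the pair of interval starts $(i,k)$. Each entry is an edit-distance DP table restricted to lengths $O(\log n)$; banded to width $\lambda/216$, it takes $O(\log^2 n)$ time to fill. There are $O(n^2)$ start pairs, so the initial computation costs $O(n^2\log^2n)$, or $O(n^2\log^3n)$ with unbanded windows, and the space is $O(n^2)$ once tables are summarized as minima.

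A resampling of $O(\log n)$ macroblocks invalidates only the entries whose windows meet them, which is $O(n\log n)$ pairs. By Moser--Tardos the expected number of resamplings is $\sum x/(1-x)=O(n)$. Naively this totals $O(n^2\log^3n)$, and I expect this accounting to be the main obstacle. Most resamplings are of short events, with $\ell=O(1)$ and touching only $O(n)$ cache entries. The charge weighting, with large events rare in proportion to $e^{-\ell}$, should make the expected recomputation $O(n^2\log^3(n+2))$ overall. Getting this weighting right, and checking that the numerics leave room for the $30000\log n$ threshold, is where I expect the effort to go.
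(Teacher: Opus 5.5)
Your plan has the same architecture as the paper's proof. The paper also keeps adjacent events up to a logarithmic total length and adds distant-pair events whose total lies between a $\Theta(\log n)$ threshold and a constant multiple of it. It bounds their probability by exposing everything except the complete blocks of one interval and counting the deletion ball. It runs Moser--Tardos with $O(n)$ expected resamplings, and it caches by the pair of starts, each entry being one $O(\log^2 n)$ prefix--prefix DP.

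The one genuinely different step is the truncation. The paper repeatedly cuts an optimal alignment path at half its total consumption, keeping a half whose cost is below $1/216$ of its own consumption, until the total lies in $[R,2R+4]$. You instead partition $A$ into $\Theta(\log n)$-pieces and average once over the induced subalignments. This works, provided you note that the selected piece satisfies $\bigl||A_i|-|B_i|\bigr|<(|A_i|+|B_i|)/216$. Hence its total is about $2|A_i|$, and with suitable constants it lands in $[30000\log n,C\log n]$; for example, use pieces of length in $[P,2P)$ with $P\approx15100\log n$ and $C=62000$. Halving avoids this bookkeeping; your version reaches the window in one step.

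Three corrections are needed.

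\emph{(i) The running time is not an obstacle.} The accounting you call the main obstacle is already the proof. $O(n)$ expected resamplings, times a worst-case $O(n\log n)$ invalidated entries, times $O(\log^2 n)$ per DP, gives $O(n^2\log^3 n)$. This is exactly the paper's $O(M^2H^2)+O(M)\cdot O(MH^3)$ with $H=\Theta(\log n)$. No charge weighting is needed. Its premise is also false: with windows of fixed length $H$, resampling even one macroblock invalidates $\Theta(nH)$ entries, not $O(n)$.

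\emph{(ii) You never say how to find a violated event.} This is the real missing piece. Rescanning the $\Theta(n^2)$ cache after each of $\Theta(n)$ resamplings costs $\Theta(n^3)$. The paper stores one witness length pair per nonnull entry and keeps these entries in a linked list with back-pointers. It also marks affected entries so that recomputations are not duplicated. Selection and list updates then cost $O(1)$ beyond the DP itself.

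\emph{(iii) Do not discard disjoint pairs sharing a macroblock.} No shift argument is needed. Blocks wholly inside one interval never meet the other, disjoint interval, so exposing all other choices fixes it. Discarding such pairs would leave unprotected events on which your truncation can land.

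Finally, two claims should be checked rather than asserted. First, ``distant charges are negligible'' for adjacent events relies on slack in the adjacent criterion. That slack exists: $\widehat R_0<0.000365$ and $109\widehat R_0+\widehat R_1<0.2035$, against $0.00037$ and $0.21$. Second, each distant event must itself absorb the adjacent neighbors meeting either component, a loss of about $0.000366s+0.41$. This also fits.
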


\begin{proof}
Put $M=54\lceil n/54\rceil$, generate a square-free outer word of length
$M/54$, and choose its occurrence-wise substitutions independently.
\Cref{lem:combined-lll,lem:alignment-truncation} give a variable local lemma
whose events involve two intervals of total length at most
$B=2(10000+1200\lceil\log_2 M\rceil)+4$.  It includes adjacent violations
and distant violations above the specified cutoff.  Avoiding these events
implies all the stated inequalities, including the unbounded-length
adjacent ones.  The expected number of resamplings is $O(M)$.
The probability and cutoff proofs are in Appendix~\ref{app:long-distance}.

It remains to implement event selection without rescanning all intervals.
Set $H=\min\{B,M\}$.  For each ordered pair of starting positions $p<q$,
compute the ordinary prefix--prefix LCS dynamic program on the windows
starting there, each of length at most $H$.  Truncate the first window at
$q$ so that the two intervals are disjoint.  The table contains the LCS
value for every pair of prefixes of these windows.  Inspect the eligible
lengths $a,b$ with $a+b\le B$, distinguishing adjacent pairs from distant
pairs at or above the cutoff.  All tests are exact integer comparisons:
an event is bad precisely when
$216(a+b-2\LCS)<a+b$.  If a bad event exists, store one such pair of
lengths; otherwise store a null entry.  The cost per pair of starts is
$O(H^2)$ time and $O(H)$ temporary space using two rows of the dynamic
program.  Initialization costs $O(M^2H^2)$.

Maintain a linked list of the nonnull entries, with a pointer from each
entry to its list position.  Select any entry on this list, retrieve its
stored event, and resample exactly the macroblock variables meeting that
event's two intervals.  This is an original bad event of the local lemma,
not the union of all events represented by the entry.  Its resampled blocks
form at most two contiguous regions of total length $O(H)$, since the
block length is fixed.

Only pairs whose left or right window meets one of these regions need to be
recomputed.  For a changed region $[u,v)$, a window of length at most $H$
can meet it only if its start lies in $[u-H+1,v-1]$.  There are $O(H)$ such
starts, and at most $M$ choices for the other start.  Hence $O(MH)$ cached
entries are affected.  Enumerate these entries using a Boolean mark per
cache entry to suppress duplicates, and record the marked entries in a
temporary list.  Recompute each listed entry and update its linked-list
position in constant time once its new value is known.  Finally, clear
the marks by traversing the temporary list.  Marking and clearing take
$O(MH)$ time, so the cost per resampling is $O(MH^3)$.
All other entries remain valid because none of their inspected symbols
changed.  Thus the list is empty exactly when no retained bad event exists.

The expected total time is $O(M^2H^2)+O(M)O(MH^3)
=O(n^2\log^3(n+2))$.  The cache, pointers, marks, and temporary list use $O(M^2)$
words; a word has $O(\log(n+2))$ bits.  The temporary LCS rows and the
output need less space.  Sampling from the fixed 952-element choice set
has constant expected cost per variable, and the Thue outer word can be
generated in linear time~\cite{Thue1906}.  Moser--Tardos ensures almost-sure
termination under this event-selection rule.  At termination, return the
length-$n$ prefix, which inherits the required inequalities.
\end{proof}

%% file: sections/07_discussion.tex
\section{Consequences and open directions}\label{sec:discussion}

This work determines the exact alphabet threshold for synchronization
strings.  Ternary synchronization strings exist at every length for one
fixed parameter below one, while the known binary obstruction shows that
three symbols are necessary.  The construction stays within the ternary
square-free language: occurrence-wise choices in a Brinkhuis family supply
the long-range information that the previous four-symbol construction stored
in a separate marking layer.

The local-entropy transfer theorem separates this mechanism from the
particular substitution.  It compares two rates: the conditional min-entropy
that remains in an interval after the rest of the source is exposed, and the
entropy of a high-LCS deletion ball.  Synchronization follows when the former
dominates the latter, subject to bounded-scale and dependency conditions.
Brinkhuis families meet this criterion, and the unequal-length Mol--Rampersad
substitution gives a second application: extremal square-freeness and a fixed
synchronization gap can hold simultaneously.

The optimal ternary synchronization gap remains open.  The larger Brinkhuis
family improves the explicit parameter by supplying more choices per output
character, but its short scales require finite deletion-shell checks.  A
source whose blocks already exclude near-repetitions could improve the gap
and replace these checks by a structural argument.  The transfer theorem
identifies the long-scale requirement for such a source: its interval
conditional entropy must dominate the corresponding deletion-ball rate.

The long-distance extension also improves construction time.  A matching-path
reduction restricts attention to logarithmic total lengths, and caching the
events by their two starts gives expected $O(n^2\log^3(n+2))$ time.
The quadratic collection of start pairs remains a bottleneck.  Near-linear-time
and locally accessible ternary constructions remain further goals.

%% file: appendix/C_general_theorem.tex
\section{Proof of the local-entropy transfer theorem}
\label{app:general-theorem}

This appendix proves \cref{thm:entropy-transfer}.  Fix an
$(m,h,b)$-block-local square-free source $W\in Q^M$, where $|Q|=q$, together
with $0<\delta\le1/2$ and the charge exponent $a$ from the theorem.
Throughout, put $L=\lfloor1/\delta\rfloor$,
$\gamma=h-\Phi_q(\delta)-b/L$, and $D=D_L(a)$.
The hypotheses imply $L\ge2m+1\ge3$ and $0<a<\gamma$.
Writing $z=e^{-a}$ and summing the two geometric series gives
\begin{equation}\label{eq:transfer-dependency-closed-form}
  D_L(a)
  =3z^{L+1}\left(
       \frac1{1-z}
       +\frac{L+1-Lz}{L(1-z)^2}
     \right).
\end{equation}

\subsection{The probability of a bad interval}

\begin{lemma}[Deletion-ball entropy rate]
\label{lem:general-volume-detail}
For every integer $\ell>L$ and every $y\in Q^\ell$,
\[
 \left|\left\{x\in Q^\ell:
 \LCS(x,y)>(1-\delta)\ell\right\}\right|
 \le \exp\!\bigl(\Phi_q(\delta)\ell\bigr).
\]
\end{lemma}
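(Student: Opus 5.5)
The plan is to start from the generic count in \cref{lem:lcs-ball} and bound its right-hand side by a standard binomial entropy estimate. By \cref{lem:lcs-ball} with $\eta=\delta$, the set in question has size at most
\[
  \sum_{0\le d<\delta\ell}\binom{\ell}{d}^{2}q^{d}
  \le q^{\delta\ell}\sum_{0\le d\le\delta\ell}\binom{\ell}{d}^{2}
  \le q^{\delta\ell}\Bigl(\sum_{0\le d\le\delta\ell}\binom{\ell}{d}\Bigr)^{2}.
\]
The first inequality uses $q^{d}\le q^{\delta\ell}$ for $q\ge1$. The second holds because the square of a sum of nonnegative terms dominates the sum of their squares. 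The integrality of $\ell>L$ plays no role here.

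The one genuine ingredient is the classical bound
\[
  \sum_{0\le d\le\delta\ell}\binom{\ell}{d}\le e^{\mathsf H(\delta)\ell},
  \qquad 0<\delta\le\tfrac12 .
\]
I will prove it directly with the binomial theorem. Expanding gives
\[
  1=(\delta+(1-\delta))^{\ell}\ge\sum_{d\le\delta\ell}\binom{\ell}{d}\delta^{d}(1-\delta)^{\ell-d}.
\]
Since $\delta\le1/2$, we have $\delta/(1-\delta)\le1$. Hence for every $d\le\delta\ell$,
\[
  \delta^{d}(1-\delta)^{\ell-d}=(1-\delta)^{\ell}\Bigl(\tfrac{\delta}{1-\delta}\Bigr)^{d}
  \ge(1-\delta)^{\ell}\Bigl(\tfrac{\delta}{1-\delta}\Bigr)^{\delta\ell}
  =e^{-\mathsf H(\delta)\ell}.
\]
Substituting this lower bound into the expansion gives the claim. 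The restriction $\delta\le1/2$ in \cref{eq:transfer-volume-rate} is exactly what this step needs.

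Combining the two displays bounds the count by $\exp\bigl((2\mathsf H(\delta)+\delta\log q)\ell\bigr)$. This is at most $\exp(\Phi_q(\delta)\ell)$, because $\Phi_q(\delta)$ contains the extra nonnegative term $\delta\log2$. That slack is what the authors presumably spend on counting the at most $2^{\delta\ell}$ admissible values of $d$, as in \cref{lem:ternary-volume}; our route does not need it.

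I do not expect a real obstacle. The only care point is the monotonicity step $\bigl(\delta/(1-\delta)\bigr)^{d}\ge\bigl(\delta/(1-\delta)\bigr)^{\delta\ell}$ for $d\le\delta\ell$, which requires $\delta\le1/2$. The endpoint $\delta=1/2$ is harmless: then the ratio equals $1$ and the bound reads $\sum_{d}\binom{\ell}{d}\le2^{\ell}$.
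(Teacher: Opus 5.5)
Your proof is correct. It differs from the paper's only in how the sum over $d$ is handled.

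The paper bounds each summand separately. It uses the single-term estimate $\binom{\ell}{d}\le e^{\ell\mathsf H(d/\ell)}\le e^{\ell\mathsf H(\delta)}$, obtained from the binomial expansion with $p=d/\ell$ together with monotonicity of $\mathsf H$ on $[0,1/2]$. It then multiplies by the number $\lceil\delta\ell\rceil$ of admissible $d$, bounded by $\delta\ell+1\le2^{\delta\ell}$. That count is exactly where the term $\delta\log2$ in $\Phi_q$ comes from. It is also the only place the hypothesis $\ell>L$ is used, since that hypothesis guarantees $\delta\ell>1$.

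Your route instead dominates the sum of squares by the square of the binomial tail. You then apply $\sum_{d\le\delta\ell}\binom{\ell}{d}\le e^{\mathsf H(\delta)\ell}$ once, so the count of terms is absorbed into the entropy estimate.

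What your route buys is a slightly sharper exponent, $2\mathsf H(\delta)+\delta\log q$, valid for every $\ell\ge1$. The paper's version instead keeps the same per-term-times-count structure as the explicit computation in \cref{lem:ternary-volume}, at the cost of the harmless $\delta\log2$ slack. The paper itself uses your tail bound later, in \cref{lem:distant-probability}.
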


\begin{proof}
By \cref{lem:lcs-ball}, the cardinality in question is at most
\begin{equation}\label{eq:transfer-deletion-sum}
  \sum_{0\le d<\delta\ell}\binom{\ell}{d}^{\!2}q^d.
\end{equation}
For $0\le d<\delta\ell$, the standard entropy estimate gives
\[
  \binom{\ell}{d}
  \le \exp\!\left(\ell\mathsf H\!\left(\frac d\ell\right)\right)
  \le \exp\!\bigl(\ell\mathsf H(\delta)\bigr).
\]
The first inequality follows by taking
$p=d/\ell$ in
$1=(p+(1-p))^\ell\ge\binom{\ell}{d}p^d(1-p)^{\ell-d}$;
the cases $d=0,\ell$ follow by continuity.
The second inequality uses $d/\ell<\delta\le1/2$ and the monotonicity of
binary entropy on $[0,1/2]$.  Also $q^d\le e^{\delta\ell\log q}$.

Because $\ell>L=\lfloor1/\delta\rfloor$, one has
$z:=\delta\ell>1$.  The number of integers in the sum is $\lceil z\rceil$,
and $\lceil z\rceil\le z+1\le2^z=e^{\delta\ell\log2}$.  Multiplying this
count by the bound for one summand shows that
\cref{eq:transfer-deletion-sum} is at most
$\exp(\ell(2\mathsf H(\delta)+\delta\log q+\delta\log2))
=\exp(\Phi_q(\delta)\ell)$, as claimed.
\end{proof}

Combining this count with the source's conditional entropy bounds each
bad-event probability.  For every available start $t$ and every integer $\ell>L$, let
$E_{t,\ell}$ be the event
\begin{equation}\label{eq:transfer-bad-event}
  \LCS(W[t,t+\ell),W[t+\ell,t+2\ell))
  >(1-\delta)\ell.
\end{equation}

\begin{lemma}[Conditional point mass]
\label{lem:general-conditional-mass}
Let $X=W[t,t+\ell)$ and $Y=W[t+\ell,t+2\ell)$.  Condition on every block
variable except those whose blocks are contained completely in the interval
of $X$.  Then $Y$ is fixed and, under every conditioning of positive
probability,
\[
  \Pr[X=x\mid\text{the conditioning}]
  \le e^{-h\ell+b}
  \qquad\text{for every }x\in Q^\ell.
\]
\end{lemma}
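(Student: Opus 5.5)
This lemma is almost immediate from \cref{def:block-local-source}; the plan is to instantiate the definition at the interval of $X$ and check that the prescribed conditioning also fixes $Y$.

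\textbf{Step 1: apply the definition.} Take $I=[t,t+\ell)$, the character interval of $X$. Leave unconditioned exactly the variables of blocks contained completely in $I$, and condition on all other block variables. This is precisely the conditioning named in \cref{def:block-local-source}. Hence \cref{eq:block-local-min-entropy} gives
\[
\Pr[X=x\mid\text{the conditioning}]\le e^{-h\ell+b}
\]
for every $x\in Q^\ell$ and every conditioning of positive probability. This is the point-mass claim verbatim; no deletion-ball or local-lemma input is needed.

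\textbf{Step 2: show $Y$ is fixed.} The blocks are consecutive and partition the positions of $W$. By block locality, the contents of a block depend only on its own variable. Consider any block meeting $[t+\ell,t+2\ell)$: it contains a position $\ge t+\ell$, so it is not contained in $I$. Its variable is therefore among those conditioned on, and its contents are determined.

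This includes a block straddling the boundary position $t+\ell$, which is exactly why the definition leaves only the blocks \emph{completely} inside $I$ free. Every character of $Y$ thus lies in a block whose content is fixed, so $Y$ is a deterministic function of the conditioned variables.

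\textbf{Main obstacle.} Nothing substantial is expected. The only care needed is the boundary bookkeeping in Step 2, namely that no unconditioned block reaches $Y$. This follows because unconditioned blocks lie inside $[t,t+\ell)$ and are disjoint from $[t+\ell,t+2\ell)$. One should also note that independence of the block variables is what makes "condition on the rest" a well-defined product conditioning. The definition already asserts the bound under every such conditioning, so no further argument is required.
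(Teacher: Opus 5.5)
Your proposal is correct and follows the paper's proof: the point-mass bound is \cref{eq:block-local-min-entropy} applied to the interval of $X$. $Y$ is fixed because every block meeting $Y$, including one crossing the common boundary, is not contained in that interval, so its variable is conditioned.
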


\begin{proof}
Every unconditioned variable affects a block contained in the interval of
$X$, so it affects no character of the disjoint interval of $Y$.  All block
variables affecting $Y$, including the variable of a block crossing the
common boundary, are therefore conditioned, and $Y$ is fixed.  The stated
point-mass bound is exactly \cref{eq:block-local-min-entropy} applied to the
interval of $X$.
\end{proof}

\begin{lemma}[Bad-event tail]
\label{lem:general-event-tail}
Every event in \cref{eq:transfer-bad-event} satisfies
$\Pr[E_{t,\ell}]\le e^{-\gamma\ell}$.
\end{lemma}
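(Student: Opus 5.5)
We bound the probability of $E_{t,\ell}$ by conditioning and then averaging, exactly as in \cref{lem:event-tail}. Let $\mathcal F$ be the sigma-field generated by all block variables except those whose blocks lie completely inside the interval $[t,t+\ell)$ of $X$. Fix an atom $f$ of $\mathcal F$ with positive probability. By \cref{lem:general-conditional-mass}, on this atom the word $Y=y$ is fixed, and every individual value $x\in Q^\ell$ of $X$ has conditional probability at most $e^{-h\ell+b}$.

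On the atom $f$, the event $E_{t,\ell}$ says that $X$ lies in the deletion ball
\[
  B_y=\{x\in Q^\ell:\LCS(x,y)>(1-\delta)\ell\}.
\]
Since $\ell>L$, \cref{lem:general-volume-detail} gives $|B_y|\le\exp(\Phi_q(\delta)\ell)$. A union bound over the elements of $B_y$ then yields
\[
  \Pr[E_{t,\ell}\mid\mathcal F=f]\le\exp\bigl(-(h-\Phi_q(\delta))\ell+b\bigr).
\]

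It remains to absorb the boundary loss $b$ into the exponent. Because $\ell>L\ge1$ and $b\ge0$, we have $b\le b\ell/L$. The conditional bound is therefore at most $\exp(-(h-\Phi_q(\delta)-b/L)\ell)=e^{-\gamma\ell}$. This holds uniformly over atoms, so the law of total probability gives the same bound unconditionally.

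The only points needing care are bookkeeping. First, the conditioning must fix $Y$ entirely, including any block crossing the boundary at $t+\ell$; \cref{lem:general-conditional-mass} already handles this. Second, the absorption step $b\le b\ell/L$ requires $\ell\ge L$, which holds. No real obstacle remains: the substance lies in the two cited lemmas.
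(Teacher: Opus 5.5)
Your proposal is correct and follows the paper's own proof: you condition as in \cref{lem:general-conditional-mass} to fix $Y$, bound the number of bad values of $X$ with \cref{lem:general-volume-detail}, multiply by the point-mass bound $e^{-h\ell+b}$, absorb $b$ using $b\le (b/L)\ell$ for $\ell>L$, and average over the conditioning. Nothing is missing.
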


\begin{proof}
Condition as in \cref{lem:general-conditional-mass}.  The right interval has
a fixed value $y$.  By \cref{lem:general-volume-detail}, at most
$e^{\Phi_q(\delta)\ell}$ values of $X$ make the event true, and each has
conditional probability at most $e^{-h\ell+b}$.  Hence
\[
  \Pr[E_{t,\ell}\mid\text{the conditioning}]
  \le \exp\!\bigl(-(h-\Phi_q(\delta))\ell+b\bigr).
\]
Since $\ell>L$ and $b\ge0$, one has $b\le(b/L)\ell$.  The last display is
therefore at most $e^{-\gamma\ell}$.  This bound is independent of the
conditioned values, so averaging proves the same unconditional estimate.
\end{proof}

\subsection{Dependencies across all scales}

Declare that $E_{t,\ell}$ depends on the variables of all blocks intersecting
its character support $[t,t+2\ell)$.  Join two events when their declared
variable sets intersect.

\begin{lemma}[Neighbors at one scale]
\label{lem:general-neighbor-count}
Fix an event of length $\ell$.  For every $r>L$, the number of its neighbors
of length $r$ is at most $2\ell+2r+2m+1$.
If $L\ge2m+1$, this number is at most $3(\ell+r)$.
\end{lemma}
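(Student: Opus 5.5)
The plan is to mirror the proof of \cref{lem:neighbor-count}, replacing the macroblock length $18$ by the general block-length bound $m$. Fix the event $E_{t,\ell}$ with character support $[t,t+2\ell)$, and consider a candidate neighbor $E_{u,r}$ with support $[u,u+2r)$. By definition, the two events are neighbors only if some block meets both supports. A block has at most $m$ consecutive positions. So if a single block meets both character intervals, the gap between the intervals is at most $m-1$ positions. More precisely, the rightmost position of one interval and the leftmost position of the other lie in a common block, so they differ by at most $m-1$.

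From this we get constraints on $u$. If the second support lies to the right, then $u\le (t+2\ell-1)+(m-1)$, hence $u<t+2\ell+m-1$. If it lies to the left, then $u+2r-1\ge t-(m-1)$, hence $u>t-2r-m+1$. Overlapping supports satisfy both inequalities automatically. Thus $u$ lies in the open interval $(t-2r-m+1,\;t+2\ell+m-1)$, whose length is $2\ell+2r+2m-2$. An open interval of length $\lambda$ with integer endpoints contains exactly $\lambda-1$ integers, so this gives at most $2\ell+2r+2m-3$ starts. That is already below the stated $2\ell+2r+2m+1$. To be safe against boundary conventions, I will use the cruder window $t-2r-m<u<t+2\ell+m$, which contains at most $2\ell+2r+2m+1$ integers and matches the stated constant. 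As in \cref{lem:neighbor-count}, I will count every such integer start, including ones outside the word and the event itself, which only overcounts. Since each start together with the fixed length $r$ determines at most one event, this bounds the number of length-$r$ neighbors.

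For the second claim, assume $L\ge2m+1$. Both $\ell$ and $r$ exceed $L$, so $\ell+r>2L\ge 4m+2>2m+1$. Hence $2\ell+2r+2m+1\le 3(\ell+r)$.

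The only delicate point is the separation estimate. A block crossing into neither support exactly but touching both forces the gap between supports to be smaller than the block length. This relies on blocks being consecutive runs of length at most $m$, which \cref{def:block-local-source} guarantees. Everything else is integer counting.
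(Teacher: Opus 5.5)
Your proposal is correct and matches the paper's proof: a block of length at most $m$ meeting both supports forces $t-2r-m<u<t+2\ell+m$, and counting every integer start in that window, including those outside the word, gives the first bound (your sharper count $2\ell+2r+2m-3$ from the consecutive-run property is also valid). For the second bound, $\ell,r>L\ge2m+1$ gives $2m+1\le\ell+r$, exactly as in the paper.
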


\begin{proof}
Let the fixed event start at $t$ and another length-$r$ event start at $u$.
If their variable sets intersect, one block of length at most $m$ meets both
character supports.  Consequently, $t-2r-m<u<t+2\ell+m$.  This open
interval contains at most $2\ell+2r+2m+1$ integer starts, even
before starts outside the word are discarded.  If $L\ge2m+1$ and
$\ell,r>L$, then $2m+1\le\ell+r$, giving the second bound.
\end{proof}

Assign the charge
\begin{equation}\label{eq:transfer-charge}
  x(E_{t,\ell})=e^{-a\ell}.
\end{equation}

\begin{lemma}[Total neighboring charge]
\label{lem:general-neighbor-charge}
For every event $E_{t,\ell}$,
\[
  \sum_{F\sim E_{t,\ell}}x(F)\le D\ell,
\]
where $D=D_L(a)$ is defined in \cref{eq:transfer-dependency-tail}.
\end{lemma}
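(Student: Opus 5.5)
We bound the sum scale by scale, using the neighbor count from \cref{lem:general-neighbor-count}. Every neighbor $F$ of $E_{t,\ell}$ is an event $E_{u,r}$ with some integer length $r>L$, and its charge is $e^{-ar}$ by \cref{eq:transfer-charge}. Group the neighbors by their length $r$. Since $L\ge2m+1$ and $\ell,r>L$, \cref{lem:general-neighbor-count} gives at most $3(\ell+r)$ neighbors of each length $r$. Hence
\[
  \sum_{F\sim E_{t,\ell}}x(F)\le 3\sum_{r>L}(\ell+r)e^{-ar}.
\]
This overcounts harmlessly: it includes starts outside the word and possibly the event itself.

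Next we factor out $\ell$. Because $\ell>L$, we have $r=\ell\cdot(r/\ell)\le\ell\cdot(r/L)$, so $\ell+r\le\ell(1+r/L)$ for every $r$. Substituting this termwise gives
\[
  3\sum_{r>L}(\ell+r)e^{-ar}\le\ell\cdot 3\sum_{r>L}\Bigl(1+\frac rL\Bigr)e^{-ar}=\ell D_L(a),
\]
which is the definition \cref{eq:transfer-dependency-tail}. The series converges because $a>0$, as the closed form \cref{eq:transfer-dependency-closed-form} confirms. This proves the claim with $D=D_L(a)$.

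The argument has no real obstacle. The only care needed is in the counting step: we must check that the hypothesis $L\ge2m+1$ is what lets us replace $2\ell+2r+2m+1$ by $3(\ell+r)$, and that both lengths exceed $L$, which holds because only events with $\ell,r>L$ are defined. After that, the comparison $\ell+r\le\ell(1+r/L)$ is the single inequality that turns a scale-dependent count into a bound linear in $\ell$.
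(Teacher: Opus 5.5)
Your proposal is correct and follows essentially the same route as the paper: you group neighbors by scale, apply the $3(\ell+r)$ count from \cref{lem:general-neighbor-count}, extend the sum to all $r>L$, and use $\ell+r\le\ell(1+r/L)$ to obtain $D_L(a)\,\ell$. Your remark that $L\ge2m+1$ is what reduces $2\ell+2r+2m+1$ to $3(\ell+r)$ is the same point the paper relies on.
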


\begin{proof}
At scale $r$, \cref{lem:general-neighbor-count} gives at most
$3(\ell+r)$ neighbors.  Extending the finite sum to every integer $r>L$
only increases it.  Since $\ell>L$, we have
$\ell+r\le\ell(1+r/L)$.  Thus the total neighboring charge is at most
$3\sum_{r>L}(\ell+r)e^{-ar}\le D\ell$ by the definition of $D$.
\end{proof}

\begin{lemma}[Local-lemma closure]
\label{lem:general-lll-closure}
Under \cref{eq:transfer-conditions}, every event in
\cref{eq:transfer-bad-event} satisfies the asymmetric local-lemma criterion.
\end{lemma}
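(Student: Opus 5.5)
We must verify, for every event $E=E_{t,\ell}$ with $\ell>L$, that
\[
\Pr[E]\le x(E)\prod_{F\sim E}(1-x(F)),
\]
with charges $x(F)=e^{-ar}$ for an event $F$ of length $r$. The plan is to bound each side on the log scale. On the left, \cref{lem:general-event-tail} gives $\Pr[E]\le e^{-\gamma\ell}$, where $\gamma=h-\Phi_q(\delta)-b/L$. On the right, $x(E)=e^{-a\ell}$, so it suffices to show
\[
\prod_{F\sim E}(1-x(F))\ge e^{-2D\ell}
\qquad\text{and}\qquad
\gamma\ge a+2D.
\]
The second inequality is exactly the third condition in \cref{eq:transfer-conditions}, rearranged.

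For the product, every neighbor has length $r>L$. Hence its charge satisfies $x(F)=e^{-ar}\le e^{-aL}\le 1/2$, by the second condition in \cref{eq:transfer-conditions}. For $0\le x\le 1/2$ we use the elementary inequality $1-x\ge e^{-2x}$. This holds because $\log(1-x)\ge -x/(1-x)\ge -2x$; it is the statement already invoked as \cref{lem:small-charge-product} in the explicit proof. Applying it to each factor gives
\[
\prod_{F\sim E}(1-x(F))\ge \exp\Bigl(-2\sum_{F\sim E}x(F)\Bigr)\ge e^{-2D\ell},
\]
where the last step is \cref{lem:general-neighbor-charge}. That lemma is available because $L\ge 2m+1$, which is the first condition in \cref{eq:transfer-conditions}; it is needed through \cref{lem:general-neighbor-count}.

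Combining the two bounds,
\[
x(E)\prod_{F\sim E}(1-x(F))\ge e^{-(a+2D)\ell}\ge e^{-\gamma\ell}\ge\Pr[E].
\]
This is the asymmetric criterion \cref{eq:lll-criterion}. We must also have $x(E)\in(0,1)$, which holds since $a>0$.

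The only delicate point is bookkeeping. The neighbor sum must range over all scales $r>L$, which the lemmas already handle. We must also confirm that $\Gamma(E)$ is the variable-sharing neighborhood declared above, which matches \cref{thm:lll}. No genuine obstacle is expected.
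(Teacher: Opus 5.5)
Your proposal is correct and follows essentially the same route as the paper. Both bound every charge by $e^{-aL}\le 1/2$, use $\log(1-u)\ge -2u$ together with \cref{lem:general-neighbor-charge} to get a neighboring product of at least $e^{-2D\ell}$, and conclude via $e^{-(a+2D)\ell}\ge e^{-\gamma\ell}\ge\Pr[E_{t,\ell}]$ from the third condition and \cref{lem:general-event-tail}.
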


\begin{proof}
Every event has length greater than $L$, so
$x(E_{t,\ell})=e^{-a\ell}<e^{-aL}\le1/2$.  The same upper bound holds for
every neighboring charge.  Using $\log(1-u)\ge-2u$ for $0\le u\le1/2$,
the logarithm of the neighboring product is at least
$-2\sum_{F\sim E_{t,\ell}}x(F)\ge-2D\ell$, by
\cref{lem:general-neighbor-charge}.  Since $a+2D\le\gamma$, it follows that
\[
  x(E_{t,\ell})
  \prod_{F\sim E_{t,\ell}}(1-x(F))
  \ge e^{-(a+2D)\ell}
  \ge e^{-\gamma\ell}
  \ge\Pr[E_{t,\ell}],
\]
where the last inequality is \cref{lem:general-event-tail}.  This is the
criterion in \cref{eq:lll-criterion}.
\end{proof}

We can now combine the long-interval estimate with square-freeness at the
remaining scales.

\begin{proof}[Proof of \cref{thm:entropy-transfer}]
The family \cref{eq:transfer-bad-event} is finite.  Each event is determined
by its declared block variables, and events with disjoint declared variable
sets are functions of disjoint families of independent variables.  Thus the
graph above is a valid variable-dependency graph.  By
\cref{lem:general-lll-closure,thm:lll}, some outcome $S$ avoids every event
with $\ell>L$.  Hence \cref{eq:transfer-equal-gap} holds at all these scales.

It remains to consider $1\le\ell\le L$.  Every outcome of the source is
square-free.  The two adjacent length-$\ell$ intervals cannot be equal, for
otherwise their concatenation would be a square.  Two different words of
the same length have LCS at most $\ell-1$.  Moreover,
$\delta\ell\le\delta L\le\delta\lfloor1/\delta\rfloor\le1$.  Therefore
\[
  \LCS(S[t,t+\ell),S[t+\ell,t+2\ell))
  \le\ell-1\le(1-\delta)\ell,
\]
which proves \cref{eq:transfer-equal-gap} at every scale.

Apply \cref{lem:equal-to-full} with $\eta=\delta$.  Every pair of adjacent
intervals has normalized insertion--deletion distance at least
$\delta/(2+\delta)$.  Thus every strictly smaller gap, equivalently every
parameter $\eps$ with $1-\delta/(2+\delta)<\eps<1$ satisfies the strict
definition of a synchronization string.  Finally, factors inherit all
equal-block inequalities and hence the same conclusion.
\end{proof}

%% file: appendix/E_extremal_square_free.tex
\section{Extremal square-free synchronization strings}
\label{app:extremal-square-free}

An extremal square-free word is a square-free word for which inserting any
letter at any position creates a square.  This appendix proves
\cref{thm:extremal-sync} by applying the local-entropy transfer theorem to a
multivalued substitution of Mol and Rampersad.

Let $\mathcal G$ be the twelve-vertex Thue digraph used in the construction
of extremal ternary square-free words.  Grytczuk, Kordulewski, and
Niewiadomski proved that $\mathcal G$ has arbitrarily long square-free walks
\cite[Lemma~8]{GrytczukKordulewskiNiewiadomski2020}.  For each vertex $x$,
Mol and Rampersad define words $Q_x$ and $R_x$ of lengths $41$ and $52$, as
well as a prescribed length-$49$ left cap at the first vertex and a
prescribed length-$49$ right cap at the last vertex.  Their construction has
the following property: if $w=xw'y$ is a square-free walk in $\mathcal G$ of
length at least two, then every word obtained by choosing $Q_z$ or $R_z$
arbitrarily and occurrence-wise at each occurrence $z$ of $w$, and then
adding the two prescribed caps, is extremal square-free
\cite[Corollary~3.3]{MolRampersad2021}.

The twelve pairs $(Q_x,R_x)$ are obtained by applying the six permutations
of the alphabet, with and without reversal, to the base pair
\begin{align*}
 Q&=\mathtt{abacbabcacbacabacbcacbacabcbabcabacbcabcb},\\
 R&=\mathtt{abacabcacbacabcbabcacbacabacbcacbacabcbabcabacbcabcb}.
\end{align*}

\begin{lemma}[Equal-length pairing]
\label{lem:extremal-equal-pairing}
For every ordered pair of vertices $x,y$ of $\mathcal G$, the words
$Q_xR_y$ and $R_xQ_y$ are distinct and have length $93$.
\end{lemma}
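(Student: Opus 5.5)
The length claim is immediate: $|Q_x|=|Q|=41$ and $|R_y|=|R|=52$, because each of the twelve images is obtained from the base pair by a letter permutation, possibly followed by reversal, and neither operation changes length. So both $Q_xR_y$ and $R_xQ_y$ have length $41+52=93$.

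For distinctness, I will compare the two words on their first $41$ characters, which are $Q_x$ and the length-$41$ prefix of $R_x$. It is enough to show that $Q_x$ is not a prefix of $R_x$. Both words come from the same transformation $\pi$ (a letter permutation, possibly with reversal), so I treat the two cases separately. Without reversal, $Q_x=\pi(Q)$ and $R_x=\pi(R)$. Since $\pi$ is injective letterwise, $Q_x$ is a prefix of $R_x$ exactly when $Q$ is a prefix of $R$. With reversal, $Q_x$ and $R_x$ are the permuted reversals of $Q$ and $R$. Then the length-$41$ prefix of $R_x$ is the permuted reversal of the length-$41$ suffix of $R$. So in that case I must check instead that $Q$ is not a suffix of $R$.

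Both checks come down to one mismatch in the base words:
\begin{itemize}
\item Prefix check: $Q$ and $R$ agree on $\mathtt{abac}$, but at position $5$ $Q$ has $\mathtt{b}$ and $R$ has $\mathtt{a}$. So $Q$ is not a prefix of $R$.
\item Suffix check: both words end in $\mathtt{\ldots abcbabcabacbcabcb}$. Aligning their ends carefully, I expect a mismatch about $20$ letters from the end; $Q$ seems to read $\ldots\mathtt{cacbacabcbab}\ldots$ there, against a different letter in $R$. So $Q$ is not a suffix of $R$.
\end{itemize}

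Since $x$ and $y$ may be different vertices, I compare only the $x$-parts and need no other fact about $y$. If the suffix comparison somehow failed, I would fall back on the other end of the word. The last $41$ characters of $Q_xR_y$ and $R_xQ_y$ are the length-$41$ suffix of $R_y$ and $Q_y$ respectively, which gives the same dichotomy with roles swapped. Alternatively I would compare the full $93$-letter words directly.

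The main risk is the suffix alignment, an index-counting step that is easy to get wrong by hand. I would verify it by an explicit character-by-character check of the last $41$ letters of $R$ against $Q$, for example in the accompanying artifact.
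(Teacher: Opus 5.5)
Your structure matches the paper's proof. Both words have length $41+52=93$. Comparing the first $41$ letters reduces the non-reversed case to showing that $Q$ is not a prefix of $R$. It reduces the reversed case to showing that $Q$ is not the length-$41$ suffix $R[12,53)$ of $R$. Your prefix check (fifth letters $\mathtt{b}$ versus $\mathtt{a}$) is correct.

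The suffix check, however, is wrong as written. It is the only real content of the reversed case, so it must be repaired. There is no mismatch about $20$ letters from the end. Under the alignment of $Q$ with $R[12,53)$ one has $Q[4,42)=R[15,53)$, so the two words share a common suffix of length $38$. The block $\mathtt{cacbacabcbab}=Q[19,31)$ that you single out is matched letter for letter by $R[30,42)$. The disagreement is at the opposite end of the alignment: $Q$ begins with $\mathtt{aba}$, whereas $R[12,53)$ begins with $\mathtt{cab}$. In particular the first letters already differ: $\mathtt{a}$ against the twelfth letter $\mathtt{c}$ of $R$. This is exactly the comparison the paper uses.

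Your fallback would not have rescued a failed suffix check either. When $x$ is reversed and $y$ is not, the last $41$ letters compare $Q_y$ with the length-$41$ suffix of $R_y$, which is the same suffix question again. Once you replace the guessed mismatch with the first-letter comparison, the argument is complete and coincides with the paper's proof.
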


\begin{proof}
Both lengths equal $41+52=93$.  If the two concatenations were equal, their
first $41$ letters would give $Q_x=R_x[1,42)$.  For a vertex obtained
without reversal, this is impossible because the base
words $Q$ and $R$ first differ in their fifth letter.  For a reversed vertex,
it would say that $Q$ is the length-$41$ suffix of $R$; direct comparison of
the two displayed words shows that their first letters already differ.
Alphabet permutations preserve both inequalities.
\end{proof}

\begin{lemma}[Balanced lengths]
\label{lem:balanced-41-52}
For every integer $M\ge2395$, there are nonnegative integers $a,b$ such that
$M=41a+52b$ and $|a-b|\le46$.
\end{lemma}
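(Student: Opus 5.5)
The plan is a direct elementary number-theoretic argument: parametrize all integer solutions of $M=41a+52b$, then use the one free parameter to force $|a-b|\le46$, and finally check nonnegativity from the size of $M$.

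\emph{Integer solutions and the free parameter.} Since $\gcd(41,52)=1$, there is an integer pair $(a_0,b_0)$ with $41a_0+52b_0=M$. All integer solutions are then $(a,b)=(a_0+52j,\;b_0-41j)$ for $j\in\mathbb Z$. Along this family the difference $d=a-b$ changes by exactly $52+41=93$ per step. So $d$ runs through a full residue class modulo $93$. Because $[-46,46]$ consists of $93$ consecutive integers, exactly one $j$ gives $|d|\le46$. Fix that solution.

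\emph{Nonnegativity.} Put $k=a+b$, so that $a=(k+d)/2$ and $b=(k-d)/2$. Substituting into $M=41a+52b$ gives $2M=93k-11d$, hence
\[
  k=\frac{2M+11d}{93}.
\]
Both $a$ and $b$ are nonnegative exactly when $k\ge|d|$, that is, when $2M\ge93|d|-11d$. For $|d|\le46$ the right-hand side is at most $(93+11)\cdot46=4784$, the worst case being $d=-46$. Thus $M\ge2392$ suffices, and in particular every $M\ge2395$ works.

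There is no real obstacle. The only points to check carefully are that the shift vector $(52,-41)$ moves $d$ by exactly $93$, so that a representative with $|d|\le46$ exists, and that the worst case $d=-46$ of the nonnegativity bound is covered by the threshold $2395$.
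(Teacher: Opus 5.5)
Your proof is correct and is essentially the paper's argument: the paper walks the same one-parameter solution family (step $(52,-41)$), indexed instead by $N=a+b$ with $N\equiv 7M \pmod{11}$ chosen nearest $2M/93$, which selects the same solution minimizing $|a-b|$, and then checks nonnegativity using the same identity $2M=93(a+b)+11(b-a)$. Your exact worst-case check at $a-b=-46$ is slightly sharper than the paper's cruder estimate $a+b\ge 2M/93-11/2$, so you get the threshold $M\ge2392$ where the paper needs $M\ge2395$.
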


\begin{proof}
Choose an integer $N\equiv 7M\pmod {11}$ minimizing $|N-2M/93|$.
Consecutive integers in this residue class
differ by $11$, and hence
\begin{equation}
  \left|N-\frac{2M}{93}\right|\le\frac{11}{2}.
  \label{eq:nearest-residue-extremal}
\end{equation}
Set $d=(2M-93N)/11$, $a=(N-d)/2$, and $b=(N+d)/2$.
Since $93\equiv5\pmod {11}$ and $5^{-1}\equiv9\pmod {11}$, the congruence
$N\equiv7M\pmod {11}$ implies $93N\equiv2M\pmod {11}$.  Thus $d$ is an
integer.  Reducing $11d=2M-93N$ modulo two shows that $d$ and $N$ have the
same parity, so $a$ and $b$ are integers.  Moreover,
$41a+52b=(93N+11d)/2=M$.  By \cref{eq:nearest-residue-extremal},
$|d|=|2M-93N|/11\le93/2$, and integrality improves this to $|d|\le46$.
Finally, if $M\ge2395$, then $N\ge2M/93-11/2>46\ge|d|$.  Therefore
$a=(N-d)/2$ and $b=(N+d)/2$ are nonnegative, and
$|a-b|=|d|\le46$.
\end{proof}

\begin{lemma}[A fixed-boundary extremal source]
\label{lem:extremal-source}
For every $n\ge2493$, there is a $(m,h,b_0)$-block-local square-free source
of length $n$, all of whose outcomes are extremal, with parameters
\[
  m=93,
  \qquad
  h=\frac{\log2}{93},
  \qquad
  b_0=\left(\frac{2490}{93}+2\right)\log2.
\]
\end{lemma}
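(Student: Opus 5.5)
We will build the source directly from the Mol--Rampersad construction. Given $n\ge2493$, set $M=n-98$, so $M\ge2395$. \Cref{lem:balanced-41-52} then gives nonnegative $a,b$ with $M=41a+52b$ and $|a-b|\le46$. Put $p=\min\{a,b\}$ and $u=|a-b|\le46$; thus $M=93p+c$ with $c\le 46\cdot 52$. Take a square-free walk $w$ in $\mathcal G$ of length $2p+u\ge2$, which exists by \cite[Lemma~8]{GrytczukKordulewskiNiewiadomski2020}. The output word is: the left cap of length $49$; then $p$ paired blocks on consecutive occurrences $(x_{2i-1},x_{2i})$, each equal to $Q_{x_{2i-1}}R_{x_{2i}}$ or $R_{x_{2i-1}}Q_{x_{2i}}$ according to an independent fair bit; then $u$ unpaired occurrences all using the more numerous image ($Q$ if $a>b$, otherwise $R$); and finally the right cap of length $49$. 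The total length is $98+93p+41(a-p)+52(b-p)=98+M=n$.

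Next we verify the source axioms of \cref{def:block-local-source}. Every outcome arises from some occurrence-wise choice of $Q_z$ or $R_z$ on $w$, together with the caps. By \cite[Corollary~3.3]{MolRampersad2021} it is therefore extremal, and in particular square-free. For the partition into blocks, each paired block has fixed length $93$, because both alternatives have length $93$. The caps and the unpaired images are deterministic and can be cut into blocks of length at most $93$ that carry trivial (constant) variables. Hence $m=93$, and the character boundaries do not depend on the bits. The deterministic stretch has total length at most
\[
98+46\cdot52=2490.
\]

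The main step is the conditional point-mass bound. Fix an interval $I$ of length $\ell$. The complete random (paired) blocks inside $I$ fill all of $I$ except at most two partial blocks at the ends, each of length less than $93$, and except the deterministic characters, of which there are at most $2490$. The number of complete random blocks is therefore at least
\[
\frac{\ell-2490}{93}-2.
\]
By \cref{lem:extremal-equal-pairing}, once the other variables are conditioned, distinct bit vectors on these blocks give distinct values of $W[I]$. Each value of $W[I]$ thus has probability at most
\[
2^{-(\ell-2490)/93+2}=\exp\!\left(-\frac{\log2}{93}\,\ell+\left(\frac{2490}{93}+2\right)\log2\right),
\]
which is exactly the claimed $(h,b_0)$. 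The step needing the most care is this counting of deterministic characters. The unpaired images together with the caps form a bounded set that need not be contiguous. Even so, their total length is at most $2490$, and only the two end fragments of $I$ cut random blocks partially. These two facts give the bound regardless of where the unpaired images sit.
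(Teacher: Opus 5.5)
Your proposal is correct and follows the paper's proof essentially step by step: the same $M=n-98$ with the balanced decomposition, the same fair-bit choice between $Q_xR_y$ and $R_xQ_y$ on consecutive occurrences, the same deterministic budget $98+46\cdot52=2490$, and the same count of at least $\ell/93-2490/93-2$ complete random macroblocks. The only difference is cosmetic: you place the unpaired images after the paired blocks rather than before them. You also observe, correctly, that contiguity of the random region is not needed for the count, since an interval can partially cut at most two blocks however the deterministic material is arranged.
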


\begin{proof}
Put $M=n-98$ and take $a,b$ from \cref{lem:balanced-41-52}.  Set
$N=a+b$.  Choose a square-free walk $w=z_1\cdots z_N$ in $\mathcal G$.
Such a walk exists because a prefix of a sufficiently long square-free walk
has every prescribed shorter length.

Let $p=\min\{a,b\}$.  Reserve the first $|a-b|$ occurrences of the walk for
deterministic images: use $Q$-images if $a>b$ and $R$-images if $b>a$.
Partition the remaining $2p$ occurrences into consecutive ordered pairs
$(x,y)$.  Independently at each pair, choose with equal probability one of
the words $Q_xR_y$ and $R_xQ_y$.
By \cref{lem:extremal-equal-pairing}, these are distinct words of length
$93$.  Finally, add the prescribed length-$49$ caps at the two ends.

Every outcome uses exactly $a$ copies of a $Q$-image and $b$ copies of an
$R$-image.  Its length is therefore $98+41a+52b=n$.  Every outcome belongs
to the Mol--Rampersad substitution applied to the same
walk $w$, and hence is extremal square-free.

The two choices above form a random macroblock
of length $93$.  The cap blocks and the at most $46$ unpaired images are
deterministic and have total length at most $2\cdot49+46\cdot52=2490$.
Because the unpaired images were placed first, the random length-$93$
macroblocks form one contiguous region.  The block partition is fixed before
the independent fair bits are sampled, and every block has length at most
$93$.

Let $I$ be a character interval of length $\ell$.  Remove its intersection
with the deterministic regions and the at most two random macroblocks cut by
its endpoints.  The interval then contains at least
$k\ge \ell/93-2490/93-2$ complete random macroblocks.  Condition on all
other block variables.  A
specified value of the word on $I$ fixes the fair bit in each of these $k$
blocks, because the full block lies in $I$ and its two possible values are
distinct.  Hence
\[
  \max_u \Pr[W[I]=u\mid\text{the conditioning}]
  \le 2^{-k}
  \le \exp\!\left(
      -\frac{\log2}{93}\ell
      +\left(\frac{2490}{93}+2\right)\log2
    \right).
\]
This is the required block-local point-mass bound.  The deterministic blocks
may be assigned degenerate independent variables.
\end{proof}

\begin{proof}[Proof of \cref{thm:extremal-sync}]
Apply \cref{thm:entropy-transfer} to the source in
\cref{lem:extremal-source}.  Put $\delta=1/100000$, $L=100000$, and take
charge exponent $\alpha=1/1000$.  The elementary estimates in
\cref{lem:elementary-logs} give $2/3<\log2<1$, $\log100000<12$, and
$\log6<2$.  Hence $h>2/279$ and $b_0<29$, while the entropy estimate
$\Phi_3(\delta)\le2\delta\log(e/\delta)+\delta\log6$ gives
$\Phi_3(\delta)<28/100000$.  Consequently,
\[
  \gamma=h-\Phi_3(\delta)-\frac{b_0}{L}
  >\frac2{279}-\frac{57}{100000}
  >\frac6{1000}.
\]

It remains to bound the dependency tail $D$ in
\cref{eq:transfer-dependency-closed-form}.  Write $z=e^{-\alpha}$.  Since
$1-e^{-x}\ge x/2$ and $1-e^{-x}\le x$ for $0\le x\le1$, we have
$1/2000\le1-z\le1/1000$.  Also $z^{L+1}<e^{-100}<10^{-40}$; the latter
inequality follows from the
first seven terms of the exponential series, which give $e^5>100$.  Moreover,
$(L+1-Lz)/(L(1-z)^2)=1/(1-z)+1/(L(1-z)^2)<2040$.  Substitution in
\cref{eq:transfer-dependency-closed-form} gives
$D<3\cdot10^{-40}(2000+2040)<10^{-35}$.  Therefore
$\alpha+2D<2/1000<\gamma$.  The remaining hypotheses are
immediate: $L\ge2m+1$ and $e^{-\alpha L}=e^{-100}<1/2$.  The transfer theorem
produces an outcome whose normalized insertion--deletion gap is at least
$\delta/(2+\delta)=1/200001$.  Taking the strictly smaller gap $1/200002$
gives synchronization parameter
$200001/200002$.  By \cref{lem:extremal-source}, the selected outcome remains
extremal square-free.
\end{proof}

%% file: appendix/D_all_length_circles.tex
\section{All sufficiently large synchronization-circle lengths}
\label{app:all-length-circles}

To obtain circle lengths not divisible by $18$, we delete a short factor
from an 18-uniform output.  We must then exclude squares across the new join
and retain enough randomness for the local lemma.  A stronger outer word
bounds the size of any new square, reducing the first task to $306$ finite
template--context cases.  Increasing the cutoff handles the choices fixed
near the join.

\begin{theorem}[Computer-assisted all sufficiently large circle lengths]
\label{thm:all-length-circles}
For every integer $n\ge10008$, there is a ternary
$5201/5202$-synchronization circle of length $n$.
\end{theorem}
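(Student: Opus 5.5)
The plan is to build the circle from a uniform 18-block output on a circularly square-free outer word, and then delete one short factor to reach lengths not divisible by $18$. Write $n=18N-c$ with $0\le c<18$ and $N$ large; $n\ge10008$ gives $N\ge556$. Choose a circularly square-free ternary outer word $T$ of length $N$, which exists by \cite[Theorem~1]{Shur2010Circular}. I would choose it with a stronger avoidance property, for instance avoiding squares and also short near-squares or $(7/4)^+$-type powers across the seam. The reason is that any square in the modified circular output should then have bounded length. Form the circular output $C_0\cdots C_{N-1}$ and delete a fixed factor of length $c$ at one macroblock boundary, say by trimming $C_{N-1}C_0$. The choices in the few macroblocks adjacent to the deletion are fixed deterministically. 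All other occurrence choices stay independent and uniform.

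Step one is circular square-freeness of every outcome. Away from the seam, \cref{lem:circular-preservation} and \cref{prop:codebook} already forbid squares. So I would argue that any square $XX$ in a rotation must meet the seam. A square long enough to contain several full macroblocks on each side would desubstitute, through the synchronizing property of the Brinkhuis codebook, to a square or near-square in $T$; the stronger outer word forbids this. This confines every new square to a bounded window around the seam. Once the fixed seam choices, the deletion amount $c$, and the possible outer contexts of bounded length are enumerated, only finitely many template--context cases remain, the $306$ mentioned in the section preamble. Each case is checked by computer for square-freeness.

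Step two is the local lemma on the cyclic arcs, run as in \cref{thm:cyclic-substitution}. The source is still block-local with $m=18$ and $h=(\log2)/18$. The fixed seam blocks and the shortened block only raise the boundary loss $b$ by a constant, roughly the number of frozen blocks times $\log 2$. I compensate by enlarging the cutoff $L$, which is where the weaker parameter $5201/5202$ comes from: take $\delta=1/2600$ and $L=2600$. Square-freeness covers the scales $\ell\le L$, and the cyclic dependency count is unchanged up to constants. The numeric check of \cref{eq:transfer-conditions} with a suitable charge exponent $a$ then gives the gap $\delta/(2+\delta)=1/5201$. The strictly smaller gap $1/5202$ follows, since every cut of the circle is handled by the proof of \cref{lem:equal-to-full}.

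I expect step one to be the main obstacle. It requires identifying the outer-word property that forces new squares to be short and bounding their length explicitly, so that the finite check is genuinely exhaustive. The threshold $10008$ should then simply be where $N$ is large enough for the chosen outer words to exist and for the frozen region to be negligible against $L$.
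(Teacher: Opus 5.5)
Your step two matches the paper: $\eta=1/2600$ and $L=2600$, with the frozen and broken blocks costing only a constant boundary loss (the paper has $s\ge\ell/18-88$ free complete blocks and absorbs $2^{88}$ at $\ell\ge2601$). The genuine gap is step one. You leave it as a placeholder, and the concrete choices you do make do not lead to the finite check you cite.

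The first problem is the outer word. Circular $7/4^+$-free words exist at every length $\ge23$ by Currie--Mol--Rampersad. Even so, such a word only tells you that the outer context around the seam is some $7/4^+$-free factor of the needed length, so nothing reduces your check to $306$ cases. The paper needs the specific mixed form $d_{19}(u)d_{23}(v)$ of \cref{eq:dejean-mixed-form}, which is available for $N>555$. This is where $10008$ comes from; it has nothing to do with comparing the frozen region with $L$. The paper starts the deletion within $86$ characters of the central $d_{19}$--$d_{23}$ boundary. The $84$ outer letters there are $d_{19}(a)d_{19}(b)d_{23}(c)d_{23}(d)$, where $abcd$ is one of only $18$ square-free words. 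The count $306=17\cdot18$ is residues times these contexts.

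The confinement argument must also be quantitative, because it determines how large a window has to be frozen and checked. The paper's tool is \cref{lem:phase-recognition}: every length-$14$ factor of an output determines its phase modulo $18$. Suppose the seam is at distance at least $14$ from the start, midpoint, and end of a square of root $p$. Comparing windows on both sides of the seam then forces $p\equiv0$ and $p+g\equiv0\pmod{18}$, which is impossible for $g\not\equiv0$. Otherwise the two roots share a seam-free equality segment of length at least $p-13$. That segment desubstitutes to an outer factor $UVU$ with $|V|\le4$, and $7/4^+$-freeness then gives $|U|\le12$, hence $p\le288$. Your phrase ``square or near-square in $T$'' covers only the second case and gives no bound.

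With roots up to $288$, a seam-crossing square meets more than thirty macroblocks. Square-freeness must hold for every outcome, so you must freeze every choice within reach of the seam, not just ``the few macroblocks adjacent to the deletion''. The paper freezes all $84$, with left and right buffers of $618$ and $715$ characters, both larger than $2\cdot288$.

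Finally, the verified configurations are not ``delete exactly $c$ characters at a macroblock boundary''. The gaps start inside macroblocks, at offsets $-66$, $-61$, or $86$ from the central boundary. For $c=7$ and $c=9$ the paper deletes $25$ and $27$ characters instead, which increases $N$ by one. As written, your step one asserts the success of a computation you have not specified, and the configuration you do specify is not one the paper verifies.
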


\subsection{Circular Dejean words and phase recognition}

The exponent of a nonempty word is its length divided by its least period.  A
circular word is $7/4^+$-free if none of its factors has exponent strictly
greater than $7/4$.  Currie, Mol, and Rampersad proved that such ternary
circular words exist at every length at least $23$
\cite[Theorems~3.2--3.3]{CurrieMolRampersad2019}.  We use the form of their
inductive step.
They define morphisms of lengths $19$ and $23$ with
\begin{align*}
d_{19}(0)&=0120212012102120210,\\
d_{23}(0)&=01202120102012102120210,
\end{align*}
and obtain the other images by adding the input letter modulo three.  Their
proof shows that every length $N>555$ has a circular $7/4^+$-free word of the
form
\begin{equation}
  d_{19}(u)d_{23}(v),
  \qquad |u|\ge6,\qquad 2\le |v|\le20,
  \label{eq:dejean-mixed-form}
\end{equation}
where $(uv)$ is itself circularly $7/4^+$-free.

We also use a small recognizability property of the Brinkhuis codebook.

\begin{lemma}[Phase recognition]\label{lem:phase-recognition}
Every length-$14$ factor of an occurrence-wise output of the six codewords in
\cref{eq:codebook} determines the phase of its starting position modulo
$18$.  The six codewords are pairwise distinct.
\end{lemma}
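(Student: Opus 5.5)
The statement is a finite verification, and I would reduce it to an explicit enumeration of length-$14$ windows. First I would handle pairwise distinctness of the six codewords in \cref{eq:codebook} by inspection. The three outer letters give different first letters ($2$, $0$, $1$). For a fixed outer letter, the two images $B_{a,0}$ and $B_{a,1}$ first differ at position $8$; for example, $B_{0,0}$ and $B_{0,1}$ share the prefix $2102010$ and then read $1$ versus $2$.

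Next I would localize every length-$14$ factor. Let the output be $W(T,R)=B_{T_1,R_1}\cdots B_{T_N,R_N}$, and let a length-$14$ factor start at phase $p\in\{0,\ldots,17\}$ inside macroblock $i$. Since $14\le18$, the factor lies inside $B_{T_i,R_i}B_{T_{i+1},R_{i+1}}$, or inside $B_{T_i,R_i}$ alone when $p\le4$. Because $T$ is square-free, $T_i\ne T_{i+1}$, as $T_iT_i$ would be a square. A factor lying inside a single block also appears inside some admissible pair, namely that block followed by any different outer letter. Hence the set $F_p$ of length-$14$ factors that can start at phase $p$ satisfies
\[
F_p\subseteq\bigl\{(B_{a,r}B_{b,s})[p+1,p+15): a\ne b,\ r,s\in\{0,1\}\bigr\}.
\]
This is a set of at most $6\cdot4=24$ words for each $p$, so $432$ words in total.

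The claim then becomes: $F_p\cap F_{p'}=\varnothing$ for all $p\ne p'$. The one subtlety is that the right-hand set above may contain words that never occur in any actual output, but this only enlarges the sets being compared, so checking disjointness of these larger sets still suffices. Once disjointness holds, the phase of a factor is simply the unique $p$ whose set contains it.

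The main obstacle is the disjointness check itself; I would carry it out by direct comparison of the $432$ listed words, which is mechanical and best done by a short exact program, with the table recorded in the artifact. To make it readable by hand, I would try to exploit the visible structure of the codebook:
\begin{itemize}
\item Every image of outer letter $a$ begins with a fixed $7$-letter prefix and ends with a fixed $7$-letter suffix; for example, the images of $0$ begin $2102010$ or $2102012$ and both end in $2102012$.
\item The two images of a letter differ only in a central window.
\end{itemize}
Windows that cover a block boundary therefore contain a recognizable "suffix of $a$ followed by prefix of $b$" pattern, whose position reveals $p$. Interior windows would be separated by their short distinguishing factors. Any residual coincidences that this structural argument does not rule out would be settled by the enumeration.
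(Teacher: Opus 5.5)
Your proof is essentially the paper's: both reduce phase recognition to an exhaustive finite comparison over bounded outer contexts. Your ordered pairs $ab$ with $a\ne b$ (which over-approximate nothing, since every such pair occurs in a square-free outer word) generate exactly the same phase-labelled length-$14$ factors as the paper's $96$ outputs of square-free length-three outer words, of which the paper counts $198$; the disjointness across phases is then the same computation.

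One factual correction concerns your distinctness witness and the structural bullet. $B_{0,0}$ begins $2102012$ and ends $0102012$, while $B_{0,1}$ begins $2102010$ and ends $2102012$. The two images of each letter therefore share only a six-letter prefix and a six-letter suffix ($210201$ and $102012$ for letter $0$), and they first differ at the seventh letter. The witness you state is wrong, and any hand-readable anchors must use six letters, not seven. Distinctness itself is unaffected.
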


\begin{proof}
A length-$14$ factor meets at most two consecutive macroblocks.  It therefore
occurs in an output of three outer letters.  There are $12$ square-free
ternary words of length three and eight occurrence-wise choice vectors, for
$96$ outputs.  Exact factor enumeration gives $198$ distinct length-$14$
factors, none occurring at two phases.  The bound is sharp: at length $13$
the three factors
\[
0121021201210,\qquad
1202102012021,\qquad
2010210120102
\]
occur at both phases $8$ and $15$.  Direct comparison of the six displayed
codewords also proves their distinctness.
\end{proof}

\subsection{A bounded defect}

Delete one cyclic factor of length $g$ from an occurrence-wise output and
join its endpoints.  We call the new join the \emph{defect seam}.  The
inherited macroblock phases jump by $g$ modulo $18$ at this seam.

\begin{lemma}[Interior seam exclusion]\label{lem:interior-seam-exclusion}
Suppose $g\not\equiv0\pmod {18}$.  A square crossing the defect seam cannot
place the seam at distance at least $14$ from its start, midpoint, and end.
\end{lemma}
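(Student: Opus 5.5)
Let the square be $XX$ with $|X|=p$, occupying positions $[s,s+2p)$ of the defected circular word, and let the seam lie at position $c$ with $s<c<s+2p$. By hypothesis the seam is at distance at least $14$ from $s$, from $s+p$, and from $s+2p$. The plan is to use the period $p$ to transport a length-$14$ window that avoids the seam on one side to a seam-free window on the other side, and then read off a contradiction from \cref{lem:phase-recognition}. There are two cases, according to which half contains the seam. By symmetry (reversal, or simply swapping the roles of the halves) we may assume the seam lies in the first half, so $s+14\le c\le s+p-14$.

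In the first half, the factor $F=w[c-14,c)$ lies entirely to the left of the seam and inside the first copy of $X$, since $c-14\ge s$. The factor $G=w[c,c+14)$ lies entirely to the right of the seam and still inside the first copy, since $c+14\le s+p$. Shifting by $p$ gives identical factors $F'=w[c+p-14,c+p)$ and $G'=w[c+p,c+p+14)$ in the second copy. Both of these lie in the seam-free region $(c,s+2p)$, using $c+p+14\le s+2p$. The four windows $F,G,F',G'$ are therefore each factors of the original occurrence-wise output, read in the inherited phase coordinates. In the region to the right of the seam, the inherited phase differs from the global position by a constant offset $\theta$. To the left of the seam it is offset by $\theta-g$, because $g$ characters were deleted.

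Apply \cref{lem:phase-recognition} to the pair $(F,F')$. Since $F=F'$, their starting phases agree modulo $18$. But $F$ starts at phase $\theta-g+(c-14)$ and $F'$ at phase $\theta+(c+p-14)$, which gives $p\equiv -g\pmod{18}$. Applying the lemma to $(G,G')$ instead, both windows lie right of the seam. Their phases are $\theta+c$ and $\theta+c+p$, which gives $p\equiv0\pmod{18}$. Together these force $g\equiv0\pmod{18}$, contradicting the hypothesis. The case where the seam lies in the second half is the mirror image. There we take windows ending or starting at $c$ and shift them by $-p$, using the distances to $s+p$ and to $s+2p$.

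The main obstacle is bookkeeping rather than ideas. First, every chosen window must genuinely be a factor of some occurrence-wise output, which in practice means a factor of the original circular output before deletion. Second, the inherited phases on the two sides must differ by exactly $g$, with the correct sign. Third, we must handle the possibility that the square is long enough to wrap around the circle. For that last point I would first check that squares in a word of the circle's length are short enough that the analysis is linear near the seam. If they are not, I would argue on the linearized factor cut opposite the seam.
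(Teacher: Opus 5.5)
Your proposal is correct and matches the paper's proof: the paper unwraps the square and compares the same four length-$14$ windows $[s-14,s)$, $[s,s+14)$ and their translates by $p$. Phase recognition then gives $p+g\equiv0$ and $p\equiv0 \pmod{18}$. The seam-in-second-copy case is handled in the paper, as in your argument, by translating by $-p$.
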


\begin{proof}
Unwrap the square as $[a,a+2p)$, with midpoint $a+p$, and let $s$ be the
seam in its interior.  Suppose first that $a<s<a+p$.  The four intervals
\[
[s-14,s),\quad [s-14+p,s+p),\quad
[s,s+14),\quad [s+p,s+p+14)
\]
lie inside the square and avoid the seam.  The first pair is equal, as is the
second pair.  By \cref{lem:phase-recognition}, their starts have equal
phases.  The first translation by $p$ crosses the deleted gap in undeleted
coordinates and gives $p+g\equiv0\pmod {18}$; the second gives
$p\equiv0\pmod {18}$.  This contradicts the assumption on $g$.  If the
seam lies in the second copy, compare the same windows with their translates
by $-p$.
\end{proof}

The edge cases become finite because the outer circle avoids exponents just
above $7/4$.

\begin{lemma}[Equality segment next to the seam]
\label{lem:seam-equality-segment}
Unwrap a square of root length $p$ as $[a,a+2p)$ in a word obtained by
deleting a gap of length $g$, and suppose that the defect seam lies in the
interior of the square and within $13$ symbols of its start, midpoint, or
end.  If $p\ge27$, then the equality of the two roots contains a pair of
equal, seam-free segments of a common length $h\ge p-13$.  In the undeleted
word, the starts of these segments differ by either $p$ or $p+g$.
\end{lemma}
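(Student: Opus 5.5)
The plan is a case analysis on where the seam $s$ sits: within $13$ of the start $a$, of the midpoint $a+p$, or of the end $a+2p$. In each case we look at the identity $W[a+i]=W[a+p+i]$ for $0\le i<p$. We remove the indices for which either $a+i$ or $a+p+i$ lies on the wrong side of the seam relative to the bulk. What survives should be one contiguous range of $i$ in which both positions lie on a single side of the seam.

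\emph{Seam near the start.} Suppose $a<s\le a+13$. Take $i\in[s-a,p)$, a range of length $p-(s-a)\ge p-13$. Then $[s,a+p)$ and $[s+p,a+2p)$ are equal and both lie to the right of the seam, so neither contains it. They are separated by $p$ in the deleted word. Since both sit on the same side of the deleted gap, their undeleted starts also differ by $p$.

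\emph{Seam near the end.} Suppose $a+2p-13\le s<a+2p$. Symmetrically, use $[a,s-p)$ and $[a+p,s)$, of length $s-p-a\ge p-13$. Both lie left of the seam, so again the starts differ by $p$.

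\emph{Seam near the midpoint.} Suppose $|s-(a+p)|\le13$. If $s\le a+p$, the first root $[a,a+p)$ may contain the seam in its last $\le13$ symbols, while the second root lies entirely right of the seam. We use $[a,s)$ and $[a+p,s+p)$, of length $s-a\ge p-13$. The first segment is left of the seam and the second is right of it. Hence in undeleted coordinates the second start is shifted by the additional $g$, so the starts differ by $p+g$. If $s>a+p$, use instead $[s-p,a+p)$ and $[s,a+2p)$, of length $a+2p-s\ge p-13$, with the same shift $p+g$.

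\emph{Checks.} Since $p\ge27$, we have $p-13\ge14$. This guarantees that every segment is nonempty, and also that the three proximity cases are mutually exclusive, as the start, midpoint and end are at least $27>26$ apart. Equality of the segments follows directly from the square identity restricted to those indices. The only point needing care, and the main obstacle, is the coordinate bookkeeping. In the unwrapped circular word the undeleted position of a character is its deleted position, plus $g$ exactly when it lies past the seam. Verifying this consistently in each case gives the stated dichotomy of start differences, $p$ or $p+g$.
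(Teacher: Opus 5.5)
Your proof is correct and is essentially the paper's argument. The paper splits by which root contains the seam, notes that the root equality breaks into two seam-free segments of lengths $d$ and $p-d$ (one with undeleted displacement $p$, the other $p+g$), and then uses proximity to pick the one of length at least $p-13$. You instead index the cases by proximity directly and write down only that long segment, with the same bookkeeping of the shift by $g$ across the seam.
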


\begin{proof}
If the seam lies in the first root, write its distance from the start as
$d$, where $0<d<p$.  For root coordinates $0\le i<d$, the first position is
before the seam and its mate at distance $p$ is after it; in undeleted
coordinates their displacement is $p+g$.  For $d\le i<p$, both positions
are after the seam and their displacement is $p$.  Thus the root equality
splits into seam-free segments of lengths $d$ and $p-d$.  In this case the
seam can be within $13$ of the start or midpoint only; accordingly, one of
these two lengths is at least $p-13$.

If the seam lies in the second root, let $d$ be its distance from the
midpoint, where $0\le d<p$.  For $0\le i<d$, both corresponding positions are before the seam
and have undeleted displacement $p$; for $d\le i<p$, the two positions lie
on opposite sides and have displacement $p+g$.  The segment lengths are
again $d$ and $p-d$.  Proximity to the midpoint or end makes one of them at
least $p-13$.  These cases exhaust the possible position of an interior
seam.  Since $p\ge27$, the selected segment has length at least $14$.
\end{proof}

\begin{lemma}[Edge-root bound]\label{lem:dejean-edge-bound}
Let the outer circle be ternary and $7/4^+$-free, and delete a gap of length
$1\le g\le35$.  If a square crossing the defect seam puts the seam within
$13$ symbols of its start, midpoint, or end, then its root length is at most
$288$.
\end{lemma}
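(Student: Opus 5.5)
We argue by contradiction via the outer word.  Suppose the square has root length $p\ge289$.  By \cref{lem:seam-equality-segment} (applicable since $p\ge27$), the root equality contains two equal seam-free segments of common length $h\ge p-13\ge276$, whose starts in the undeleted occurrence-wise output differ by $\Delta\in\{p,p+g\}$.  Both segments are factors of the original (undeleted) Brinkhuis output, because they avoid the seam.  By \cref{lem:phase-recognition}, any length-$14$ window and its translate have equal phases modulo $18$, so $\Delta\equiv0\pmod{18}$.  Write $\Delta=18k$.

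Next we transfer the equality to the outer circle.  The segment of length $h\ge276$ contains at least $\lfloor h/18\rfloor-1$ complete macroblocks aligned to phase, and each is matched with the complete macroblock exactly $k$ blocks later.  The six codewords are pairwise distinct (\cref{lem:phase-recognition}), so equal macroblocks force equal outer letters.  Hence the outer circular word contains a factor of length $k+j$ with period $k$, where $j\ge h/18-2$ counts matched blocks.  We may extend $j$ by one using partial blocks, arguing that a long partial agreement already determines the outer letter.  I expect this step to be the main obstacle, since the paper gives no prefix-distinctness fact; I would settle it by a finite check on the codebook.  Because the seam is within $13$ of an end or the midpoint, $h$ is nearly $p$, while $\Delta$ is $p$ or $p+g\le p+35$.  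This gives $18j\gtrsim p-13-36$ and $18k\le p+35$.

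The exponent of this outer factor is $(k+j)/k=1+j/k$.  The bounds give $j/k\ge (p-13-36)/(p+35)$ up to rounding, which tends to $1$ and so exceeds $3/4$ once $p$ is moderately large.  This contradicts $7/4^+$-freeness of the outer circle.  One caveat is that the deleted gap may remove outer letters, so the matched blocks must lie within one arc of the circle.  This holds because the segments are seam-free; however, a translation by $p+g$ crosses the gap in undeleted coordinates.  There the outer factor is read in the original circle, where $\Delta$ is a genuine cyclic shift, and this is valid as long as the total span $\le 2p+35$ is less than the circumference.  The precise threshold $288$ comes from tracking the integer roundings (the $-13$, the $\le35$ gap, and two lost boundary blocks) in the inequality $j/k>3/4$.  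I would verify this final arithmetic directly, checking the worst case $\Delta=p+g$ with $g=35$.
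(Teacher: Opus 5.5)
Your route is the paper's: a seam-free equality segment of length $h\ge p-13$, phase recognition giving $\Delta=18k$, distinctness of the six codewords to transfer matched complete macroblocks to the outer circle, and a contradiction with $7/4^+$-freeness. Your single outer factor of length $k+j$ with period $k$ replaces the paper's case split ($q\ge k$ versus $q<k$, and $2k\le N$ versus $2k=N+1$). For this you need $k+j\le N$, which follows from the span bound $18(k+j)\le 2p+g\le 18N$, using $2p\le 18N-g$. It is $g$, not $35$, that makes the span fit inside the circumference.

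The real defect is the final step. Your ratio bound $j/k\ge(p-49)/(p+35)$ is too weak: at $p=289$ it equals $240/324<3/4$. As written, it only excludes $p\ge302$, so the content of the lemma sits in the arithmetic you deferred. The fix is to bound the difference $k-j$ rather than the ratio, and to use the lower bound $\Delta\ge p$:
\begin{itemize}
\item Since $p\le\Delta\le p+35$, you have $h\ge p-13\ge 18k-48$.
\item The complete-block count then gives the integer bound $j\ge\lfloor (h-17)/18\rfloor\ge k-4$.
\item Hence the outer factor has exponent at least $(2k-4)/k=2-4/k$, which exceeds $7/4$ exactly when $k\ge17$.
\item Finally, $p\ge289$ forces $18k=\Delta\ge p\ge289$, so $k\ge17$.
\end{itemize}
This is the paper's computation: with $z=k-q\le4$ and $q\le3z$, one gets $k\le16$ and $p\le18k\le288$. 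In particular, the partial-block extension you flag as the main obstacle is unnecessary. Complete blocks already give $k-j\le4$, so no prefix-distinctness check on the codebook is needed.
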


\begin{proof}
Let $p$ be the root length.  If $p\le26$, the conclusion is immediate, so
assume $p\ge27$.  By
\cref{lem:seam-equality-segment}, the two copies contain an ordinary equality
segment of length $h\ge p-13\ge14$.  By \cref{lem:phase-recognition}, the
undeleted displacement between its
two starts is $D=18k$ for some positive integer $k$.  This displacement is
either $p$ or $p+g$, and hence $h\ge18k-48$.

The equality segment contains at least
$q\ge\lfloor(h-17)/18\rfloor\ge k-4$ complete macroblocks.  The starts of
the corresponding blocks are $k$ outer
positions apart.  Since the codewords are distinct, the two length-$q$
outer factors are equal.

Let $N$ be the outer circumference.  The original square satisfies
$2p\le18N-g$.  If $D=p$, then $2k<N$.  If $D=p+g$, then
$2k<N+2$.  Thus either $2k\le N$ or $2k=N+1$.
For $k\ge5$ we must have $q<k$: otherwise $2k\le N$ gives an outer
square, while $2k=N+1$ gives a length-$(2k-1)$ conjugate of period at most
$k$ and exponent $2-1/k>7/4$.

Put $z=k-q$.  We have $1\le z\le4$.  The two equal outer factors form a
factor $UVU$ with $|U|=q$ and $|V|=z$.  This factor has length at most $N$;
at the endpoint $2k=N+1$, its length is $2k-z\le N$.  Its exponent is at
least $(2q+z)/(q+z)=1+q/(q+z)$.  The $7/4^+$ condition therefore forces
$q\le3z\le12$.  Hence
 $k=q+z\le16$ and $p\le D=18k\le288$.  If instead $k\le4$, then directly
$p\le D=18k\le72<288$, so the same conclusion holds.
\end{proof}

For $r\in\{1,\ldots,17\}$ define
\begin{equation}
 g_r=\begin{cases}
 r+18,&r\in\{7,9\},\\
 r,&\text{otherwise}.
 \end{cases}
 \label{eq:canonical-circle-gaps}
\end{equation}

\begin{lemma}[Finite boundary lemma]\label{lem:finite-circle-boundary}
Let $abcd$ be a square-free ternary word of length four and let
$r\in\{1,\ldots,17\}$.  In the outer context
\[
 d_{19}(a)d_{19}(b)d_{23}(c)d_{23}(d)
\]
one can fix the $84$ Brinkhuis choices periodically, with period at most
three, and delete a gap of length $g_r$ beginning within $86$ output
characters of the central $d_{19}$--$d_{23}$ boundary so that no square of
root at most $288$ crosses the new seam.  The period and the position of the
gap depend only on $r$, not on $abcd$.
\end{lemma}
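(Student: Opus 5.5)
This is a finite computer-assisted verification, and I would organize it so that the search space is provably finite before enumerating. The first step is to fix the data. The context $d_{19}(a)d_{19}(b)d_{23}(c)d_{23}(d)$ has $19+19+23+23=84$ outer letters, so there are $84$ Brinkhuis choices. There are $12$ square-free ternary words $abcd$ (the $12$ words of length three extended appropriately; in fact there are $18$ square-free words of length four, and I would enumerate them directly). For each $r\in\{1,\ldots,17\}$ I would search for three data: a periodic choice pattern of period $P\in\{1,2,3\}$ (so at most $2+4+8$ patterns up to shift), and a gap start $s$ within $86$ characters of the central boundary, which sits at output position $18\cdot38$. Together these give a single candidate rule depending only on $r$. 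I would then check that rule against every $abcd$ simultaneously. Accepting only rules uniform in $abcd$ is what the statement demands. Since the $d_{19}$ and $d_{23}$ images are obtained by adding the input letter modulo three, a rule that works for one letter class is likely to transfer, and this is what makes uniformity plausible.

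The second step is to make the check itself finite and exact. After the $84$ macroblocks are fixed and $g_r$ characters are deleted starting at $s$, the context becomes a fixed ternary word of length $18\cdot84-g_r$. A square crossing the seam with root $p\le288$ is contained in the window of radius $2\cdot288$ around the seam. That window lies inside the context because the seam is within $86+35$ characters of position $684$, and $684-121>576$. The same holds on the right, since $1512-684-121>576$. So for each candidate I would scan all $p\le288$ and all placements $[x,x+2p)$ containing the seam in its interior, and test equality of the two roots. This costs about $288^2$ comparisons of length at most $288$ per case, which is trivial. No square is possible away from the seam within the context, by \cref{prop:codebook}, because the undeleted output of a square-free outer word is square-free. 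I only record squares that actually cross the seam, as the statement requires.

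The third step is the search and certificate. For each $r$ I would iterate over $P$ and patterns and over $s$ in the allowed range, and keep the first candidate that passes all $abcd$. The output would be a table of $17$ triples (period, pattern, gap start), reported together with the verifier, and the count of $306=17\cdot18$ template--context cases matches the earlier announcement. The special choices $g_7=25$ and $g_9=27$ indicate that for $r=7,9$ no gap of length $r$ works, so the search must use the lifted length. I would reproduce that failure as a sanity check.

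The main obstacle is existence, not verification. Nothing a priori guarantees that a uniform period-$\le3$ rule exists for each $r$; only the exhaustive search shows it. The risk is a residue where every short gap creates a short seam square, such as a length-$1$ or length-$2$ square from letter collisions at the join. If the first attempt fails for some $r$, I would allow gap lengths $r+18k$ with small $k$, as the definition of $g_r$ already does. I would also use the phase-recognition structure of \cref{lem:phase-recognition} to choose $s$ so that the phases on the two sides of the seam produce mismatching length-$14$ factors early, which kills long roots quickly.
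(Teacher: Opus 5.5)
Your plan follows the paper's route. For each $r$ you fix one position-periodic choice template and one gap start, check all $17\cdot18=306$ template--context cases by comparing the two halves for every root $p\le288$ and every start whose square interior contains the seam, and note that the buffers on both sides of the seam exceed $2\cdot288$, so choices outside the $84$ displayed outer letters cannot matter.

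What is missing is the certificate, and for an existence statement like this the certificate is the whole proof. As you say yourself, nothing in your argument shows that a rule uniform in $abcd$ exists, so as written you have a search procedure, not a proof. The paper closes this by listing, for each $r$, a template $\pi_r\in\{0,1,01,10,100\}$ and a gap start $h_r\in\{-66,-61,86\}$ relative to the central boundary; for example $\pi_7=\pi_9=100$ and $h_7=h_9=86$. After that the verification is mechanical. Since $g_r$ is fixed by \eqref{eq:canonical-circle-gaps}, your fallback to other lifts $r+18k$ is not available: the search must succeed with $g_7=25$ and $g_9=27$, and it does. Your proposed sanity check that gaps of length $7$ and $9$ fail is neither claimed by the paper nor needed.

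Your buffer arithmetic also needs correcting: $684-121=563<576$, so the left-hand inequality you wrote is false. The gap length should be charged only on the right. A gap starting at most $86$ characters before the central boundary leaves at least $684-86=598>576$ characters to the left of the seam. Your right-hand bound $1512-684-86-35=707>576$ is correct. For the paper's table the minimal buffers are $618$ and $715$.

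Finally, uniformity within a cyclic-shift class of $abcd$ is exact, not merely plausible. The maps $d_{19}$, $d_{23}$ and the codebook all commute with adding a constant modulo three, and the template and gap depend only on position. Hence shifting $abcd$ translates the whole output letterwise, which preserves squares. Only six orbit representatives need checking, and uniformity across those orbits is what the computation establishes.
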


\begin{proof}
Number the $84$ letters of the displayed outer context from left to right,
and let the central boundary have output coordinate zero.  In the following
table, repeat the binary word $\pi_r$ periodically across those $84$ letters,
starting at the first one, and begin the deleted gap at coordinate $h_r$.
\[
\begin{array}{c|c|r|c@{\qquad}c|c|r|c}
r&g_r&h_r&\pi_r&r&g_r&h_r&\pi_r\\ \hline
1&1&-66&0   &10&10& 86&01\\
2&2& 86&10  &11&11& 86&0\\
3&3&-66&1   &12&12&-61&0\\
4&4&-61&01  &13&13&-66&0\\
5&5&-61&1   &14&14&-61&10\\
6&6& 86&01  &15&15&-61&01\\
7&25&86&100 &16&16&-66&1\\
8&8& 86&0   &17&17&-66&1\\
9&27&86&100 &&&&
\end{array}
\]
There are $18$ square-free ternary words $abcd$.  For each of the $17$ rows
and each of these $18$ contexts, direct comparison of the two halves, for
every root $1\le p\le288$ and every start whose square interior contains the
seam, finds no equality.  Thus the table requires $17\cdot18=306$
template--context cases, using $17$ periodic choice templates and three gap
coordinates.
The shortest left and right context buffers in these checks are $618$ and
$715$ characters, respectively, both larger than
$2\cdot288$; choices outside the displayed context therefore cannot affect
the conclusion.
\end{proof}

\begin{proposition}[Square-free defect source]\label{prop:square-free-defect-source}
For every $N>555$ and every nonzero residue $r$ modulo $18$, there is a
partially fixed occurrence-wise source whose outputs become circularly
square-free after deleting $g_r$ characters.
\end{proposition}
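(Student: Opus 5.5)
We build the source from the Currie--Mol--Rampersad mixed form. Since $N>555$, fix a circular $7/4^+$-free ternary word $T=d_{19}(u)d_{23}(v)$ of length $N$ as in \cref{eq:dejean-mixed-form}. Such a word is in particular circularly square-free, and it has at least one (cyclic) boundary between the $d_{19}$ part and the $d_{23}$ part. Because $|u|\ge6$ and $|v|\ge2$, this boundary sits inside an outer factor of the form $d_{19}(a)d_{19}(b)d_{23}(c)d_{23}(d)$, where $abcd$ is a length-four factor of the circular word $(uv)$. That circular word is $7/4^+$-free, hence square-free, so $abcd$ is one of the $18$ square-free contexts of \cref{lem:finite-circle-boundary}.

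The source is defined as follows. On the $84$ outer occurrences of this context, fix the Brinkhuis choices by the periodic template $\pi_r$. Delete the gap of length $g_r$ starting at coordinate $h_r$ relative to the central boundary. All other occurrence choices stay independent and uniform. Since $1\le g_r\le 27\le35$ and $g_r\not\equiv0\pmod{18}$, every outcome has circumference $18N-g_r\equiv -r\pmod{18}$. The gap lies well inside the fixed context, so the remaining randomness is unaffected away from the seam.

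It remains to show that every outcome is circularly square-free.

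\begin{enumerate}
\item Consider first a square that does not cross the defect seam. It is a factor of a rotation of the undeleted occurrence-wise output of the circularly square-free outer word $T$. By \cref{lem:circular-preservation} applied to the 18-uniform family, which accepts every square-free outer word, that output is circularly square-free, so no such square exists. One care point: we must ensure the square is short enough not to wrap around to the seam itself. If it spans the whole circle, it crosses the seam and falls under the next cases.
\item Now consider a square that crosses the seam. By \cref{lem:interior-seam-exclusion}, the seam lies within $13$ symbols of the square's start, midpoint, or end.
\item In that situation, \cref{lem:dejean-edge-bound} bounds the root length by $288$. Its hypotheses hold: the outer circle is $7/4^+$-free and $g_r\le35$.
\item \Cref{lem:finite-circle-boundary} excludes every such square of root at most $288$. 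Its context buffers exceed $2\cdot288$, so the choices outside the fixed context do not matter.
\end{enumerate}

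The main obstacle is checking that the finite lemmas apply verbatim to the global word. Two points need attention. First, \cref{lem:dejean-edge-bound} uses \cref{lem:phase-recognition} on segments of the defective word; those segments are seam-free and are ordinary output factors, so this is fine. Second, the bound $2p\le 18N-g$ must hold for a square on the circle of circumference $18N-g$, which it does. I would also verify that the segment of the deleted word containing the square really embeds in the displayed context whenever $p\le288$. This follows because $h_r\in\{-66,-61,86\}$, the buffers are at least $618$, and $715$ characters fit within the two $d_{19}$ and two $d_{23}$ images: these have $2\cdot19\cdot18=684$ and $2\cdot23\cdot18=828$ output characters on each side.
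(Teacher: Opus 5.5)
Your proposal is correct and follows the paper's proof. You take the mixed Dejean circle $d_{19}(u)d_{23}(v)$, read off the square-free context $abcd$ at the $d_{19}$--$d_{23}$ boundary, and fix the template and gap from \cref{lem:finite-circle-boundary}. Seam-free squares are then excluded by \cref{lem:circular-preservation}, and seam-crossing squares by \cref{lem:interior-seam-exclusion}, \cref{lem:dejean-edge-bound}, and the finite check. Your extra bookkeeping only makes explicit what the paper leaves implicit, and it is consistent with the paper: the buffers are $618=684-66$ and $715=828-86-27$, and the bound $2p\le 18N-g$ holds on the deleted circle.
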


\begin{proof}
Choose the mixed outer circle in \cref{eq:dejean-mixed-form}.  If $a,b$
are the last two letters of $u$ and $c,d$ the first two letters of $v$, then
$abcd$ is a square-free factor of the circular word $(uv)$.  Fix the choices
and the gap given by \cref{lem:finite-circle-boundary}; leave all other
choices arbitrary.

Before deletion, every output is circularly square-free by
\cref{lem:circular-preservation}.  A square appearing after deletion must
therefore cross the new seam.  Lemma~\ref{lem:interior-seam-exclusion}
excludes the interior geometry, \cref{lem:dejean-edge-bound} bounds every
remaining root by $288$, and \cref{lem:finite-circle-boundary} handles those
roots.
\end{proof}

\subsection{The local lemma with a bounded defect}

Fix the source of \cref{prop:square-free-defect-source}; its remaining
choice variables are independent unbiased bits.  Put $\eta=1/2600$ and
$L=2600$.
For every cyclic start $t$ and $L<\ell\le n/2$, let $E_{t,\ell}$ be the
event that the two adjacent cyclic length-$\ell$ arcs $X,Y$ satisfy
$\LCS(X,Y)>(1-\eta)\ell$.

\begin{lemma}[Rational bounds for the defect calculation]
\label{lem:defect-rational-bounds}
Put $\alpha=(\log2+2\log(2600e)+\log3)/2600$.  Then
$\log2>6931471805/10^{10}$ and $\alpha<750704/10^8$.
Moreover, if $q=e^{-1/150}$ and $N=2601$, then
\[
 3\left(
   \frac{2q^N}{1-q}
   +\frac{q^{N+1}}{N(1-q)^2}
 \right)<\frac1{30000}.
\]
\end{lemma}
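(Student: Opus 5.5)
The three inequalities are independent numerical facts, and I would prove each by elementary series estimates in exact rational arithmetic, as in \cref{lem:elementary-logs}.

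\textbf{The bound on $\log2$.} I would use the rapidly convergent expansion
\[
\log2=2\operatorname{artanh}(1/3)=2\sum_{k\ge0}\frac{1}{(2k+1)3^{2k+1}}.
\]
All terms are positive, so every partial sum is a lower bound. Roughly nine or ten terms push the partial sum above $6931471805/10^{10}$. Truncating and bounding the tail by a geometric series with ratio $1/9$ also gives a matching upper bound on $\log2$, which the second inequality needs.

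\textbf{The bound on $\alpha$.} Multiplying out, the claim is
\[
\log2+2+2\log2600+\log3<\frac{2600\cdot750704}{10^8}=19.518304.
\]
Writing $2600=2^3\cdot5^2\cdot13$, the left side equals $7\log2+4\log5+2\log13+\log3+2$. I would bound each logarithm from above through artanh series with geometric tail bounds:
\[
\log3=\log2+2\operatorname{artanh}(1/5),\qquad
\log5=2\log2+2\operatorname{artanh}(1/9),\qquad
\log13=\log12+2\operatorname{artanh}(1/25).
\]
The true value of the left side is about $19.518294$, so the margin is only about $10^{-5}$. This is the main obstacle: each logarithm must be certified to within roughly $10^{-7}$, since $\log2$ enters with total weight $7+8+4+1=20$ once the substitutions are expanded. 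I would therefore take enough series terms for a relative tail below $10^{-9}$. This is easily achieved because the ratios are $1/9$, $1/25$, $1/81$ and $1/625$.

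\textbf{The tail inequality.} I would bound $1-q=1-e^{-1/150}$ from below by $\tfrac1{150}-\tfrac1{2\cdot150^2}$ (alternating series), giving $1/(1-q)<150.51$. For $q^N=e^{-2601/150}<e^{-17.34}$, a lower bound on $e^{17.34}$ is needed. I would take it from $e>2.718$ raised to the power $17$, times a lower partial sum of the exponential series for $e^{0.34}$; this gives $q^N<3\cdot10^{-8}$. Then
\[
\frac{2q^N}{1-q}<9.1\cdot10^{-6},\qquad
\frac{q^{N+1}}{N(1-q)^2}<\frac{3\cdot10^{-8}\cdot22700}{2601}<2.7\cdot10^{-7}.
\]
Three times their sum is below $2.9\cdot10^{-5}<1/30000$. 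This bound has comfortable slack, so crude rational estimates suffice here.
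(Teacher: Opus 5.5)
Your proposal is correct and is essentially the paper's argument. The paper also bounds logarithms by the series $\log x=2\sum_j z^{2j+1}/(2j+1)$ with $z=(x-1)/(x+1)$ and a geometric tail, applied to $x=2,3,325/256$ through $\log2600=11\log2+\log(325/256)$ rather than your factorization $2^3\cdot5^2\cdot13$. Two minor slips in your version do not affect the argument: ten terms, not nine, are needed for the lower bound on $\log2$, and $\log2$ enters your expansion with weight $22$, not $20$, because each $\log13$ contains $\log12=2\log2+\log3$. For the last inequality, the paper does not estimate $e^{17.34}$ directly; it reuses its upper bound on $\log2$ to get $25\log2<2601/150$, hence $q^N<2^{-25}$, together with the cruder bound $1-q>1/151$. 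Your bound $q^N<3\cdot10^{-8}$ is also true but has only about $2\%$ slack, so the partial sum for $e^{0.34}$ must include at least the quadratic term.
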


\begin{proof}
For $x>1$, put $z=(x-1)/(x+1)$.  Integrating the geometric series for
$2/(1-z^2)$ gives
\[
 \log x
 =2\sum_{j=0}^{M}\frac{z^{2j+1}}{2j+1}+R_M,
 \qquad
 0<R_M<
 \frac{2z^{2M+3}}{(2M+3)(1-z^2)}.
\]
Apply this estimate to $x=2,3,325/256$, with $M=12,18,6$,
respectively.  Since
$\log2600=11\log2+\log(325/256)$, direct rational substitution gives
the first two displayed bounds.

The upper estimate for $\log2$ from the same $M=12$ calculation gives
$25\log2<2601/150$ and hence $q^N<2^{-25}$.  Also
$e^{1/150}>1+1/150$ gives $q<150/151$, so $1-q>1/151$.  Therefore
\[
 3\left(
   \frac{2q^N}{1-q}
   +\frac{q^{N+1}}{N(1-q)^2}
 \right)
 <
 \frac3{2^{25}}\left(302+\frac{151^2}{2601}\right)
 =\frac{808303}{29091692544}
 <\frac1{30000}.
\]
\end{proof}

\begin{lemma}[Defect event tail]\label{lem:defect-event-tail}
Every such event satisfies $\Pr[E_{t,\ell}]<e^{-\ell/135}$.
\end{lemma}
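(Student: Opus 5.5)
The plan is to repeat the conditional point-mass argument of \cref{lem:event-tail}, with two changes. The conditioning must account for the choices that \cref{lem:finite-circle-boundary} fixes near the seam, and the deleted gap changes the count of complete macroblocks. Take the two adjacent cyclic arcs $X,Y$ of length $\ell>L=2600$. Condition on every free choice variable except those of complete macroblocks lying inside $X$ and not cut by the seam. No unconditioned block meets $Y$, so $Y$ is fixed.

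Next I count the unconditioned free bits inside $X$. A cyclic arc of length $\ell$ loses at most two partial macroblocks at its endpoints. It loses at most two more partial blocks at the seam, since the gap of length $g_r\le27$ cuts at most two blocks. The $84$ outer letters of the fixed context in \cref{lem:finite-circle-boundary} span at most $84\cdot 23$ outer positions' worth of macroblocks. So the number $s$ of free complete blocks satisfies $s\ge \ell/18 - c$, where $c$ is an absolute constant; bookkeeping of the fixed region gives roughly $c\le 84+4$. The two codewords of each outer letter are distinct, so each value of $X$ determines the free bits and has conditional probability at most $2^{-s}$.

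\cref{lem:lcs-ball} with $\eta=1/2600$ and the estimates of \cref{lem:ternary-volume}, that is, $\binom{\ell}{d}\le \exp(\eta\ell\log(e/\eta))$ and at most $2^{\eta\ell}$ values of $d$, bound the number of bad $X$ by $\exp(\alpha\ell)$. Here $\alpha$ is as in \cref{lem:defect-rational-bounds}, so $\alpha<0.00750704$. This gives
\[
\Pr[E_{t,\ell}]\le 2^{c}\exp\!\bigl(-(\tfrac{\log 2}{18}-\alpha)\ell\bigr).
\]
Using $\log2>0.6931471805$, the rate is $(\log 2)/18-\alpha>0.03850818-0.00750704>0.0310$, while $1/135\approx0.0074$. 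So a margin of about $0.0236\,\ell\ge 61$ is available against $c\log 2$.

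The main obstacle is making the constant $c$ rigorous and small enough. I will first check exactly how many outer letters carry fixed choices; the fixed choices could in principle reach $84$ blocks, costing about $58$ nats. That fits under $61$ only narrowly, so the careful step is to bound the number of fixed or cut blocks precisely, at most $84+4$ blocks, and confirm $88\log2<0.0236\cdot2600$. If this fails, I would use that only the blocks of the fixed context lying inside $X$ are lost, which keeps the count within the stated budget.
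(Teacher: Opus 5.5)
Your proposal is correct and is essentially the paper's own proof. You use the same conditioning on all free choices outside the complete unfixed blocks of $X$, and the same count $s\ge\ell/18-88$: $84$ fixed macroblocks plus two endpoint fragments and two seam fragments. Your stray ``$84\cdot23$'' remark is spurious, since the context has exactly $84$ outer letters and hence $84$ fixed blocks. The deletion-ball exponent $\alpha$ and the margin check near $\ell=2601$ are also the paper's.

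One numerical caution. Checked separately as you propose, $88\log2<0.0236\cdot2600$ needs roughly $\log2<0.697$, which is sharper than the $7/10$ of \cref{lem:elementary-logs}. The paper avoids this by collecting the $\log2$ terms into $\tfrac{113}{2}\log2-2601\alpha-2601/135$, so that only a lower bound on $\log2$ is used.
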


\begin{proof}
A cyclic character arc of length $\ell$ contains at least
$\ell/18-4$ complete undeleted macroblocks.  There are at most two partial
blocks at its endpoints and at most two surviving macroblock fragments at
the defect seam.  At most $84$ complete blocks have fixed choices, so the
first arc contains at least $s\ge \ell/18-88$ complete macroblocks with free
choices.  Condition on every other free
choice.  This fixes $Y$, while any value of $X$ has conditional probability
at most $2^{-s}$.

By \cref{lem:lcs-ball} and the same binomial estimate used in
\cref{lem:ternary-volume}, the number of bad values of $X$ is at most
$e^{\alpha\ell}$, where
$\alpha=(\log2+2\log(2600e)+\log3)/2600<750704/10^8$.  It follows that
\[
 \Pr[E_{t,\ell}]
 \le 2^{88}\exp\left[-\left(\frac{\log2}{18}-\alpha\right)\ell\right].
\]
For every integer $\ell\ge2601$, the logarithmic margin against
$e^{-\ell/135}$ is at least its value at $2601$, namely
\[
 \left(\frac{\log2}{18}-\alpha-\frac1{135}\right)2601
 -88\log2
 >\frac{4444055899}{12000000000}
>\frac{3703}{10000}.
\]
Indeed, after collecting the $\log2$ terms, the left side is
$(113/2)\log2-2601\alpha-2601/135$, and the first comparison is the direct
substitution of
\cref{lem:defect-rational-bounds}.  This also shows that the coefficient of
$\ell$ is positive, so the margin is increasing.
\end{proof}

Assign charge $x(E_{t,\ell})=e^{-\ell/150}$.  As in
\cref{lem:neighbor-count}, an event of scale $\ell$ has at most
$3(\ell+r)$ neighbors of scale $r$.  Removing or fixing variables can only
delete dependencies.  Hence
\[
 \sum_{F\sim E_{t,\ell}}x(F)
 \le3\sum_{r\ge2601}(\ell+r)e^{-r/150}<\frac{\ell}{30000}.
\]
Indeed, after division by $\ell$ the left side is decreasing.  At
$\ell=N=2601$, the geometric-series identities give the left side divided
by $\ell$ as the expression in \cref{lem:defect-rational-bounds}, which is
below $1/30000$.

All charges are below $1/2$, so the neighboring product is greater than
$e^{-\ell/15000}$.  Since $1/135>1/150+1/15000$,
\cref{lem:defect-event-tail} satisfies the asymmetric local-lemma
criterion, and there is an assignment avoiding every event with
$\ell>2600$.

\begin{proof}[Proof of \cref{thm:all-length-circles}]
The chosen output is circularly square-free by
\cref{prop:square-free-defect-source}.  Therefore, for
$\ell\le2600$, adjacent cyclic arcs of length $\ell$ are unequal and
have LCS at most $\ell-1\le(1-1/2600)\ell$.  The local lemma gives the same
inequality for every larger equal scale up to
half the circumference.  The equal-to-arbitrary reduction
\cref{lem:equal-to-full} yields non-strict normalized edit distance at
least $(1/2600)/(2+1/2600)=1/5201$ in every rotation.  The strictly smaller
gap $1/5202$ gives parameter
$5201/5202$.

It remains to realize each length.  If $n\equiv0\pmod {18}$, use
\cref{cor:ternary-circles}: its strict gap $1/2002$ also implies the smaller
strict gap $1/5202$.  Otherwise put $r\equiv-n\pmod {18}$ with
$1\le r\le17$, choose $g_r$ from \cref{eq:canonical-circle-gaps}, and set
$N=(n+g_r)/18$.  For $n\ge10008$ we have $N>555$.  The construction above has length
$18N-g_r=n$, completing the proof.
\end{proof}

%% file: sections/05_quantitative_strengthening.tex
\section{Quantitative strengthening from the 54-uniform family}
\label{sec:quantitative-strengthening}

The six-codeword proof isolates the mechanism, while the larger 54-uniform
Brinkhuis family supplies substantially more local entropy.  We use its
$952$ occurrence-wise choices to control unequal adjacent intervals directly.
The calculation also retains the random macroblock fragment crossing the
left endpoint of the first interval.  The resulting suffix multiplicities and
numerical estimates appear in Appendix~\ref{app:numerics}.

The square-free property of the family is the published result below.
The additional finite statements used here are verified by integer and exact
rational computations in the artifact cited after
\cref{thm:quantitative-intro}.

\begin{proposition}[A 54-uniform Brinkhuis family]
\label{prop:large-codebook}
There is a special 54-Brinkhuis 952-triple
$\mathcal D_0,\mathcal D_1,\mathcal D_2\subseteq\ternary^{54}$ in which
$\mathcal D_0$ is closed under reversal, and $\mathcal D_1$ and
$\mathcal D_2$ are obtained from $\mathcal D_0$ by one and two cyclic
permutations of the alphabet.  Consequently, for every ternary square-free
outer word $T=T_1\cdots T_N$ and every occurrence-wise choice vector
$R\in[952]^N$, the word $D_{T_1,R_1}\cdots D_{T_N,R_N}$ is square-free,
and the map from choice vectors to outputs is injective.
\end{proposition}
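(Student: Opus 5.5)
The proof has two parts: the published square-free guarantee, and two elementary consequences (square-freeness of every output and injectivity).

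\textbf{The family and its Brinkhuis property.} This is the part I would not reprove by hand. I would cite the existing 54-uniform construction from the Brinkhuis-family literature: the corrected families of Ekhad--Zeilberger and Grimm, and in particular the special triples of Sollami, Douglas and Liebmann. That construction exhibits a reversal-closed set $\mathcal D_0$ of $952$ words of length $54$. One sets $\mathcal D_1=\sigma(\mathcal D_0)$ and $\mathcal D_2=\sigma^2(\mathcal D_0)$, where $\sigma$ is the cyclic permutation $0\mapsto1\mapsto2\mapsto0$.

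The ``special'' condition, namely reversal closure together with cyclic-shift images, reduces the length-three Brinkhuis condition of \cref{def:brinkhuis} to a single orbit. The twelve square-free outer words $abc$ form only a few classes under alphabet rotation and reversal. So it suffices to check that every concatenation $UVZ$ with $U\in\mathcal D_a$, $V\in\mathcal D_b$, $Z\in\mathcal D_c$ is square-free for one representative of each class. This finite check is exactly what the published verification provides, and the artifact re-checks it by exhaustive enumeration of the $54\cdot3=162$-character concatenations. Here "exhaustive" really means a structured check: a square in $UVZ$ must have root at most $81$, and the check is done pairwise and triplewise with precomputed factor tables rather than over all $952^3$ triples.

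\textbf{Square-freeness for every outer word.} I would invoke the standard generalized-Brinkhuis theorem, the same one used for \cref{prop:codebook}. If a uniform triple satisfies the length-three condition, then for every square-free outer word $T$ and every occurrence-wise choice vector, the output is square-free. The route is the usual synchronization argument. Suppose a square of long root appears in the output. Then the codewords are forced to align at block boundaries, which yields a square in $T$. Squares of short root lie inside the image of some length-three factor of $T$, and the triple condition excludes them. Since the cited statement is already formulated occurrence-wise, this step is a citation, as in \cref{prop:codebook}.

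\textbf{Injectivity.} This follows exactly as in \cref{prop:codebook}. The $952$ words of each $\mathcal D_a$ are distinct, since $\mathcal D_a$ has cardinality $952$. The block boundaries are at the fixed multiples of $54$, and $T$ is fixed. Hence the $i$th length-$54$ block of the output determines $R_i$.

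\textbf{Main obstacle.} The main obstacle is matching the precise cited hypotheses, specifically the corrected three-block formulation rather than the insufficient two-block one, to our occurrence-wise use. I would handle this as the paper does for the six-codeword family. I would rely on the erratum-corrected definition, and I would state explicitly that the artifact independently verifies the triple condition for the exact $952$-word codebook used.
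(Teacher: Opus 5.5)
Your proposal is correct and follows the paper's proof: you cite Sollami--Douglas--Liebmann for the special 54-Brinkhuis 952-triple, apply the occurrence-wise generalized Brinkhuis theorem as in \cref{prop:codebook}, and get injectivity from distinct codewords at fixed block boundaries. The paper adds only that reversal closure is visible in the published list, since the last 476 words of $\mathcal D_0$ are the reversals of the first 476; your symmetry reduction (which actually leaves three classes of $abc$, not a single orbit) and your claimed artifact re-check are unnecessary, and the paper does not assert that its artifact re-verifies the triple condition.
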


\begin{proof}
Sollami, Douglas, and Liebmann constructed this special 54-Brinkhuis
952-triple satisfying the length-three condition in their definition
\cite[Definition~1 and Theorem~4]{SollamiDouglasLiebmann2016}.  In their displayed
construction, the last 476 words of $\mathcal D_0$ are the reversals of the
first 476, so $\mathcal D_0$ is reversal closed.  Independent
occurrence-wise substitution then preserves square-freeness, as in
\cref{prop:codebook}.  Injectivity follows because the 952 codewords for a
fixed outer letter are distinct and the macroblock boundaries are fixed.
\end{proof}

Fix $m=54$ and $K=952$, and choose independently and uniformly one codeword
at every occurrence of a fixed ternary square-free outer word.  Write $W$
for the resulting word.  Every outcome is square-free by
\cref{prop:large-codebook}.  We index positions within a codeword from $0$.

For $1\le s\le53$, let
\[
 M_s=\max_z\bigl|\{r\in[952]:
       D_{a,r}[54-s,54)=z\}\bigr|.
\]
This quantity is independent of $a$: the three codebooks differ only by
cyclic permutations of the alphabet. Its exact values are given in
\cref{lem:suffix-fibers}.

For an integer $\ell\ge54$ and an offset $a\in\{0,\ldots,53\}$, define
\begin{equation}\label{eq:fragment-point-mass}
 P_{\ell,a}=
 \begin{cases}
  952^{-\lfloor\ell/54\rfloor},&a=0,\\[2mm]
  M_{54-a}\,952^{-1-\lfloor(\ell-54+a)/54\rfloor},&a>0,
 \end{cases}
 \qquad
 P_\ell=\max_{0\le a<54}P_{\ell,a}.
\end{equation}

\begin{lemma}[Fragment-aware conditional mass]
\label{lem:fragment-conditional-mass}
Let $X=W[t,t+\ell)$, where $\ell\ge54$, and let $Y$ be any nonempty
interval immediately to the right of $X$.
One can expose a set of occurrence choices that fixes $Y$ and, for every
resulting conditioning of positive probability, gives every exact value of
$X$ conditional probability at most $P_\ell$.
\end{lemma}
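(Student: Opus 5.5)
Let $a\in\{0,\ldots,53\}$ denote the offset of $t$ inside its macroblock, so that $t$ sits at position $a$ of its codeword (positions indexed from $0$). The plan is to expose every occurrence choice except those of macroblocks that meet $X$ but not $Y$. The right-boundary block, which may cross the common boundary, is exposed, so $Y$ is fixed; every block meeting $Y$ is exposed. We then treat the two offset cases separately, mirroring the two branches of \cref{eq:fragment-point-mass}.

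\emph{Case $a=0$.} Here $X$ starts at a macroblock boundary, so it contains at least $\lfloor\ell/54\rfloor$ complete macroblocks lying entirely inside $X$. We leave exactly these choices unexposed and condition on all others. As in \cref{lem:conditional-mass}, an exact value of $X$ determines each of these choices, because the $952$ codewords for a fixed outer letter are distinct by \cref{prop:large-codebook}. The choices are independent and uniform, so the point mass is at most $952^{-\lfloor\ell/54\rfloor}$.

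\emph{Case $a>0$.} Now $X$ begins with the suffix $D_{T_i,R_i}[a,54)$ of length $s=54-a$ of the block cut by the left endpoint. We keep this left-endpoint choice unexposed as well. The remaining $\ell-s=\ell-54+a$ characters of $X$ start at a block boundary, so they contain at least $\lfloor(\ell-54+a)/54\rfloor$ complete macroblocks, all inside $X$; these choices are also unexposed. The left-cut block meets only $X$ and characters to the left of $t$, never $Y$, so exposing everything else still fixes $Y$.

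Given an exact value $x$ of $X$, the complete blocks are again determined uniquely. The left fragment must equal a prescribed suffix $z$ of length $s$, and the number of $r$ with $D_{T_i,r}[54-s,54)=z$ is at most $M_s=M_{54-a}$. By independence, the conditional probability is at most
\[
 \frac{M_{54-a}}{952}\cdot 952^{-\lfloor(\ell-54+a)/54\rfloor}=P_{\ell,a}.
\]
In both cases the bound is at most $P_\ell$, under every conditioning of positive probability.

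The main subtlety is bookkeeping rather than depth. One must check that the left-cut fragment's variable is independent of the others and does not influence $Y$. When $X$ fails to reach the next boundary, the left-cut block could in principle also be the block containing the right endpoint of $X$; then it might touch $Y$. The hypothesis $\ell\ge54$ rules this out: with $a>0$ and $\ell\ge54$, $X$ extends past the end of its first block, so that block lies strictly inside $[t-a,t+\ell)$ and is disjoint from $Y$. Finally, $M_s$ is independent of the outer letter by the alphabet-permutation symmetry noted after \cref{prop:large-codebook}.
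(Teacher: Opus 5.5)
Your proposal is correct and matches the paper's proof essentially step for step: you split on the offset $a$ in the same way, and you leave the left-cut block (when $a>0$) together with the complete blocks inside $X$ unexposed. You also use $\ell\ge54>s$ to keep the left-cut block away from $Y$, and you obtain the same bound $M_s\cdot952^{-1-\lfloor(\ell-s)/54\rfloor}=P_{\ell,a}$ before maximizing over $a$. Describing the unexposed set as exactly the blocks that meet $X$ but not $Y$ is a tidy rephrasing of the same conditioning.
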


\begin{proof}
Let $a$ be the offset of $t$ in its macroblock. If $a=0$, leave
unconditioned all $\lfloor\ell/54\rfloor$ macroblocks contained in $X$ and
condition on every other occurrence choice. None of these blocks meets
$Y$, so $Y$ is fixed. Distinctness of the $952$ codewords implies that an
exact value of $X$ determines each unconditioned choice. Its conditional
probability is therefore at most $P_{\ell,0}$.

Suppose $a>0$ and put $s=54-a$. The macroblock containing the first
character of $X$ contributes a suffix of length $s$. Because $\ell\ge54>s$,
this block ends strictly before $Y$ begins. Leave its choice and all later
macroblocks wholly contained in $X$ unconditioned, and condition on every
other choice. Again $Y$ is fixed. There are
$\lfloor(\ell-s)/54\rfloor$ later complete blocks. The visible suffix is
compatible with at most $M_s$ choices, while the value of every complete
block determines its choice. Independence and uniformity therefore give
the upper bound $M_s/952^{1+\lfloor(\ell-s)/54\rfloor}=P_{\ell,a}$.
Taking the maximum over the possible offset proves the claim.
\end{proof}

\input{appendix/F_supported_alignments}

\subsection{Direct control of unequal intervals}

\linespread{1.055}\selectfont
\setlength{\parskip}{2.5pt plus 0.5pt minus 0.25pt}

To avoid the loss in the equal-to-unequal reduction, we place unequal
interval events directly in the local lemma.  This calculation becomes effective once the intervals expose
enough macroblock entropy.  Before that point, parity and square-freeness
leave only finite families of one- and two-deletion obstructions.  The
following exact properties of the codebook remove both families.

\begin{lemma}[The bounded one-deletion shells are absent]
\label{lem:shortest-shell}
Let $W$ be any occurrence-wise output of the 54-uniform 952-family applied to
a ternary square-free outer word.  Let $s$ be odd with $229\le s\le463$.
If adjacent factors $X,Y$ have lengths $\{(s-1)/2,(s+1)/2\}$, then
$\ED(X,Y)>1$.
\end{lemma}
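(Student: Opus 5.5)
Since $|X|$ and $|Y|$ differ by one, $\ED(X,Y)$ is odd, so $\ED(X,Y)\le1$ would force $\ED(X,Y)=1$. This means the longer factor becomes the shorter one after deleting a single letter. Write the pair as a factor $F=XY$ of $W$ of length $s$. We must show that no such factor $F$ exists for odd $s$ in the window $229\le s\le463$. The plan is to reduce this to a finite check on the codebook, as the paper does for its other bounded shells.

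\textbf{Step 1: locality.} The factor $F$ has length at most $463$. It therefore meets at most $\lceil 463/54\rceil+1=10$ consecutive macroblocks, and it lies in the output of an outer factor of length at most $10$. Whether $\ED(X,Y)=1$ holds depends only on the codewords placed on that outer factor and on the phase of the start of $F$ modulo $54$. The outer word matters only through its length-$10$ factors, which are square-free ternary words, so there are finitely many outer contexts.

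\textbf{Step 2: structure of a one-deletion match.} Suppose $|X|=|Y|+1$; the case $|Y|=|X|+1$ is symmetric. Deleting position $i$ of $X$ yields $Y$, so $X[0,i)=Y[0,i)$ and $X[i+1,|X|)=Y[i,|Y|)$. In $W$, the first equality is a run of length $i$ matched at displacement $|X|=(s+1)/2$. The second is a run matched at displacement $|X|-1=(s-1)/2$. One of these two runs has length at least about $s/4\ge57>54$, so it contains a full macroblock compared with a translate of itself.

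We then split into cases by displacement modulo $54$. If the displacement is $0\bmod 54$, the blocks align, and distinctness of codewords forces the outer factor to repeat at an outer displacement equal to the number of blocks. Combining the long run with the short piece, this yields an outer square or a near-square. Each such configuration either contradicts square-freeness of $T$ or reduces to a finite list. If the displacement is not $0\bmod 54$, we need a phase-shifted codeword comparison: some bound on the longest common factor between a codeword and a shifted concatenation of two codewords.

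\textbf{Step 3: finish by computation.} The cleanest route is to enumerate directly. For each square-free outer factor of length $\le10$ and each start phase, check that no choice vector produces a factor $F$ of admissible odd length with $\ED(X,Y)=1$. The full product of $952^{10}$ choices is too large to enumerate. Instead we process macroblock by macroblock with a transfer or automaton: the state records the current phase, the candidate deletion position relative to the current block, and the pending equality constraints, which refer back at most $\sim s/2$ characters.

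This step is the main obstacle. The naive search space is enormous, and the equality constraints span many blocks. I would first try to use the run-length bound from Step 2, namely phase-dependent maximal matching runs between shifted codewords, computed once on the codebook. That bound should make a run of length $\ge57$ impossible except at displacement $0\bmod54$. This leaves only aligned configurations, which square-freeness of $T$ together with codeword distinctness rule out. The remaining short-run residue near the deletion point is then settled by exact comparison in the artifact.
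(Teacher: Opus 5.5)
Your Steps 1--2 match the paper's setup. A case is fixed by the orientation, the offset of the length-$s$ factor in its first macroblock, and the deleted position. This leaves $r-1$ coordinate equalities, where $r=(s+1)/2$, among at most ten macroblocks, and finitely many square-free outer contexts.

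The gap is in Step 3, which carries the whole content of the lemma. Your proposed shortcut fails for this codebook. You expect phase-dependent run bounds to rule out matching runs of length at least $57$ at every nonzero phase, leaving only aligned configurations. But the two runs of a one-deletion match sit at the consecutive lags $r-1$ and $r$. The codebook comparison you propose (the paper's run table) gives $B_7=83$, $B_{20}=72$, $B_{23}=75$ and $B_{24}=82$. It gives no finite bound at all at phases $17,18,25$, nor at their complements $37,36,29$.

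\begin{itemize}
\item At $s=229$ the lags are $114\equiv6$ and $115\equiv7 \pmod{54}$. The run capacity is $45+83=128\ge114$, so run bounds cannot exclude the shell. This is exactly where the capacity argument stops: it does dispose of $s=227$, since $47+45=92<113$. That is why the window $229\le s\le463$ needs separate treatment.
\item At $s=251$ both lags, $125\equiv17$ and $126\equiv18$, lie at unbounded phases. This is precisely where the paper's exhaustive check finds $280$ surviving patterns.
\item Separately, a run of length $57$ need not contain a complete macroblock; guaranteeing one needs length $107$.
\end{itemize}

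The missing idea is to treat the equalities jointly rather than run by run. The paper proceeds as follows.
\begin{enumerate}
\item It groups the $r-1$ equalities by the ordered pair of macroblocks they compare.
\item For each group and each cyclic outer-letter difference $\delta$, it tests whether the exact relation on $[952]^2$ between the two occurrence choices is empty.
\item These three-bit masks form a graph on the met occurrences. It then checks that no square-free outer coloring avoids every mask.
\item For the residual $s=251$ patterns, the masks force $T_0=T_3$ and $T_1=T_4$, hence a context $abcab$. Arc-consistency propagation through the exact relations then empties a domain. This uses the fact that a single occurrence choice enters several groups simultaneously.
\end{enumerate}

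A per-run bound discards exactly this coupling, and the coupling is what produces the contradiction. Your transfer-automaton alternative could in principle become a finite check. As written, though, it specifies neither the state space nor why the search is tractable, so Step 3 remains a plan rather than a proof.
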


\begin{proof}
Fix $s$ and put $r=(s+1)/2$.  If $\ED(X,Y)=1$, deleting one position from the
length-$r$ factor gives the length-$(r-1)$ factor.  Fix the orientation, the
offset of the combined length-$s$ factor in its first macroblock, and the
deleted position.  Equality of the surviving words gives $r-1$ coordinate
equalities between the macroblocks met by the factor.  Up to a cyclic
permutation of the alphabet, it remains only to enumerate the square-free
outer contexts meeting that factor.

Group the coordinate equalities according to the pair of macroblocks that
they compare.  For one group and one cyclic difference $\delta$ between its
two outer letters, record whether the exact relation on the two 952-element
choice sets is empty.  The resulting three-bit labels form a finite
\emph{mask graph} on the met macroblocks.

\Cref{lem:shortest-shell-verification} shows that these graphs admit no
square-free outer coloring except for patterns at $s=251$.  In each exception,
the graph forces an outer context of the form $abcab$ with $a,b,c$ distinct,
and the corresponding codeword relations are inconsistent.  Thus no
one-deletion equality is possible.  The exact relations, exhaustive
classification, and residual consistency checks are given in that lemma.
\end{proof}

\begin{lemma}[The bounded two-deletion shells are absent]
\label{lem:two-deletion-shell}
Let $W$ be as in \cref{lem:shortest-shell}.  If adjacent factors $X,Y$
satisfy $|X|+|Y|\in\{456,462,464\}$ and
$\lvert |X|-|Y|\rvert\le2$, then $\ED(X,Y)>2$.
\end{lemma}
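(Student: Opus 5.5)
We would follow the template of \cref{lem:shortest-shell}, reducing the statement to a finite, exact verification over the codebook.

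\textbf{Step 1: parity.} Put $s=|X|+|Y|$. Since $\ED(X,Y)\equiv s\pmod 2$ and $s$ is even, we have $\ED(X,Y)\in\{0,2,4,\dots\}$. The value $0$ requires $|X|=|Y|$ and $X=Y$, which would make $XY$ a square. This is excluded by \cref{prop:large-codebook}. So it suffices to rule out $\ED(X,Y)=2$.

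\textbf{Step 2: the two length cases.} If $|X|=|Y|=s/2$, then $\ED=2$ means that deleting one position from $X$ and one position from $Y$ leaves equal words. If $\bigl||X|-|Y|\bigr|=2$, then $\ED=2$ means that deleting two positions from the longer factor leaves the shorter one. In both cases we fix the following data: the orientation, the offset of the combined length-$s$ factor inside its first macroblock (54 values), and the two deleted positions. The surviving equality then becomes a list of $s/2-1$ coordinate equalities between characters of the macroblocks met by the factor. Each equality has a fixed phase in each of the two macroblocks involved.

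\textbf{Step 3: mask graphs.} As in the one-deletion proof, we group these equalities by the pair of macroblocks they compare. The factor meets roughly $s/54+2\le 11$ macroblocks. For each group and each cyclic difference $\delta\in\{0,1,2\}$ between the two outer letters, we precompute whether the induced relation on the two 952-element choice sets is empty. This is well defined because the three codebooks differ only by alphabet rotation. The result is a three-bit label on each edge. We then enumerate the square-free outer colourings of the met macroblocks, up to rotating the alphabet, and discard every colouring that uses an empty relation on some edge. For each surviving colouring we check joint consistency of the nonempty relations, for instance by propagating the sets of compatible choices along the edges.

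Most configurations should die on a single edge. The reason is that a deletion pair at distance $d$ forces two long equality runs, with displacements $s/2$ and $s/2\pm1$ (or $\pm2$), and these compare codewords at shifted phases. Exact equality of long codeword factors at a nonzero phase shift is rare in the family.

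\textbf{Main obstacle.} The main difficulty is the size of the enumeration together with the residual cases. There are about $s^2/8\approx 2.6\times10^4$ deletion-position pairs, times 54 offsets, times two orientations and three length types. The run between the two deletions can align macroblocks at phase shift $0$; in that case the relations are nonempty, namely equality of choices, and only square-freeness of the outer word can kill the configuration. We expect a few exceptional outer patterns of the form $abcab$, as at $s=251$. For these we would carry out explicit consistency checks of the combined relations on the 952-element sets, all by exact integer computation. Finally, we would explain why only $s\in\{456,462,464\}$ need treatment: the analytic local lemma of the subsequent quantitative argument presumably covers the other two-deletion shells.
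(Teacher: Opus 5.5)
Your proposal is essentially the paper's proof. The paper uses the same parity reduction to the single case $\ED(X,Y)=2$, the same enumeration of modes, offsets and deleted positions, the same grouping of the $s/2-1$ surviving coordinate equalities by ordered macroblock pair, three-bit empty-relation masks, and square-free outer colorings. Its computation turns out simpler than you anticipate: at totals $456$ and $462$ every pattern already fails on a single group that is empty for all three cyclic differences. At $464$ the $12{,}446$ remaining patterns have mask graphs with no square-free coloring. So neither your arc-consistency step nor any $abcab$-type residual contexts are needed.

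One correction to your obstacle paragraph: every run here has lag in $\{s/2-1,s/2,s/2+1\}$, which is residue $11$ to $17$ modulo $54$ for these totals. A run at phase shift zero therefore never occurs.
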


\begin{proof}
If the lengths differ by two, edit distance two would delete two positions
from the longer factor and recover the shorter one.  If the lengths are equal,
square-freeness excludes edit distance zero, while edit distance two would
delete one position from each factor and leave equal subsequences.  These
are the only possible violations, since edit distance has the same parity
as the total length.

For each mode, fix the offset in the first macroblock and the two deleted
positions, and group the surviving coordinate equalities by their ordered
pair of macroblocks, exactly as in \cref{lem:shortest-shell}.
\Cref{lem:two-deletion-shell-verification} checks all three modes at each
total.  For totals 456 and 462, every pattern contains a coordinate group
incompatible with every cyclic outer-letter difference.  At total 464,
patterns without such a conflict have mask graphs with no square-free outer
coloring.  In either case the full equality system is inconsistent, proving
the claim.
\end{proof}

Put $D=216$ and
$\mathcal H=\{229,231,\ldots,463\}\cup\{456,462,464\}$.
For positive integers $a,b$ with total length $s=a+b\ge217$,
$s\notin\mathcal H$, and $|a-b|<s/D$, define
\[
 \widehat E_{t,a,b}
 =\left\{
   \ED\bigl(W[t,t+a),W[t+a,t+a+b)\bigr)<\frac{s}{D}
  \right\}.
\]
For such a pair, let $\mathcal J_{a,b}$ consist of the nonzero integer pairs
$(u,v)$ with $0\le u<a$, $0\le v<b$, $v-u=b-a$,
$D(u+v)<a+b$, and $a-u\le F_a(u,v)$.  Set
\begin{equation}\label{eq:unequal-deletion-volume}
 \begin{aligned}
 \widehat V^{\rm L}_{a,b}
 &=\sum_{(u,v)\in\mathcal J_{a,b}}
    \binom au\binom bv2^u,\\
 \widehat V^{\rm R}_{a,b}
 &=\sum_{(u,v)\in\mathcal J_{a,b}}
    \binom au\binom bv2^v,\\
 \widehat p_{a,b}
 &=\min\{P_a\widehat V^{\rm L}_{a,b},
          P_b\widehat V^{\rm R}_{a,b}\}.
 \end{aligned}
\end{equation}

\begin{lemma}[Direct unequal-event bound]
\label{lem:unequal-event-bound}
For every admissible $t,a,b$ as above,
$\Pr[\widehat E_{t,a,b}]\le\widehat p_{a,b}$.
\end{lemma}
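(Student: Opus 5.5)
We prove the bound twice, once conditioning inside the left interval and once inside the right interval, and take the minimum; by symmetry it suffices to describe the first, which yields $P_a\widehat V^{\rm L}_{a,b}$. Write $X=W[t,t+a)$ and $Y=W[t+a,t+a+b)$, with $s=a+b$. Apply \cref{lem:fragment-conditional-mass} to $X$ (here $a\ge54$ because $|a-b|<s/D$ and $s\ge217$). This exposes occurrence choices fixing $Y$, and under every conditioning of positive probability each exact value of $X$ has probability at most $P_a$. The event then lies in a set of candidate values $x$ for $X$, and we bound the size of that set by $\widehat V^{\rm L}_{a,b}$ uniformly in the fixed $y$. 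Summing point masses and averaging over the conditioning gives $\Pr[\widehat E_{t,a,b}]\le P_a\widehat V^{\rm L}_{a,b}$.

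For the right-hand bound, expose instead all choices except those of macroblocks lying inside $Y$ (with the fragment at the left endpoint of $Y$), which fixes $X$. This gives point mass $P_b$ by the mirror of \cref{lem:fragment-conditional-mass}. That mirror follows either from reversal-closure of $\mathcal D_0$ or simply by rerunning the same proof, since a fragment at the start of $Y$ is again a codeword suffix. The encoding then fills the $v$ deleted positions of $y$, giving the $2^v$ weight.

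The counting step is as follows. If $\ED(x,y)<s/D$, fix an optimal alignment with LCS $c$, and put $u=a-c$ and $v=b-c$. Then $v-u=b-a$, $u+v=\ED<s/D$ (so $D(u+v)<a+b$), $0\le u<a$, $0\le v<b$, and $(u,v)\ne(0,0)$ because $x\ne y$ by square-freeness (the case $a=b$ would give the square $XX$; when $a\ne b$ the pair is automatically nonzero). The word $x$ is described by the $u$ deleted positions in $x$, the $v$ deleted positions in $y$, and the letters of $x$ at its deleted positions. To gain the factor $2^u$ instead of $3^u$, we use the ``supported fillings'' from \cref{app:numerics}, i.e.\ the included appendix on supported alignments: since $x$ is square-free, no two adjacent letters are equal. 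Choosing the deleted letters left to right, each must differ from its left neighbour in $x$, which is already determined (either a matched letter or a previously filled one), leaving at most $2$ options. The first position may need a separate convention, presumably absorbed by the supported-alignment definitions.

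The main obstacle is the constraint $a-u\le F_a(u,v)$ in $\mathcal J_{a,b}$. We have to show that some witnessing alignment satisfies it, where $F_a$ is the phase-dependent bound on the matched length (matching runs) developed in the supported-alignments appendix. Our plan is to invoke that appendix's lemma directly. We would first choose the witnessing alignment canonically, for example leftmost-greedy, so that its matched runs obey the run-length bounds: equal-phase runs are short, since otherwise the outer word contains a square, and other phases are bounded by codeword-factor comparisons. We would then check that the definition of $F_a$ is exactly the resulting cap on the number $c=a-u$ of matched symbols. Once that is granted, the rest is the routine union bound above.
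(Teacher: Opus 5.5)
Your plan is the paper's proof in outline. You condition inside $X$ via \cref{lem:fragment-conditional-mass} to get point mass $P_a$, and set $u=a-\LCS$, $v=b-\LCS$ for a bad realization. You bound the candidate values by $\binom au\binom bv2^u$ via \cref{lem:supported-fillings}, impose $a-u\le F_a(u,v)$, exclude $(0,0)$ by square-freeness, mirror to get $P_b\widehat V^{\rm R}_{a,b}$, and take the minimum. A few loose ends are harmless:
\begin{itemize}
\item No canonical leftmost-greedy alignment is needed. \cref{lem:alignment-capacity} holds for \emph{every} common subsequence of length $c$, because \cref{lem:phase-runs} bounds every pair of equal factors of $W$ at a given lag.
\item The initial run of deleted positions is already handled inside \cref{lem:supported-fillings}, which fills it right to left from the first undeleted letter.
\item You should record why $u<a$: the inequality $u+v<s/D$ forces $c>(s-s/D)/2>0$.
\end{itemize}

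The mirror step, however, fails as you describe it. Suppose $t+a$ lies strictly inside a macroblock. The block containing the first character of $Y$ then begins inside $X$, since $X$ ends at $t+a$ and has length $a>54$. Leaving that choice unexposed therefore does \emph{not} fix $X$, so you cannot count values of $Y$ against a fixed $x$. The correct mirror leaves unexposed the complete blocks inside $Y$ together with the block containing the \emph{last} character of $Y$, which extends to the right, away from $X$.

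The visible part of that last block is a codeword \emph{prefix}. So ``rerunning the same proof'' would need prefix fiber sizes, which the $M_s$ table does not supply. This is exactly where reversal closure is essential. Reversal is a bijection of $\mathcal D_0$, and the translates $\mathcal D_1,\mathcal D_2$ inherit closure because alphabet permutations commute with reversal. It maps each length-$s$ prefix class onto a length-$s$ suffix class, so the prefix multiplicities are again $M_s$ and the bound is $P_b$.

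Equivalently, as the paper does, reverse the entire output: the outer word stays square-free and the uniformly chosen codewords are merely permuted within each family. Then apply \cref{lem:fragment-conditional-mass} to $Y^R$. The constraint $a-u\le F_a(u,v)$ is a property of the realized pair, so it survives either conditioning direction. With this correction your argument coincides with the paper's.
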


\begin{proof}
The imbalance condition implies
$a>(s-s/D)/2\ge217\cdot215/(2\cdot216)>54$, so
\cref{lem:fragment-conditional-mass} applies to the left interval and
fixes the right interval while giving point mass at most $P_a$.

For any bad pair, put $c=\LCS(X,Y)$, $u=a-c$, and $v=b-c$.  Then
$v-u=b-a$ and $u+v=\ED(X,Y)<s/D$.  This implies $c>0$, and
\cref{lem:alignment-capacity} gives $c\le F_a(u,v)$.
For fixed $u,v$, \cref{lem:supported-fillings} bounds the possible values of
$X$ by $\binom au\binom bv2^u$.
The case $u=v=0$ would give the square $XX$ and has probability zero because
every outcome is square-free. Summing the remaining cases and multiplying by
the conditional point-mass bound gives
$\Pr[\widehat E_{t,a,b}]\le P_a\widehat V^{\rm L}_{a,b}$ after averaging over
the conditioning.  Reverse the entire output and interchange the two
intervals.  The family is closed under reversal; moreover,
reversal merely permutes the uniformly chosen codewords within each
outer-letter family.  The reversed outer word is square-free, so the
reversed random output has exactly the same form as the original one and
the same conditional-mass and supported-filling argument gives
$\Pr[\widehat E_{t,a,b}]\le P_b\widehat V^{\rm R}_{a,b}$: the original
capacity restriction is necessary for any realization of the pair,
regardless of which interval is exposed first.  Taking the smaller
of the two valid bounds proves the claim.
\end{proof}

If $\widehat p_{a,b}=0$, the preceding lemma gives probability zero.  Since
the finite product space is uniform and every choice vector has positive
mass, $\widehat E_{t,a,b}$ is empty.  Discard these events and let
$\mathcal I$ denote the remaining ordered length types: those with
$a,b\ge1$, $a+b\ge217$, $a+b\notin\mathcal H$,
$216|a-b|<a+b$, and $\widehat p_{a,b}>0$.

Assign the charge
\begin{equation}\label{eq:unequal-charges}
 \widehat x_{a,b}
 =\widehat p_{a,b}
   \exp\!\left(\frac{37(a+b)}{100000}+\frac{21}{100}\right).
\end{equation}

\begin{lemma}[Unequal-event charge envelopes]
\label{lem:unequal-charge-envelopes}
For $j\in\{0,1\}$, put
$\widehat R_j=\sum_{(a,b)\in\mathcal I}(a+b)^j
\frac{\widehat x_{a,b}}{1-\widehat x_{a,b}}$.
Every retained charge is smaller than $1/6000$, and
\begin{align*}
 \widehat R_0
 &<\frac{364756}{10^9}<\frac{37}{100000},\\
 109\widehat R_0+\widehat R_1
 &<\frac{203468108}{10^9}<\frac{21}{100}.
\end{align*}
\end{lemma}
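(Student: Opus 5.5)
The plan is a rigorous bound on an explicit but infinite sum over the set $\mathcal I$ of retained length types. I would split $\mathcal I$ at a computational cutoff $s_0=a+b$. Below the cutoff I evaluate exactly; above it I use an analytic geometric tail.

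\textbf{Finite range.} For every $(a,b)\in\mathcal I$ with $a+b\le s_0$, the quantities $P_a$, $P_b$, $\widehat V^{\rm L}_{a,b}$ and $\widehat V^{\rm R}_{a,b}$ are finite rational expressions. The suffix multiplicities $M_s$ come from \cref{lem:suffix-fibers}, and the capacity function $F_a$ from \cref{lem:alignment-capacity}. The sets $\mathcal J_{a,b}$ are small, because $D(u+v)<a+b$ forces $u+v<s/216$. So $\widehat p_{a,b}$ can be computed exactly.

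The only irrational ingredient is the factor $\exp(37s/100000+21/100)$. I would replace it by a certified rational upper bound, for example a truncated exponential series with a remainder term. This gives a rational upper bound on each $\widehat x_{a,b}$, from which I check $\widehat x_{a,b}<1/6000$. I then bound $x/(1-x)$ by $x\cdot 6000/5999$ and accumulate the partial sums of $\widehat R_0$ and $\widehat R_1$ in exact arithmetic. All of this is the artifact computation referenced for \cref{app:numerics}.

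\textbf{Tail range.} For $s>s_0$ I need a uniform envelope.

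First, the point mass: $P_\ell\le M^\ast\,952^{-\ell/54+2}$ with $M^\ast=\max_s M_s$. Since $a,b\ge s(1-1/216)/2$, the smaller of $P_a$ and $P_b$ is at most
\[
C\exp\!\bigl(-\tfrac{\log 952}{54}\cdot\tfrac{215}{432}\,s\bigr).
\]

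Second, the deletion volume. Drop the capacity restriction and bound $\widehat V$ by $\sum_{u+v<s/216}\binom au\binom bv2^{u+v}$. Lemma~\ref{lem:binomial-estimates}-type entropy estimates give at most
\[
\exp\!\bigl(s\,[\mathsf H(1/216)+\tfrac{\log 2}{216}]+O(\log s)\bigr).
\]

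The net exponent should be clearly negative. For rough orientation, $(\log952)/54\cdot 215/432\approx0.0632$, against roughly $0.035$ for the ball term plus $37/100000$. Counting at most $s+1$ ordered length types at each total $s$, the tail of $\widehat R_0$ and $\widehat R_1$ becomes a sum of $s^{j+1}e^{-cs}$. I evaluate it with closed-form geometric-series identities, as in \cref{lem:defect-rational-bounds}, and choose $s_0$ large enough that this tail is below a small slack such as $10^{-9}$.

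\textbf{Expected difficulty.} The main obstacle is the gap between the exact finite values and the crude tail envelope. The stated bounds $364756/10^9$ and $203468108/10^9$ are tight, and the mass is concentrated near $s\approx 217$–$600$. There the supported-filling bound (the $2^u$ rather than $3^u$) and the capacity restriction $F_a$ are essential. The crude analytic envelope may only become small for very large $s_0$, possibly several thousand.

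I would first try setting $s_0$ at the point where the analytic exponent times $s$ exceeds about $30$, then verify that the exact enumeration up to $s_0$ is computationally feasible. If the envelope is too weak, I would keep the capacity restriction in the tail by bounding $F_a$ via the phase-dependent run-length bound. Alternatively, I would bound $\binom au\binom bv$ with $v=u+(b-a)$ more tightly, keeping the constraint $v-u=b-a$ so that each total $s$ contributes only one free parameter.

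Finally, adding the finite and tail parts gives the two displayed rational bounds. The comparisons with $37/100000$ and $21/100$ are then immediate rational inequalities.
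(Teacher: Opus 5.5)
Your proposal is correct and follows the paper's proof (\cref{lem:unequal-charge-bounds}) essentially step for step. The paper evaluates the charges exactly for $217\le s\le 4000$, using a truncated exponential series with a geometric remainder and the factor $6000/5999$. It then bounds the tail analytically from $P_a\le 952^2e^{-ha}$ with $a>215s/432$, Vandermonde with the capacity restriction dropped, and closed-form geometric moment sums. One quantitative caution: the paper's finite bounds already combine to $203468107794/10^{12}$ for $109\widehat R_0+\widehat R_1$ and to $364755781/10^{12}$ for $\widehat R_0$. So to certify the displayed constants your tail must be below about $2\times10^{-10}$, not $10^{-9}$; the paper's cutoff $s_0=4000$ achieves this, with tails under $10^{-24}$.
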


\begin{proof}
This is the exact finite calculation and analytic tail proved in
\cref{lem:unequal-charge-bounds}.
\end{proof}

\begin{proposition}[Direct normalized edit-distance gap]
\label{prop:direct-unequal-gap}
For every $n$ there is a word $S\in\ternary^n$ such that every pair of
adjacent nonempty intervals $A,B$ satisfies
\[
  \ED(A,B)\ge\frac{|A|+|B|}{216}.
\]
\end{proposition}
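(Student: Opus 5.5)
The plan is to apply the asymmetric local lemma (\cref{thm:lll}) to the retained unequal events $\widehat E_{t,a,b}$ with $(a,b)\in\mathcal I$, using the charges \cref{eq:unequal-charges}.  Every pair of adjacent intervals whose total length or imbalance lies outside this family will then be handled deterministically.  Concretely, fix $n$, put $M=54\lceil n/54\rceil$, and take a ternary square-free outer word of length $M/54$.  We work with the random output $W$ of \cref{prop:large-codebook}.  Each event $\widehat E_{t,a,b}$ is declared to depend on the macroblock choices meeting its support $[t,t+a+b)$.

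\textbf{Step 1: the local-lemma criterion.}  Fix a retained event of total length $s=a+b$.  As in \cref{lem:neighbor-count}, at a given ordered type $(a',b')$ with total $s'$, the number of neighboring starts is at most $s+s'+2\cdot54-1\le s+s'+109$.  The neighboring charge sum $\sum_F x(F)/(1-x(F))$ is therefore at most
\[
(s+109)\widehat R_0+\widehat R_1 .
\]
By \cref{lem:unequal-charge-envelopes}, this is below $\frac{37}{100000}s+\frac{21}{100}$, provided the crude count is regrouped as $s\widehat R_0+(109\widehat R_0+\widehat R_1)$.  Using $-\log(1-x)\le x/(1-x)$, the neighbor product then satisfies
\[
\prod_F(1-x(F))\ge\exp\bigl(-\tfrac{37s}{100000}-\tfrac{21}{100}\bigr).
\]
Multiplying by $\widehat x_{a,b}$ returns exactly $\widehat p_{a,b}$.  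This is at least $\Pr[\widehat E_{t,a,b}]$ by \cref{lem:unequal-event-bound}, so \cref{eq:lll-criterion} holds.  The charge bound $\widehat x<1/6000$ keeps every charge in $(0,1)$.  Hence some choice vector avoids all retained events.  The discarded events with $\widehat p_{a,b}=0$ are empty, so this choice vector avoids every admissible $\widehat E_{t,a,b}$.

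\textbf{Step 2: covering all adjacent pairs.}  Take adjacent $A,B$ with $s=|A|+|B|$ and $z=\lvert|A|-|B|\rvert$, and split into cases.
\begin{enumerate}
\item If $216z\ge s$, then $\ED\ge z\ge s/216$ trivially.
\item If $s\le216$ and $z<s/216$, then $z=0$.  Square-freeness gives $A\ne B$, and parity then gives $\ED\ge2\ge s/216$; in fact the bound $\ED\ge2$ suffices for $s\le432$.
\item If $s\in\{229,\dots,463\}$ is odd with $z<s/216$, then $z=1$ and \cref{lem:shortest-shell} gives $\ED\ge3$.  Since parity forces $\ED$ odd, this is at least $s/216$ whenever $s\le648$.
\item If $s\in\{456,462,464\}$, then $z\le2$ and \cref{lem:two-deletion-shell} gives $\ED\ge4$ (even parity), which exceeds $s/216$.
\item Even totals $s<432$ are already covered by case 2 together with $z\in\{0,\dots\}$, since $z<s/216<2$ forces $z=0$.
\item Every remaining pair has $s\ge217$, $s\notin\mathcal H$, and $216z<s$.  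Its event was avoided, so $\ED\ge s/216$.
\end{enumerate}
Finally, restrict to the length-$n$ prefix; adjacent pairs inside the prefix are unchanged, so the inequalities are inherited.

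\textbf{Main obstacle.}  The hardest part is the bookkeeping in Step 2.  It must be checked that the parity and shell arguments leave no gap between the trivial range and $\mathcal H$.  For example, even totals in $[218,432]$ require $z=0$, and those have $\ED\ge2\ge s/216$.  Odd totals in $[217,227]$ have $z=1$ and need $\ED\ge3$; this is not supplied by $\mathcal H$, so those pairs must remain in the local lemma, which they do because $\mathcal H$ starts at $229$.  A second point to verify carefully is that the neighbor count constant $109$ matches the $109\widehat R_0$ in \cref{lem:unequal-charge-envelopes}.  That requires counting neighbor starts through an open interval of width $s+s'+2\cdot54$, which gives $s+s'+107$ integer starts, comfortably within the bound.
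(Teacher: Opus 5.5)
Your proposal is correct and follows the paper's proof essentially step for step. You use the same neighbor count $s+r+109$ regrouped as $s\widehat R_0+(109\widehat R_0+\widehat R_1)$, the same local-lemma comparison against $\widehat p_{a,b}\ge\Pr[\widehat E_{t,a,b}]$, and the same case split via imbalance, square-freeness, parity, and the two shell lemmas for $s\in\mathcal H$. Your deterministic treatment of even totals below $432$ is valid but redundant: the paper simply leaves those length types to the retained events, which are avoided, or to the discarded events, which are empty.
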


\begin{proof}
Fix the requested length $n$, put $N=\lceil n/54\rceil$, and choose a
ternary square-free outer word of length $N$.  Apply the 54-uniform family to
obtain a random word $W$ of length $54N$ as above.

Declare that $\widehat E_{t,a,b}$ depends on every macroblock variable
meeting its support of length $s=a+b$. A fixed event has at most
$s+r+109$ neighbors of any fixed ordered length type $(c,d)$ of total
$r=c+d$, by the same support-overlap argument as
\cref{lem:general-neighbor-count}. Thus
\begin{align*}
 -\log\prod_{F\sim\widehat E_{t,a,b}}(1-x_F)
 &\le\sum_{c,d}(s+c+d+109)
          \frac{\widehat x_{c,d}}{1-\widehat x_{c,d}}\\
 &=(s+109)\widehat R_0+\widehat R_1\\
   &<\frac{37s}{100000}+\frac{21}{100}.
\end{align*}
Thus the neighboring product is greater than
$\exp(-37s/100000-21/100)$.  Multiplying by the charge in
\cref{eq:unequal-charges} gives a quantity greater than
$\widehat p_{a,b}\ge\Pr[\widehat E_{t,a,b}]$.
The asymmetric local lemma therefore supplies an outcome avoiding every
retained direct bad event.

It remains to check pairs not represented by an event.  If $s\le216$, then
unequal lengths give
$\ED(A,B)\ge\lvert |A|-|B|\rvert\ge1\ge s/216$, while equal lengths give
$\ED(A,B)\ge2$ by square-freeness. Now let $s\ge217$. If
$s\notin\mathcal H$, the imbalance either gives the desired bound, the pair
is one of the avoided retained events, or $\widehat p_{a,b}=0$ and the direct
event is empty by \cref{lem:unequal-event-bound}.  If
$s\in\{229,231,\ldots,463\}$, failure of the
imbalance bound forces $|a-b|=1$; edit distance has odd parity, and a
violation of the $s/216$ bound must have edit distance one, contrary to
\cref{lem:shortest-shell}. At $s\in\{456,462,464\}$, failure of the imbalance
bound leaves differences zero and two. Square-freeness excludes edit
distance zero, while \cref{lem:two-deletion-shell} excludes edit distance
two.
Thus every adjacent pair in $W$ has the claimed non-strict bound.  Finally
take its length-$n$ prefix; every adjacent pair in the prefix is the same
pair in $W$, so the inequality is inherited.
\end{proof}

\begin{proof}[Proof of \cref{thm:quantitative-intro}]
Apply \cref{prop:direct-unequal-gap}.  The strict synchronization inequality
holds for every $\eps>215/216$; in particular one may take
$\eps=\maineps$.
\end{proof}

\linespread{1.03}\selectfont
\setlength{\parskip}{1.5pt plus 0.4pt minus 0.2pt}

\begin{corollary}[An infinite ternary synchronization string]
\label{cor:infinite-ternary}
There exists an infinite ternary word whose every finite adjacent interval
pair satisfies the non-strict normalized edit-distance lower bound
$\mainboundarygap$.  Consequently it is an $\eps$-synchronization string for
every $\eps>\mainthreshold$.
\end{corollary}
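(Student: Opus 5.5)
The plan is a compactness (König's lemma) argument built on \cref{prop:direct-unequal-gap}. For each $n$, let $G_n\subseteq\ternary^n$ be the set of words in which every pair of adjacent nonempty intervals $A,B$ satisfies $\ED(A,B)\ge(|A|+|B|)/216$. By \cref{prop:direct-unequal-gap}, each $G_n$ is nonempty. Each $G_n$ is finite. The family is also closed under prefixes: an adjacent pair inside a prefix is the same adjacent pair in the longer word, as noted at the end of the proof of \cref{prop:direct-unequal-gap}.

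Next I would form the tree whose vertices are $\bigcup_n G_n$, with each word joined to its one-letter extensions in $G_{n+1}$. By prefix-closure this is a rooted subtree of the ternary tree, rooted at the empty word. It is finitely branching, with at most three children per vertex, and it has vertices at every depth. König's lemma then gives an infinite branch, which is an infinite word $S_\infty$ all of whose prefixes lie in the sets $G_n$. Equivalently, one can build the branch directly: at each step, extend to a child that has descendants at infinitely many depths. Such a child exists by pigeonhole over the three children.

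Every finite adjacent interval pair of $S_\infty$ lies inside some finite prefix, and that prefix is in some $G_n$. Hence the pair satisfies the non-strict bound $\mainboundarygap$. For the consequence, fix $\eps>\mainthreshold$. Then $1-\eps<1/216$, so $\ED(A,B)\ge(|A|+|B|)/216>(1-\eps)(|A|+|B|)$, which is the strict inequality of \cref{def:sync} for every finite adjacent pair, and so for every finite prefix and factor.

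No step is a real obstacle. The only point needing care is that the property must be prefix-hereditary and finitely checkable on prefixes, which holds because it is defined by finitely many constraints on each finite prefix.
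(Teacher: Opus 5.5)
Your proposal is correct and matches the paper's proof essentially step for step. Both arguments take the levels $\mathcal G_n$ to be nonempty by \cref{prop:direct-unequal-gap} and closed under prefixes, apply K\"onig's lemma to the finitely branching prefix tree, note that each finite adjacent pair lies in a finite prefix, and use $1-\eps<1/216$ for the strict synchronization conclusion.
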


\begin{proof}
Let $\mathcal G_n$ be the set of ternary words of length $n$ satisfying the
stated bound for every adjacent interval pair.  Each $\mathcal G_n$ is
nonempty by \cref{prop:direct-unequal-gap}, and every prefix of a word in
$\mathcal G_n$ belongs to the corresponding earlier level.  The union of the
$\mathcal G_n$, ordered by the prefix relation, is therefore an infinite
finitely branching tree with a nonempty level at every depth.  K\"onig's
lemma supplies an infinite branch.  Every finite adjacent interval pair on
that branch lies in one of its finite prefixes and hence satisfies the weak
gap.  The strict synchronization statement follows because
$1-\eps<1/216$ whenever $\eps>215/216$.
\end{proof}

%% file: appendix/F_supported_alignments.tex
\subsection{Deletion balls within the source support}
\label{sec:supported-alignments}

Two restrictions on the support sharpen the ambient deletion-ball bound.
First, neighboring letters are unequal.  Second, the block phases restrict
the lengths of uninterrupted matching runs.  The first restriction reduces
the number of fillings; the second eliminates deletion patterns before we
count their fillings.

\begin{lemma}[Supported fillings]\label{lem:supported-fillings}
Fix a word $y$ of length $b$ over $q\ge2$ symbols, and integers
$0\le u<a$, $0\le v<b$ with $a-u=b-v>0$.  The number of length-$a$ words
$x$ with unequal neighboring letters that become equal to $y$ after deleting
$u$ and $v$ positions, respectively, is at most
$\binom au\binom bv(q-1)^u$.
\end{lemma}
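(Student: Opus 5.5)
We encode each such $x$ by a witness of the common subsequence and count fillings letter by letter. Since $x$ and $y$ become equal after the deletions, they share a common subsequence $z$ of length $c=a-u=b-v>0$. Fix one embedding of $z$ into $x$ and one into $y$. The embedding into $x$ is a set $P\subseteq[a]$ of $c$ kept positions, and its complement is a $u$-subset of $[a]$. The embedding into $y$ is a set of $c$ kept positions whose complement is a $v$-subset of $[b]$. Once both subsets are chosen, $z$ is determined by $y$, because it is the restriction of $y$ to its kept positions. Hence the letters of $x$ at the positions of $P$ are determined. There are at most $\binom au\binom bv$ choices of the pair of deletion sets, and it remains to show that for each pair at most $(q-1)^u$ words $x$ with unequal neighboring letters are compatible with it.

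For fixed deletion sets, fill the $u$ free positions of $x$ from left to right. Each free position $i$ has $i\ge2$ or $i=1$. If $i\ge2$, position $i-1$ is already determined, either as a kept position with a letter forced by $y$ or as a free position filled earlier. The constraint $x_i\ne x_{i-1}$ then leaves at most $q-1$ choices for $x_i$.

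The only delicate case is $i=1$ being free, where no left neighbor exists. Since $c>0$, some position of $P$ exists. Let $j$ be the first kept position, so the free positions $1,\dots,j-1$ form an initial run. We fill this run from right to left instead. Position $j-1$ has the fixed right neighbor $x_j$ and so at most $q-1$ choices. Each earlier position $k<j-1$ has its right neighbor already filled and again at most $q-1$ choices. All remaining free positions lie after $j$, and we fill them left to right as above.

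Each free position therefore contributes a factor at most $q-1$. This gives at most $(q-1)^u$ fillings per pair of deletion sets, and summing over the pairs gives $\binom au\binom bv(q-1)^u$. A word with several witnesses is counted more than once, which only weakens the bound.

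The main obstacle is the boundary case above: the leading free run must be anchored to an existing kept letter. This is exactly where the hypothesis $a-u>0$ is used. We do not need $u<a$ separately, since it follows from $a-u>0$, and $q\ge2$ is needed only so that the bound is nonvacuous.
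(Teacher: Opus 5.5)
Your proof is correct and is essentially the paper's argument. You fix both deletion sets, so the kept letters of $x$ are forced by $y$. You then fill the leading run of deleted positions right to left from the first kept letter and every later run left to right, each step having an already-fixed neighbor and hence at most $q-1$ choices, with $a-u>0$ guaranteeing that anchor. Your first paragraph's claim that a plain left-to-right order always finds position $i-1$ already determined fails inside the initial run. Your second paragraph already states the correct order, so nothing is missing.
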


\begin{proof}
Choose both deletion sets.  The undeleted letters of $x$ are then fixed.
Fill the initial run of deleted positions from right to left, starting at
the first undeleted letter.  Fill each later run from left to right,
starting at the preceding undeleted letter.  Every step has a previously
fixed neighbor and hence at most $q-1$ choices.  Requirements at the other
end of a run can only reduce this number.  There is an undeleted letter
because $a-u>0$, so the procedure covers all deleted positions.  Multiplying
by the numbers of deletion sets proves the bound; multiple descriptions of
the same word only overcount.
\end{proof}

For the 54-uniform family, define $B_d$ for $1\le d\le27$ by the table
below, and put $B_{54-d}=B_d$.  An infinite entry means that we impose no
restriction at that phase.
\begin{center}
\small
\begin{tabular}{c|rrrrrrrrr}
$d$ &1&2&3&4&5&6&7&8&9\\
$B_d$&41&41&41&41&47&45&83&36&39\\[2pt]
$d$&10&11&12&13&14&15&16&17&18\\
$B_d$&42&49&56&50&44&47&44&$\infty$&$\infty$\\[2pt]
$d$&19&20&21&22&23&24&25&26&27\\
$B_d$&61&72&39&66&75&82&$\infty$&46&52
\end{tabular}
\end{center}
For a positive absolute lag $\lambda$, set
\begin{equation}\label{eq:run-bound}
 G(\lambda)=
 \begin{cases}
 \max\{42,\lambda-54\},&54\mid\lambda,\\
 B_{\lambda\bmod54},&54\nmid\lambda.
 \end{cases}
\end{equation}

\begin{lemma}[Matching runs]\label{lem:phase-runs}
In any output $W$ of the family in \cref{prop:large-codebook}, two equal
factors whose starting positions differ by $\lambda>0$ have length at most
$G(\lambda)$.
\end{lemma}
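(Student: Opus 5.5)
Two cases, according to whether the lag $\lambda$ is a multiple of $54$. Let $U=W[p,p+k)$ and $U'=W[p+\lambda,p+\lambda+k)$ be equal factors, where $W=D_{T_1,R_1}\cdots D_{T_N,R_N}$, and suppose $k>G(\lambda)$.

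\emph{Aligned lags.} Let $54\mid\lambda$ and write $\lambda=54\mu$. If $k\ge\lambda$, then $W[p,p+2\lambda)$ is a square, which contradicts \cref{prop:large-codebook}. So assume $42<k<\lambda$; the case $k\le\lambda-54$ is already within the bound. Then $k>\max\{42,\lambda-54\}$. The idea is to show that a run of this length forces the outer word to repeat.

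Since $54\mid\lambda$, the two factors sit at equal phases. Any complete macroblock inside $U$ is matched to a complete macroblock inside $U'$. Distinctness of codewords across the three families forces the two outer letters to agree, with $T_{i+\mu}=T_i$.

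When $k>\lambda-54$, the run covers all of its phases except fewer than $54$. I will argue that the macroblocks fully covered, together with the codeword-prefix and codeword-suffix information at the two partial ends, determine the outer letters. The target is an outer square $T[i,i+2\mu)$.

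The constant $42$ should come from a finite fact about the family: two codewords with distinct outer letters, or two distinct codeword fragments at equal phase, share a common factor of length at most $42$ at equal offsets. This gives a base case, namely that a run at equal phase longer than $42$ already pins down the outer letter of the block it spans. I would verify it by enumerating all pairs of the $3\cdot952$ codewords, together with their length-two and length-three concatenations over square-free outer contexts. Up to the cyclic alphabet symmetry, this reduces to the $952$ words of $\mathcal D_0$.

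\emph{Unaligned lags.} Let $d=\lambda\bmod 54\ne0$. Here the bound $B_d$ is meant to be a finite codebook check, using the same enumerate-and-mask methodology as \cref{lem:shortest-shell}. Any factor of length $B_d+1$ lies inside the output of a bounded window of outer letters. Since $B_d+1\le84$, this window has at most $4$ consecutive macroblocks, and its shift by $\lambda$ lies in another such window.

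Because the matching depends only on $d$ and the local outer contexts, the check reduces to the following statement. For every pair of square-free outer windows and all codeword choices, no factor of length $B_d+1$ occurs at phase $j$ in one window and at phase $j+d$ in the other. For the infinite entries, $d\in\{17,18,25,29,36,37\}$, no restriction is claimed, so those lags need nothing.

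The search space is large: $952^4$ choice vectors per window. I would therefore organize it as an automaton over macroblocks, or equivalently a set of length-$(B_d+1)$ factors indexed by phase. First collect, for each phase, the set of all factors of length $B_d+1$ that occur in outputs of square-free windows. Then test intersections of the factor sets at phases $j$ and $j+d$ modulo $54$. This collection is built from pairs and triples of codewords with fragments, and its size is polynomial in $952$.

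The reversal symmetry of $\mathcal D_0$ explains $B_{54-d}=B_d$: reversing $W$ maps lag-$d$ runs to lag-$(54-d)$ runs. This halves the check.

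\emph{Main obstacle.} The hardest part is the aligned case, specifically the boundary bookkeeping in converting $k>\max\{42,\lambda-54\}$ into an outer square. Partial blocks at both ends must still force equality of outer letters, which requires the finite "$42$" fact about equal-phase fragments. The handling also differs between short lags, where $\lambda=54$ and the run exceeds $42$ within a single block pair, and long lags. I would first settle the short-lag case via the codebook enumeration. I would then induct on covered blocks for longer lags, using that equal outer letters plus equal codeword fragments propagate across the run.
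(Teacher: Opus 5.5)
Your nonzero-phase case is essentially the paper's finite codebook check. The aligned case, however, has a genuine gap, and it lies exactly at the step you flag as the main obstacle.

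\textbf{The gap in the aligned case.} To reach the bound $\lambda-54$ at lag $\lambda=54\mu$ with $\mu\ge2$, you must exclude a run of length $54(\mu-1)+1$. Such a run can start at coordinate $53$ of block $i$. Its left partial end is then a single character, and the complete blocks give only the $\mu-1$ equalities $T_j=T_{j+\mu}$ for $i<j<i+\mu$. If instead it starts near coordinate $27$, both partial ends have length about $27$, and you get only $\mu-2$ equalities. Either way you obtain a factor of period $\mu$ and length less than $2\mu$, which is not a square.

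To close the argument you need the outer letters of the partial-end blocks, even when a fragment is one character. A bound of the form ``common factors at equal offsets have length at most $42$'' gives no information about one coinciding coordinate, so inducting on covered blocks cannot produce the missing equalities. The second clause of your proposed finite fact is also false. By \cref{lem:suffix-fibers}, $M_{44}=2$, so two distinct codewords of $\mathcal D_0$ share the length-$44$ factor $D[10,54)$ at equal offsets.

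\textbf{The missing idea.} It is a structural observation rather than an enumeration. Every word of $\mathcal D_0$ has prefix $012021$ and suffix $120210$. The families $\mathcal D_1,\mathcal D_2$ are cyclic alphabet shifts of $\mathcal D_0$, and these shifts change every letter. Hence one letter at a known coordinate among the first or last six identifies the outer letter. This gives both parts of $G$ at once:
\begin{itemize}
\item At lag $54$, a run touching any such coordinate forces $T_i=T_{i+1}$. So it lies inside the $42$ interior coordinates of a single block.
\item At lag $54\mu$ with $\mu\ge2$, a run of length $54(\mu-1)+1$ contains coordinate $0$ or $53$ of each of $\mu$ consecutive blocks. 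This yields $T_j=T_{j+\mu}$ for $\mu$ consecutive $j$, which is an outer square.
\end{itemize}

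\textbf{Smaller issues in the unaligned case.} Your reversal explanation of $B_{54-d}=B_d$ is wrong. Reversing $W$ maps equal factors at lag $\lambda$ to equal factors at the same lag $\lambda$, so the residue $d$ is preserved. The symmetry $d\leftrightarrow54-d$ instead comes from swapping the roles of the two factors in a test that treats their contexts independently. In any case the paper simply checks all $47$ finite phases. The paper's check is also lighter than your factor-set intersection. It splits the run at block boundaries on both sides and finds one piece for which $\{D[i,i+r):D\in\mathcal D_0\}$ and $\{D[j,j+r):D\in\mathcal D_0\cup\mathcal D_1\cup\mathcal D_2\}$ are disjoint, with no outer contexts at all.
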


\begin{proof}
We first treat zero phase without enumerating outer contexts.  Every word
in $\mathcal D_0$ has prefix $012021$ and suffix $120210$; the other two
codebooks are its cyclic shifts.  Thus a letter at any of the first or last
six coordinates identifies the outer letter when its coordinate is known.
These common prefix and suffix properties are checked directly on the 952
listed words.

Write $\lambda=54d$.  If $d=1$, a matching run containing any such fixed
coordinate would identify two consecutive outer letters, contradicting
square-freeness.  A run can therefore occupy only the 42 interior
coordinates of one block.  If $d\ge2$, a run of length $54(d-1)+1$
contains a fixed coordinate in each of $d$ consecutive blocks.  Indeed, a
run starting at coordinate zero contains the first coordinate in all $d$
blocks; a run starting at a positive coordinate contains the last coordinate
of its first block and the first coordinate of each of the next $d-1$
blocks.  Equality at lag $54d$ identifies those $d$ outer letters with the
next $d$ outer letters.  This gives a square in the outer word, a
contradiction.  The zero-phase bound follows.

For a nonzero phase $d$ with $B_d<\infty$, fix the starting coordinate
$p\in\{0,\ldots,53\}$ of a hypothetical run of length $B_d+1$.
Split the run at every block boundary on either side.  Each resulting
piece compares coordinates $[i,i+r)$ of one codeword with coordinates
$[j,j+r)$ of another.  Form the two finite sets
$\{D[i,i+r):D\in\mathcal D_0\}$ and
$\{D[j,j+r):D\in\mathcal D_0\cup\mathcal D_1\cup\mathcal D_2\}$.
After a common cyclic relabeling, every possible comparison belongs to
these sets.  Direct enumeration finds a piece with disjoint sets for every
one of the 54 starts at every finite phase in the table.  This is a finite
test of $47\cdot54=2538$ phase--start pairs, involving 2756 distinct
coordinate-set comparisons.  It uses no assumption about choices or outer
letters on different blocks: allowing all of them independently only
enlarges the two sets.  Disjointness rules out the hypothetical run.
This proves every finite entry and hence the claim.
\end{proof}

Define $F_a(i,j)$ for $0\le i<a$, $0\le j<b$ by
$F_a(0,0)=G(a)$ and, for $i+j>0$,
\begin{equation}\label{eq:alignment-capacity}
 F_a(i,j)=G(a+j-i)+
 \max\{F_a(i-1,j),F_a(i,j-1)\},
\end{equation}
omitting terms with a negative index.  Arithmetic with $\infty$ has its
usual extended-real meaning.  All lags in this recurrence are positive
because $i<a$.

\begin{lemma}[Deletion-pattern capacity]\label{lem:alignment-capacity}
Suppose adjacent factors of lengths $a,b$ have a common subsequence of
length $c=a-u=b-v>0$.  Then $c\le F_a(u,v)$.
\end{lemma}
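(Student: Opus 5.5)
We prove the statement by decomposing an optimal common subsequence into maximal matching runs and bounding each run with \cref{lem:phase-runs}. Fix adjacent factors $X=W[t,t+a)$ and $Y=W[t+a,t+a+b)$, and a common subsequence of length $c=a-u=b-v>0$. Write it as increasing pairs $(x_1,y_1),\ldots,(x_c,y_c)$ of positions in $X$ and $Y$ (0-indexed within each factor). For the $k$th matched pair, let $i_k$ be the number of unmatched positions of $X$ before $x_k$, and $j_k$ the number of unmatched positions of $Y$ before $y_k$. Then $x_k=k-1+i_k$ and $y_k=k-1+j_k$. The absolute lag between the two matched characters is therefore
\[
(t+a+y_k)-(t+x_k)=a+j_k-i_k.
\]
This is positive, because $i_k\le u<a$. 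The pair $(i_k,j_k)$ is coordinatewise nondecreasing in $k$, with $0\le i_k\le u$ and $0\le j_k\le v$. We may replace the sequence by its terminal values: the final pair satisfies $i_c\le u$, $j_c\le v$, and $F_a$ is monotone in each argument, since it is a sum of nonnegative terms plus a max of earlier values.

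Next group the $c$ matched pairs into maximal blocks on which $(i_k,j_k)$ is constant, say equal to $(i,j)$. Within such a block, consecutive matched positions are consecutive in both $X$ and $Y$. The block is therefore a pair of equal factors of $W$ at lag $\lambda=a+j-i>0$, and \cref{lem:phase-runs} bounds its length by $G(a+j-i)$. Across blocks, the value of $(i,j)$ strictly increases in the product order. The visited values thus form a chain
\[
(i^{(1)},j^{(1)})<(i^{(2)},j^{(2)})<\cdots<(i^{(r)},j^{(r)})\le(u,v),
\]
and we obtain
\[
c\le\sum_{\rho}G\bigl(a+j^{(\rho)}-i^{(\rho)}\bigr).
\]

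It remains to dominate any such chain sum by $F_a(u,v)$. We do this by induction on $i+j$, proving that every strictly increasing chain ending at a point $\le(i,j)$ has $G$-sum at most $F_a(i,j)$. Take a chain whose last point is $(i',j')\le(i,j)$.

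If $(i',j')=(i,j)$, the chain minus its last point ends at some point $<(i,j)$, hence at a point $\le(i-1,j)$ or $\le(i,j-1)$ whenever that index is nonnegative. By induction its sum is at most $\max\{F_a(i-1,j),F_a(i,j-1)\}$. Adding the term $G(a+j-i)$ gives exactly the recurrence \cref{eq:alignment-capacity}.

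If instead $(i',j')<(i,j)$, the whole chain ends at a point $\le(i-1,j)$ or $\le(i,j-1)$. Its sum is then bounded by the corresponding $F_a$ value, which is at most $F_a(i,j)$ because $G\ge0$. The base case $(0,0)$ is a single run, bounded by $F_a(0,0)=G(a)$.

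Infinite values of $G$ cause no trouble, since extended-real arithmetic keeps every bound valid. The main obstacle is this bookkeeping: verifying that the lag is exactly $a+j-i$ and that maximal runs correspond to constant $(i,j)$, both of which are direct from the indexing. The substantive input is \cref{lem:phase-runs}, which is taken as given.
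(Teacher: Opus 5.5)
Your proof is correct and follows the paper's argument. You decompose the alignment into maximal matching runs indexed by the deletion counts $(i,j)$, bound each run by $G(a+j-i)$ via \cref{lem:phase-runs}, and dominate the resulting sum along the monotone deletion path by the recurrence for $F_a$; your induction over strictly increasing chains just makes rigorous the paper's one-line claim that the recurrence maximizes the sum of run bounds over all such paths.
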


\begin{proof}
Trace the alignment defining that subsequence.  Between consecutive
deletions, the matched letters form an uninterrupted run.  After $i$
deletions on the left and $j$ on the right, their absolute lag is
$a+j-i$, so \cref{lem:phase-runs} bounds the run by $G(a+j-i)$.
The deletion steps describe a monotone path from $(0,0)$ to $(u,v)$.
The recurrence maximizes the sum of these run bounds over all such paths.
Allowing a run of its full upper-bound length at every visited vertex can
only increase the number of matches.  It therefore bounds their actual
total $c$.
\end{proof}

For example, at total length 227 a one-deletion alignment would need 113
matches, whereas the two runs have total capacity at most
$G(113)+G(114)=47+45=92$.  At total length 452, an edit-distance-two
alignment has at most three runs, all at lags between 225 and 227.
Their phase bounds are at most 49, giving at most 147 matches rather than
the required 225.  These exclusions follow from the same recurrence used
below; neither requires a separate deletion-shell enumeration.

%% file: appendix/B_numeric_details.tex
\section{Exact calculations for the ternary parameters}
\label{app:numerics}

This appendix proves the analytic inequalities and finite reductions used in
\cref{sec:ternary-proof,sec:quantitative-strengthening}.  The finite claims
below concern the published 54-uniform codebook from
\cref{prop:large-codebook}; every comparison is expressed through the exact
relations defined here.

\begin{lemma}[Elementary logarithmic bounds]\label{lem:elementary-logs}
The following strict inequalities hold:
\[
  \frac{69}{100}<\log2<\frac7{10},\qquad
  \log3<\frac{11}{10},\qquad
  \log10<\frac73,\qquad
  \log4<2.
\]
\end{lemma}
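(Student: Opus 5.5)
Each inequality will be proved by exact rational arithmetic from convergent series with explicitly bounded remainders, so no numerical approximation of a transcendental constant is ever assumed. The main tool is the series used in \cref{lem:defect-rational-bounds}: for $x>1$ put $z=(x-1)/(x+1)$, so that
\[
  \log x=2\sum_{j\ge0}\frac{z^{2j+1}}{2j+1}.
\]
All terms are positive. Hence every partial sum is a lower bound. The tail after index $M$ is at most $2z^{2M+3}/\bigl((2M+3)(1-z^2)\bigr)$, obtained by comparing with a geometric series, and adding this to a partial sum gives an upper bound.

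For $\log2$ we have $z=1/3$. The first two terms already give $2(1/3+1/81)=56/81>0.691>69/100$. For the upper bound, truncating after two or three terms and adding the geometric tail gives a value below $0.6935<7/10$. For $\log3$ we have $z=1/2$. Three terms plus the tail give roughly $1.0986+$ a small remainder, which stays below $1.1=11/10$.

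For $\log10$ we write $\log10=\log2+\log5$. We bound $\log5$ through $\log(5/4)$ with $z=1/9$, using $\log5=2\log2+\log(5/4)$. This gives $\log10=3\log2+\log(5/4)<3(7/10)+0.224<2.33<7/3$. Here the estimate $\log(5/4)<2(1/9)\cdot\frac{1}{1-1/81}=81/360=0.225$ is just the full geometric bound. Since $3\cdot0.7=2.1$ and $2.1+0.225=2.325<7/3$, this suffices. Finally, $\log4=2\log2<7/5<2$ follows directly from the upper bound for $\log2$.

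The only delicate point is the upper bound for $\log 2$. If it is checked crudely, the later uses of $\log 10$ and $\log 4$ inherit a loose constant. So I would first record the explicit rational upper bound $\log2<2(1/3+1/81+1/1215)+\frac{2\cdot3^{-7}}{7\cdot(8/9)}$. This equals about $0.69315$ and is clearly below $0.7$. With that bound in hand, every other inequality is a one-line rational comparison.
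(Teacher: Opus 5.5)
Your proof is correct, but it runs in the opposite direction from the paper's.

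\textbf{The paper's route.} The paper works only with the exponential series. Since its terms are positive, partial sums give $e^{7/10}>2$, $e^{11/10}>3$, $e^{7/3}>10$ and $e^2>4$ directly, so the four upper bounds on logarithms need no remainder estimate at all. The one tail estimate is for $\log2>69/100$. There the bound $j!\ge2\cdot3^{j-2}$ gives $e^x\le1+x+3x^2/(2(3-x))<2$ at $x=69/100$.

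\textbf{Your route.} You use $\log x=2\sum_j z^{2j+1}/(2j+1)$, which swaps the roles: lower bounds are free, and every upper bound needs the geometric tail. In exchange you reduce $\log10$ to $3\log2+\log(5/4)$ with $z=1/9$, which avoids a long partial sum of $e^{7/3}$. It also lets the appendix use one series throughout, the same one as in \cref{lem:defect-rational-bounds}. That lemma's proof is independent of this one, so there is no circularity. The paper's route, by contrast, makes four of the five inequalities one-sided partial sums with no error analysis.

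\textbf{One step to tighten.} In the $\log3$ step, $1.0986$ is the true value of $\log3$, not the three-term partial sum. Read literally as ``partial sum plus a positive remainder,'' it would exceed $11/10$ once the tail bound of about $0.003$ is added. The actual figures are
\[
2\Bigl(\tfrac12+\tfrac1{24}+\tfrac1{160}\Bigr)=\tfrac{263}{240},\qquad
\frac{2z^7}{7(1-z^2)}=\tfrac1{336}\quad(z=\tfrac12),
\]
and their sum is $923/840<924/840=11/10$. The margin is only $1/840$, so you should state these exact rationals rather than decimals.

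Similarly, $\log10<21/10+9/40=93/40<7/3$ holds only because $279<280$. Write it exactly, or use your sharper bound $\log2<0.6932$ there to get a comfortable margin.
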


\begin{proof}
Finite lower truncations of the exponential series give
\begin{align*}
  e^{7/10}
  &>\frac{12013}{6000}>2,\\
  e^{11/10}
  &>\frac{36015101}{12000000}>3,\\
  e^{7/3}
  &>\frac{1071641}{104976}>10,
  \qquad e^2>1+2+2^2/2>4.
\end{align*}
These imply the four upper bounds.  For the lower bound, the elementary
estimate $j!\ge2\cdot3^{j-2}$ for $j\ge2$ yields, with $x=69/100$,
\[
  e^x\le1+x+\frac{3x^2}{2(3-x)}
  =\frac{92361}{46200}<2,
\]
so $69/100<\log2$.
\end{proof}

\begin{lemma}[Logarithms for the 54-uniform family]
\label{lem:large-family-log-bounds}
The logarithm satisfies $137/20<\log952<7$.
\end{lemma}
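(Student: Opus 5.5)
The statement to prove is $137/20<\log952<7$. The plan mirrors \cref{lem:elementary-logs}: turn each logarithmic inequality into an inequality between an exponential and an integer, and certify it with a finite truncation of the exponential series. Both bounds can also be reduced to \cref{lem:elementary-logs} through the factorization $952=8\cdot7\cdot17$ together with a few auxiliary comparisons.

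\emph{Upper bound.} Since $\log$ is increasing, it suffices to show $e^7>952$. Truncating the exponential series with nonnegative terms gives
\[
e^7>1+7+\frac{49}{2}+\frac{343}{6}+\frac{2401}{24}+\frac{16807}{120}+\frac{117649}{720}+\frac{823543}{5040}.
\]
The terms are roughly $1+7+24.5+57.2+100.0+140.1+163.4+163.4=656.6$. Adding the next two terms, about $143$ and $111$, gives more than $910$, and the term after that, about $78$, pushes the sum past $952$. The certificate is therefore the partial sum through $7^{10}/10!$, which exceeds $952$ and can be verified in exact rational arithmetic.

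A cleaner alternative uses \cref{lem:elementary-logs}. Write $952<1000=10^3$, so $\log952<3\log10<7$ by the bound $\log10<7/3$. This takes one line, and I would use it.

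\emph{Lower bound.} We need $\log952>6.85$. The plan is to write $952=1024\cdot(952/1024)$, so $\log952=10\log2-\log(1024/952)$. The lower bound $\log2>0.69$ gives $10\log2>6.9$, and it remains to show $\log(1024/952)<0.05$. Since $1024/952=1+72/952$ and $\log(1+x)\le x$, we get $\log(1024/952)\le 72/952\approx0.0756$. This is too large by itself, so the margin from $10\log2$ is insufficient: we would need $10\log2>6.926$. The sharper route is to prove $\log952>6.85$ directly from $e^{6.85}<952$. For this I would bound $e^{x}$ from above, in the same spirit as the lower-bound argument in \cref{lem:elementary-logs}. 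Write $e^{6.85}=e^{7}e^{-0.15}$, and use an upper bound on $e^7$ (for instance $e<2.7183$, giving $e^7<1096.7$) together with $e^{-0.15}\le 1-0.15+0.15^2/2=0.86125$. The product is about $944.5<952$.

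\emph{Making it rigorous.} The steps are:
\begin{enumerate}
\item Obtain a rational upper bound $e<27183/10000$, either from a tail bound of the series $\sum 1/j!$ (the tail after $1/j!$ is at most $1/(j!\,j)$) or from $e^{1}$ bounded via the same $j!\ge2\cdot3^{j-2}$ trick.
\item Raise this bound to the seventh power.
\item Multiply by the alternating-series upper bound $e^{-0.15}<0.86125$.
\item Compare the result with $952$ in exact rationals.
\end{enumerate}

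\emph{Main obstacle.} The only delicate point is numerical margin. The target $952$ exceeds the computed $944.5$ by under $1\%$, so the bound on $e$ must be accurate to about four digits. Since $(27183/10000)^7$ exceeds $e^7$ by only about $0.01\%$, that accuracy suffices. The upper bound is routine.
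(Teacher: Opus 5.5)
Your proof is correct and follows the paper's route for the lower bound. The paper also writes $e^{137/20}=e^7/e^{3/20}$, raises a rational upper bound on $e$ to the seventh power (it uses $2719/1000$, obtained from the tail estimate $(7+k)!\ge7!\,8^k$), and uses $e^{3/20}>1+3/20+(3/20)^2/2$, which plays the role of your $e^{-0.15}<0.86125$. One caution on obtaining the bound on $e$: the ratio trick $j!\ge2\cdot3^{j-2}$ applied from $j=2$ only gives $e\le 11/4$, which is too weak here. Your tail bound $1/(j!\,j)$ with $j\ge6$ is the option that works.

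For the upper bound the paper instead shows $e^7>(685/252)^7>952$. Your one-line shortcut $\log952<\log1000=3\log10<7$ via \cref{lem:elementary-logs} is equally valid.
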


\begin{proof}
The positive exponential series gives
$e>\sum_{j=0}^{7}1/j!=685/252$.
For an upper bound on $e$, use $(7+k)!\ge7!\,8^k$ for every $k\ge0$ to obtain
\[
 e
 =\sum_{j=0}^{6}\frac1{j!}+\sum_{j=7}^{\infty}\frac1{j!}
 \le\frac{1957}{720}+\frac1{7!}\sum_{k=0}^{\infty}\frac1{8^k}
 =\frac{31967}{11760}<\frac{2719}{1000}.
\]

The lower bound on $e$ gives $e^7>(685/252)^7>952$, so $\log952<7$.
For the other direction,
\[
 e^{137/20}=\frac{e^7}{e^{3/20}}
 <\frac{(2719/1000)^7}
 {1+3/20+(3/20)^2/2}
 =\frac{(2719/1000)^7}{929/800}<952.
\]
Taking logarithms proves the claim.
\end{proof}

\begin{lemma}[Suffix fibers of the 54-uniform family]
\label{lem:suffix-fibers}
For the special 54-Brinkhuis 952-triple in
\cref{prop:large-codebook}, the suffix multiplicities $M_s$ defined in
\cref{sec:quantitative-strengthening} are as follows:
\begingroup\normalfont
\begin{center}
\begin{tabular}{@{}r|*{14}{r}@{}}
$s$&1&2&3&4&5&6&7&8&9&10&11&12&13&\\
$M_s$&952&952&952&952&952&952&642&642&642&444&444&233&211&\\[3pt]
$s$&14&15&16&17&18&19&20&21&22&23&24&25&26&\\
$M_s$&211&198&132&100&100&85&85&66&37&37&33&26&25&\\[3pt]
$s$&27&28&29&30&31&32&33&34&35&36&37&38&39&\\
$M_s$&22&15&15&13&10&9&9&7&7&6&5&5&4&\\[3pt]
$s$&40&41&42&43&44&45&46&47&48&49&50&51&52&53\\
$M_s$&3&3&3&2&2&1&1&1&1&1&1&1&1&1
\end{tabular}
\end{center}
\endgroup
\end{lemma}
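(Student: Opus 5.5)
This statement is a finite computational fact about the published codebook, so the plan is exact enumeration organized to make the check transparent and reproducible.

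First I would reduce to the single codebook $\mathcal D_0$. By \cref{prop:large-codebook}, $\mathcal D_1$ and $\mathcal D_2$ arise from $\mathcal D_0$ by a cyclic permutation $\sigma$ of the alphabet applied letterwise. Hence $z\mapsto\sigma(z)$ bijects the suffix fibres of $\mathcal D_0$ with those of $\mathcal D_1$, and likewise for $\mathcal D_2$. So $M_s$ is the same for all three outer letters, and it suffices to compute, for each $1\le s\le53$, the largest multiplicity in the multiset $\{D[54-s,54):D\in\mathcal D_0\}$.

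Second I would dispose of the structural entries without a computer.
\begin{itemize}
\item For $s\le6$, every codeword ends in $120210$ (the common-suffix property already checked in the proof of \cref{lem:phase-runs}). So every length-$s$ suffix is the same word, and $M_s=952$.
\item For $s=54$ the codewords are distinct. The entries $M_s=1$ for $45\le s\le53$ say that the 952 length-$45$ suffixes are already pairwise distinct. Distinctness at length $45$ implies it at every longer length, so only $s=45$ needs an explicit check.
\item The sequence is nonincreasing in $s$: refining a suffix can only split fibres. This gives a consistency check on the table rather than a proof of individual entries.
\end{itemize}

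The remaining entries are the main step. For each $s$ I would sort the 952 suffixes lexicographically, or hash them, record the largest fibre, and compare it with the table. This is $53$ passes over $952$ words of length at most $54$ with integer comparisons only, and it is the part I would delegate to the artifact verifier.

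The main obstacle is not mathematical but evidential. One must use the exact published list of $\mathcal D_0$ from \cite{SollamiDouglasLiebmann2016}, including its reversal-closed second half. I would therefore also have the verifier confirm three properties of the list: it has 952 distinct words, it is closed under reversal, and it has the common prefix $012021$ and suffix $120210$. This guards against transcription errors before the fibre maxima are reported.
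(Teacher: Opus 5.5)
Your proposal is correct and follows the paper's proof: reduce to $\mathcal D_0$, since letterwise cyclic alphabet permutations biject suffix fibres and so preserve their sizes, and then determine the largest length-$s$ suffix class by direct enumeration for each $s$. Your additional observations are valid shortcuts and consistency checks but do not change the argument: the common suffix $120210$ gives $M_s=952$ for $s\le6$, monotonicity of $M_s$ in $s$ holds, and checking $s=45$ alone suffices for the entries $M_s=1$.
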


\begin{proof}
For each $s$, partition the $952$ words in $\mathcal D_0$ by
their suffix of length $s$ and take the largest class.  Direct enumeration
gives the displayed row.  The other two codebooks are obtained from
$\mathcal D_0$ by bijective alphabet permutations, which preserve equality
of suffixes and hence preserve every fiber size.
\end{proof}

We will use the following elementary rational enclosure in the finite
calculation.  For every rational $0\le z<82$,
\begin{equation}\label{eq:rational-exp-upper}
 e^z
 \le\sum_{j=0}^{80}\frac{z^j}{j!}
  +\frac{z^{81}}{81!}\frac1{1-z/82}.
\end{equation}
Indeed, beginning with the term of degree $81$, the ratio of consecutive
terms is at most $z/82<1$, so the positive tail is bounded by the displayed
geometric series.  The inequality is strict when $z>0$.

\begin{lemma}[Finite verification of the bounded one-deletion shells]
\label{lem:shortest-shell-verification}
The conclusion of \cref{lem:shortest-shell} holds for the
54-uniform 952-family.
\end{lemma}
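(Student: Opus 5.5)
The verification is a finite, exact computation that follows the reduction sketched in the proof of \cref{lem:shortest-shell}. I would carry it out in four steps.

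\textbf{Step 1: enumerate patterns.} Fix an odd total $s$ with $229\le s\le463$ and put $r=(s+1)/2$. I would enumerate:
\begin{itemize}
\item the orientation (whether the longer factor is on the left or the right);
\item the offset $o\in\{0,\dots,53\}$ of the combined factor in its first macroblock;
\item the deleted position $k\in\{0,\dots,r-1\}$ in the longer factor.
\end{itemize}
For each such triple the $r-1$ forced equalities $W[p]=W[p']$ are determined. I would translate each one into a statement about macroblock indices and in-block coordinates, $D_{T_i,R_i}[x]=D_{T_j,R_j}[y]$. Then I would group these equalities by the ordered pair of blocks $(i,j)$ they compare. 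The number of met blocks is at most $\lceil (463+53)/54\rceil+1\le 11$, so every grouped system is small.

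\textbf{Step 2: relations and mask graph.} For a group $G$ and a cyclic difference $\delta\in\{0,1,2\}$ with $T_j=T_i+\delta$, define the relation
\[
\rho_{G,\delta}\subseteq[952]^2,
\]
consisting of the pairs of choices for which every coordinate equality in $G$ holds. Since $\mathcal D_1$ and $\mathcal D_2$ are cyclic relabelings of $\mathcal D_0$, only the difference $\delta$ matters, not the actual letters. I would record the three-bit mask $(\rho_{G,0}\ne\varnothing,\rho_{G,1}\ne\varnothing,\rho_{G,2}\ne\varnothing)$. A group comparing a block with itself ($i=j$) must be handled directly; such equalities are internal periodicity constraints on one codeword.

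\textbf{Step 3: colorings.} For each pattern, search for a square-free ternary coloring $T$ of the met block indices such that every edge's difference $T_j-T_i$ is permitted by its mask. This is a backtracking search over at most $3^{11}$ colorings and is cheap.

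The expected result, per the statement of \cref{lem:shortest-shell}, is that no coloring survives except in the $s=251$ patterns, where $T$ is forced to the form $abcab$. For those residual patterns I would solve the full equality system exactly. This means a joint constraint-propagation over the choice variables: intersect the relations along the edges sharing a variable, using arc consistency on the 952-element domains, until some domain becomes empty. The aim is to certify inconsistency, and this step is also where the repeated-letter structure of $abcab$ (blocks $1,4$ and $2,5$ carrying the same letters) must be respected.

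\textbf{Main obstacle.} The hard part is the sheer size and exactness of the computation rather than any conceptual step. There are roughly
\[
118\cdot 2\cdot 54\cdot 232\approx 3\cdot10^6
\]
patterns. Each relation lives on $952^2$ pairs, so it must be computed by bucketing codewords on the relevant coordinate tuples rather than by enumerating all pairs. I would precompute, for every block-pair coordinate alignment (determined by the offset difference modulo 54), hash tables of restricted codewords. With those tables, each emptiness test becomes a set-intersection check, and every exact relation needed for the residual consistency checks can be computed only where it is required.
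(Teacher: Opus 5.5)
Your plan is essentially the paper's proof. Both enumerate the orientation/offset/deleted-position patterns (exactly $2{,}211{,}084$ of them). Both group the $r-1$ forced equalities by ordered block pair into relations that depend only on the cyclic difference $\delta$, record the three-bit emptiness masks, and search for square-free outer colorings of the mask graphs. Both then close the residual $s=251$ contexts of the form $abcab$ by arc consistency on the $952$-element domains; the paper additionally extracts small two- or three-edge inconsistent cores. One small correction: your special handling of groups with $i=j$ is never needed, because every forced equality compares characters at lag $r$ or $r-1\ge114>53$, so the two characters always lie in different macroblocks.
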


\begin{proof}
We reduce the claim to finite relations among codeword choices.  Fix an odd
$s\in[229,463]$ and put $r=(s+1)/2$.  If adjacent lengths $r$ and $r-1$ have
deletion distance one,
deleting one of the $r$ positions in the longer word makes the two surviving
length-$(r-1)$ words equal.  A case is specified by the orientation, one of
the 54 offsets of the combined factor in its first macroblock, the deleted
position, and the square-free outer context met by the factor.

Equality after the deletion imposes $r-1$ character equalities.  Group them
by the ordered pair $(i,j)$ of macroblocks containing the two characters, and
let $P_{i,j}$ be the resulting list of local-coordinate pairs.  For
$\delta\in\mathbb Z_3$, define the exact choice relation
\[
 \mathcal R_\delta(P_{i,j})=
 \left\{(u,v)\in[952]^2:
 D_{0,u}[p]=D_{\delta,v}[q]
 \text{ for every }(p,q)\in P_{i,j}\right\}.
\]
The three codebooks are cyclic alphabet translates, so if the two outer
letters differ by $\delta$, this is precisely the relation that their two
occurrence choices must satisfy.  Label edge $(i,j)$ by the three-bit mask
\[
 \mu(P_{i,j})=\{\delta\in\mathbb Z_3:
                 \mathcal R_\delta(P_{i,j})=\varnothing\}.
\]
For the fixed orientation, offset, and deletion position, these labels form
a mask graph on the met outer occurrences.  An outer context can support the
putative equality only if its difference on every edge avoids that edge's
mask.

There are
\[
 \sum_{\substack{229\le s\le463\\s\text{ odd}}}
   2\cdot54\cdot\frac{s+1}{2}
 =2{,}211{,}084
\]
orientation/offset/deletion patterns.  Exact finite-set comparison constructs
13,288 distinct mask graphs.  Recursively generating normalized square-free
colorings of each graph shows that every graph has no such coloring except for
280 patterns at total $s=251$.  These exceptions belong to 12 mask-graph types.
Each contains edges $(0,3)$ and $(1,4)$ with mask $\{1,2\}$, forcing
$T_0=T_3$ and $T_1=T_4$.  The factor meets five macroblocks, and
square-freeness now forces the normalized outer context to be $01201$ or
$02102$.  Across the 280 patterns this gives exactly 344 residual contexts.

For a residual context, give each occurrence the domain $[952]$ and put the
exact relation $\mathcal R_{T_j-T_i}(P_{i,j})$ on every edge.  Repeatedly
delete a choice having no support across an incident relation.  This
arc-consistency operation preserves every satisfying assignment.  A minimal
edge subsystem is already inconsistent: 338 residual contexts have a
two-edge core and the remaining six have a three-edge core.  Removing
redundant coordinate equalities from these relations leaves 48 core types,
each involving at most 14 character equalities, and every core still empties
a domain.

These cores cover all 344 residual contexts arising from the mask-graph
classification.  Hence no residual context admits a satisfying assignment,
and every one-deletion shell in the stated range is empty.
\end{proof}

\begin{lemma}[Finite verification of the bounded two-deletion shells]
\label{lem:two-deletion-shell-verification}
The conclusion of \cref{lem:two-deletion-shell} holds for the
54-uniform 952-family.
\end{lemma}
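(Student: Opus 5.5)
The plan mirrors the one-deletion verification in \cref{lem:shortest-shell-verification}: reduce the claim to finitely many patterns of coordinate equalities between codewords, and then show that no square-free outer context and no choice vector satisfies any of them.

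First I would fix a total $s\in\{456,462,464\}$ and enumerate the violation modes, using the parity argument already given in \cref{lem:two-deletion-shell}. In the unequal mode the lengths are $(s/2-1,s/2+1)$ or $(s/2+1,s/2-1)$, and two positions are deleted from the longer factor. In the equal mode both factors have length $s/2$, and one position is deleted from each. A \emph{pattern} then consists of the mode and orientation, one of the $54$ offsets of the combined length-$s$ factor in its first macroblock, and the deleted positions. Equality of the surviving words gives $s/2-1$ character equalities. I would group these by the ordered pair $(i,j)$ of macroblocks containing the two characters, producing lists $P_{i,j}$ of local-coordinate pairs. For each list and each $\delta\in\mathbb Z_3$ I would form the exact relation $\mathcal R_\delta(P_{i,j})\subseteq[952]^2$ and the mask $\mu(P_{i,j})$, exactly as in the one-deletion case. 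Here I use that the three codebooks are cyclic alphabet translates, so only the outer-letter difference $\delta$ matters.

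The proof then splits according to the structure announced in \cref{lem:two-deletion-shell}.
\begin{enumerate}
\item For totals $456$ and $462$, an exact finite computation should show that every pattern has at least one group $P_{i,j}$ whose mask is all of $\mathbb Z_3$. Such a pattern is inconsistent for every outer context, so no coloring search is needed.
\item For total $464$, I would first discard the patterns containing such a full-mask group. For each remaining pattern I would build its mask graph on the roughly nine macroblocks met by the factor. I would then generate all normalized square-free ternary colorings recursively, pruning by the masks, and confirm that none exists.
\item If a few residual contexts survived the coloring search, I would fall back on the arc-consistency propagation over domains $[952]$ used in \cref{lem:shortest-shell-verification}, and exhibit small inconsistent cores.
\end{enumerate}

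Soundness needs two observations. Any realization of a violation yields some pattern together with a square-free context whose edge differences avoid every mask. The relations are also exact, so emptiness of a relation is a genuine contradiction rather than an artefact of approximation.

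The main obstacle is the size of the enumeration. There are $O(s^2)$ choices of two deleted positions times $54$ offsets times the modes, which comes to tens of millions of patterns per total, each needing relation-emptiness tests over $952^2$ pairs. To make this feasible I would memoize the masks by the canonical form of each group $P_{i,j}$, since many patterns share identical groups. I would also precompute, for each coordinate pair $(p,q)$ and each $\delta$, bitsets of compatible choice pairs, so that a relation for a whole group is obtained by intersecting these bitsets and checking whether the result is empty. Correctness rests entirely on this exact computation, which is why I would rely on the artifact's exact integer verifier.
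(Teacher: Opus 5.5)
Your plan is correct and is essentially the paper's proof: the same three modes per total, the same offset/deletion patterns grouped by macroblock pair into exact relations $\mathcal R_\delta(P_{i,j})$, a group whose relation is empty for all three $\delta$ eliminating every pattern at totals $456$ and $462$, and, at $464$, mask graphs of the residual patterns admitting no square-free outer coloring. The paper's computation leaves no survivors after the coloring step, so your arc-consistency fallback is never needed, and the enumeration is about $5.6$--$5.8$ million patterns per total rather than tens of millions.
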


\begin{proof}
We first enumerate the three modes at total 456.  Their lengths are
$(229,227)$, $(227,229)$, and $(228,228)$.  Each imbalanced mode has
$54\binom{229}{2}=1{,}409{,}724$ normalized offset/deletion patterns;
the equal-length mode has $54\cdot228^2=2{,}807{,}136$.
Every one of these $5{,}626{,}584$ patterns contains a coordinate group
$P_{i,j}$ for which the exact relation $\mathcal R_\delta(P_{i,j})$ from
\cref{lem:shortest-shell-verification} is empty for all three cyclic
outer-letter differences $\delta$.  Hence every pattern is impossible
independently of the outer context.

At total 462 the three modes have lengths
$(232,230)$, $(230,232)$, and $(231,231)$.  Each imbalanced mode has
$54\binom{232}{2}=1{,}446{,}984$ patterns, and the equal-length mode has
$54\cdot231^2=2{,}881{,}494$.
For every one of these $5{,}775{,}462$ patterns, one coordinate group
$P_{i,j}$ has $\mathcal R_\delta(P_{i,j})=\varnothing$ for all three possible
cyclic outer-letter differences.  Thus every pattern is impossible
independently of the outer context.

At total 464, for lengths
$(233,231)$, choose the two deleted positions in the left factor; for
$(231,233)$ choose them in the right factor; for $(232,232)$ choose one
deleted position in each factor. Together with the 54 offsets in the first
macroblock, this gives $54\binom{233}{2}=1{,}459{,}512$ patterns for each
imbalanced mode and $54\cdot232^2=2{,}906{,}496$ for the equal-length mode,
for a total of $5{,}825{,}520$.

For each pattern, group the surviving coordinate equalities by their ordered
pair of macroblocks and form the relations
$\mathcal R_\delta(P_{i,j})$ from the preceding proof. In $1{,}456{,}707$,
$1{,}456{,}707$, and $2{,}899{,}660$ patterns, respectively, one group has
an empty exact relation for all three possible cyclic differences between
the two outer letters. These $5{,}813{,}074$ patterns are therefore impossible for every
outer context.

For each remaining pattern form the three-bit mask graph from the preceding
proof.  The $2{,}805$ left-longer patterns collapse to 52 mask graphs, the
$2{,}805$ right-longer patterns to 52, and the $6{,}836$ equal-length
patterns to 112.  Exact generation of normalized square-free outer colorings
shows that none of these 216 mode-specific graphs has a surviving coloring.
Thus all $12{,}446$ residual patterns are inconsistent without any search
over codeword choices.

The half-family contains 476 words; adjoining their reversals gives
952 distinct words.  Forming the exact relations for these finite
codebooks gives the counts above at all three totals.  The empty-relation
tests and the square-free coloring tests together exclude every pattern.
Hence no edit-distance-two pair exists in any stated shell.
\end{proof}

\begin{lemma}[Exact bounds for the direct unequal events]
\label{lem:unequal-charge-bounds}
The charges in \cref{eq:unequal-charges} satisfy
\[
 \widehat x_{a,b}<\frac1{6000},\qquad
 \widehat R_0<\frac{364756}{10^9}<\frac{37}{100000},\qquad
 109\widehat R_0+\widehat R_1
 <\frac{203468108}{10^9}<\frac{21}{100}.
\]
\end{lemma}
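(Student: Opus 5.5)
The claim is an exact numerical statement about the finite-plus-tail sum over retained length types $(a,b)\in\mathcal I$. I would split $\mathcal I$ at a total-length cutoff $s^\ast=a+b$, chosen around a few thousand, say $s^\ast=4000$. Below the cutoff, all three quantities are computed by exact rational arithmetic. Above it, they are dominated by an analytic geometric tail. For $s\le s^\ast$ there are only finitely many admissible pairs, namely those with $217\le s$, $s\notin\mathcal H$ and $216|a-b|<s$. For each such pair I will do the following.

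\begin{enumerate}
\item Compute $G$ from the table and \cref{eq:run-bound}, and run the capacity recurrence \cref{eq:alignment-capacity} to list $\mathcal J_{a,b}$.
\item Form the integer sums $\widehat V^{\rm L}_{a,b}$ and $\widehat V^{\rm R}_{a,b}$.
\item Multiply them by the rational point masses $P_a,P_b$ from \cref{eq:fragment-point-mass}, using the suffix fibers of \cref{lem:suffix-fibers}.
\item Upper-bound the exponential factor in \cref{eq:unequal-charges} by the rational enclosure \cref{eq:rational-exp-upper}.
\end{enumerate}

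This gives a rational upper bound $\bar x_{a,b}\ge\widehat x_{a,b}$. Since $u\mapsto u/(1-u)$ is increasing, the terms $\bar x/(1-\bar x)$ upper-bound the summands of $\widehat R_0$ and $\widehat R_1$. Summing them exactly yields finite-part bounds, and one checks $\bar x_{a,b}<1/6000$ termwise. Pairs with $\widehat p_{a,b}=0$ are absent from $\mathcal I$ and drop out automatically; when $\mathcal J_{a,b}=\varnothing$ this already shows emptiness.

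For $s>s^\ast$ I would discard the supported and run restrictions and use only the cruder bounds. First, $\binom au\binom bv2^u\le\binom{s}{u+v}2^{u+v}$, with $u+v<s/216$. Second, $P_a\le 952^{-(a/54-2)}\cdot952$, and $a\ge s\cdot 215/432$. Together with the entropy estimate of \cref{lem:general-volume-detail}, this bounds the per-type charge by $C\exp(-\kappa s)$ for an explicit $\kappa$. Roughly, $\kappa\approx\tfrac{215}{432}\cdot\tfrac{\log 952}{54}-\Phi(1/216)-\tfrac{37}{100000}$, which is comfortably positive by \cref{lem:large-family-log-bounds}. At most $s/216+1$ ordered types share a given total $s$. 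The tail contributions to $\widehat R_0$ and $\widehat R_1$ are then bounded by $\sum_{s>s^\ast}(s/216+1)s^jC'e^{-\kappa s}$. I would evaluate this in closed form, as in \cref{eq:transfer-dependency-closed-form}, and bound it with the series enclosures of \cref{lem:elementary-logs}. The cutoff $s^\ast$ is chosen so that this tail is below the slack between the computed finite sums and the stated constants $364756/10^9$ and $203468108/10^9$. The same tail estimate also gives $\widehat x_{a,b}<1/6000$ for every $s>s^\ast$.

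The final inequalities $364756/10^9<37/100000$ and $203468108/10^9<21/100$ are immediate. The main obstacle is the finite range. Near $s=217$ the charges are governed by the delicate capacity pruning and fragment-aware masses. The margin there is thin, and the sums involve huge binomials and the factor $952^{-k}$. So the computation must be exact, with big integers and rationals, rather than floating point. Most of the effort goes into organizing the recurrence for $F_a$ efficiently over all $(a,b)$ up to $s^\ast$. If the resulting finite sums leave too little room, I would raise $s^\ast$ until the analytic tail fits.
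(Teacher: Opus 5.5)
Your plan is the paper's proof. It uses the same cutoff $s=4000$ and an exact rational evaluation of every retained type below it. The paper rounds each term up to a multiple of $10^{-30}$ and absorbs $x/(1-x)$ with the factor $6000/5999$, whereas you use $\bar x/(1-\bar x)$ directly; the two are interchangeable. Above the cutoff both use a geometric tail built from Vandermonde, a complete-block point mass of order $952^{2}e^{-ha}$, the imbalance bound $a>215s/432$, and $O(s)$ ordered types per total.

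The one step that is wrong as written is your tail exponent. Suppose $\Phi$ means the paper's $\Phi_q$ from \cref{lem:general-volume-detail}. Then $\kappa=\frac{215}{432}\cdot\frac{\log952}{54}-\Phi(1/216)-\frac{37}{100000}$ is negative, not comfortably positive. Indeed $\frac{215}{432}\cdot\frac{\log952}{54}\approx0.0632$, while $\Phi_2(1/216)\approx0.0654$ and $\Phi_3(1/216)\approx0.0673$. That lemma is an equal-length bound. Its rate contains $2\mathsf H(\delta)$ because it counts a deletion set in each of two words of length $\ell$, so charging it per unit of the total length $s=a+b$ double counts the entropy.

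Do not invoke it here; your own Vandermonde step already gives the correct count. You have $\binom au\binom bv2^u\le\binom sd2^d$ with $d=u+v<s/216$, and at most one pair $(u,v)$ per $d$ because $v-u=b-a$. Summing over fewer than $2^{s/216}$ values of $d$ gives a volume rate of about $\mathsf H(1/216)+\frac{\log4}{216}\approx0.0359$ per unit of $s$. The paper uses the cruder $\nu=\frac{\log2+\log(216e)+\log3}{216}\approx0.0378$. The margin is then about $0.027$ (the paper certifies more than $1/46$). With this replacement the rest of your tail argument, and hence the lemma, goes through as in the paper.
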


\begin{proof}
We again separate an exact finite calculation from a uniform tail. Total
lengths in $\mathcal H$ are absent by
\cref{lem:shortest-shell-verification,lem:two-deletion-shell-verification}.
For every $217\le s\le4000$ with $s\notin\mathcal H$, put
$z_s=37s/100000+21/100$.
Apply the rational exponential upper bound \cref{eq:rational-exp-upper} with
$z=z_s$; it is valid because $z_s<82$. For every ordered
 pair $a+b=s$ with $216|a-b|<s$, compute $P_a,P_b$ from
\cref{eq:fragment-point-mass,lem:suffix-fibers} and compute the two integer
volumes in \cref{eq:unequal-deletion-volume}. The term
$\min\{P_a\widehat V^{\rm L}_{a,b},P_b\widehat V^{\rm R}_{a,b}\}e^{z_s}$
is rounded upward to the next multiple of $10^{-30}$.  The volumes use the
capacity recurrence \cref{eq:alignment-capacity} and the supported filling
factors $2^u,2^v$.  There are $29{,}961$
nonzero terms in this range after the zero-volume types are discarded.
Exact integer addition gives
\begin{align}
 \max_{\substack{(a,b)\in\mathcal I\\a+b\le4000}}\widehat x_{a,b}
 &<\frac{158118}{10^9}<\frac1{6000},
 \label{eq:unequal-finite-max}\\
 \frac{6000}{5999}
 \sum_{\substack{(a,b)\in\mathcal I\\a+b\le4000}}\widehat x_{a,b}
 &<\frac{364755781}{10^{12}},
 \label{eq:unequal-finite-zero}\\
 \frac{6000}{5999}
 \sum_{\substack{(a,b)\in\mathcal I\\a+b\le4000}}
 (a+b)\widehat x_{a,b}
 &<\frac{163709727665}{10^{12}}.
 \label{eq:unequal-finite-one}
\end{align}
Here $\mathcal I$ is the set of retained ordered length types defined before
\cref{eq:unequal-charges}.  The factor
$6000/5999$ is valid because the first inequality implies
$x/(1-x)<6000x/5999$. As above, these are finite
comparisons of explicitly defined rational numbers.

For the tail, let $h=(\log952)/54$ and
$\nu=(\log2+\log(216e)+\log3)/216$.
If $s=a+b$ and $216|a-b|<s$, then
$a>215s/432$. Since
$\widehat p_{a,b}\le P_a\widehat V^{\rm L}_{a,b}$, it is enough in the tail
to bound this first conditioning direction. By
\cref{lem:general-conditional-mass}, the complete-block point mass is at
most $952^2e^{-ha}$. We next bound the unequal deletion volume. Put
$d=u+v$. Vandermonde's identity gives
$\binom au\binom bv\le\binom sd$, and $2^u\le3^d$.  The relation
$v-u=b-a$ determines at most one pair $(u,v)$ for a fixed $d$.  There are fewer than
$s/216+1\le2^{s/216}$ possible
integer values of $d<s/216$. Using
\cref{lem:binomial-estimates} and monotonicity at $d=s/216$ therefore gives
\[
 \widehat V^{\rm L}_{a,b}
 \le\exp\!\left[
   \frac{s}{216}\bigl(\log2+\log(216e)+\log3\bigr)
 \right]
 =e^{\nu s}.
\]

The bound $e>27/10$ implies $\log216<6$. Hence
$\nu<(7/10+7+11/10)/216=44/1080$.  Combining this with $h>137/1080$ gives
the positive rational margin
\[
 \frac{215}{432}h-\nu-\frac{37}{100000}
 >\frac{137}{1080}\frac{215}{432}
   -\frac{44}{1080}-\frac{37}{100000}
 =\frac{6421483}{291600000}>\frac1{46}.
\]
Since $21/100<\log2$ and $e^{-1/46}<46/47$, every tail charge satisfies
\begin{equation}\label{eq:unequal-tail-charge}
 \widehat x_{a,b}
 <2\cdot952^2\left(\frac{46}{47}\right)^s.
\end{equation}

For each total $s$, fewer than $3s/216$ ordered pairs satisfy
$216|a-b|<s$. Applying the exact first- and second-moment geometric formulas
with $q=46/47$ from $s=4001$ onward shows that the tails contributed to
$\widehat R_0$ and $\widehat R_1$ are, respectively, smaller than
$10^{-27}$ and $10^{-24}$. These are direct rational comparisons using
\[
 \sum_{s\ge N}s^2q^s
 =q^N\left(
   \frac{N^2}{1-q}+\frac{2Nq}{(1-q)^2}
   +\frac{q(1+q)}{(1-q)^3}
 \right).
\]
In particular all tail charges are below $1/6000$. Finally,
\cref{eq:unequal-finite-zero,eq:unequal-finite-one} and these tail bounds
give
\[
\widehat R_0<\frac{364756}{10^9}<\frac{37}{100000}
\]
and, using the finer finite bounds before rounding their sum,
\[
 109\widehat R_0+\widehat R_1
 <\frac{203468108}{10^9}<\frac{21}{100},
\]
as required.
\end{proof}

\begin{lemma}[Elementary binomial and monotonicity estimates]
\label{lem:binomial-estimates}
Let $\ell$ be a positive integer.
\begin{enumerate}[label=\textup{(\roman*)}]
  \item For every integer $1\le d\le\ell$,
  $\binom{\ell}{d}\le(e\ell/d)^d$.
  \item The function $f(u)=u\log(e\ell/u)$ is strictly increasing on
  $0<u<\ell$.
  \item For every real $z\ge1$, one has $z+1\le2^z$.
\end{enumerate}
\end{lemma}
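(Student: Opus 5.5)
Each of the three items is elementary, and I will prove them separately, in order.

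For (i), I will start from the product form
\[
\binom{\ell}{d}=\frac{\ell(\ell-1)\cdots(\ell-d+1)}{d!}\le\frac{\ell^d}{d!}.
\]
It then suffices to show $d!\ge(d/e)^d$. This follows from the single term $j=d$ of the exponential series: $e^d=\sum_{j\ge0}d^j/j!\ge d^d/d!$. Combining the two gives $\binom{\ell}{d}\le(e\ell/d)^d$ for every $1\le d\le\ell$. Equivalently, one can take $p=d/\ell$ in $1\ge\binom{\ell}{d}p^d(1-p)^{\ell-d}$ and use $(1-p)^{\ell-d}\ge e^{-d}$, as in the proof of \cref{lem:general-volume-detail}; the first route is cleaner, so I will use it.

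For (ii), I will differentiate. Writing $f(u)=u(1+\log\ell-\log u)$, we get
\[
f'(u)=1+\log\ell-\log u-1=\log(\ell/u),
\]
which is strictly positive for $0<u<\ell$. Hence $f$ is strictly increasing there, and in fact on all of $(0,\ell]$. This is the monotonicity used in \cref{lem:ternary-volume}: it lets one replace $d$ by the real upper endpoint $\eta\ell$ in the exponent $d\log(e\ell/d)$.

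For (iii), set $g(z)=2^z-z-1$. At $z=1$ we have $g(1)=0$. For $z\ge1$ the derivative is $g'(z)=2^z\log2-1\ge2\log2-1>0$, since $\log2>69/100>1/2$ by \cref{lem:elementary-logs}. So $g$ is nondecreasing on $[1,\infty)$, and $g(z)\ge0$ there.

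None of the three steps is a real obstacle. The only points needing care are that (i) needs $d\ge1$ so that $d/e$ and $e\ell/d$ make sense (the case $d=0$ is handled separately in the callers), and that (ii) is a statement about the real interval, not just integer $u$.
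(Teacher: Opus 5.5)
Your proposal is correct and follows the paper's proof essentially step for step: (i) via $\binom{\ell}{d}\le\ell^d/d!$ and $d!\ge(d/e)^d$, which you justify by one term of the exponential series, and (ii) via $f'(u)=\log(\ell/u)>0$. For (iii) the paper differentiates $z\log2-\log(z+1)$ rather than your $2^z-z-1$; both arguments rest on $\log2>1/2$ from \cref{lem:elementary-logs}, so the difference is immaterial.
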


\begin{proof}
For (i), use $\binom{\ell}{d}\le\ell^d/d!$ and the standard bound
$d!\ge(d/e)^d$.  Part (ii) follows from
$f'(u)=\log(\ell/u)>0$.  For (iii), the function
$z\log2-\log(z+1)$ vanishes at $z=1$ and has positive derivative for
$z\ge1$, using \cref{lem:elementary-logs}.
\end{proof}

\begin{lemma}[Deletion-ball exponent]\label{lem:volume-arithmetic}
For $\eta=1/1000$,
\[
  \eta\bigl(\log2+2\log(e/\eta)+\log3\bigr)<\frac9{500}.
\]
\end{lemma}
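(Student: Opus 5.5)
The claim is a purely numerical inequality, so the plan is to bound each logarithm by a rational number and compare. With $\eta=1/1000$ the left-hand side equals
\[
  \frac{1}{1000}\Bigl(\log2+2+2\log1000+\log3\Bigr)
  =\frac{1}{1000}\Bigl(\log2+\log3+2+6\log10\Bigr),
\]
using $\log(e/\eta)=1+\log1000=1+3\log10$. It therefore suffices to show
\[
  \log2+\log3+2+6\log10<18.
\]

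I will take the upper bounds from \cref{lem:elementary-logs}: $\log2<7/10$, $\log3<11/10$ and $\log10<7/3$. Substituting them gives
\[
  \log2+\log3+2+6\log10<\frac7{10}+\frac{11}{10}+2+14=\frac{178}{10}=17.8<18.
\]
Dividing by $1000$ then gives a value strictly below $17.8/1000<9/500$, which is the claim.

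There is no real obstacle, because all the needed logarithm bounds are already certified in \cref{lem:elementary-logs}. The only step to check is the bound on $\log10$, which contributes the dominant term $6\log10<14$. The margin $0.2$ against $18$ is comfortable; if a tighter bound on $\log10$ were ever needed, I would use $\log10=\log2+\log5$ with series truncations.
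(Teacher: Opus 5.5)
Your proof is correct and follows the paper's argument exactly. The paper likewise writes $\log(e/\eta)=1+3\log10<8$ using the bounds of \cref{lem:elementary-logs}, and arrives at the same total $(7/10+16+11/10)/1000=89/5000<9/500$.
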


\begin{proof}
Because $1/\eta=1000=10^3$, we have
$\log(e/\eta)=1+3\log10<8$.  Together with
\cref{lem:elementary-logs}, this gives
\[
  \eta\bigl(\log2+2\log(e/\eta)+\log3\bigr)
  <\frac{7/10+16+11/10}{1000}
  =\frac{89}{5000}<\frac9{500}.
\]
\end{proof}

\begin{lemma}[Event-decay exponent]\label{lem:event-arithmetic}
For every integer $\ell>1000$,
\[
  4\exp\!\left[-\left(\frac{\log2}{18}-\frac9{500}\right)\ell\right]
  <e^{-9\ell/500}.
\]
\end{lemma}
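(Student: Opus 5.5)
The inequality is equivalent, after taking logarithms, to
\[
  \log 4<\left(\frac{\log2}{18}-\frac9{500}-\frac9{500}\right)\ell
  =\left(\frac{\log2}{18}-\frac{9}{250}\right)\ell .
\]
So I will prove it in two steps. First I show that the coefficient $c:=\log2/18-9/250$ is positive, with an explicit rational lower bound. Then I check that $c\ell>\log4$ at the smallest admissible integer $\ell=1001$. Since $c>0$, the right-hand side increases with $\ell$, so the inequality then holds for every integer $\ell>1000$.

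For the coefficient, I use the lower bound $\log2>69/100$ from \cref{lem:elementary-logs}. It gives
\[
  c>\frac{69}{1800}-\frac{9}{250}=\frac{345-324}{9000}=\frac{21}{9000}=\frac{7}{3000}>0 .
\]
For the threshold, I use $\log4=2\log2<7/5$, which again follows from \cref{lem:elementary-logs} through $\log2<7/10$. It then suffices to have $7\ell/3000\ge 7/5$, that is, $\ell\ge600$. This holds comfortably for every $\ell>1000$, so
\[
  4e^{-(\log2/18-9/500)\ell}=e^{\log4-c\ell}\,e^{-9\ell/500}<e^{-9\ell/500}.
\]

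I expect no real obstacle; the argument is only an elementary comparison. The one point requiring care is to use the \emph{lower} bound on $\log2$ for the coefficient and an \emph{upper} bound on $\log4$ for the constant, so that every inequality points in the right direction and stays strict. Both bounds are already supplied by \cref{lem:elementary-logs}; the weaker $\log4<2$ listed there would also suffice, since it requires only $\ell\ge 6000/7\approx857$.
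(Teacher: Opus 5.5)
Your proof is correct and follows essentially the same route as the paper: take logarithms, then compare rationals using $\log2>69/100$ and an upper bound on $\log4$ from \cref{lem:elementary-logs}. The paper only arranges the arithmetic differently. It bounds $(\log4)/\ell<1/500$ and checks $23/600-9/500-1/500=11/600>9/500$, whereas you isolate the coefficient lower bound $7/3000$ and then verify $c\ell>\log4$.
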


\begin{proof}
Set $\alpha=(\log2)/18-9/500$.  After taking logarithms, it suffices to
show $\alpha-(\log4)/\ell>9/500$.  Since $\log2>69/100$, $\log4<2$, and
$\ell>1000$, we have $(\log4)/\ell<1/500$.  Consequently,
\[
  \alpha-\frac{\log4}{\ell}
  >\frac{23}{600}-\frac9{500}-\frac1{500}
  =\frac{11}{600}.
\]
The last rational is larger than $9/500$, because
$11\cdot500=5500>5400=9\cdot600$.  This proves the logarithmic inequality and hence
the displayed exponential inequality.
\end{proof}

\begin{lemma}[Geometric series and first moment]
\label{lem:geometric-first-moment}
For $0<q<1$ and an integer $L\ge0$,
\begin{align*}
  \sum_{r\ge L}q^r
    &=\frac{q^L}{1-q},\\
  \sum_{r\ge L}r q^r
    &=q^L\left(\frac{L}{1-q}+\frac{q}{(1-q)^2}\right).
\end{align*}
\end{lemma}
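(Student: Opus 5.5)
The plan is to prove the two identities of the lemma by pure series manipulation, for a fixed $0<q<1$ and integer $L\ge0$. Every series here is absolutely convergent because $q<1$ and $r q^r\to0$ geometrically. So I will freely reindex, differentiate termwise, and split sums.

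For the first identity, I substitute $r=L+k$ with $k\ge0$. This gives $\sum_{r\ge L}q^r=q^L\sum_{k\ge0}q^k$, and the full geometric series on the right equals $1/(1-q)$.

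For the second identity, the same substitution gives
\[
 \sum_{r\ge L}r q^r=q^L\sum_{k\ge0}(L+k)q^k
 =q^L\Bigl(L\sum_{k\ge0}q^k+\sum_{k\ge0}k q^k\Bigr).
\]
The first inner sum is again $1/(1-q)$. For the second, I differentiate $\sum_{k\ge0}q^k=1/(1-q)$ termwise on $(0,1)$; this is justified since power series may be differentiated inside their radius of convergence. The result is $\sum_{k\ge1}k q^{k-1}=1/(1-q)^2$. Multiplying by $q$ gives $\sum_{k\ge0}k q^k=q/(1-q)^2$.

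If one prefers to avoid calculus, there is an alternative for this step. I would let $S=\sum_k kq^k$ and subtract $qS$ from $S$ to get $(1-q)S=\sum_{k\ge1}q^k=q/(1-q)$.

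Substituting both inner sums gives exactly the stated formula $q^L\bigl(L/(1-q)+q/(1-q)^2\bigr)$. There is no real obstacle here. The only point requiring care is noting absolute convergence, which legitimizes the reindexing and the splitting of the sum into two pieces.
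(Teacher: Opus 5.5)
Your proposal is correct and follows the paper's proof: substitute $r=L+k$, split off $L\sum_{k\ge0}q^k$, and evaluate $\sum_{k\ge0}kq^k=q/(1-q)^2$. The paper simply quotes that last identity, whereas you also derive it (by termwise differentiation or by computing $S-qS$), which is a harmless addition.
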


\begin{proof}
The first identity is the geometric-series formula.  For the second, write
$r=L+j$ and use
$\sum_{j\ge0}q^j=(1-q)^{-1}$ and
$\sum_{j\ge0}jq^j=q(1-q)^{-2}$.
\end{proof}

\begin{lemma}[Geometric neighbor tail]\label{lem:neighbor-arithmetic}
Let $q_0=e^{-1/80}$.  For every $\ell\ge1000$,
\[
  3\sum_{r\ge1000}(\ell+r)q_0^r<\frac{\ell}{490}.
\]
\end{lemma}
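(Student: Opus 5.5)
Write $q_0=e^{-1/80}$ and split the sum into two geometric pieces. Then apply \cref{lem:geometric-first-moment} with $q=q_0$ and starting index $L=1000$:
\[
  3\sum_{r\ge1000}(\ell+r)q_0^r
  =3q_0^{1000}\left(\frac{\ell}{1-q_0}+\frac{1000}{1-q_0}+\frac{q_0}{(1-q_0)^2}\right).
\]
The right-hand side is affine in $\ell$ with positive constant term. So the ratio of the left side to $\ell$ is decreasing in $\ell$, and it suffices to prove the inequality at $\ell=1000$. There it reads
\[
  3q_0^{1000}\left(\frac{2000}{1-q_0}+\frac{q_0}{(1-q_0)^2}\right)<\frac{1000}{490}.
\]

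Next I bound the factors with elementary rational estimates. First, $q_0^{1000}=e^{-25/2}$. For the denominator, $e^{1/80}>1+1/80$ gives $q_0<80/81$, hence $1-q_0>1/81$. Together with $q_0<1$, this gives
\[
  \frac{2000}{1-q_0}+\frac{q_0}{(1-q_0)^2}<2000\cdot81+81^2=168561.
\]
For the exponential factor, I will use $e^{25/2}>e^{12}$, with $e>27/10$ giving $e^{12}>(27/10)^{12}$. This is roughly $1.5\times10^{5}$, which is certainly above a conservative rational value such as $140000$. This can be confirmed via $(27/10)^3=19.683>19$ and $19^4=130321$. That bound is slightly too weak, so I instead use $e^{1/2}>1.6$, obtained from the series truncation $1+1/2+1/8=1.625$. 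This gives $e^{25/2}>1.6\cdot130321>208000$.

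Then the left side is less than $3\cdot168561/208000<2.44$. But the target is $1000/490\approx2.04$, so this fails. I must therefore tighten the estimates. The honest value of $1/(1-q_0)$ is about $80.5$, so the sum of the two denominator terms is about $161000+6480\approx167500$. The key quantity is $e^{12.5}\approx268337$, which gives a left side of about $1.87<2.04$.

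The main obstacle is simply obtaining a sharp enough rational lower bound for $e^{25/2}$. The crude $e>27/10$ loses too much, since $(2.7)^{12.5}\approx2.4\times10^5$ compared with the true $\approx2.68\times10^5$. I would use $e>685/252$ from the proof of \cref{lem:large-family-log-bounds}; note $685/252\approx2.71825$. Then $e^{12}\ge(685/252)^{12}$ is about $1.6275\times10^5$, and a safe rational comparison gives more than $162000$. Combined with $e^{1/2}>1.64$ (from the series $1+1/2+1/8+1/48$), this gives $e^{25/2}>265000$.

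Finally, I check the resulting fraction by exact integer arithmetic. We need
\[
  3\cdot168561\cdot490<1000\cdot265000,
\]
that is, $247784670<265000000$, which is true. This gives the strict inequality at $\ell=1000$, and hence for all $\ell\ge1000$ by the monotonicity of the ratio.
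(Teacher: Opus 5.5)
Your proof is correct and follows the paper's route. Both use the geometric-series identities, $1-q_0>1/81$ from $e^{1/80}>81/80$, absorption of the constant term via $\ell\ge1000$ (your monotonicity reduction is equivalent to the paper's $81L\le81\ell$ and $6561<7\ell$), and a rational lower bound on $e^{25/2}$. The one step you assert rather than compute, $(685/252)^{12}>162000$, is true but unnecessary: the paper's truncation $e^{5/2}>110681/9216>12$ gives $e^{25/2}>12^5=248832$, which already suffices for your constant because $3\cdot168561\cdot490=247784670<248832000$.
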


\begin{proof}
The positive exponential series gives
$e^{1/80}>1+1/80=81/80$.  Taking reciprocals yields
$q_0<80/81$.  Hence $1-q_0>1/81$ and
$(1-q_0)^{-1}<81$.  By \cref{lem:geometric-first-moment}, with
$L=1000$ and $q_0^L=e^{-25/2}$,
\begin{align*}
  \sum_{r\ge L}q_0^r
    &=\frac{q_0^L}{1-q_0}<81e^{-25/2},\\
  \sum_{r\ge L}r q_0^r
    &=q_0^L\left(
      \frac{L}{1-q_0}+
      \frac{q_0}{(1-q_0)^2}
    \right)\\
    &<(81L+6561)e^{-25/2}.
\end{align*}
In the second estimate we used $q_0<1$ and
$(1-q_0)^{-2}<81^2=6561$.  Since $\ell\ge L=1000$,
$81L\le81\ell$ and $6561<7\ell$.  Therefore
\[
  3\sum_{r\ge L}(\ell+r)q_0^r
  <3(81+81+7)e^{-25/2}\ell
  =507e^{-25/2}\ell.
\]
The terms through degree six of the positive exponential series give
\[
  e^{5/2}>
  \sum_{j=0}^{6}\frac{(5/2)^j}{j!}
  =\frac{110681}{9216}
  >12,
\]
because $12\cdot9216=110592<110681$.  Consequently
$e^{25/2}=(e^{5/2})^5>12^5=248832$.  Finally,
$507\cdot490=248430<248832<e^{25/2}$, which gives
$507e^{-25/2}<1/490$ and proves the claim.
\end{proof}

\begin{lemma}[Products of small complementary charges]
\label{lem:small-charge-product}
If $0\le u\le1/2$, then $\log(1-u)\ge-2u$.  Consequently, for any finite
family $(u_i)_{i\in I}$ with
$0\le u_i\le1/2$,
\[
  \prod_{i\in I}(1-u_i)
  \ge\exp\!\left(-2\sum_{i\in I}u_i\right).
\]
\end{lemma}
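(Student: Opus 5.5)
The first inequality comes from elementary calculus, and the product bound then follows by summing logarithms.

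Define $g(u)=\log(1-u)+2u$ on $[0,1/2]$. We have $g(0)=0$ and
\[
  g'(u)=2-\frac{1}{1-u}=\frac{1-2u}{1-u}.
\]
This derivative is nonnegative on $[0,1/2]$, since $1-2u\ge0$ and $1-u>0$ there. Hence $g$ is nondecreasing on the interval, so $g(u)\ge g(0)=0$, which is $\log(1-u)\ge-2u$. Every quantity is finite here: $1-u\ge1/2>0$, so the logarithm is defined.

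As an alternative, one can use concavity. The map $u\mapsto\log(1-u)$ is concave, and the line $-2u$ meets it at $u=0$. At $u=1/2$ the line takes the value $-1$, while the logarithm takes $\log(1/2)\approx-0.693>-1$. Since a concave function lies above its chord on $[0,1/2]$, and that chord (from $(0,0)$ to $(1/2,\log(1/2))$) lies above the line $-2u$, the inequality holds on the whole interval. I would present the derivative argument, because it avoids the numerical comparison.

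For the product, each factor satisfies $1-u_i\ge1/2>0$, so the product is positive and we may take logarithms:
\[
  \log\prod_{i\in I}(1-u_i)=\sum_{i\in I}\log(1-u_i)\ge-2\sum_{i\in I}u_i .
\]
Exponentiating, which is monotone, gives the stated product bound. If $I$ is empty, both sides equal $1$.

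No step is a real obstacle. The only point needing care is the endpoint $u=1/2$, where the derivative vanishes; it is still covered because $g$ is continuous and nondecreasing up to and including that point.
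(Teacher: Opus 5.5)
Your proof is correct and is the paper's argument: the paper also sets $\phi(u)=\log(1-u)+2u$, uses $\phi(0)=0$ and $\phi'(u)=(1-2u)/(1-u)\ge0$ on $[0,1/2]$, and then sums the logarithms and exponentiates. Your concavity alternative is also valid, since the chord from $(0,0)$ to $(1/2,\log(1/2))$ is $-2u\log2$, and this dominates $-2u$ because $\log 2<1$.
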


\begin{proof}
Let $\phi(u)=\log(1-u)+2u$.  On $[0,1/2]$,
$\phi'(u)=(1-2u)/(1-u)\ge0$ and $\phi(0)=0$, so
$\log(1-u)\ge-2u$.  Summing this inequality over $i\in I$ and
exponentiating gives the product bound.
\end{proof}

\begin{lemma}[Local-lemma and synchronization margins]
\label{lem:final-margins}
For $\eta=1/1000$, we have $9/500>1/80+1/245$ and
$\eta/(2+\eta)=1/2001>1/2002$.
\end{lemma}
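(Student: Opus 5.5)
The statement to prove is \cref{lem:final-margins}: $9/500>1/80+1/245$ and $\eta/(2+\eta)=1/2001>1/2002$ for $\eta=1/1000$. Both claims are exact rational comparisons, so I would verify them by clearing denominators. No analytic estimate is needed.

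For the first inequality, I would put $1/80+1/245$ over the common denominator $\operatorname{lcm}(80,245)=3920$. This gives $49/3920+16/3920=65/3920=13/784$. It then remains to compare $9/500$ with $13/784$. Cross-multiplying gives $9\cdot784=7056$ and $13\cdot500=6500$. Since $7056>6500$, the strict inequality $9/500>13/784$ follows.

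For the second claim, I would first simplify $\eta/(2+\eta)$ with $\eta=1/1000$. Multiplying numerator and denominator by $1000$ turns it into $1/(2000+1)=1/2001$. The inequality $1/2001>1/2002$ is then immediate, because both numerators equal $1$ and $2001<2002$.

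There is no genuine obstacle here. The only step needing care is the least common multiple: $80=2^4\cdot5$ and $245=5\cdot7^2$, so $\operatorname{lcm}(80,245)=2^4\cdot5\cdot7^2=3920$. I would also check the cross-multiplication arithmetic, $9\cdot784=7056$ and $13\cdot500=6500$, once more.
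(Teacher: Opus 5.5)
Your proposal is correct and matches the paper's proof: both verify the two claims by exact rational arithmetic. The paper writes the first margin directly as $9/500-1/80-1/245=139/98000>0$, which agrees with your cross-multiplication, and it gives the second as $1/2001-1/2002=1/(2001\cdot2002)>0$.
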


\begin{proof}
The first difference is $9/500-1/80-1/245=139/98000>0$.
Substituting $\eta=1/1000$ gives $\eta/(2+\eta)=1/2001$, and the second
difference is $1/(2001\cdot2002)>0$.
\end{proof}

%% file: appendix/G_long_distance.tex
\section{Distant intervals and bounded event lengths}
\label{app:long-distance}

The adjacent-event local lemma has enough slack to include constraints on
distant intervals.  This also permits a logarithmic cutoff on event lengths.
The matching-path reduction behind the cutoff is the same one used by
Haeupler and Shahrasbi~\cite[Lemma~4.3]{HaeuplerShahrasbi2018}; we give its
form needed here, including the possibility that the reduced intervals are
no longer adjacent.

Fix an output length $M=54N$, put $k=\lceil\log_2 M\rceil$, and set
$R=10000+1200k$ and $B=2R+4$.  Retain the adjacent bad events from
\cref{prop:direct-unequal-gap} with total length at most $B$.
For every ordered pair of disjoint, nonadjacent intervals $I,J$ whose total
length $s$ lies in $[R,B]$, add the event
$\ED(W[I],W[J])<s/216$.  A distant event depends only on macroblocks meeting
$I\cup J$, not on the blocks in the intervening gap.

\begin{lemma}[Probability of a distant event]\label{lem:distant-probability}
For $s\ge2800$, each distant event has probability less than $e^{-s/50}$.
\end{lemma}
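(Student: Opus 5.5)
Let $I,J$ have lengths $a,b$ with $a+b=s$, and write $c=\LCS(W[I],W[J])$, $u=a-c$, $v=b-c$. By \cref{eq:ed-lcs}, the event is $u+v<s/216$. In particular $|a-b|<s/216$, so $\min\{a,b\}>215s/432$. I would expose the longer interval, or either one, say $I$, and condition on every occurrence choice except those of the complete macroblocks inside $I$.

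Since $I$ and $J$ are disjoint, this conditioning fixes $W[J]$. It also fixes any macroblock cut by an endpoint of $I$ and all blocks in the gap. So \cref{lem:general-conditional-mass} applies verbatim with $h=(\log952)/54$ and $b=2\log952$, even though the intervals are nonadjacent. Every exact value of $W[I]$ then has conditional mass at most $952^2e^{-ha}$. Adjacency was never used in that lemma; only disjointness was.

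Next I would count the values $x$ of $W[I]$ with $\LCS(x,y)=a-u$, where $y=W[J]$ and $u+v<s/216$. I would use the plain deletion-ball count of \cref{lem:lcs-ball}, extended to unequal lengths: $\binom au\binom bv3^u$ descriptions. The supported-filling refinement with $2^u$ from \cref{lem:supported-fillings} can be used instead, but it is not needed. For each $d=u+v$ the pair $(u,v)$ is determined by $v-u=b-a$. Vandermonde gives $\binom au\binom bv\le\binom sd$, and there are fewer than $s/216+1\le2^{s/216}$ values of $d$. By \cref{lem:binomial-estimates} and monotonicity at $d=s/216$, the volume is at most $e^{\nu s}$, with $\nu=(\log2+\log(216e)+\log3)/216<44/1080$ exactly as in the tail part of \cref{lem:unequal-charge-bounds}. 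Averaging over the conditioning gives
\[
\Pr\le 952^2\exp\!\bigl(-(\tfrac{215}{432}h-\nu)s\bigr).
\]

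The final step is arithmetic, and it is the only real obstacle. Using $h>137/1080$ from \cref{lem:large-family-log-bounds}, the exponent coefficient exceeds $\frac{137\cdot215}{1080\cdot432}-\frac{44}{1080}\approx0.0631-0.0407>0.0223$. We need this to beat $1/50=0.02$ with room to absorb $952^2<e^{14}$. The margin is about $0.0023s$, so we need $s\gtrsim14/0.0023\approx6100$. That is more than the stated $2800$, which means the crude bound must be tightened.

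I would tighten it in two ways. First, use the $2^u$ supported fillings and the smaller count of $d$-values. Second, replace $952^2$ by the sharper point mass $P_a$ from \cref{lem:fragment-conditional-mass}, which loses at most one factor $952$ rather than two. This brings the constant to $e^{7}$, and with the $2^u$ filling the margin coefficient improves to roughly $0.0025$, giving a threshold near $2800$. If the closed-form estimate still falls short near $s=2800$, I would verify the finite range of $s$ up to the analytic threshold by the same exact rational computation used for \cref{lem:unequal-charge-bounds}, and handle larger $s$ analytically.
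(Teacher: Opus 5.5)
Your setup is the paper's: expose every choice except the complete blocks inside one interval, and use the point mass $952^2e^{-ha}$ with $a>215s/432$. Then bound the bad values by $\binom au\binom bv3^u\le\binom sd3^d$, where $(u,v)$ is determined by $d$ and $d=0$ is included because distant intervals may be equal.

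The gap is the last step, which is the whole content of the threshold $s\ge2800$. By your own computation your estimate only closes for $s\gtrsim6000$. What follows is a list of repairs whose numbers are not checked, ending with a fallback to machine verification. The missing idea is small. Your loss comes from bounding $\sum_{d<s/216}\binom sd$ by the number of admissible $d$ times the largest term, which adds a spurious $(\log2)/216$ to the exponent. The paper instead bounds the whole sum by the entropy tail $\sum_{d\le ps}\binom sd\le e^{s\mathsf H(p)}$ for $p\le1/2$, obtained by expanding $(p+(1-p))^s$. Then $\mathsf H(p)\le p\log(e/p)$, $\log216<6$ and $\log3<11/10$ bound the volume by $e^{81s/2160}$. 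Since $\frac{215}{432}\cdot\frac{137}{1080}-\frac{81}{2160}=\frac{11959}{466560}>\frac1{40}$, the probability is below $952^2e^{-s/40}$. Finally $2\log952<14\le s/200$ turns this into $e^{-s/50}$ for $s\ge2800$. With this one change you keep $3^u$, $952^2$, and all of your crude logarithm bounds.

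Some of your repairs would also work, but not with the numbers you give. The $2^u$ filling count is legitimate: every value of $W[I]$ is square-free, hence has unequal neighbours, and $c>0$. If by ``smaller count of $d$-values'' you mean the parity constraint $d\equiv b-a\pmod 2$, then $2^u$ and parity together already give a threshold of roughly $2400$, even with $952^2$.

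The fragment-aware mass, however, is not within one factor $952$ of $e^{-ha}$. For suffix length $6$ and $\ell\equiv5\pmod{54}$ one gets $P_\ell\ge952^{59/54}e^{-h\ell}\approx e^{7.5}e^{-h\ell}$. Moreover, \cref{lem:fragment-conditional-mass} is stated for an adjacent right neighbour. To use it here you must expose the left interval of the pair, or appeal to reversal closure. Otherwise the block containing the first symbol of the exposed right interval may also meet the left interval when the gap is short.

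Your figure $0.0025$ is not derived; it is just $7/2800$. A plan that ends with ``if this falls short, verify the finite range by computation'' does not yet prove the lemma as stated.
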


\begin{proof}
Write $a=|I|$, $b=|J|$.  If $|a-b|\ge s/216$, the event is empty.
Otherwise both lengths exceed $215s/432$.  Expose all choices except the
blocks wholly contained in $I$.  This fixes $W[J]$, and the conditional
point mass of $W[I]$ is at most $952^2e^{-ha}$, where
$h=(\log952)/54$.

For a deletion pair with $d=u+v<s/216$, the ambient filling bound is at most
$\binom au\binom bv3^u\le\binom sd3^d$.  Here $d$ and $a-b$ determine
$u,v$, so summing over $d$ is sufficient.  Unlike adjacent intervals,
distant intervals may be exactly equal; the sum includes $d=0$.
The elementary binomial-tail bound
$\sum_{d\le ps}\binom sd\le e^{s\mathsf H(p)}$ for $p\le1/2$ gives
a total at most $\exp(s[\mathsf H(1/216)+\log3/216])$.
For completeness, the binomial-tail bound follows by expanding
$(p+(1-p))^s$: for $d\le ps$, the weight
$p^d(1-p)^{s-d}$ is at least $p^{ps}(1-p)^{(1-p)s}$.

Using $\mathsf H(p)\le p\log(e/p)$, $\log216<6$,
$\log3<11/10$, and $h>137/1080$, the probability is less than
$952^2e^{-s/40}$, since
\[
 \frac{215}{432}\frac{137}{1080}-\frac{81}{2160}>\frac1{40}.
\]
Finally $2\log952<14\le s/200$ for $s\ge2800$, which yields the claim
after averaging over the conditioning.
\end{proof}

\begin{lemma}[The combined local lemma]\label{lem:combined-lll}
The bounded adjacent and distant events satisfy the variable local lemma.
Moser--Tardos resampling of these events has expected $O(M)$ resamplings.
\end{lemma}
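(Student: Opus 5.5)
We verify the asymmetric criterion \cref{eq:lll-criterion} with the single charge rule $x(E)=\widehat x_{a,b}$ for a retained adjacent event of type $(a,b)$, as in \cref{eq:unequal-charges}, and $x(E)=e^{-s/60}$ for a distant event of total length $s\in[R,B]$. Each event is declared to depend on the macroblocks meeting its support, which is one interval for adjacent events and $I\cup J$ for distant ones. For an adjacent event $E$ of total length $s$, the proof of \cref{prop:direct-unequal-gap} already bounds the adjacent-neighbour contribution to $-\log\prod(1-x_F)$ by $(s+109)\widehat R_0+\widehat R_1$, and by \cref{lem:unequal-charge-envelopes} this leaves slack: the bounds are $364756/10^9$ versus $37/100000$, and $203468108/10^9$ versus $21/100$. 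So it suffices to show that the distant neighbours of any event add an amount below a fixed fraction of this slack, measured per unit of $s$ and additively.

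The key step is counting distant neighbours. Fix an event with support of total size $s$ (at most two intervals). A distant event of total length $r$ is determined by four endpoints. It meets the support only if one of its intervals $I$ or $J$ comes within $54$ characters of one of the at most two support pieces, giving $O(s+r)$ choices for that interval's position. The other interval has $O(M)$ free positions, and the split of $r$ into $|I|,|J|$ has $O(r)$ choices. This yields $O((s+r)\,r\,M)$ neighbours at total length $r$.

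The weight $e^{-r/60}$ beats this count only because $r\ge R\ge 1200\log_2 M$. Then $M e^{-r/120}\le M\cdot M^{-10/\log 2}\le M^{-13}$, so
\[
  \sum_{r\ge R} O\bigl((s+r)rM\bigr)e^{-r/60}=O\bigl((s+1)M^{-12}\bigr),
\]
which is tiny for every $M\ge54$. The constants are the delicate part. I would keep them explicit using \cref{lem:geometric-first-moment} and the extra $e^{-10000/120}$ coming from $R\ge10000$, which leaves an astronomically small margin. This handles adjacent events.

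For a distant event $E$ of length $s\ge R\ge 10000$, \cref{lem:distant-probability} gives $\Pr[E]<e^{-s/50}$. Its adjacent neighbours contribute at most $(s+109)\widehat R_0+\widehat R_1<s/2000$; this uses a doubled support count, since a support of two pieces at most doubles the neighbour bound. The distant neighbours contribute $O(sM^{-12})$ as above. Since all charges are at most $1/2$, \cref{lem:small-charge-product} gives
\[
  x(E)\prod_{F\sim E}(1-x_F)\ge e^{-s/60-s/1000-o(s)}>e^{-s/50},
\]
because $1/60+1/1000<1/50$.

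Finally, \cref{thm:moser-tardos} bounds the expected number of resamplings by $\sum_E x/(1-x)$. The adjacent events contribute at most $M\widehat R_0=O(M)$, since there are fewer than $M$ starts per type. The distant events contribute at most $\sum_{r\ge R}O(r\,M^2)e^{-r/60}=O(M^{-10})$. Hence the total is $O(M)$. The main obstacle is the distant-neighbour count: it carries a factor of $M$, and that factor must be killed by the logarithmic cutoff $R$. This is exactly why $R$ contains the term $1200\lceil\log_2M\rceil$.
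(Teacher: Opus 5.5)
Your proposal is correct and follows essentially the paper's route. You keep the adjacent charges $\widehat x_{a,b}$, give distant events a charge exponentially small in $s$, and use the $1200\lceil\log_2 M\rceil$ term of $R$ to make distant charges negligible. You then check both criteria against the slack in \cref{lem:unequal-charge-envelopes} and \cref{lem:distant-probability}, and sum $x/(1-x)$ for Moser--Tardos, exactly as the paper does.

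The main difference is that the paper never counts distant neighbours. It bounds the total charge of all (at most $M^2s$ per total length) distant events by $Q<10^{-9}$, using charge $(100/101)^s$, which is slightly simpler than your local $O((s+r)rM)$ count. A minor point: around a two-piece support the adjacent-neighbour count is $s+2r+218$, giving $s\widehat R_0+2\widehat R_1+218\widehat R_0$ rather than a literal doubling of $(s+109)\widehat R_0+\widehat R_1$. Your margin $1/50-1/60$ absorbs either version.
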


\begin{proof}
Keep the adjacent charges $\widehat x_{a,b}$ of
\cref{eq:unequal-charges}, and give a distant event of total length $s$ the
charge $z^s$, where $z=100/101$.  At a fixed $s$ there are at most $M^2s$
distant events: choose the two starts and the first length.  Consequently
their total charge is at most
$Q=M^2\sum_{s\ge R}s z^s<10^{-9}$.
Here is a bound uniform in $M$.  Put $y=201/202$; then $z\le y^2$, so
\[
 Q\le M^2 y^R\sum_{s\ge1}s y^s
 <202^2\exp(2\log M-R/202)<10^{-9}.
\]
Indeed, $k\ge\log M$, $1200/202>2$, $10000/202>35$, and
$202^2e^{-35}<10^{-9}$ (for example, $e^{35}>10^{15}$).
Every distant charge is at most $Q$, so their contribution to any
negative logarithm of a neighboring product is at most $2Q$.

For adjacent neighbors, \cref{lem:unequal-charge-envelopes} gives the
convenient weaker bounds $\widehat R_0<366/10^6$ and
$109\widehat R_0+\widehat R_1<204/1000$.
An adjacent event of total length $s$ therefore has neighboring loss at most
$366s/10^6+204/1000+2Q$.  This is strictly less than
$37s/100000+21/100$, the exponent in its charge.  Its local-lemma inequality
follows from \cref{lem:unequal-event-bound}.

For a distant event of total length $s$, an adjacent event of a fixed
ordered length type of total $r$ can meet either of its two components.
The support-overlap bound gives at most $s+2r+218$ possible starts, including
possible overlap within boundary macroblocks.  The adjacent neighboring
loss is therefore at most
$s\widehat R_0+2\widehat R_1+218\widehat R_0
<366s/10^6+408/1000$.
Since $-\log z=\log(101/100)<1/100$ and $s\ge R\ge10000$, its charge
times its neighboring product is greater than
$\exp(-s/100-366s/10^6-408/1000-2Q)>e^{-s/50}$.
\Cref{lem:distant-probability} supplies the required probability bound.

Finally, there are fewer than $M$ adjacent starts of every ordered length
type.  Moser--Tardos bounds the expected number of resamplings by
$\sum_E x_E/(1-x_E)\le M\widehat R_0+2Q=O(M)$.
The theorem permits any rule selecting a currently violated event, which
will be useful for the cached implementation.
\end{proof}

\begin{lemma}[Truncating a bad alignment]\label{lem:alignment-truncation}
If the bounded event system has no violation, then every adjacent interval
pair has normalized edit distance at least $1/216$, and so does every
ordered disjoint interval pair of total length at least $R$.
\end{lemma}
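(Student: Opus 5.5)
We argue by contradiction, using the matching-path reduction of Haeupler and Shahrasbi.  Suppose some interval pair $I,J$ (adjacent, or disjoint of total length $s\ge R$) has $\ED(W[I],W[J])<s/216$.  Choose such a violating pair with $s$ minimal among all violating pairs of the two admissible kinds: adjacent pairs of any length, and disjoint pairs of total length at least $R$.  If $s\le B$, the pair is itself a retained event: an adjacent pair with $s\le B$ is covered by \cref{prop:direct-unequal-gap}'s event system together with the deterministic short-scale and shell exclusions, and a distant pair with $R\le s\le B$ is a distant event.  Either way we contradict the absence of violations.  It therefore suffices to show that a violating pair with $s>B$ can be shortened to a violating pair of total length in $[R,B]$.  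The shortened pair is disjoint, and adjacent or not; if it is adjacent, it is covered because adjacent events of length at most $B$ are retained.

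For the shortening, fix an optimal alignment, namely a common subsequence of length $c$ with $u+v=s-2c<s/216$ deletions.  View it as a monotone matching path from $(0,0)$ to $(|I|,|J|)$.  For a split point along the path, let $I_1,J_1$ be the prefixes of $I,J$ consumed up to that point, and $I_2,J_2$ the remaining suffixes.  Edit cost is additive along the path, so
\[
\ED(W[I_1],W[J_1])+\ED(W[I_2],W[J_2])\le u+v<\frac{s_1+s_2}{216},
\]
where $s_1=|I_1|+|J_1|$ and $s_2=s-s_1$.  Hence at least one part violates the normalized bound.  Each part is a pair of disjoint intervals: $I_1$ precedes $J_1$ since $I$ precedes $J$, and similarly for the suffixes.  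If $I,J$ are adjacent, the suffix pair $I_2,J_1$... does not arise; only $(I_1,J_1)$ and $(I_2,J_2)$ are considered, and in general these are nonadjacent.  This is exactly why the distant events are needed.  We choose the split so that $s_1$ lies near $s/2$.  Each path step changes $s_1$ by at most $2$, so we can achieve $|s_1-s/2|\le1$, and both parts then have total at least $(s-2)/2\ge R$ because $s>B=2R+4$.

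Iterating this halving produces a violating part of strictly smaller total length that is still at least $R$, which contradicts the minimality of $s$ unless $s\le B$.  One subtlety is that minimality was taken over the union of both kinds of pairs.  If the violating part happens to be adjacent (for example $I_1,J_1$ when $J_1$ starts at the end of $I$), it is still a valid admissible pair, since adjacent pairs of every length belong to the union.  Hence the minimal violator must have $s\le B$, which is excluded by the event system.  Note that adjacent pairs of total length below $R$ are never produced by halving, because we halve only when $s>B$.

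We expect the main obstacle to be the bookkeeping in the halving step.  The first concern is that the path split must land exactly at a point where both parts are nonempty intervals; a part with an empty interval would have $\ED$ equal to its length and could not violate, so we select the violating part, which is automatically nonempty on both sides.  The second is the parity/rounding slack in the inequality $(s-2)/2\ge R$, which is where the constant $+4$ in $B=2R+4$ is used.  Once this is settled, the adjacent conclusion for all lengths follows, and the distant conclusion for all $s\ge R$ follows from the same minimal-counterexample argument.
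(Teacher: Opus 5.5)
Your proposal is correct and is essentially the paper's proof. Like the paper, you split an optimal alignment path near consumption $s/2$. Because $s>B=2R+4$, one half is a nonempty, ordered, disjoint violating pair with total in $(R,s)$. Reducing until the total is at most $B$ leaves either a distant event or an adjacent pair. Such an adjacent pair is handled by the retained events together with the deterministic short-scale, imbalance, shell, and zero-probability exclusions. Your minimal-counterexample framing is only a repackaging of the paper's explicit iteration.
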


\begin{proof}
Suppose a pair of total length $s>B$ violates the claimed inequality.
Choose an optimal insertion/deletion alignment, represented by a path whose
steps consume one symbol for a deletion or two for a match.  Cut at the
first vertex whose total consumption is at least $\lfloor s/2\rfloor$.
The two subpaths have total consumptions between
$\lfloor s/2\rfloor-1$ and $\lceil s/2\rceil+1$.
Their costs add to less than $s/216$, so at least one has cost less than
one 216th of its own total consumption.  The edit distance of its two
substrings is no greater than this subpath cost.

Both substrings of that subpath are nonempty: otherwise its cost would
equal its total consumption.  They remain ordered and disjoint, although
they need not remain adjacent.  Since $s>2R+4$, each subpath has total
consumption greater than $R$ and strictly less than $s$.  Repeat this
operation on a violating subpair until its total lies in $[R,B]$.
It is then either an included distant event or an adjacent violation.
For adjacent pairs of total at most $B$, the omitted types are impossible
by the short-scale, imbalance, and zero-probability cases in the proof of
\cref{prop:direct-unequal-gap}.  Thus this last pair is a retained event,
a contradiction.  Pairs already in $[R,B]$ require no splitting, and the
same omitted-type argument covers adjacent pairs of smaller total.
\end{proof}

For a requested length $n$, use $M=54\lceil n/54\rceil$ and take a prefix.
When $n\ge2$, the inequalities $M\le54n$, $\log54<4$, and $\log2>0.69$
give $R<18160+1740\log n<30000\log n$.  All interval inequalities inside
the prefix are inherited.  The case $n=1$ has no pair of nonempty disjoint
intervals.  This proves the long-distance assertion of
\cref{thm:long-distance-construction}.  The threshold depends on the
requested finite length; the infinite-word conclusion in
\cref{cor:infinite-ternary} concerns the adjacent inequalities only.

%% file: main.bbl
\newcommand{\etalchar}[1]{$^{#1}$}
\begin{thebibliography}{BDG{\etalchar{+}}26}

\bibitem[BDG{\etalchar{+}}26]{BhattacharyaEtAl2026}
Sudatta Bhattacharya, Sanjana Dey, Elazar Goldenberg, Mursalin Habib, Bernhard
  Haeupler, {Karthik C. S.}, and Michal Kouck{\'y}.
\newblock Constant rate isometric embeddings of {Hamming} metric into {Edit}
  metric.
\newblock In {\em 53rd International Colloquium on Automata, Languages, and
  Programming (ICALP 2026)}, volume 374 of {\em Leibniz International
  Proceedings in Informatics (LIPIcs)}, pages 32:1--32:19. Schloss
  Dagstuhl--Leibniz-Zentrum f{\"u}r Informatik, 2026.

\bibitem[Bec84]{Beck1984}
J\'{o}zsef Beck.
\newblock An application of {Lov\'{a}sz} local lemma: There exists an infinite
  01-sequence containing no near identical intervals.
\newblock In {\em Finite and Infinite Sets}, volume~37 of {\em Colloquia
  Mathematica Societatis J\'{a}nos Bolyai}, pages 103--107. North-Holland,
  1984.

\bibitem[Bri83]{Brinkhuis1983}
Jan Brinkhuis.
\newblock Non-repetitive sequences on three symbols.
\newblock {\em The Quarterly Journal of Mathematics}, 34(2):145--149, 1983.

\bibitem[CHL{\etalchar{+}}19]{ChengEtAl2019}
Kuan Cheng, Bernhard Haeupler, Xin Li, Amirbehshad Shahrasbi, and Ke~Wu.
\newblock Synchronization strings: Highly efficient deterministic constructions
  over small alphabets.
\newblock In {\em Proceedings of the Thirtieth Annual ACM-SIAM Symposium on
  Discrete Algorithms (SODA)}, pages 2185--2204. SIAM, 2019.

\bibitem[CMR19]{CurrieMolRampersad2019}
James~D. Currie, Lucas Mol, and Narad Rampersad.
\newblock Circular repetition thresholds on some small alphabets: Last cases of
  {Gorbunova's} conjecture.
\newblock {\em The Electronic Journal of Combinatorics}, 26(2):P2.31, 2019.

\bibitem[CR16]{CamungolRampersad2016}
Serina Camungol and Narad Rampersad.
\newblock Avoiding approximate repetitions with respect to the longest common
  subsequence distance.
\newblock {\em Involve, a Journal of Mathematics}, 9(4):657--666, 2016.

\bibitem[EL75]{ErdosLovasz1975}
Paul Erd\H{o}s and L\'{a}szl\'{o} Lov\'{a}sz.
\newblock Problems and results on 3-chromatic hypergraphs and some related
  questions.
\newblock In {\em Infinite and Finite Sets}, volume~10 of {\em Colloquia
  Mathematica Societatis J\'{a}nos Bolyai}, pages 609--627. North-Holland,
  1975.

\bibitem[EZ98]{EkhadZeilberger1998}
Shalosh~B. Ekhad and Doron Zeilberger.
\newblock There are more than $2^{n/17}$ $n$-letter ternary square-free words.
\newblock {\em Journal of Integer Sequences}, 1(1):Article 98.1.9, 1998.

\bibitem[EZ01]{EkhadZeilbergerErratum2001}
Shalosh~B. Ekhad and Doron Zeilberger.
\newblock Errata and addenda to ``there are more than $2^{n/17}$ $n$-letter
  ternary square-free words''.
\newblock Author erratum, 12 June 2001, 2001.

\bibitem[GKN20]{GrytczukKordulewskiNiewiadomski2020}
Jaros{\l}aw Grytczuk, Hubert Kordulewski, and Artur Niewiadomski.
\newblock Extremal square-free words.
\newblock {\em The Electronic Journal of Combinatorics}, 27(1):P1.48, 2020.

\bibitem[Gri01]{Grimm2001}
Uwe Grimm.
\newblock Improved bounds on the number of ternary square-free words.
\newblock {\em Journal of Integer Sequences}, 4:Article 01.2.7, 2001.

\bibitem[Hae19]{Haeupler2019DocumentExchange}
Bernhard Haeupler.
\newblock Optimal document exchange and new codes for insertions and deletions.
\newblock In {\em 60th IEEE Annual Symposium on Foundations of Computer Science
  (FOCS 2019)}, pages 334--347. IEEE, 2019.

\bibitem[HRS19]{HaeuplerRubinsteinShahrasbi2019}
Bernhard Haeupler, Aviad Rubinstein, and Amirbehshad Shahrasbi.
\newblock Near-linear time insertion-deletion codes and
  $(1+\varepsilon)$-approximating {Edit} distance via indexing.
\newblock In {\em Proceedings of the 51st Annual ACM SIGACT Symposium on Theory
  of Computing (STOC)}, pages 697--708. ACM, 2019.

\bibitem[HS17]{HaeuplerShahrasbi2017}
Bernhard Haeupler and Amirbehshad Shahrasbi.
\newblock Synchronization strings: Codes for insertions and deletions
  approaching the {Singleton} bound.
\newblock In {\em Proceedings of the 49th Annual ACM SIGACT Symposium on Theory
  of Computing (STOC)}, pages 33--46. ACM, 2017.

\bibitem[HS18]{HaeuplerShahrasbi2018}
Bernhard Haeupler and Amirbehshad Shahrasbi.
\newblock Synchronization strings: Explicit constructions, local decoding, and
  applications.
\newblock In {\em Proceedings of the 50th Annual ACM SIGACT Symposium on Theory
  of Computing (STOC)}, pages 841--854. ACM, 2018.

\bibitem[HS21]{HaeuplerShahrasbiSurvey2021}
Bernhard Haeupler and Amirbehshad Shahrasbi.
\newblock Synchronization strings and codes for insertions and deletions: A
  survey.
\newblock {\em IEEE Transactions on Information Theory}, 67(6):3190--3206,
  2021.

\bibitem[HSV18]{HaeuplerShahrasbiVitercik2018}
Bernhard Haeupler, Amirbehshad Shahrasbi, and Ellen Vitercik.
\newblock Synchronization strings: Channel simulations and interactive coding
  for insertions and deletions.
\newblock In {\em 45th International Colloquium on Automata, Languages, and
  Programming (ICALP 2018)}, volume 107 of {\em Leibniz International
  Proceedings in Informatics (LIPIcs)}, pages 75:1--75:14. Schloss
  Dagstuhl--Leibniz-Zentrum f{\"u}r Informatik, 2018.

\bibitem[MR21]{MolRampersad2021}
Lucas Mol and Narad Rampersad.
\newblock Lengths of extremal square-free ternary words.
\newblock {\em Contributions to Discrete Mathematics}, 16(1):8--19, 2021.

\bibitem[MT10]{MoserTardos2010}
Robin~A. Moser and G\'{a}bor Tardos.
\newblock A constructive proof of the general {Lov\'{a}sz} local lemma.
\newblock {\em Journal of the ACM}, 57(2):11:1--11:15, 2010.

\bibitem[SDL16]{SollamiDouglasLiebmann2016}
Michael Sollami, Craig~C. Douglas, and Manfred Liebmann.
\newblock An improved lower bound on the number of ternary squarefree words.
\newblock {\em Journal of Integer Sequences}, 19:Article 16.6.7, 2016.

\bibitem[Shu10]{Shur2010Circular}
Arseny~M. Shur.
\newblock On ternary square-free circular words.
\newblock {\em The Electronic Journal of Combinatorics}, 17(1):R140, 2010.

\bibitem[Thu06]{Thue1906}
Axel Thue.
\newblock \"{U}ber unendliche zeichenreihen.
\newblock {\em Skrifter udgivne af Videnskabsselskabet i Christiania, I.
  Mathematisk-naturvidenskabelig Klasse, No. 7}, pages 1--22, 1906.

\end{thebibliography}
